\newtheorem{theorem}{Theorem}%[section]
\newtheorem{corollary}{Corollary}
\newtheorem{lemma}{Lemma}
\newtheorem{observation}{Observation}
\newtheorem{definition}{Definition}
\def\calD{\mathcal{D}}
\def\calI{\mathcal{I}}
\def\calC{\mathcal{C}}
\def\calH{\mathcal{H}}
\def\calT{\mathcal{T}}
\def\dt{\mathcal{DT}} 
\def\vd{\mathcal{VD}} 
\def\fvd{\mathcal{F}\!\mathcal{V}\!\mathcal{D}}
\def\calA{\mathcal{A}}
\def\calL{\mathcal{L}}
\def\calI{\mathcal{I}}
\def\fvd{\mathcal{F}\!\mathcal{V}\!\mathcal{D}}
\title{Dominating Set, Independent Set, Discrete $k$-Center, Dispersion, and Related Problems for Planar Points in Convex Position\thanks{A preliminary version of this paper will appear in {\em Proceedings of the 42nd International Symposium on Theoretical Aspects of Computer Science (STACS 2025)}. This research was supported in part by NSF under Grant CCF-2300356.}}
\author{Anastasiia Tkachenko and Haitao Wang}
\date{January 2024}
\author{Anastasiia Tkachenko\thanks{Kahlert School of Computing,
University of Utah, Salt Lake City, UT 84112, USA. {\tt anastasiia.tkachenko@utah.edu}}
\and
Haitao Wang\thanks{Kahlert School of Computing,
University of Utah, Salt Lake City, UT 84112, USA. {\tt haitao.wang@utah.edu}}
}
\date{}
\begin{document}
% \begin{textblock}{5}(1,0.5)
% \noindent\Large {\bf APPENDIX}
% \end{textblock}

\maketitle

\vspace{-0.2in}
\begin{abstract}
Given a set $P$ of $n$ points in the plane, its unit-disk graph $G(P)$ is a graph with $P$ as its vertex set such that two points of $P$ are connected by an edge if their (Euclidean) distance is at most $1$. We consider several classical problems on $G(P)$ in a special setting when points of $P$ are in convex position. These problems are all NP-hard in the general case. We present efficient algorithms for these problems under the convex position assumption. 
%The problem of finding a maximum independent set in $G(P)$ is NP-hard.  In this paper, we consider several special cases and present efficient algorithms. 
\begin{itemize}
\item For the problem of finding the smallest dominating set of $G(P)$, we present an $O(kn\log n)$ time algorithm, where $k$ is the smallest dominating set size. We also consider the weighted case in which each point of $P$ has a weight and the goal is to find a dominating set in $G(P)$ with minimum total weight; our algorithm runs in $O(n^3\log^2 n)$ time. In particular, for a given $k$, our algorithm can compute in $O(kn^2\log^2 n)$ time a minimum weight dominating set of size at most $k$ (if it exists). 

\item For the discrete $k$-center problem, which is to find a subset of $k$ points in $P$ (called {\em centers}) for a given $k$, such that the maximum distance between any point in $P$ and its nearest center is minimized. 
%If the points of $P$ are in convex position, 
We present an algorithm that solves the problem in $O(\min\{n^{4/3}\log n+kn\log^2 n,k^2 n\log^2n\})$ time, which is $O(n^2\log^2 n)$ in the worst case when $k=\Theta(n)$. For comparison, the runtime of the current best algorithm for the continuous version of the problem where centers can be anywhere in the plane is $O(n^3 \log n)$. 

\item For the problem of finding a maximum independent set in $G(P)$, we give an algorithm of $O(n^{7/2})$ time and another randomized algorithm of $O(n^{37/11})$ expected time, which improve the previous best result of $O(n^6\log n)$ time. Our algorithms can be extended to compute a maximum-weight independent set in $G(P)$ with the same time complexities when points of $P$ have weights. 

\begin{itemize}
    \item  If we are looking for an (unweighted) independent set of size $3$, we derive an algorithm of $O(n\log n)$ time; the previous best algorithm runs in $O(n^{4/3}\log^2 n)$ time (which works for the general case where points of $P$ are not necessarily in convex position). 
\item 
If points of $P$ have weights and are not necessarily in convex position, we present an algorithm that can find a maximum-weight independent set of size $3$ in $O(n^{5/3+\delta})$ time for an arbitrarily small constant $\delta>0$. By slightly modifying the algorithm, a maximum-weight clique of size $3$ can also be found within the same time complexity. 
\end{itemize}

\item 
For the dispersion problem, which is to find a subset of $k$ points from $P$ for a given $k$, such that the minimum pairwise distance of the points in the subset is maximized. We present an algorithm of $O(n^{7/2}\log n)$ time and another randomized algorithm of $O(n^{37/11}\log n)$ expected time, which improve the previous best result of $O(n^6)$ time. 

\begin{itemize}
\item  If $k=3$, we present an algorithm of $O(n\log^2 n)$ time and another randomized algorithm of $O(n\log n)$ expected time; the previous best algorithm runs in $O(n^{4/3}\log^2 n)$ time (which works for the general case where points of $P$ are not necessarily in convex position). 
\end{itemize}
\end{itemize}
\end{abstract}

{\em Keywords:} Dominating set, $k$-center, geometric set cover, independent set, clique, vertex cover, unit-disk graphs, convex position, dispersion, maximally separated sets

\section{Introduction}
\label{sec:intro}

Let $P$ be a set of $n$ points in the plane. The \textit{unit-disk graph} of $P$, denoted by $G(P)$, is the graph with $P$ as its vertex set such that two points are connected by an edge if their (Euclidean) distance is at most $1$. Equivalently, $G(P)$ is the intersection graph of congruent disks with radius $1/2$ and centered at the points in $P$ (i.e., two disks have an edge if they intersect). This model is particularly useful in applications such as wireless sensor networks, where connectivity is determined by signal ranges, represented by unit disks~\cite{ref:ClarkUn90, ref:PerkinsHi94, ref:PerkinsAd99, ref:BalisterCo05}.

\subsection{Our results}

We consider several classical problems on $G(P)$. These problems are all NP-hard. However, little attention has been given to special configurations of points, such as when the points are in convex position, despite the potential for significant algorithmic simplifications in such cases. In this paper, we systematically study these problems under the condition that the points of $P$ are in convex position (i.e., every point of $P$ appears as a vertex in the convex hull of $P$) and present efficient algorithms. We hope our results can lead to efficient solutions to other problems in this setting. 

\paragraph{Dominating set.} 
A \textit{dominating set} of $G(P)$ is a subset $S$ of vertices of $G(P)$ such that each vertex of $G(P)$ is either in $S$ or adjacent to a vertex in $S$. The dominating set problem, which seeks a dominating set of smallest size, is a classical NP-hard problem~\cite{ref:ClarkUn90,ref:HsuEa79, ref:KarpRe72}. In the weighted case, each point of $P$ has a weight and the problem is to find a dominating set of minimum total weight. The dominating set problem has been widely studied, with various approximation algorithms proposed~\cite{ref:GibsonAl10, ref:HernandezAp22, ref:MustafaIm10, ref:DeGe23}. 

To the best of our knowledge, we are not aware of any previous work under the convex position assumption. For the unweighted case, we present an algorithm of $O(kn\log n)$ time, where $k$ is the smallest dominating set size of $G(P)$. For the weighted case, we derive an algorithm of $O(n^3\log^2 n)$ time. In particular, given any $k$, our algorithm can compute in $O(kn^2 \log^2 n)$ time a minimum-weight dominating set of size at most $k$. 
    
\paragraph{Discrete $k$-center.} 
A closely related problem is the \textit{discrete $k$-center} problem. Given a number $k$, the problem is to compute a subset of $k$ points in $P$ (called {\em centers}) such that the maximum distance between any point in $P$ and its nearest center is minimized. The problem, which is NP-hard~\cite{ref:VaziraniAp01}, is also a classical problem with applications in clustering, facility locations, and network design. 
An algorithm for the dominating set problem can be used to solve the {\em decision version} of the discrete $k$-center problem: Given a value $r$ and $k$, decide whether there exists a subset of $k$ centers such that the distance from any point in $P$ to its nearest center is at most $r$. Indeed, if we define the unit-disk graph of $P$ with respect to $r$, then a dominating set of size $k$ in the graph is a discrete $k$-center of $P$ for $r$, and vice versa. 

For the convex position case, we are not aware of any previous work. We propose an algorithm whose runtime is $O(\min\{n^{4/3}\log n+kn\log^2 n,k^2 n\log^2n\})$. In particular, if $k=O(1)$, then the runtime is $O(n\log^2 n)$. 

\paragraph{Independent set.}
An {\em independent set} of $G(P)$ is a subset of the vertices such that no two vertices have an edge. 
The {\em maximum independent set} problem is to find an independent set of the largest cardinality. The problem of finding a maximum independent set in $G(P)$ is NP-hard~\cite{ref:ClarkUn90}. Many approximation algorithms for the problem have been developed in the literature, e.g.,~\cite{ref:DasDi20,ref:DasAp15,ref:MaratheSi95,ref:MatsuiAp98}.

Under the convex position assumption, using the technique of Singireddy, Basappa, and Mitchell~\cite{ref:SingireddyAl23} for a dispersion problem (more details to be discussed later), one can find a maximum independent set in $G(P)$ in $O(n^6\log n)$ time. We give a new algorithm of $O(n^{7/2})$ time,\footnote{Throughout the paper, the algorithm runtime is deterministic unless otherwise stated.} and another randomized algorithm of $O(n^{37/11})$ expected time using the recent randomized result of Agarwal, Ezra, and Sharir~\cite{ref:AgarwalSe24}. Furthermore, our algorithms can be extended to compute a maximum-weight independent set of $G(P)$ within the same time complexities when points of $P$ have weights; specifically, a {\em maximum-weight independent set} is an independent set whose total vertex weight is maximized. 
Since the vertices of a graph other than those in an independent set form a vertex cover, our algorithm can also compute a minimum-weight vertex cover of $G(P)$ in $O(n^{7/2})$ time or in randomized $O(n^{37/11})$ expected time. 

Furthermore, we consider a small-size case that is to find an (unweighted) independent set of size $3$ in $G(P)$. If $P$ is not necessarily in convex position, the problem has been studied by Agarwal, Overmars, and Sharir~\cite{ref:AgarwalCo06}, who presented an $O(n^{4/3}\log^2n)$ time algorithm. We consider the convex position case and derive an algorithm of $O(n\log n)$ time. Note that finding an independent set of size $2$ is equivalent to computing a farthest pair of points of $P$, which can be done in $O(n\log n)$ time using the farthest Voronoi diagram~\cite{ref:ShamosCl75}.

In addition, we consider a more general small-size case that is to find a maximum-weight independent set of size $3$ in $G(P)$ when points of $P$ have weights and are not necessarily in convex position. Our algorithm runs in $O(n^{5/3+\delta})$ time; throughout the paper, $\delta$ refers to an arbitrarily small positive constant. Our technique can also be used to find a maximum-weight clique of size $3$ in $G(P)$ within the same time complexity. In addition, we show that a maximum-weight independent set or clique of size $2$ can be found in $n^{4/3}2^{O(\log^*n)}$ time. 
%These results are discussed in Section~\ref{sec:3weight}. 
All these algorithms also work for computing the minimum-weight independent set or clique. 
We are not aware of any previous work on these weighted problems. As mentioned above, the problem of finding an (unweighted) independent set of size $3$ can be solved in $O(n^{4/3}\log^2n)$ time~\cite{ref:AgarwalCo06}. It is also known that finding an (unweighted) clique of size $3$ in a disk graph (not necessarily unit-disk graph) can be done in $O(n\log n)$ time~\cite{ref:KaplanTr19}. 

\paragraph{The dispersion problem.}
As mentioned above, a related problem is the dispersion problem (also called {\em maximally separated set problem}~\cite{ref:AgarwalCo06}). Given $P$ and a number $k$, the problem is to find a subset of $k$ points from $P$ so that their minimum pairwise distance is maximized. The problem is NP-hard~\cite{ref:WangAs88}. An algorithm for the independent set problem of $G(P)$ can be used as a decision algorithm for the dispersion problem: Given a value $r$, we can decide whether $P$ has a subset of $k$ points whose minimum pairwise distance is larger than $r$ using the independent set algorithm (i.e., by defining an edge for two points in the graph if their distance is at most $r$). 

Under the convex position assumption, Singireddy, Basappa, and Mitchell~\cite{ref:SingireddyAl23} previously gave an $O(n^4k^2)$ time algorithm for the problem. Using our independent set algorithm as a decision procedure and doing binary search among the interpoint distances of $P$, we present a new algorithm that can solve the problem in $O(n^{7/2}\log n)$ time, or in randomized $O(n^{37/11}\log n)$ expected time. For a special case where $k=3$, the algorithm of \cite{ref:AgarwalCo06} solves the problem in $O(n^{4/3}\log^3 n)$ time even if the points of $P$ are not in convex position. Our new algorithm, which works on the convex-position case only, runs in $O(n\log^2 n)$ time. This is achieved using parametric search~\cite{ref:ColeSl87,ref:MegiddoAp83} with our independent set algorithm as a decision algorithm. In addition, with our decision algorithm and Chan's randomized technique~\cite{ref:ChanGe99}, we can obtain a randomized algorithm of $O(n\log n)$ expected time. We note that a recent work~\cite{ref:KobayashiAn22} proposed another algorithm of $O(n^2)$ time, apparently unaware of the result in \cite{ref:AgarwalCo06}.

\subsection{Related Work}
%\textcolor{red}{More citations to be added.}
Unit-disk graphs are a fundamental model in wireless networks, particularly where coverage and connectivity are governed by proximity~\cite{ref:ClarkUn90, ref:PerkinsHi94, ref:PerkinsAd99, ref:BalisterCo05}. However, many classical graph problems, including coloring, vertex cover, independent set, and dominating set, remain NP-hard even when restricted to unit-disk graphs~\cite{ref:ClarkUn90}. One exception is that finding a maximum clique in a unit-disk graph can be done in polynomial time~\cite{ref:ClarkUn90,ref:EppsteinGr09,ref:EspenantFi23} and the current best algorithm runs in $O(n^{2.5}\log n)$ time~\cite{ref:EspenantFi23} (see~\cite{ref:KeilTh24} for a comment about improving the runtime to $O(n^{7/3+o(1)})$). 

The assumption that points are in convex position can simplify certain problems that are otherwise NP-hard for general point sets in the plane. 
This has motivated the exploration of other computational problems under similar assumptions. For example, the {\em continuous} $k$-center problem where centers can be anywhere in the plane is NP-hard for arbitrary points but become polynomial time solvable under the convex position assumption \cite{ref:ChoiEf23}. 
The convex position constraint was even considered for classical problems that are already polynomial time solvable in the general case. For instance, the renowned result of Aggarwal, Guibas, Saxe, and Shor~\cite{ref:AggarwalA89} gives a linear time algorithm for computing the Voronoi diagram for a set of planar points in convex position. Refer to \cite{ref:LingasOn86, ref:RichardsRe90, ref:ChazelleIm93} for more work for points in convex position.

The $k$-center problem under a variety of constraints has received much attention. Particularly, when $k$, the number of centers, is two and the centers can be anywhere in the plane (referred to as the {\em continuous $2$-center problem}), several near-linear time algorithms have been developed~\cite{ref:SharirA97,ref:ChanMo99,ref:EppsteinFa97, ref:WangOn22}, culminating in an optimal $O(n\log n)$ time~\cite{ref:ChoOp24}. 
%Among these, a recent breakthrough is the algorithm introduced in~\cite{ref:ChoOp24} by Cho, Oh, Wang, and Xue, which operates in $O(n\log n)$ time. This efficiency meets the theoretical lower bound established by Eppstein~\cite{ref:EppsteinFa97}. We are not aware of any work specifically addressing the $3$-center problem.
The problem variations under other constraints were also considered. 
%Other results have been achieved for the problem with some geometric constrains. 
For example, the $k$-center problem can be solved in $O(n\log n)$ time if centers are required to lie on the same line~\cite{ref:WangLi16, ref:ChenEf15} or two lines~\cite{ref:BhattacharyaGe15}. 
The continuous one-center problem is the classical smallest enclosing circle problem and can be solved in linear time~\cite{ref:MegiddoLi83}. 
%If the center is restricted to given segments or a simple polygon, the one-center problem can be solved in near-linear time~\cite{ref:BarbaOp14}.

For the convex position case of the continuous $k$-center problem, 
%$2$-center problem has been solved with optimal $O(n\log n)$ time complexity \cite{ref:ChoiEf21}. More recently, 
Choi, Lee, and Ahn~\cite{ref:ChoiEf23} proposed an $O(\min\{k, \log n\}\cdot n^2  \log n + k^2n \log n)$ time algorithm. Hence, the worst-case runtime of their algorithm is cubic, while our discrete $k$-center algorithm runs in near quadratic time. 

The discrete $2$-center problem also gets considerable attention. Agarwal, Sharir, and Welzl gave the first subquadratic  $O(n^{4/3} \log^5 n)$ time algorithm~\cite{ref:AgarwalTh98}; the logarithmic factor was slightly improved by Wang~\cite{ref:WangUn23}.
%in recent work to $O(n^{4/3} \log^{10/3} (\log\log n)^{O(1)})$. 
As the continuous two-center problem can be solved in $O(n\log n)$ time~\cite{ref:ChoOp24} while the current best discrete two-center algorithm runs in $\Omega(n^{4/3})$ time~\cite{ref:AgarwalTh98,ref:WangUn23}, the discrete problem appears more challenging than the continuous counterpart. This makes our discrete $k$-center algorithm even more interesting because it is almost a linear factor faster than the continuous $k$-center algorithm in \cite{ref:ChoiEf23}. Therefore, it is an intriguing question whether the algorithm in \cite{ref:ChoiEf23} can be further improved. 
%(or perhaps the algorithms in \cite{ref:AgarwalTh98,ref:WangUn23} canb). 

Other variations of the discrete $k$-center problem for small $k$ were recently studied by Chan, He, and Yu~\cite{ref:ChanOn23}, improving over previous results~\cite{ref:BespamyatnikhRe99, ref:BespamyatnikhRec99, ref:KatzDi00}. %To the best of our knowledge, the convex position was not yet considered for the discrete $k$-center problem.

The dispersion problem and some of its variants have also been studied before. The general planar dispersion problem can be solved by an exact algorithm in $n^{O(\sqrt{k})}$ time~\cite{ref:AgarwalCo06}. 
If all points of $P$ lie on a single line, Araki and Nakano~\cite{ref:ArakiMa22} gave an algorithm of $O((2k^2)^k\cdot n)$ time (assuming that the points are not given sorted), which is $O(n)$ for a constant $k$. For a circular case where all points of $P$ lie on a circle and the distance between two points is measured by their distance along the circle, the problem is solvable in $O(n)$ time~\cite{ref:TsaiOp97}, provided that the points are given sorted along the circle. We note that this implies that the line case problem, which can be viewed as a special case of the circular case, is also solvable in $O(n)$ time after the points are sorted on the line. 

\subsection{Our approach}
\label{sec:approach}

The weighted dominating set problem reduces to the following problem: Given any $k$, find a minimum weight dominating set of size at most $k$. This is equivalent to finding a minimum weight subset of at most $k$ points of $P$ such that the union of the unit disks centered at these points covers $P$. Let $S$ be an optimal solution for the problem (points of $S$ are called {\em centers}). If we consider $P$ as a cyclic list of points along the convex hull of $P$, then for each center $p\in S$, 
its unit disk $D_p$ may cover multiple maximal contiguous subsequences (called {\em sublists}) of $P$. Roughly speaking, we prove that it is possible to assign at most two such sublists to each center $p\in S$ such that (1) $p$ belongs to at least one of these sublists; (2) the union of the sublists assigned to all centers is $P$; (3) for every two centers $p_i,p_j\in S$, the sublists of the points assigned to $p_i$ can be separated by a line from the sublists assigned to $p_j$. Using these properties, we further obtain the following structural property (called {\em ordering property}; see Figure~\ref{fig:asgn}) about the optimal solution $S$: There exists an ordering of the centers of $S$ as $p_{i_1},p_{i_2},\ldots,p_{i_k}$ such that (1) $p_{i_1}$ (resp., $p_{i_k}$) is only assigned one sublist; (2) if a center $p_{i_j}$, $1<j < k$, is assigned two sublists, then one of them is on $P_1$, the portion of $P$ from $p_{i_1}$ to $p_{i_k}$ clockwise, and the other is on $P_2$, the portion of $P$ from $p_{i_1}$ to $p_{i_k}$ counterclockwise; (3) the order of the centers of the sublists along $P_1$ (resp., $P_2$) from $p_{i_1}$ to $p_{i_k}$ is a (not necessarily contiguous) subsequence of the above ordering. 
%(4) the points of the sublists on $P_1$ from $p_1$ to $p_t$ is a sublist of the above order. 
%It should be noted that some of these properties resemble those in the continuous $k$-center problem~\cite{ref:ChoiEf23}. However, due to the constraint in our discrete problem that the center of each disk must be a point $P$, the proofs are completely different from  \cite{ref:ChoiEf23}.

\begin{figure}[t]
\begin{minipage}[h]{\textwidth}
\begin{center}
\includegraphics[height=2.2in]{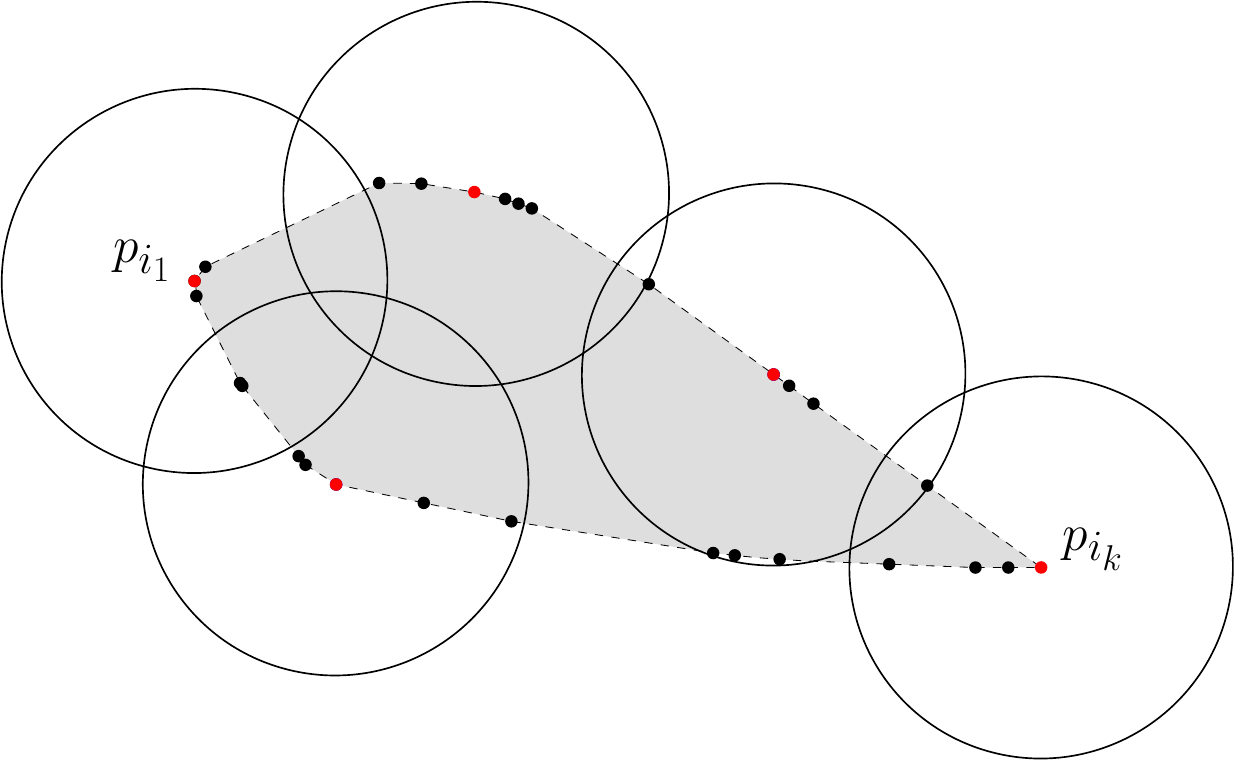}
\caption{\footnotesize Illustrating the ordering property of $S$ (the centers of the disks).}
\label{fig:asgn}
\end{center}
\end{minipage}
\vspace{-0.1in}
\end{figure}

The above ordering property is crucial to the success of our method. 
%Indeed, discovering the property is one of our main contributions in this paper. 
Using the property, we develop a dynamic programming algorithm for the problem and the runtime is $O(kn^2\log^2 n)$. 
Setting $k=n$ leads to an $O(n^3\log^2 n)$ time algorithm for the original weighted dominating set problem. 

These properties are also applicable to the unweighted case, which is essentially a special case of the weighted problem. Using an additional greedy strategy, the runtime of the algorithm can be improved by roughly a linear factor for the unweighted case. 

To solve the discrete $k$-center problem, as already discussed above, the algorithm for the unweighted dominating set problem can be used to solve the decision problem: Given any value $r$, determine whether $r\geq r^*$, where $r^*$ is the optimal objective value, i.e., the minimum value for which there exist $k$ centers such that the maximum distance from any point of $P$ to its closest center is at most $r^*$. Observe that $r^*$ must be equal to the distance of two points of $P$. As such, by doing binary search on the pairwise distances of points of $P$ and applying the distance selection framework in~\cite{ref:WangIm23} with our unweighted dominating set algorithm, we can compute $r^*$ in $O(n^{4/3}\log n + kn \log^2 n)$ time. Furthermore, using parametric search~\cite{ref:ColeSl87, ref:MegiddoAp83}, we develop another algorithm of $O(k^2 n\log^2 n)$ time, which is faster than the first algorithm when $k = o(n^{1/6}/\sqrt{\log n})$. 

For the independent set problem, our algorithm is a dynamic program, which is in turn based on the observation that the Voronoi diagram of a set of points in convex position forms a tree~\cite{ref:AggarwalA89}. The (unweighted) size-3 case is solved by new observations and developing efficient data structures. As discussed above, we tackle the dispersion problem by using the independent set algorithm as a decision procedure. 
For computing a maximum-weight independent set of size $3$ for points in arbitrary position, our algorithm relies on certain interesting observations and a {\em tree-structured biclique partition} of $P$. Biclique partition has been studied before, e.g., \cite{ref:KatzAn97,ref:WangIm23}. However, to the best of our knowledge, tree-structured biclique partitions have never been introduced before. Our result may find applications elsewhere. 

\paragraph{Outline.} The rest of the paper is organized as follows. After introducing notation in Section~\ref{sec:pre}, we present our algorithms for the dominating set, the discrete $k$-center, the independent set, and the dispersion problems for points in convex position in Sections~\ref{sec:convex}, \ref{sec:discretecenter}, \ref{sec:isconvex} and \ref{sec:dispersion}, respectively. As the only problem for points in arbitrary position studied in this paper, the size-3 weighted independent set problem is discussed in Section~\ref{sec:generalis3weight}.

\section{Preliminaries}
\label{sec:pre}
We introduce some notations that will be used throughout the paper, in addition to those already defined in Section~\ref{sec:intro}, e.g., $P$, $n$, $G(P)$. 

A {\em unit disk} refers to a disk with radius $1$; the boundary of a unit disk is a {\em unit circle}.  For any point $p$ in the plane, we use $D_p$ to denote the unit disk centered at $p$. For any two points $p$ and $q$ in the plane, we use $|pq|$ to denote their (Euclidean) distance and use $\overline{pq}$ to denote the line segment connecting them. Let $\overrightarrow{pq}$ to denote the directed segment from $p$ to $q$. 

For any compact region $R$ in the plane, we use $\partial R$ to denote its boundary and use $\overline{R}$ to denote the complement region of $R$ in the plane. In particular, for a disk $D$ in the plane, $\partial D$ is its bounding circle, and $\overline{D}$ refers to the region of the plane outside $D$. 

Let $\calH(P)$ be the convex hull of $P$. If the points in $P$ are in convex position, then we can consider $P$ as a cyclic sequence. Specifically, let $P = \langle p_1, p_2, \ldots, p_n \rangle$ represent a cyclic list of the points ordered counterclockwise along $\calH(P)$. We use a {\em sublist} to refer to a contiguous subsequence of $P$. Multiple sublists are said to be \textit{consecutive} if their concatenation is also a sublist. 
For any two points $p_i$ and $p_j$ in $P$, we define $P[i,j]$ as the sublist of $P$ from $p_i$ counterclockwise to $p_j$, inclusive, i.e., if $i \leq j$, then $P[i,j] = \langle p_i, p_{i+1}, \ldots, p_j \rangle$; otherwise, $P[i,j] = \langle p_i, p_{i+1}, \ldots, p_n, p_1, \ldots, p_j \rangle$. We also denote by $P(i,j]$ the sublist $P[i,j]$ excluding $p_i$, 
and similarly for other variations, e.g., $P[i,j)$ and $P(i,j)$. 

For simplicity of the discussion, we make a general position assumption that no three points of $P$ are collinear and no four points lie on the same circle. This assumption is made without loss of generality as degenerate cases can be handled through perturbations.

%In the weighted dominating set problem, we assume that the weight of each point of $P$ is positive since otherwise points with non-positive weights can always be included in the solution. 

\section{The dominating set problem}
\label{sec:convex}

In this section, we present our algorithms for the dominating set problem on a set $P$ of $n$ points in convex position. 
The weighted and unweighted cases are discussed in Sections~\ref{sec:domwgt} and \ref{sec:domunwgt}, respectively. 
Before presenting these algorithms, we first prove in Section~\ref{sec:structure} the structural properties that our algorithms rely on. 

\subsection{Structural properties}
\label{sec:structure}
In this section, we examine the structural properties of the dominating sets in the unit-disk graph $G(P)$. As discussed in Section~\ref{sec:approach}, the success of our method hinges on these properties. 
%Note that these properties are for both the weighted and the unweighted cases. 
We introduce these properties for the weighted case, which are also applicable to the unweighted case. 
%We first introduce several key concepts.

Let $\calA $ represent a partition of $P$ into consecutive, nonempty, and disjoint sublists. Suppose $S \subseteq P$ is a dominating set of $G(P)$; points of $S$ are called {\em centers}. It is not difficult to see that the union of the collection $\calD$ of unit disks centered at the points in $S$ covers $P$. 
%For convenience, we refer to points of $S$ as {\em centers}. 

We say that a collection $\calD$ of unit disks {\em covers} $\calA$ if every sublist $\alpha\in \calA$ is covered by at least one disk from $\calD$. An \textit{assignment} $\phi: \calA \rightarrow S$ is a mapping from sublists in $\calA$ to points in $S$, such that each sublist $\alpha$ is assigned to exactly one center $p_i \in S$ with $\alpha\subseteq D_{p_i}$. 
For each $p_i \in S$, we define $G_{p_i}$ as the set of points in the sublists $\alpha \in \calA$ that are assigned to $p_i$; $G_{p_i}$ is called the {\em group} of $p_i$. Depending on the context, $G_{p_i}$ may also represent the collection of sublists assigned to $p_i$. %In the latter case, we say that $G_{p_i}$ contains $m$ sublists, meaning that the points assigned to $G_{p_i}$ form $m$ non-consecutive sublists within $P$.
By definition, the groups of two centers of $S$ are disjoint.
%Groups of all centers of $S$ are also the groups of $\phi$. 

An assignment $\phi$ is said to be \textit{line separable} if, for every two groups of $\phi$, there exists a line $\ell$ that separates the points from the two groups, that is, the points of one group lie on one side of $\ell$ or on $\ell$  while those of the other group lie strictly on the other side of $\ell$. 
%We further describe the assignment as \textit{balanced} if the cardinality of each group is at most $2$.

As discussed in Section~\ref{sec:approach}, our main target is to prove the ordering property. This is achieved by proving a series of lemmas. We start with the following lemma that proves a line separable property. 

%The following lemmas provide main properties of a dominating set of $G(P)$.

\begin{lemma}
    \label{lem:linesep}
    Let $S$ be a dominating set of $G(P)$. There exist a partition $\calA$ of $P$ and a line-separable assignment $\phi: \calA\rightarrow S$ such that for any center $p_i\in S$, $p_i\in G_{p_i}$, meaning that a sublist of $p_i$ contains $p_i$.
\end{lemma}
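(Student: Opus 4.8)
The plan is to obtain both the partition $\calA$ and the assignment $\phi$ from a single geometric rule: send each point of $P$ to its \emph{nearest} center in $S$. Concretely, for each $p\in P$ let $c(p)\in S$ be the center minimizing $|p\,c(p)|$ (ties broken by a fixed rule, say by index; the general position assumption makes exact ties non-generic anyway). Group the points of $P$ by the value of $c(\cdot)$, i.e.\ set $G_c=\{p\in P: c(p)=c\}$ for each $c\in S$, which is exactly the intersection of $P$ with the Voronoi cell of $c$ in the Voronoi diagram of $S$. Then I would let $\calA$ be the collection of maximal contiguous runs of $P$ (along the cyclic hull order) whose points share a common nearest center, and let $\phi$ map each such run to that common center. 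Since the classes of $c(\cdot)$ partition $P$, these runs are nonempty, disjoint, consecutive sublists that tile $P$, so $\calA$ is a valid partition and $\phi$ a well-defined assignment; note that a single center may receive several runs when its Voronoi cell clips out non-adjacent arcs of the hull, which the definition permits.

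Three things must then be verified. First, for $\phi$ to be a legal assignment each run $\alpha$ must be covered by the disk of its center, i.e.\ $\alpha\subseteq D_c$ where $c=\phi(\alpha)$. This is the one place where the hypothesis that $S$ is a dominating set is used: every $p\in P$ lies within distance $1$ of some center $c'$, so $|p\,c(p)|\le |p\,c'|\le 1$, whence $p\in D_{c(p)}$ and therefore $\alpha\subseteq D_c$. Second, $c\in G_c$ for every $c\in S$, because the nearest center to $c$ is $c$ itself (distance $0$); hence $c$ lies in its own run, establishing the required property ``$p_i\in G_{p_i}$''. Third, line separability: for any two groups $G_{c_i}$ and $G_{c_j}$, take $\ell$ to be the perpendicular bisector of $c_i$ and $c_j$; every point of $G_{c_i}$ is strictly closer to $c_i$ than to $c_j$ and hence strictly on the $c_i$-side of $\ell$, and symmetrically for $G_{c_j}$, so $\ell$ separates the two groups regardless of how many arcs each group comprises.

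The only genuinely nontrivial step is the coverage check $\alpha\subseteq D_c$, and its entire content is the observation that ``nearest \emph{covering} center'' may be replaced by ``nearest center'' precisely because domination already guarantees that \emph{some} center covers $p$; I would state this dependence explicitly. The remaining obstacles are bookkeeping: confirming that the maximal-run decomposition really yields a partition into consecutive sublists in the cyclic sense of Section~\ref{sec:pre}, and handling the (non-generic) case of a point equidistant to two centers so that the separation along $\ell$ can be taken to be strict, both of which are dispatched by the general position assumption and the fixed tie-breaking order. I would also remark that convex position is \emph{not} needed for coverage or separability; it serves here only to give meaning to ``sublist'' and ``consecutive'' through the cyclic hull order, and it is the subsequent lemmas (bounding the number of sublists per center and establishing the ordering property) where convexity does the substantive work.
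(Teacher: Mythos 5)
Your proposal is correct and follows essentially the same route as the paper: the paper's regions $D_i'$ are exactly the unit disks clipped to the (closed) Voronoi cells of the centers in the Voronoi diagram of $S$, and its greedy maximal-sublist assignment realizes precisely your nearest-center rule with ties broken by index, with the bisector argument for line separability and the distance-$0$ argument for $p_i\in G_{p_i}$ appearing in the same form. The only difference is presentational: your direct observation that domination gives $|p\,c(p)|\le 1$ replaces the paper's inductive set-manipulation proof of the covering claim $\bigcup_i D_i=\bigcup_i D_i'$, which is a cleaner way to establish the same fact.
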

\begin{proof}
Let the centers of $S$ be $p'_i$, $1\leq i\leq k$, with $k=|S|$. For each center $p_i'\in S$, to simplify the notation, let $D_i$ represent $D_{p_i'}$, i.e., the unit disk centered at $p_i'$. 

Consider two disks $D_i, D_j$ centered at two points $p'_i$ and $p'_j$ of $S$, respectively. Define $\ell$ as the bisector of $p'_i$ and $p'_j$. If $D_i\cap D_j\neq \emptyset$, then let $u_{ij}, v_{ij}$ be the intersections of the boundaries of $D_i$ and $D_j$. 
Without loss of generality, we assume that $\ell$ is vertical and $p'_i$ lies to the left of $p'_j$ (see Figure~\ref{fig:line_sep_cover}).     
Define $D_i^j$ to be the region of $D_i$ strictly to the right of $\ell$, and in particular, no point of the segment $\overline{u_{ij}v_{ij}}$ is in $D_i^j$ (let $D_i^j=\emptyset$ if $D_i\cap D_j=\emptyset$). Define $D_j^i$ symmetrically. Note that $D_j^i\cup D_i^j \cup \overline{u_{ij}v_{ij}} = D_i\cap D_j$ and $D_j^i\cap D_i^j=\emptyset$. 
    For any $p_i\in S$, define $D_i'=D_i - \bigcup_{j=1, j\neq i}^k D_i^j$. 
    
    We first prove the following {\em claim:} $P\subseteq\bigcup_{i=1}^k D_i'$.  %In the following, we prove the claim. 
    %(2) for any pair $i,j (i\neq j)$ such that $1\leq i,j\leq k$, the intersection $D_i'\cap D_j'\subseteq u_{ij}v_{ij}$.
     As $P\subseteq\bigcup_{i=1}^k D_i$, it is sufficient to show $\bigcup_{i=1}^k D_i=\bigcup_{i=1}^k D_i'$. We prove it by induction on $k$. If $k=1$, it is obviously true that $D_1=D_1'$. 
    % \textbf{Base.} Let $k=2$, then $D_1'=D_1 - D_1^2$ and $D_2'=D_2 - D_2^1$. Thus, we have:
    %     $$D_1'\cup D_2'=(D_1 - D_1^2)\cup (D_2 - D_2^1)=\underbrace{(D_1\cup(D_2 - D_2^1))}_\text{$A$}-\underbrace{(D_1^2-(D_2 - D_2^1))}_\text{$B$}$$
    %     As $D_2^1\subset D_1$, then set $A$ equals $D_1\cup D_2$, while $B=\emptyset$ as $D_1^2\subset D_2$ and $D_1^2\cap D_2^1=\emptyset$. Thus, the base case is confirmed.

    \begin{figure}[t]
\begin{minipage}[t]{\textwidth}
\begin{center}
\includegraphics[height=1.7in]{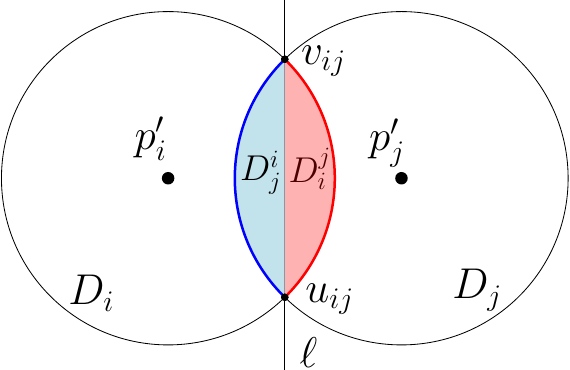}
\caption{\footnotesize Illustrating $D_j^i$ and $D_i^j$, which are the blue and red regions, respectively, excluding $\overline{u_{ij}v_{ij}}$.}
\label{fig:line_sep_cover}
\end{center}
\end{minipage}
\vspace{-0.1in}
\end{figure}
       
       Assume that the statement holds for $k-1$, that is, $\bigcup_{i=1}^{k-1} D_i=\bigcup_{i=1}^{k-1} D_i'$. Note that in the case of $k-1$, $D_i'=D_i - \bigcup_{j=1, j\neq i}^{k-1} D_i^j$. To avoid confusion in the notation, for any $1\leq i\leq k-1$, let $D_i''=D_i - \bigcup_{j=1, j\neq i}^{k-1} D_i^j$. Hence, we have $\bigcup_{i=1}^{k-1} D_i=\bigcup_{i=1}^{k-1} D_i''$ by assumption.
       We still define $D_i'=D_i - \bigcup_{j=1, j\neq i}^k D_i^j$ with respect to $k$. Then, we have $D_i'=D_i''-D_i^k$ for any $1\leq i\leq k-1$. To see this, we derive 
       \begin{align*}
    D_i' &= D_i - \bigcup_{j=1, j\neq i}^{k} D_i^j=(D_i - \bigcup_{j=1, j\neq i}^{k-1} D_i^j)\cap (D_i-D_i^k)=D_i''-\overline{D_i-D_i^k}\\
    & =D_i''-(\overline{D_i}\cup D_i^k)=(D_i''-\overline{D_i})\cap (D_i''-D_i^k).
\end{align*}       
        As $D_i''\subseteq D_i$, $D_i''-\overline{D_i}=D_i''$. %Clearly, $D_i''-D_i^k\subseteq D_i''$. 
        Hence, $(D_i''-\overline{D_i})\cap (D_i''-D_i^k)=D_i''\cap (D_i''-D_i^k)=D_i''-D_i^k$. We thus obtain $D_i'=D_i''-D_i^k$.

        %$$D_i' = D_i - \bigcup_{j=1, j\neq i}^{k} D_i^j=\underbrace{(D_i - \bigcup_{j=1, j\neq i}^{k-1} D_i^j)}_\text{$D_i''$}\cap (D_i-D_i^k)=D_i''-\overline{D_i-D_i^k}=$$    $$=D_i''-(\overline{D_i}\cup D_i^k)=(D_i''-\overline{D_i})\cap (D_i''-D_i^k)$$
        
        We now prove that the statement holds for $k$, i.e., $\bigcup_{i=1}^k D_i=\bigcup_{i=1}^k D_i'$.         
        Since $\bigcup_{i=1}^{k-1} D_i=\bigcup_{i=1}^{k-1} D_i''$ by assumption, it suffices to show that $\bigcup_{i=1}^{k-1} D_i'' \cup D_k =\bigcup_{i=1}^k D_i'$. We first make the following two observations.
        \begin{enumerate}
            \item
            $\bigcup_{i=1}^{k-1} D_i'' \cup D_k=\bigcup_{i=1}^{k-1} D_i' \cup D_k$. To see this, for any $1\leq i\leq k-1$, by definition, $D_i^k\subseteq D_k$, and thus $D_i''\cup D_k=(D_i''- D_i^k)\cup D_k=D_i'\cup D_k$.                  
            %by the definition of $D_i^j$, $\bigcup_{i=1}^{k-1} D_i^k\subseteq D_k$. Therefore, for any $1\leq i\leq k-1$, $D_i''\cup D_k=(D_i''- D_i^k)\cup D_k$. 
            Applying this iteratively, we obtain $\bigcup_{j=1}^{k-1} D_i'' \cup D_k=\bigcup_{j=1}^{k-2} D_i'' \cup(D_{k-1}''\cup D_k)=\bigcup_{j=1}^{k-2} D_i''\cup D_k\cup D_{k-1}'=\dots =\bigcup_{j=1}^{k-1} D_i' \cup D_k$.

            \item
            $\bigcup_{i=1}^{k-1} D_k^i\subseteq\bigcup_{i=1}^{k-1}D_i'$.
            Assume to the contrary that this is not true. Then there exists a point $q$ such that $q\in \bigcup_{i=1}^{k-1} D_k^i$ but $q\notin \bigcup_{i=1}^{k-1} D_i'$. For any region $D_i^j$, by definition, $|q'p'_{j}|<|q'p'_{i}|$ holds for any point $q'\in D_{i}^{j}$. Let $p'_{j}$ be the closest point to $q$ among $\{p'_1,p'_2,\ldots, p'_{k-1}\}$. As $q\in \bigcup_{i=1}^{k-1} D_k^i$, $q\in D_k^t$ for some $t\in [1,k-1]$. Hence, $|qp_t|<|qp_k|$. As $|qp_j|\leq |qp_t|$, we have $|qp_j|<  |qp_k|$. Therefore, $|qp_j|\leq |qp_i|$ holds for all $i\in [1,k]$. 

            On the other hand, since $q\in D_k^t$, we have $q\in D_t$ and $|qp_t|\leq 1$. Since $|qp_j|\leq |qp_t|$, it holds that $|qp_j|\leq 1$ and thus $q\in D_j$. Since $q\notin \bigcup_{i=1}^{k-1} D_i'$, we have $q\notin D_j'$. Hence, it follows that $q\in\bigcup_{i=1,i\neq j}^{k} D_j^i $. Therefore, $q\in D_j^{l}$ for some $l$ with $l\in [1,k]$ and $l\neq j$, and thus $|qp_l|<|qp_j|$. However, this incurs a contradiction since $|qp_j|\leq |qp_i|$ for all $i\in [1,k]$. 
        \end{enumerate}
        In light of the above observations, we can derive 
        \begin{align*}
         \bigcup_{i=1}^{k-1} D_i'' \cup D_k & =\bigcup_{i=1}^{k-1} D_i' \cup D_k= \bigcup_{i=1}^{k-1} D_i' \cup \left(\bigcup_{i=1}^{k-1} D_k^i\cup (D_k-\bigcup_{i=1}^{k-1} D_k^i)\right)\\
         &=\bigcup_{i=1}^{k-1} D_i' \cup (D_k-\bigcup_{i=1}^{k-1} D_k^i)=\bigcup_{i=1}^{k-1} D_i' \cup D_k'=\bigcup_{i=1}^{k} D_i'.   
        \end{align*}
                
        This proves the claim that $P\subseteq\bigcup_{i=1}^k D_i'$.         
              
%         To prove (2), it is sufficient to show that for any $x\in D_i'\cap D_j'$ follows that $x$ is a member of $u_{ij}v_{ij}$. Consider such a point $x$. Since $x\in D_i'\cap D_j'$, then $x\in D_i\cap D_j=D_i^j\cup D_j^i\cup u_{ij}v_{ij}$, where the latter three sets are disjoint. Assume that $x\notin u_{ij}v_{ij}$. Then, without loss of generality let $x$ be a member of $D_i^j$. On the other side, $x\in D_i'=D_i-D_i^j$, that is $x\notin D_i^j$, a contradiction.
 
    Based on the claim, we now construct a line separable assignment $\phi$ as follows. Starting from set $D_1'$, assign to $G_{p_1'}$ the maximal sublists of points of $P$ that are inside $D_1'$, and then remove these points from further consideration. Next, for $D_2'$, assign to $G_{p_2'}$ the maximal sublists of $P$ consisting of unassigned points that are inside $D_2'$. Repeat this process, until all points of $P$ are assigned. We argue that resulting groups are pairwise line separable, i.e., for any two groups $G_{p_i'}$ and $G_{p_j'}$ with $i<j$, they can be separated by a line.  

    Consider the bisector $\ell$ of $p'_i$ and $p'_j$, the centers of $D_i$ and $D_j$. We argue that $G_{p_i'}$ and $G_{p_j'}$ can be separated by $\ell$. Indeed, by definition, $D_i'$ and $D_j'$ lie on the two different sides of $\ell$. Since $i<j$, $G_{p_i'}$ is constructed earlier than $G_{p_j'}$. Therefore, if there are points of $P$ on $\ell$ inside both $D_i'$ and $D_j'$, all these points will not be assigned to $G_{p_j'}$. Since $D_i'$ and $D_j'$ lie on the two different sides of $\ell$, we obtain that $G_{p_i'}$ and $G_{p_j'}$ are separated by $\ell$. 
    %say left and right halfplane (including $l$), respectively: if $p\in D_i'$ then $p$ is not in open right halfplane. By (2), for any $p\in P$ such that $p$ is in $D_i'$, if $p$ is also in $D_j'$, then $p\in u_[ij]v_{ij}$. However, since $i<j$, then such common points were assigned to $G_{i'}$ with $i'\leq i$. Thus, if there are points in $D_i'\cap D_j'$ then either $G_i$ contains them, or both $G_i, G_j$ does not include them. Therefore, $l$ separates $G_i, G_j$. Since $\cup_{i=1}^k D_i'$ is a cover for $P$, then every point of $P$ will be eventually assigned to a point of $S$. Therefore, the assignment is line separable.

    We finally show that for each $p_i'\in S$, the group $G_{p_i'}$ contains $p_i'$. Indeed, for any two centers $p'_i, p'_j \in S$, by definition, $p_i'$ cannot be in $D_i^j\cup \overline{u_{ij}v_{ij}}$. Therefore, $p_i'$ must be in $D_i'$ and cannot be in $D_j'$ for any $j\neq i$. According to our construction of $\phi$, $p_i'$ cannot be assigned to $G_{p_j'}$ for any $j\neq i$ because only points in $D_{p_j'}$ can be assigned to $G_{p_j'}$. Therefore, $p_i'$ must be in $G_{p_i'}$. This proves the lemma. 
\end{proof}

For the assignment $\phi$ from Lemma~\ref{lem:linesep}, for each center $p_i\in S$, we refer to the sublist of $p_i$ that contains $p_i$ as the {\em main sublist} of $p_i$ while all other sublists (if any) are called {\em secondary sublists} of $p_i$. 

\begin{lemma}
\label{lem:balassign}
    Let $S$ be a dominating set for $G(P)$. Then there exist a partition $\calA$ of $P$ and an assignment $\phi: \calA\rightarrow S$ with the following properties: 
    \begin{enumerate}
        \item $\phi$ is line separable.
        \item Each center of $S$ is assigned at most two sublists, one of which is a main sublist.
        \item For any center $p_i\in S$ that has a secondary sublist $\beta_i$, there exists a point $p_t\in \beta_i$ such that $p_t\not\in D_{q_1}$ and $p_t\not\in D_{q_2}$ for some centers $q_1$ and $q_2$ of $S$ with $q_1\in P(i,t)$ and $q_2\in P(t,i)$. 
    \end{enumerate}    
\end{lemma}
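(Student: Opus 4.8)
The starting point is Lemma~\ref{lem:linesep}: it already supplies a partition $\calA$ and a line-separable assignment $\phi$ in which every center lies in its own group, so Property~1 and the existence of a main sublist come for free. The whole task is therefore to massage such an assignment so that, in addition, no center keeps more than one secondary sublist (Property~2) and every secondary sublist that survives is \emph{forced} in the sense of Property~3. The plan is to argue by an extremal/exchange argument: among all pairs $(\calA,\phi)$ satisfying the conclusion of Lemma~\ref{lem:linesep}, I would fix one that minimizes the number of sublists $|\calA|$ (equivalently, since there is exactly one main sublist per center, one that minimizes the number of secondary sublists), using a suitable potential that, if necessary, is refined to penalize concentrating several sublists on a single center. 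I then intend to show that this extremal assignment automatically satisfies Properties~2 and~3, since otherwise a local reassignment would strictly decrease the potential.

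To exploit minimality for Property~3, I would first unwind the statement. Negating Property~3 for a secondary sublist $\beta_i$ of $p_i$ says precisely that \emph{every} point $p_t\in\beta_i$ is covered by \emph{all} centers on one of the two arcs $P(i,t)$, $P(t,i)$ joining $p_i$ and $p_t$; that is, each point of $\beta_i$ is redundantly dominated entirely from one side. Crucially, ``all centers on that side'' includes the center of the group lying immediately next to $\beta_i$ on that side, so each point of $\beta_i$ can be handed to an adjacent group that already covers it. Thus $\beta_i$ is absorbed into its neighboring sublists and disappears as a separate sublist, strictly lowering $|\calA|$ and contradicting minimality. Hence in the extremal assignment every secondary sublist carries a witness $p_t$, which is exactly Property~3.

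For Property~2, I would first record the geometric meaning of line separability: since $P$ is in convex position, two groups are line separable if and only if their convex hulls are disjoint, equivalently the two groups do not interleave around the cyclic order of $P$. Suppose a center $p_i$ had two secondary sublists $\beta,\beta'$ in addition to its main sublist. Writing the cyclic order of these three arcs and noting that minimality forces each of the three gaps between them to be nonempty (an empty gap would let two of $p_i$'s sublists merge), the non-interleaving constraint forces every group meeting the gap strictly between $\beta$ and $\beta'$ to be contained in that gap, and similarly for the other gaps. Combining this nesting structure with Property~3 (so that both $\beta$ and $\beta'$ are forced, not merely redundant), I would derive either a forbidden interleaving of two groups or a count-reducing reassignment of one of $\beta,\beta'$, again contradicting extremality. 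This yields at most one secondary sublist per center, which is Property~2.

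The main obstacle is the reassignment itself, where I must verify three things simultaneously: that every reassigned point still lies in the disk of its new center and that all sublists remain contiguous subsequences of $P$; that line separability is preserved, i.e., that the convex hulls of the groups stay pairwise disjoint; and that the potential strictly decreases so the process terminates. Preserving disjointness of the hulls is the delicate point, since enlarging one group can in principle make its hull cross another's; the resolution I expect is that a point peeled off a redundant side is only ever moved toward a group that already separates it from $p_i$, so that no new interleaving is created. A secondary difficulty is that the ``redundant side'' guaranteed by the negation of Property~3 may differ from point to point along $\beta_i$; I anticipate using convexity to show that this side can switch at most once along $\beta_i$, so that a prefix merges into the neighbor on one side and a suffix into the neighbor on the other, keeping every merge adjacent. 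Finally, I must choose the potential robustly enough that the Property~2 and Property~3 reductions do not undo one another.
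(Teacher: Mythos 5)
Your overall architecture (recasting the paper's iterative reassignment as an extremal/exchange argument: minimize the number of sublists, then show the minimizer has the desired properties) is a legitimate reformulation, and your treatment of Property~3 is essentially sound: negating Property~3 does imply $\beta_i\subseteq D_{q_A}\cup D_{q_B}$ for the two centers $q_A,q_B$ of the groups adjacent to $\beta_i$, and one can then split $\beta_i$ along the bisector of $q_A$ and $q_B$ (convexity guarantees the bisector cuts $\beta_i$ into at most two contiguous pieces, each piece lying in the disk of the nearer center), absorb the pieces into the adjacent groups, and reduce $|\calA|$. You correctly flag, but do not supply, the proof that this merge preserves line separability; the paper handles this with a careful argument (replacing the old separating line by the line through the endpoints of $P[a,b]\setminus\beta''_{11}$), and this is a real but fillable detail.

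The genuine gap is Property~2. In your framework, extremality plus your Property~3 argument only tells you that if $p_i$ retains two secondary sublists $\beta'_1,\beta'_2$, then \emph{each} of them has a witness point not covered by the disks of its two adjacent-gap centers. To get a contradiction you must show this double-witness configuration is impossible, and this is exactly the heart of the paper's proof, which you have no substitute for: a metric angle-counting claim that for \emph{any} centers $q_1,q_2,q_3,q_4$ chosen in the three gaps, either $\beta'_1\subseteq D_{q_1}\cup D_{q_2}$ or $\beta'_2\subseteq D_{q_3}\cup D_{q_4}$. The paper proves this by assuming witnesses $c_1\in\beta'_1$, $c_2\in\beta'_2$ exist, comparing angles in the triangles $\triangle q_1c_1p_i$, etc.\ (using $|c_1p_i|\le 1<|c_1q_1|$ and so on), and deriving $\zeta_1+\zeta_2>360\degree$, contradicting the convex position of the seven points involved. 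Neither of your two proposed escape routes replaces this: the configuration with two forced secondary sublists is perfectly consistent with line separability as a combinatorial (non-interleaving) condition, so no ``forbidden interleaving'' arises --- the impossibility is metric, driven by the unit-distance constraints, not by the cyclic structure; and ``a count-reducing reassignment'' is circular, since your extremal choice already presupposes that no such reassignment exists --- proving that one \emph{must} exist when a center has two secondary sublists is precisely the missing angle argument. Without this geometric ingredient your proof of Property~2 does not go through.
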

\begin{proof}
    Let $\phi$ be the assignment obtained from Lemma~\ref{lem:linesep}. In the following, we show that $\phi$ can be adjusted so that the group $G_{p_i}$ of each center $p_i\in S$ has at most two sublists and one of them must contain $p_i$. 

    For each center $p_i \in S$, we concatenate the consecutive sublists of $\calA$ that assigned to the group $G_{p_i}$. As a result, every group will consist of only nonconsecutive sublists.
    
    Consider a center $p_i\in S$. If $G_{p_i}$ has at most two sublists, then we do not need to do anything for $G_{p_i}$. Otherwise, $G_{p_i}$ has at least three sublists, and more specifically, $G_{p_i}$ has one main sublist and at least two secondary sublists. By the line separable property of $\phi$ and due to the convexity of $P$, for any other group $G_{p_j}$ that contains a sublist between two sublists of $G_{p_i}$, its center $p_j$ is also between these sublists of $G_{p_i}$. The converse is also true: for any center $p_j\in S$ lying between two sublists of $G_{p_i}$, all sublists of $G_{p_j}$ are between these sublists.
    
    %Additionally, by Lemma~\ref{lem:linesep}, the main arc is in $G_{p_i}$, which is the only portion of $\alpha_{(i)}$ that assigned to $G_{p_i}$, observation~\ref{obs:1onmain}. As the consequence, the other arcs of $G_{p_i}$ are secondary. Now we show how to decrease the number of secondary arcs to one.

    Consider two secondary sublists $\beta_1', \beta_2'$ of $G_{p_i}$. Let $\alpha_i$ be the main sublist of $G_{p_i}$. Removing $\beta_1'$, $\beta_2'$, and $\alpha_i$ from $P$ will leave three {\em gaps} in the cyclic list of $P$.  
    As sublists of $G_{p_i}$ are disjoint, there is at least one point of $P$ in each gap. Since $S$ is a dominating set, due to the line separable property of $\phi$ and the convexity of $P$, there is at least one center of $S$ in each gap. Let $q_1$ be a point of $S$ lying between $\alpha_i$ and $\beta_1'$. Let $q_2, q_3$ be centers of $S$ lying between $\beta_1'$ and $\beta_2'$; if there is only one such center exists, then let $q_2$ and $q_3$ refer to the same point. Let    
    $q_4$ be a center of $S$ lying between $\beta_2'$ and $\alpha'$. Without loss of generality, we assume that $\alpha_i$, $q_1$, $\beta_1'$, $q_2$, $q_3$, $\beta_2'$, and $q_4$ are ordered counterclockwise along $P$ (see Figure~\ref{fig:listpoint}).

\begin{figure}[t]
\begin{minipage}[t]{\textwidth}
\begin{center}
\includegraphics[height=1.7in]{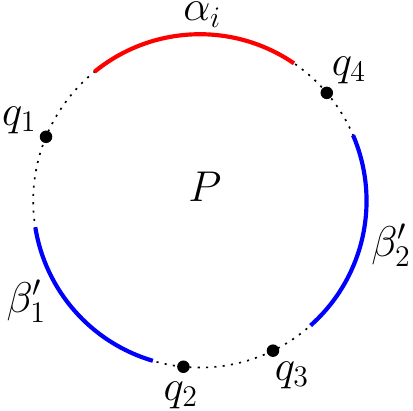}
\caption{\footnotesize Illustrating a schematic view of $P$ (i.e., the dotted circle) and the relative positions of $\alpha_i$, $q_1$, $\beta_1'$, $q_2$, $q_3$, $\beta_2'$, and $q_4$. The sublists $\alpha_i$, $\beta_1'$, and $\beta_2'$ are illustrated with solid arcs.}
\label{fig:listpoint}
\end{center}
\end{minipage}
\vspace{-0.1in}
\end{figure}
    
    We claim that $\beta_1'\subseteq D_{q_1}\cup D_{q_2}$ or $\beta_2'\subseteq D_{q_3}\cup D_{q_4}$. Assume to the contrary that this is not true. Then, there exist a point $c_1\in \beta_1'$ and a point $c_2\in \beta_2'$ such that the distances $|c_1q_1|,|c_1q_2|,|c_2q_3|,|c_2q_4|$ are all greater than $1$; see Figure~\ref{fig:balassign}. Define angles (see Figure~\ref{fig:balassign}): $\xi_1=\angle c_1q_1p_i$, $\zeta_1=\angle q_2c_1q_1$, $\gamma_1=\angle c_1q_2p_i$, $\gamma_2=\angle p_iq_3c_2$, $\zeta_2=\angle q_3c_2q_4$, $\xi_2=\angle c_2q_4p_i$, $\theta_1=\angle q_1p_ic_1$, $\theta_2=\angle c_1p_iq_2$, $\theta_3=\angle q_3p_ic_2$, $\theta_4=\angle c_2p_iq_4$, and $\theta=\angle q_1p_iq_4$. Obviously, 
    \begin{equation}
    \label{eq:angles}
    \xi_1+\zeta_1+\gamma_1+\gamma_2+\zeta_2+\xi_2+\theta_1+\theta_2+\theta_3+\theta_4=4\cdot 180\degree
    \end{equation}
    Consider the triangle $\triangle q_1c_1p_i$. As $c_1\in \beta_1'\subseteq D_{p_i}$, we obtain that $|c_1p_i|\leq 1$. Since 
    $|q_1c_1|>1$, we have $|c_1p_i|<|q_1c_1|$ and thus $\xi_1<\theta_1$. Using the same argument, we can derive $\gamma_1< \theta_2$, $\gamma_2< \theta_3$, $\xi_2< \theta_4$. Consequently, by \eqref{eq:angles}, we have
    $$\zeta_1+\zeta_2+2\theta_1+2\theta_2+2\theta_3+2\theta_4>4\cdot180\degree$$
    Notice that $\theta_1+\theta_2+\theta_3+\theta_4\leq \theta$ (the equality occurs when $q_2=q_3$), where $\theta\leq 180\degree$ due to the convexity of $P$. Therefore:
    $$\zeta_1+\zeta_2+2\cdot180\degree\geq \zeta_1+\zeta_2+2\theta> 4\cdot180\degree$$
    Thus, we obtain $\zeta_1+\zeta_2> 360\degree$, which means that at least one of the angles is greater than $180\degree$. This incurrs a contradiction since $q_1,c_1,q_2,q_3,c_2,q_4, p_i$ are in convex position. Therefore, the claim is proved. In what follows, we adjust $\phi$ by using the claim. 

        \begin{figure}[t]
\begin{minipage}[t]{0.5\textwidth}
\begin{center}
\includegraphics[height=1.6in]{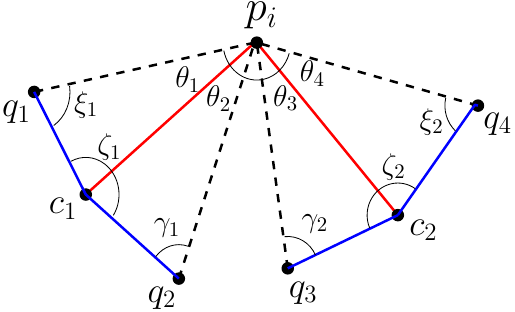}
\caption{\footnotesize  Illustrating angles for Lemma~\ref{lem:balassign}. 
         Solid red segments have lengths at most $1$ while
         solid blue segments have lengths greater than $1$.}
\label{fig:balassign}
\end{center}
\end{minipage}
\hspace{0.05in}
\begin{minipage}[t]{0.48\textwidth}
\begin{center}
\includegraphics[height=1.6in]{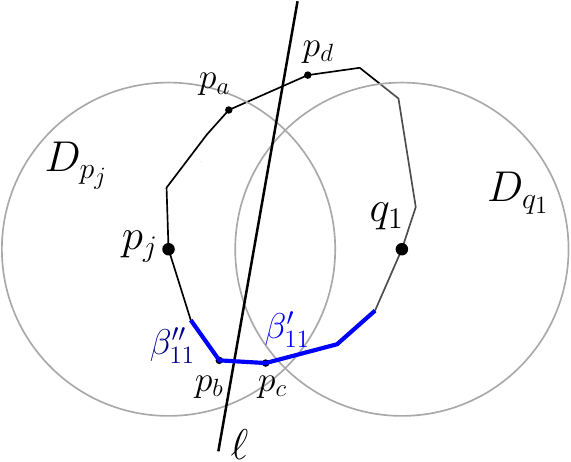}
\caption{\footnotesize $\beta'_{11}$ consists of the points in the blue part, and its portion left of $\ell$ is $\beta''_{11}$.}
\label{fig:balassign2}
\end{center}
\end{minipage}
\vspace{-0.15in}
\end{figure}

    %It remains to show how to make $\phi$ a balanced assignment. 
    In the above discussion, the centers $q_1, q_2, q_3, q_4\in S$ were chosen arbitrarily from the corresponding gaps between $\alpha_i$, $\beta_1'$, and $\beta_2'$. We now choose some particular centers. Let $q_1$ be the center of $S$ such that its group $G_{q_1}$ contains the sublist of $\calA$ immediately following $\beta_1'$ clockwise, $q_2$ the center such that its group $G_{q_2}$ contains the sublist of $\calA$ immediately following $\beta_1'$ counterclockwise, $q_3$ the center such that its group  $G_{q_3}$ contains the sublist of $\calA$ immediately following $\beta_2'$ clockwise, and $q_4$ the center such that its group $G_{q_4}$ contains the sublist of $\calA$ immediately following $\beta_2'$ counterclockwise. 
    %We note that points $q_1, q_2, q_3, q_4$ are still in counterclockwise order along $\calH(P)$. Specifically, if $q_2\neq q_3$, then by line separability and Lemma~\ref{lem:linesep}, $q_2$ precedes $q_3$ in counterclockwise order.

    According to the above claim, $\beta_1'\subseteq D_{q_1}\cup D_{q_2}$ or $\beta_2'\subseteq D_{q_3}\cup D_{q_4}$. Without loss of generality, we assume $\beta_1'\subseteq D_{q_1}\cup D_{q_2}$. Let $q_1=p_{n_1}$ and $q_2=p_{n_2}$. By definition, $\beta_1'\subseteq P[n_1,n_2]$. Since points of $P[n_1,n_2]$ are in convex position, $\beta_1'$ is split into at most two sublists $\beta'_{11}, \beta'_{12}$ that are separated by the bisector $\ell$ of $q_1$ and $q_2$. Without loss of generality, we assume that $\beta'_{11}$ is in the side of $\ell$ that contains $q_1$ and $\beta'_{12}$ is in the side of $\ell$ that contains $q_2$.
    %Up to the rotation, let subarcs $\beta'_{11}, \beta'_{12}$ denote portions of $\beta_1$ lying to the left and right of bisector $l$ of disks $D_{q_1}, D_{q_2}$, centers of which are in the left and right open halfplanes, respectively. If there is a point $p\in \beta_1$ such that $p\in l$, then split $\beta_1$ so that $p$ is in the subarc lying in a disk whose index of the center is smaller. 
    By definition, $G_{q_1}$ contains a sublist $\alpha'_1$ adjacent to $\beta_1'$ and thus to $\beta'_{11}$. We concatenate $\alpha'_1$ with $\beta'_{11}$ to obtain a new sublist for $G_{q_1}$. Similarly, by definition, $G_{q_2}$ contains a sublist $\alpha'_2$ adjacent to $\beta_2'$ and thus to $\beta'_{12}$. We concatenate $\alpha'_2$ with $\beta'_{12}$ to obtain a new sublist for $G_{q_2}$. The sublist $\beta_1'$ is then removed from $G_{p_i}$. We refer to this as a {\em reassignment procedure}. Note that the procedure does not change the number of sublists of $G_{q_1}$ or the number of sublists of $G_{q_2}$.
    
    We argue that after the reassignment procedure the line separable property and the main sublist property of the groups (i.e., each group contains a main sublist) still hold. First of all, as the procedure only reassigned a secondary sublist of $G_{p_i}$, the main sublist property trivially holds. Next we argue that the line separable property is also preserved. 
    It suffices to show that the new group $G_{q_1}$ (resp., $G_{q_2}$) can be separated from every other group by a line. We only discuss $G_{q_1}$ since the case of $G_{q_2}$ can be argued analogously. 
    
    The only change of the reassignment is to reassign $\beta_1'$ to $q_1$ and $q_2$. Due to the convexity, $G_{q_1}$ can still be separated from $G_{p_i}$ and also from groups whose centers are in the gap between $\beta_1'$ and $\beta_2'$ as well as the gap between $\alpha_i$ and $\beta_2'$. What remains to show is that the new $G_{q_1}$ can be separated from groups whose centers are in the same gap as $q_1$, i.e., the gap between $\alpha_i$ and $\beta_1'$.  Consider any other group $G_{p_j}$ in that gap. The original $G_{q_1}$ is separable from $G_{p_j}$ by some line $\ell$ (see Figure~\ref{fig:balassign2}). Due to the convexity, the line splits $P$ into two sublists: $P[a, b]$ and $P[c,d]$ for some points $p_a,p_b,p_c,p_d$. 
    Without loss of generality, we assume that $G_{p_j}\subseteq P[a,b]$ and $G_{q_1}\subseteq P[c,d]$ (see Figure~\ref{fig:balassign2}). Assume that after the reassignment, $\ell$ does not separate $G_{p_j}$ and $G_{q_1}$, which is due to the fact that a portion of $\beta'_{11}$ (denoted by $\beta_{11}''$) lies in the same side of $\ell$ as $G_{p_j}$, i.e., is in $P[a,b]$. Since $\beta_{11}''$ does not contain any point of $G_{p_j}$, $G_{p_j}\subseteq P[a,b]\setminus \beta''_{11}$. Due to the convexity, $G_{p_j}$ and new $G_{q_1}$ can still be separated by a line, e.g., the line that contains the two endpoints of $P[a,b]\setminus \beta''_{11}$.

    %Analogously, new $G_{q_2}$ is line separable from other groups. 
    %Additionally, the reassignment described, does not affect the number of arcs in redefined groups $G_{q_1}$ and $G_{q_2}$.

\paragraph{Handling other secondary sublists of $p_i$.}
    For the center $p_i\in S$, we can repeat the above procedure, each time reducing the number of secondary sublists of $p_i$ by one until at most one secondary sublist remains. Thus, the number of sublists of $p_i$ can be reduced to two while the number of sublists of every other center does not change. 
    %Our algorithm maintains the following \textit{invariant}: The number of arcs in other groups remains unchanged, and the centers of $S$ stay fixed.

\paragraph{Handling other centers.} 
    By applying the algorithm to each center $p_i\in S$ with more than two sublists, we will eventually reduce the number of sublists of each center to at most two —  one main sublist and at most one secondary sublist, while ensuring that the groups remain line separable and each group contains a main sublist. The current assignment $\phi$ satisfies the first two properties in the lemma. 
    
\paragraph{Proving the third property of the lemma.}
    To make it satisfy the third property, we just need to apply the reassignment procedure on each remaining secondary arc. Specifically, for each center $p_i\in S$ that has a secondary arc $\beta_i$, we do the following. Let $q_1$ be the center of $S$ such that its group $G_{q_1}$ contains the sublist of $\calA$ immediately following $\beta_i$ clockwise, and $q_2$ the center such that its group $G_{q_2}$ contains the sublist of $\calA$ immediately following $\beta_i$ counterclockwise. If $\beta_i\subseteq D_{q_1}\cup D_{q_2}$, then we can apply the reassignment procedure on $\beta_i$, after which $p_i$ does not have a secondary sublist anymore; otherwise, $\beta_i$ has a point $p_t$ such that $p_t\not\in D_{q_1}\cup D_{q_2}$. By definition, one of $q_1$ and $q_2$ is in $P(i,t)$ and the other is in $P(t,i)$. As such, the third property of the lemma holds.        
\end{proof}

For any center $p_i$ in the assignment of Lemma~\ref{lem:balassign}, we often use $\alpha_i$ to denote its main sublist and use $\beta_i$ to denote its secondary sublist if it has one. 
    
\begin{figure}[t]
\begin{minipage}[t]{\textwidth}
\begin{center}
\includegraphics[height=1.7in]{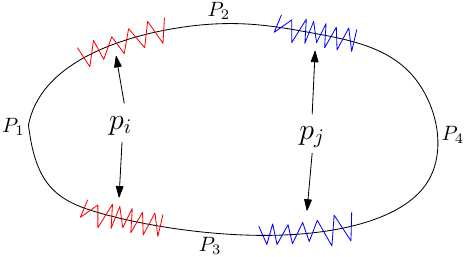}
\caption{\footnotesize Illustrating Lemma~\ref{lem:3disks}. The cycle represents the cyclic list of $P$. The two red (resp., blue) crossed portions are sublists of $p_i$ (resp., $p_j$). Removing these sublists leave the cyclic list of $P$ into four portions $P_1,P_2,P_3,P_4$. 
For any other center of $S$ whose group has two sublists, only the following three cases are possible: (1) both sublists are in $P_1$; (2) both sublists are in $P_4$; (3) one sublist is in $P_2$ while the other sublist is in $P_3$.}
\label{fig:groups}
\end{center}
\end{minipage}
\vspace{-0.1in}
\end{figure}

\begin{lemma}
\label{lem:3disks}
    Let $S$ be an optimal dominating set and $\phi:\calA\rightarrow S$ be the assignment from Lemma~\ref{lem:balassign}. For any two centers $p_i, p_j \in S$ whose groups have two sublists, removing these four sublists from $P$ resulting in four portions of the cyclic list of $P$: one portion is bounded by the two sublists of $p_i$, one portion is bounded by the two sublists of $p_j$, and two portions each of which is bounded by a sublist of $p_i$ and a sublist of $p_j$. 
    %the first two portions are called {\em single-group bounded portions} and the other two portions are called {\em two-group bounded portions}. 
    Then, for any other center of $S$ whose group has two sublists, either both sublists of the group are in one of the first two portions, or the two sublists are in the last two portions, respectively (see Figure~\ref{fig:groups}). 
\end{lemma}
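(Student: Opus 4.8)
The plan is to split the statement into its two halves: first the coarse claim that removing the four sublists leaves the four portions described, and then the trichotomy for a third center. The first half I expect to be routine from line separability; the real work is the second half, and within it a single case.

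\emph{The four portions.} First I would show that the two sublists of $p_i$ and the two of $p_j$ cannot cyclically interleave. Since $\phi$ is line separable (property~1 of Lemma~\ref{lem:balassign}), some line $\ell$ has $G_{p_i}$ on one side and $G_{p_j}$ on the other; as the points of $P$ are in convex position, $\ell$ meets $\partial\calH(P)$ in two points and so cuts $P$ into two sublists, forcing all of $G_{p_i}$ into one and all of $G_{p_j}$ into the complementary one. Hence the two sublists of $p_i$ are cyclically consecutive (no sublist of $p_j$ lies between them), and likewise for $p_j$. Removing the four sublists then leaves exactly four arcs: the arc $P_1$ between the two sublists of $p_i$, the arc $P_4$ between the two of $p_j$, and two arcs $P_2,P_3$ each flanked by one sublist of $p_i$ and one of $p_j$. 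This establishes the first assertion. Fixing the cyclic order $\alpha_i,P_1,\beta_i,P_3,\alpha_j,P_4,\beta_j,P_2$ around $P$ will be convenient for what follows.

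\emph{Reducing to three cases.} Now take another center $p_m\in S$ whose group has two sublists, with main $\alpha_m$ and secondary $\beta_m$. Applying the same non-interleaving argument to the pair $(p_m,p_i)$ shows both sublists of $p_m$ lie in a single one of the two arcs determined by $p_i$'s sublists; since groups are disjoint, $p_m$ avoids $\alpha_i,\beta_i,\alpha_j,\beta_j$, so this reads: either both sublists lie in $P_1$, or both in $P_2\cup P_3\cup P_4$. The pair $(p_m,p_j)$ gives symmetrically either both in $P_4$ or both in $P_1\cup P_2\cup P_3$. Intersecting the two conclusions leaves precisely three possibilities: both sublists in $P_1$, both in $P_4$, or both in $P_2\cup P_3$. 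The first two are cases~(1) and~(2) of the lemma, so it remains only to analyze the last.

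\emph{The refinement (main obstacle).} The crux is to force the split in case~(3): when both sublists lie in $P_2\cup P_3$ they must occupy one of $P_2$ and one of $P_3$, so I must rule out both lying in $P_2$ (the $P_3$ case being symmetric). Pairwise line separability does \emph{not} forbid this, since two sublists inside $P_2$ interleave with neither $p_i$ nor $p_j$; the contradiction must come from property~3 of Lemma~\ref{lem:balassign} together with convexity. Assuming $\alpha_m,\beta_m\subseteq P_2$, property~3 supplies a point $p_t\in\beta_m$ uncovered by the disks of the two centers $q_1,q_2$ flanking $\beta_m$, with $q_1\in P(m,t)$ and $q_2\in P(t,m)$; note that $P_2$ is bounded on one side by $\alpha_i$ (which contains $p_i$) and on the other by $\beta_j$ (which, by property~3 applied to $p_j$, also carries an uncovered point). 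I would then set up the triangles joining $p_t$, $p_m$, and the surrounding centers and run the same edge-length-versus-opposite-angle comparison used for~\eqref{eq:angles} — each edge of length at most $1$ (incident to $p_m$) is opposite a strictly smaller angle than an incident edge of length exceeding $1$ (witnessing that $p_t$ is uncovered) — summing the inequalities to produce a total interior angle exceeding the $360\degree$ permitted in convex position, the exact analogue of the bound $\zeta_1+\zeta_2>360\degree$.

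The subtlety I expect to fight with is that a single uncoverable secondary sublist is itself perfectly consistent with property~3, so the contradiction cannot be local to $p_m$ alone; it must exploit that $P_2$ is bounded simultaneously by sublists of \emph{two} distinct two-sublist centers, $p_i$ (via $\alpha_i$) and $p_j$ (via $\beta_j$), whose geometry constrains the convex arc $P_2$ enough to overdetermine the angle budget. Correctly identifying $q_1,q_2$ — which may be interior to $P_2$ or may coincide with $p_i$ or with the center covering $\beta_j$ — and orienting every triangle comparison consistently is the delicate bookkeeping, and optimality of $S$ should be what guarantees the required flanking centers exist and are distinct. Once the $360\degree$ violation is obtained for $P_2$, the $P_3$ case follows by the symmetric argument, completing the trichotomy.
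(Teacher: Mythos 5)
Your set-up is sound: the non-interleaving argument from pairwise line separability correctly yields the four portions, and intersecting the two separability constraints (against $p_i$ and against $p_j$) correctly reduces everything to ruling out ``both sublists of $p_m$ inside a single mixed portion,'' which is exactly the contradiction hypothesis the paper starts from. The gap is in the one step you yourself flag as the crux. The computation you propose to reuse --- the $\zeta_1+\zeta_2>360\degree$ violation from the proof of Lemma~\ref{lem:balassign} --- structurally requires a single apex center having \emph{two} sublists, each containing a point at distance greater than $1$ from its flanking centers; only then do the per-triangle inequalities double up enough angles at the apex to overshoot $360\degree$. For your $p_m$, property~3 supplies such a point only in the secondary sublist $\beta_m$; the main sublist $\alpha_m$ carries no such guarantee (it may well be covered by flanking disks), and with a single uncovered sublist the same computation degenerates to the vacuous bound $\zeta>0\degree$ --- consistent with your own remark that one uncoverable secondary sublist is unproblematic. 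So the ``exact analogue'' you invoke does not exist, and no bookkeeping over $q_1,q_2$ repairs it.

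The paper's argument is genuinely different and, contrary to your framing, not local to $P_2$. It invokes property~3 three times --- once for each of $\beta_i$, $\beta_j$, $\beta_l$ (note $\beta_i$ lies in a portion far from the one containing $p_l$'s sublists) --- to obtain points $p_a\in\beta_i$, $p_b\in\beta_l$, $p_c\in\beta_j$, each far from a center in the adjacent gap, and separately uses \emph{optimality} of $S$ to show each such gap also contains a point outside the disk of the center owning that secondary sublist (otherwise every center inside the gap could be deleted from $S$). Per gap, these two facts bound two sides of a triangle strictly above $1$ while the third side ($|p_ap_i|$, $|p_bp_l|$, or $|p_cp_j|$, each within a unit disk) is at most $1$, so each of the three wedge angles at the intersections of the lines through $p_a p_c$, $p_i p_b$, and $p_j p_l$ is strictly less than $60\degree$; but those three wedges are the angles of the triangle formed by the three lines and sum to $180\degree$ --- that is the contradiction. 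Your sketch is missing both this optimality-based ingredient (your only appeal to optimality is for the existence of flanking centers, which already follows from domination plus separability) and the global three-line construction; without them, the case of both sublists lying in $P_2$ remains unproved.
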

\begin{proof}
    %We prove the lemma by contradiction. 
    Assume to the contrary that the lemma statement is not true. Then,  
    due to the line separable property, both sublists of a center $p_l\in S$ are in one of the last two portions of $P$ (i.e., in either $P_2$ or $P_3$ in Figure~\ref{fig:groups}).
    %; see Figure~\ref{fig:3disks}.  
    %Without loss of generality, we assume that 

    Since the groups $G_{p_i}$, $G_{p_j}$, and $G_{p_l}$ all have two sublists,       
    by the third property of Lemma~\ref{lem:balassign}, the secondary sublist $\beta_i$ of $p_i$ has a point $p_a\in P$ such that $P(a, i)$ has a center whose distance from $p_a$ is greater than $1$.  
    Similarly, $\beta_l$ has a point $p_b\in P$ such that $P(b, l)$ has a center whose distance from $p_b$ is greater than $1$, and $\beta_j$ has a point $p_c$ such that $P(j, c)$ has a center whose distance from $p_j$ is greater than $1$. 
    We assume that $p_i,p_b,p_l,p_j,p_c, p_a$ are in the counterclockwise order along $P$ (see Figure~\ref{fig:3disks}); other cases can be argued analogously.

\begin{figure}[t]
\begin{minipage}[t]{\textwidth}
\begin{center}
\includegraphics[height=2.5in]{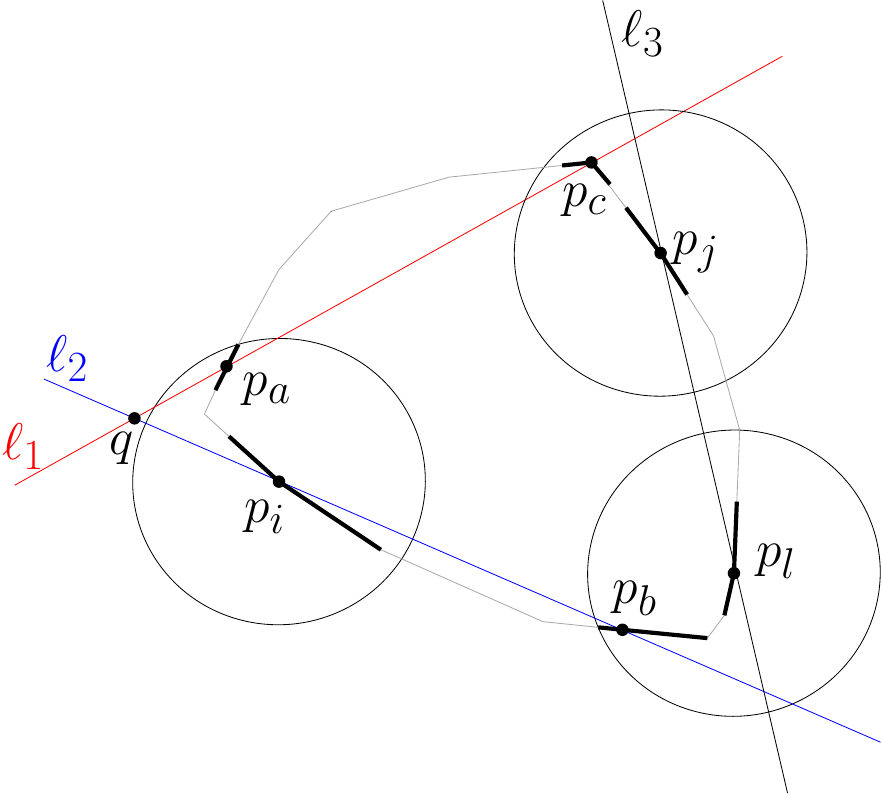}
\caption{\footnotesize Illustrating the centers $p_i, p_l, p_j$, and points $p_a,p_b,p_c$ in their secondary sublists, respectively. The black segments represents sublists of the three centers.}
\label{fig:3disks}
\end{center}
\end{minipage}
% \hspace{0.05in}
% \begin{minipage}[t]{0.48\textwidth}
% \begin{center}
% \includegraphics[height=2.1in]{}
% \caption{\footnotesize $A$ is an intersection points of unit disks $D_i$ and $D_a$; $B$ is an intersection point of $\ell_1$ and $\ell_2$; purple triangle $p_iBp_a$ contains points of $P(a, i)$.}
% \label{fig:3disks1}
% \end{center}
% \end{minipage}
\vspace{-0.15in}
\end{figure}
    
    As $S$ is an optimal dominating set, we claim that $P(a, i)$ has at least one point that is outside the disk $D_{p_i}$. Indeed, if this is not true, then $D_{p_i}$ covers all points of $P(a,i)$. As not all points of $P(a,i)$ are assigned to $G_{p_i}$ in $\phi$, $S$ must have at least one center $p_g$ whose group $G_{p_g}$ contains at least a point of $P(a,i)$. Furthermore, due to the line separable property, all points of $G_{p_g}$ including $p_g$ must be in $P(a,i)$, since otherwise $G_g$ cannot be separated from $G_{p_i}$ by a line due to the convexity. But since $D_{p_i}$ covers all points of $P(a,i)$, $S- \{p_g\}$ still forms a dominating set, which contradicts with the optimality of $S$. Therefore, $P(a, i)$ has at least one point that is outside $D_{p_i}$. 
    
    Similarly, $P(b, j)$ has at least one point that is outside $D_{p_j}$ and $P(j, c)$ has at least one point that is outside $D_{p_l}$. 

    Let $\ell_1$ be the line through $p_a$ and $p_c$, $\ell_2$ the line through $p_i$ and $p_b$, and $\ell_3$ the line through $p_j$ and $p_l$ (see Figure~\ref{fig:3disks}). Let $q$ be the intersection of $\ell_1$ and $\ell_2$. Due to the convexity, all points of $P(a,i)$ are in the interior of the triangle $\triangle p_ap_iq$. We claim that the angle $\angle p_aqp_i$ is less than $60\degree$. Indeed, as $p_a\in \beta_i\subseteq D_{p_i}$, we have $|p_ap_i|\leq 1$. As discussed above, $P(a,i)$ has a point $p$ such that $|p_ap|>1$. As $p\in P(a,i)\subseteq \triangle p_ap_iq$, it holds that $|p_ap|<\max\{|p_ap_i|,|p_aq|\}$. Since $|p_ap|>1$ and $|p_ap_i|\leq 1$, we obtain $|p_aq|>1$. In addition, we have proved above that $P(a,i)$ has a point $p'$ outside $D_{p_i}$, meaning that $|p'p_i|>1$. Following a similar argument, we can obtain $|p_iq|>1$. Since  $|p_aq|>1$, $|p_iq|>1$, and $|p_ip_a|\leq 1$, it follows that $\angle p_aqp_i<60\degree$. 
    
    Similarly, the angle of the wedge formed by $\ell_2$ and $\ell_3$ and containing $P(b, l)$ is smaller than $60\degree$, so is  the angle of the wedge formed by $\ell_3$ and $\ell_1$ and containing $P(j, c)$. However, this leads to a contradiction, as the sum of the above three angles is equal to $180\degree$. The lemma is thus proved. 
\end{proof}

%The immediate consequence of lemma~\ref{lem:3disks}, is that at most one arc of $G_l$ lie between an arc of $G_i$ and an arc of $G_j$ on $\calH(P)$.  
% We also introduce the following definition, which is essential for understanding the properties of the optimal solution.

% \begin{definition}
% \label{def:corners}
%     Given a balanced line separable assignment $\phi: \calA \rightarrow S$, center $p_i\in S$ is called a \textit{corner} if exists a point $p_j\in S$ such that for any $p_l\in S$ with $|G_l|=2$ either $\alpha_l\subset P[i,j]$ and $\beta_l\subset P[j,i]$ or $\beta_l\subset P[i,j]$ and $\alpha_l\subset P[j,i]$.
% \end{definition}

\begin{lemma}
    \label{lem:corners}
    Let $S$ be an optimal dominating set and $\phi:\calA\rightarrow S$ be the assignment given by Lemma~\ref{lem:balassign}. There exists a pair of centers $(p_i,p_j)$ in $S$, called a {\em decoupling pair}, such that the following hold: (1) each of $p_i$ and $p_j$ has only one sublist; (2) for any center $S$ that has two sublists, one sublist is in $P(i,j)$ while the other is in $P(j,i)$.
\end{lemma}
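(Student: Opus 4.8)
The plan is to reduce the statement to a clean combinatorial fact about non-crossing chords. First I would dispose of the degenerate situation: if no center of $S$ has two sublists, then (assuming $|S|\ge 2$) condition~(2) is vacuous and any two centers form a decoupling pair, so I may assume at least one two-sublist center exists. For the main argument I associate to each two-sublist center $p_c\in S$ a chord $c_c$ of $\calH(P)$ joining (a point of) its main sublist $\alpha_c$ to (a point of) its secondary sublist $\beta_c$. By the first assertion of Lemma~\ref{lem:3disks}, for any two two-sublist centers the four sublists appear around $P$ in the \emph{blocked} cyclic order $\alpha_i,\beta_i,\alpha_j,\beta_j$ (one portion is bounded by the two sublists of each center), which is exactly the statement that the endpoints of $c_i$ and $c_j$ do not interleave; hence the chords $\{c_c\}$ are pairwise non-crossing and partition the disk bounded by $\calH(P)$ into faces whose adjacency graph is a tree $T$.

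The key step is to prove that $T$ is a path. A face of $T$ incident to $d$ chords is bounded by $d$ pairwise non-nested (hence pairwise disjoint, side-by-side) chords, and I claim $d\le 2$. Suppose instead three chords $c_i,c_j,c_l$ bound a common face. Then their six endpoints occur around $P$ in an order such as $\alpha_i,\beta_i,\alpha_j,\beta_j,\alpha_l,\beta_l$, so applying the trichotomy of Lemma~\ref{lem:3disks} to the pair $(p_i,p_j)$ places both sublists of $p_l$ inside a single one of the two ``mixed'' portions, contradicting the conclusion that a third center's two sublists must either both lie in one self-portion or be split one-per-mixed-portion. Thus every face is incident to at most two chords, every node of $T$ has degree at most two, and $T$ is a path $F_0 - F_1 - \cdots - F_m$ with $m\ge 1$ chords.

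It remains to exhibit one-sublist centers in the two extreme faces. The face $F_0$ is incident to a single chord $c_a$ (of a two-sublist center $p_a$), so on $\partial\calH(P)$ it spans exactly one of the two gaps of $p_a$, namely an arc $A_0$ strictly between $\alpha_a$ and $\beta_a$ that contains no sublist of any other two-sublist center. Since this gap is nonempty and each of its points must be dominated, the group-containment property established inside the proof of Lemma~\ref{lem:balassign} (any center lying in a gap between two sublists of $G_{p_a}$ has its entire group in that gap) forces at least one center $p_i$ to lie in $A_0$ with $G_{p_i}\subseteq A_0$. Such a $p_i$ cannot have two sublists, for otherwise $c_i$ would lie inside $F_0$ and make $F_0$ incident to a second chord; hence $p_i$ has a single sublist. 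Symmetrically I obtain a one-sublist center $p_j$ in the arc $A_m$ of $F_m$, and $p_i\neq p_j$ since $m\ge 1$ forces $F_0\neq F_m$. Finally, for any two-sublist center $p_c$ its chord $c_c$ is an edge of the path $T$ and therefore separates $F_0$ from $F_m$; consequently $A_0$ and $A_m$ lie in different arcs determined by the endpoints of $c_c$, so $p_i$ and $p_j$ are separated by $c_c$, i.e.\ one of $\alpha_c,\beta_c$ lies in $P(i,j)$ and the other in $P(j,i)$. This gives both properties of the decoupling pair.

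The main obstacle I anticipate is not the chord/tree bookkeeping but justifying the two ``local'' facts cleanly: that the blocked order of Lemma~\ref{lem:3disks} really certifies non-crossing for every pair, and --- more delicately --- that the extreme gap $A_0$ actually contains a \emph{center} (not merely undominated points) whose whole group stays inside $A_0$. The latter rests entirely on line separability together with convexity, so I would isolate it as a reusable sub-claim, ``a maximal gap between the two sublists of a two-sublist center contains a center, and any center meeting that gap lies wholly within it,'' and verify it directly from Lemmas~\ref{lem:linesep} and~\ref{lem:balassign} before assembling the path argument.
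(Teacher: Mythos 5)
Your proof is correct, and it takes a genuinely different route from the paper's. The paper argues constructively and locally: it starts from one two-sublist center $p_{i_1}$, forms the two gaps $P_L$ and $P_R$ between its sublists, and repeatedly shrinks a gap by descending into the gap of any two-sublist center found inside it, until every center in the current gap has a single sublist; it then picks $p_i$ and $p_j$ from the two terminal gaps and verifies property (2) by noting that any unvisited two-sublist center would be jammed between two consecutive visited centers, forcing both its sublists into one mixed portion, contradicting Lemma~\ref{lem:3disks}. You instead establish a global structure in one shot: the chords of all two-sublist centers are pairwise non-crossing (this is exactly what line separability, encoded in the blocked configuration of Lemma~\ref{lem:3disks}'s statement, certifies), the face-adjacency tree of the chord arrangement is a path because a face bounded by three chords would violate Lemma~\ref{lem:3disks}, and the two leaf faces supply the one-sublist centers while the path structure yields property (2) immediately, since every chord separates the two leaf faces. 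Both proofs rest on the same two ingredients --- Lemma~\ref{lem:3disks} as the exclusion tool, and line separability plus convexity to place a center, together with its whole group, inside a gap (your flagged sub-claim is precisely the argument the paper itself uses for $P_L$) --- but your version makes the linear arrangement of all two-sublist centers explicit, which the paper's shrinking procedure only traverses implicitly; this is cleaner and essentially pre-proves the ordering property of Lemma~\ref{lem:ordering}, at the cost of setting up the chord/face machinery. One detail to tighten: three pairwise non-crossing chords can share a face in two combinatorially distinct ways (three side-by-side chords, or two side-by-side chords nested inside the region cut off by the third), so the cyclic order is not always $\alpha_i,\beta_i,\alpha_j,\beta_j,\alpha_l,\beta_l$ as written; however, in either configuration every pair has the third chord's two sublists inside a single mixed portion, so the contradiction with Lemma~\ref{lem:3disks} goes through verbatim once you drop the specific ordering and argue directly that a chord bounding a common face with two others can neither be nested in a self-portion nor separate the pair.
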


% \begin{comment}
% \begin{figure}[t]
% \begin{minipage}[t]{\textwidth}
% \begin{center}
% \includegraphics[height=1.7in]{figures/true_corner2.pdf}
% \caption{\footnotesize Illustrating the proof for lemma~\ref{lem:corners}; blue segments are main and secondary arcs of $G_{p_1}$ and $G_{p_2}$.}
% \label{fig:corner2}
% \end{center}
% \end{minipage}
% \vspace{-0.1in}
% \end{figure}
% \end{comment}

% \begin{figure}[t]
% \begin{minipage}[t]{\textwidth}
%      \centering
%      \begin{subfigure}[b]{0.45\linewidth}
%          \centering
%          \includegraphics[width=\textwidth]{figures/true_corner2.pdf}
%          \caption{$q_1, p_2,q_2,p_1$ are in counterclockwise order}
%          \label{fig:corner2a}
%      \end{subfigure}
%      %\hfill
%      \begin{subfigure}[b]{0.45\linewidth}
%          \centering
%          \includegraphics[width=\textwidth]{figures/true_corner2b.pdf}
%          \caption{$q_1, q_2,p_2,p_1$ are in counterclockwise order}
%          \label{fig:corner2b}
%      \end{subfigure}
%         \caption{\footnotesize Illustrating the proof for lemma~\ref{lem:corners}; blue segments are main and secondary arcs of $G_{p_1}$ and $G_{p_2}$.}
%         %\label{fig:corner2}
% \end{minipage}
% \vspace{-0.1in}
% \end{figure}

\begin{proof}
    We assume that there is at least one center $p_{i_1}\in S$ that has two sublists; otherwise every two centers of $S$ form a decoupling pair. Hence, the group $G_{i_1}$ contains a main sublist $\alpha_{i_1}$ and a secondary sublist $\beta_{i_1}$. 
    Let $\ell_1$ be a line connecting an endpoint of $\alpha_{i_1}$ and an endpoint of $\beta_{i_1}$, and $\ell_2$ a line connecting the other two endpoints of $\alpha_{i_1}$ and $\beta_{i_1}$, in such a way that both sublists $\alpha_{i_1}$ and $\beta_{i_1}$ are in the same wedge formed by $\ell_1$ and $\ell_2$ (see Figure~\ref{fig:corner}). Without loss of generality, we assume that both sublists are in the right side of $\ell_1$ and in the left side of $\ell_2$. 
    Let $P_L$ represent the sublist of $P$ to the left of $\ell_1$, and $P_R$ the sublist of $P$ to the right of $\ell_2$ (Figure~\ref{fig:corner}). In the following, we argue that there is a decoupling pair $(p_i,p_j)$ with $p_i\in P_L$ and $p_j\in P_R$. We first show how to identify $p_i$ in $P_L$.

\begin{figure}[t]
\begin{minipage}[h]{\textwidth}
\begin{center}
\includegraphics[height=1.7in]{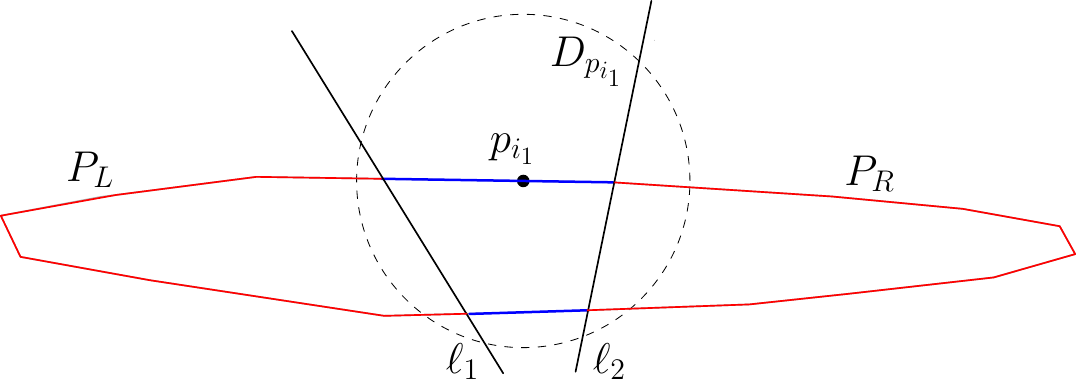}
\caption{\footnotesize Illustrating the proof for lemma~\ref{lem:corners}; blue segments represent the two sublists of $p_{i_1}$.}
\label{fig:corner}
\end{center}
\end{minipage}
\vspace{-0.1in}
\end{figure}
    
Note that $P_L$ should contain at least one center of $S$. Indeed, consider a sublist of $\phi$ that is in $P_L$. Let $p$ be the center whose group contains the sublist. Due to the line separable property of $\phi$, all sublists of $p$ must be in $P_L$. In particular, the main sublist of $p$ is in $P_L$. Thus, $p\in P_L$. On the other hand, for every center $p\in S$ that is in $P_L$, again due to the line separable property, all sublists of $p$ must be in $P_L$. The same observations hold for $P_R$ as well. In addition, for any center $p_{i'}\in S$ in $P_L$ and any center $p_{j'}\in S$ in $P_R$, one of the two sublists of $p_{i_1}$ is in $P(i',j')$ and the other is in  $P(j',i')$.
 
If $P_L$ has only one center, then let $p_i$ be that center and $D_{p_i}$ must cover $P_L$. As such, $P_L$ is the only sublist of $p_i$.
If $P_L$ contains a center $p_{i_2}\in S$ that has two sublists, then as discussed above, both sublists must be in $P_L$. As such, we repeat the process described above to shrink $P_L$: The new $P_L$ is the connected portion of old $P_L$ lying between the two sublists of $G_{i_2}$. We repeat this process until every center in $P_L$ has a single sublist. Then, we choose an arbitrary center in $P_L$ as $p_i$. We apply the same process to $P_R$ such that every center in $P_R$ has a single sublist. We choose an arbitrary center in $P_R$ as $p_j$. In what follows, we argue that $(p_i,p_j)$ is decoupling pair. 

\begin{figure}[t]
\begin{minipage}[h]{\textwidth}
\begin{center}
\includegraphics[height=1.7in]{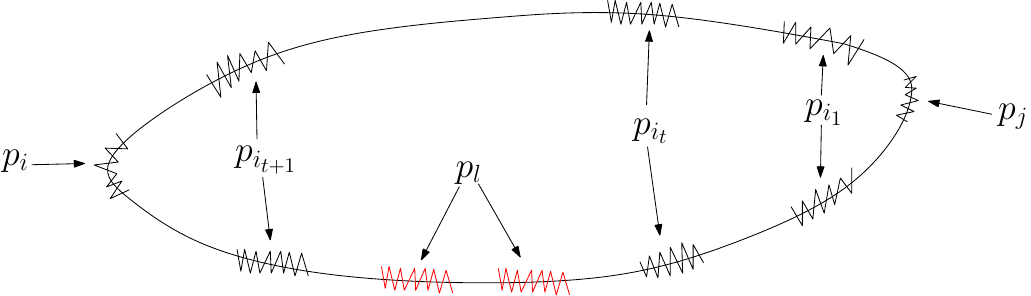}
\caption{\footnotesize Illustrating the relative positions of the sublists of $p_l$, $p_{i_t}$, and $p_{i_{t+1}}$. The cycle represents $P$ and the crossed portions are sublists.}
\label{fig:jam}
\end{center}
\end{minipage}
\vspace{-0.1in}
\end{figure}

Consider a center $p_l\in S$ with two sublists. Our goal is to show that one sublist of $p_l$ is in $P(i,j)$ and the other is in $P(j,i)$. 
If $p_l$ is already considered during our construction for $P_L$ and $P_R$, i.e., $p_l$ is one of the centers like $p_{i_2}$ discussed above, then according to our construction, it is true that one sublist of $p_l$ is in $P(i,j)$ and the other is in $P(j,i)$.
%For every center $p_{i_l}$ with $|G_{i_l}=2|$ during the construction, the \textit{property} holds: if $\alpha_{i_l}$ is in $P[i,j]$ ($P[j,i]$), then $\beta_{i_l}$ is in $P[j,i]$ ($P[i,j]$). Now we show that for every center $p_l\in S$ with $|G_l|=2$, the property holds.
We now consider the case where $p_l$ is not considered during the construction. 
Assume to the contrary that the above statement is not true. Then, the sublists of $p_l$ are either both in $P(i,j)$ or both in $P(j,i)$. 
Without loss of generality, we assume that both sublists of $p_l$ are in $P(i, j)$. By our construction, every center in $P_L\cup P_R$ has only one sublist. Therefore, the two sublists of $p_l$ belongs to the portion of $P(i,j)$ jammed between two centers $p_{i_t}$ and $p_{i_{t+1}}$ that have been considered in our construction (see Figure~\ref{fig:jam}). As discussed above, each of the two centers $p_{i_t}$ and $p_{i_{t+1}}$ has two sublists, one is in $P(i,j)$ and the other in $P(j,i)$. As such, we have the following situation: the two sublists of $p_l$ are between the sublist of $p_{i_t}$ and the sublist of $p_{i_{t+1}}$ on $P(i,j)$. According to Lemma~\ref{lem:3disks}, this is not possible. 

This proves that $(p_i,p_j)$ is decoupling pair. The lemma thus follows. 
\end{proof}

Let $S$ be an optimal dominating set and $\phi:\calA\rightarrow S$ be the assignment given by Lemma~\ref{lem:balassign}.
Let $(p_i,p_j)$ be a decoupling pair from Lemma~\ref{lem:corners}. For any center of $S\setminus\{p_i,p_j\}$, each of its sublist  must be either entirely in $P(i,j)$ or in $P(j,i)$, and
%Let $S_1$ (resp., $S_2$) be the subset of centers of $S-\{p_i,p_j\}$ that have a sublist in $P(i,j)$ (resp., $P(j,i)$). 
by Lemma~\ref{lem:corners}, the center has at most one sublist in $P(i,j)$ and at most one sublist in $P(j,i)$. 
We finally prove in the following lemma the ordering property discussed in Section~\ref{sec:approach}.
%We add $p_i$ and $p_j$ to both $S_1$ and $S_2$. We sort all centers of $S_1$ as a sequence (called {\em the sorted sequence} of $S_1$) by the points of their sublists appearing in $P[i,j]$ from $p_i$ to $p_j$ (and thus $p_i$ is the first center and $p_j$ is the last one in the sequence). Similarly, we sort all centers of $S_2$ as a sequence (called {\em the sorted sequence} of $S_2$) by the points of their sublists appearing in $P[j,i]$ from $p_i$ to $p_j$ (and thus $p_i$ is the first center and $p_j$ is the last one in the sequence). 

% \begin{lemma}\label{lem:ordering}
% There exists an ordering all centers of $S$ such that (1) $p_i$ is the first point in the ordering and $p_j$ is the last point in the ordering; (2) the sorted sequence of $S_1$ (resp., $S_2$) is a (not necessarily contiguous) subsequence of the ordering. As a consequence, the union of the sublists of the first $t$ centers in the ordering for any $1\leq t\leq S$ is a sublist of $P$ that contains the only sublist of $p_i$.
% \end{lemma}

\begin{lemma}\label{lem:ordering}
{\em (The ordering property)}
Let $S$ be an optimal dominating set and $\phi:\calA\rightarrow S$ be the assignment given by Lemma~\ref{lem:balassign}. Let $(p_{i},p_{j})$ be a decoupling pair from Lemma~\ref{lem:corners}. Then, there exists an ordering of all centers of $S$ as $p_{i_1},p_{i_2},\ldots,p_{i_k}$ with $k=|S|$ such that (see Figure~\ref{fig:asgn})
\begin{enumerate}
    \item $p_{i}=p_{i_1}$ and $p_{j}=p_{i_k}$, i.e., $p_{i_1}$ and $p_{i_k}$ are the first and last points in the ordering, respectively.
    \item The sequence of the centers of $S$ that have at least one sublist in $P[i,j]$ (resp., $P[j,i]$) ordered by the points of their sublists appearing in $P[i,j]$ (resp., $P[j,i]$) from $p_i$ to $p_j$ is a  (not necessarily contiguous) subsequence of the ordering.
    \item For any $t$, $1\leq t\leq k$, the sublists of the first $t$ centers in the ordering are consecutive (i.e., their union, which is $\bigcup_{l=1}^{t}G_{i_l}$, is a sublist of $P$).
    \item For any $t$, $2\leq t\leq k$, $\bigcup_{l=1}^{t-1}G_{i_l}\subseteq \bigcup_{l=1}^{t}G_{i_l}$. 
    %i.e.,      the union of the sublists of the first $t-1$ centers in the ordering is a subset of the union of the sublists of the first $t$ centers in the ordering. 
    \item For any $t$, $1\leq t\leq k$, each sublist of $p_{i_t}$ appears at one end of $\bigcup_{l=1}^{t}G_{i_l}$ (if $t=k$, then $\bigcup_{l=1}^{t}G_{i_l}$ becomes the cyclic list of $P$; for convenience, we view $\bigcup_{l=1}^{t}G_{i_l}$ as a list by cutting it right after the clockwise endpoint of $G_{i_k}$). 
\end{enumerate}
\end{lemma}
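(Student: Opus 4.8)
The plan is to construct the ordering incrementally from the structural facts already established, using the decoupling pair $(p_i,p_j)$ as the two endpoints of the ordering. By Lemma~\ref{lem:corners}, every center other than $p_i,p_j$ that has two sublists has exactly one sublist in $P(i,j)$ and exactly one in $P(j,i)$, while $p_i$ and $p_j$ each have a single sublist. So the centers naturally split into those contributing a sublist to the "clockwise arc" $P[i,j]$ and those contributing to the "counterclockwise arc" $P[j,i]$, with the two-sublist centers appearing in both. My first step is to read off two linear orders: order the centers by the position of their sublist along $P[i,j]$ from $p_i$ toward $p_j$ (call this $\sigma_1$), and order them by the position of their sublist along $P[j,i]$ from $p_i$ toward $p_j$ (call this $\sigma_2$). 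Each of $\sigma_1,\sigma_2$ starts at $p_i$ and ends at $p_j$. The goal is to interleave $\sigma_1$ and $\sigma_2$ into a single sequence $p_{i_1}=p_i,\ldots,p_{i_k}=p_j$ that is a common supersequence of both and that satisfies the consecutiveness property (3).

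**Next I would** drive the interleaving by the "growth" requirement (3)–(4): I build the union $\bigcup_{l\le t}G_{i_l}$ one center at a time, maintaining the invariant that it is always a single sublist (an arc) of $P$ with $p_i$ inside it. Starting from $G_{p_i}$ (a single sublist containing $p_i$), at each step I append the next center whose sublist is adjacent to one of the two current endpoints of the growing arc — appending on the $P[i,j]$ side advances us one step along $\sigma_1$, appending on the $P[j,i]$ side advances us one step along $\sigma_2$. The key point is that at every stage at least one such adjacent center exists: since $S$ is a dominating set and the groups are line-separable and consecutive in the sense of Lemma~\ref{lem:balassign}, the point of $P$ immediately past either endpoint of the current arc belongs to some not-yet-added center, and by line-separability that center's sublist on this side is exactly the adjacent one. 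This guarantees the arc grows until it swallows all of $P$, ending precisely when we append $p_j$; thus properties (1),(3),(4) hold by construction, and (2) holds because we never reorder within either side.

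**The main obstacle** I anticipate is property~(5): showing that when $p_{i_t}$ is appended, \emph{each} of its (one or two) sublists sits at an \emph{end} of the newly formed arc $\bigcup_{l\le t}G_{i_l}$, rather than getting buried in the interior. For a center appended via a single adjacency this is immediate. The delicate case is a two-sublist center $p_{i_t}$: it contributes a sublist to $P(i,j)$ and one to $P(j,i)$, and I must argue that at the moment it is added, \emph{both} of its sublists are simultaneously adjacent to the two endpoints of the current arc, so that adding $p_{i_t}$ extends the arc on both ends at once and leaves both sublists at the two ends. This is exactly where Lemma~\ref{lem:3disks} is needed: it forbids the configuration in which the two sublists of another two-sublist center are nested strictly between the two sublists of $p_{i_t}$ on the same side, which is what would otherwise allow one of $p_{i_t}$'s sublists to be interior. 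I would therefore prove (5) by contradiction — if a sublist of $p_{i_t}$ were interior, some already-added or not-yet-added center would have a sublist jammed between $p_{i_t}$'s two sublists on one arc, contradicting Lemma~\ref{lem:3disks} (precisely the "jamming" argument used in the proof of Lemma~\ref{lem:corners}).

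**Finally**, I would tie up property~(2) rigorously: because the two side-orders $\sigma_1$ and $\sigma_2$ are each respected by the append process (we only ever extend an endpoint, never insert in the middle), the subsequence of the ordering consisting of centers with a sublist in $P[i,j]$ is exactly $\sigma_1$, and likewise for $P[j,i]$ and $\sigma_2$, giving (2). The convention for $t=k$ in (5) — cutting the cyclic list right after the clockwise endpoint of $G_{i_k}$ — is just bookkeeping to make "ends of the list" well defined once the arc closes up into all of $P$, and needs only a sentence to dispatch.
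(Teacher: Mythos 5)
Your overall plan---reading off the two side-orders $\sigma_1$ (along $P[i,j]$) and $\sigma_2$ (along $P[j,i]$) and producing a single ordering that is a common supersequence of both---is in spirit the same as the paper's proof, which merges the two ``sorted sequences'' of $S_1$ and $S_2$ after showing that no two centers appear in opposite relative order in them. The genuine gap is in your argument for the crucial step, property~(5) / the two-sublist append. You claim that when a two-sublist center $q$ is appended because one of its sublists is adjacent to an endpoint of the current arc, its other sublist must already be adjacent to the opposite endpoint, and you try to prove this by contradiction via Lemma~\ref{lem:3disks}. Lemma~\ref{lem:3disks} cannot deliver this: it constrains only a \emph{third two-sublist} center relative to \emph{two} given two-sublist centers, and it says nothing at all about one-sublist centers. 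One-sublist centers really can occupy the gap: consider centers $p_i,q,z,p_j$ whose sublists appear in the cyclic order $G_{p_i},\ \beta_q,\ G_{p_j},\ \alpha_q,\ G_z$, where $q$ owns the two sublists $\beta_q,\alpha_q$ and $z$ owns a single sublist. This pattern is line-separable and is not excluded by Lemma~\ref{lem:3disks} or Lemma~\ref{lem:corners}, yet $\beta_q$ is adjacent to the starting arc $G_{p_i}$ while $\alpha_q$ is not adjacent to its other end, because $G_z$ intervenes. So the statement you want to prove by contradiction is simply false under a naive append rule; the correct behavior is to \emph{postpone} appending $q$ (append $z$ first), which means your algorithm must be restricted to additions that preserve the arc invariant, and you then owe a proof that such an addition always exists.

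That ``never stuck'' claim is true, but the tool that proves it is not Lemma~\ref{lem:3disks}; it is the line separability of $\phi$ used directly. If both the counterclockwise-adjacent center $q_1$ and the clockwise-adjacent center $q_2$ were distinct two-sublist centers whose far sublists are non-adjacent to the opposite ends, then (using Lemma~\ref{lem:corners} to place one sublist of each in $P(i,j)$ and one in $P(j,i)$) the four sublists $\beta_1,\alpha_2,\alpha_1,\beta_2$ would alternate between $q_1$ and $q_2$ in the cyclic order, and two groups whose sublists interleave around a convex polygon cannot be separated by any line---this is precisely the paper's ``conflicting pair'' argument. Repairing your proof therefore amounts to replacing the appeal to Lemma~\ref{lem:3disks} with this separability argument. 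A smaller point you also skip: property~(1) requires that $p_j$ be last, so you must check that the process is never forced to append $p_j$ early; this holds because whenever $G_{p_j}$ is adjacent to the current arc and other centers remain unadded, the unadded portion of $P(j,i)$ is exhausted, forcing the center adjacent on the opposite side to be a one-sublist center distinct from $p_j$, hence appendable.
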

\begin{proof}
First of all, notice that the first two properties imply the last three. Therefore, it suffices to prove the first two properties.     

Let $S_1$ (resp., $S_2$) be the subset of centers of $S\setminus\{p_i,p_j\}$ that have a sublist in $P(i,j)$ (resp., $P(j,i)$). 
By Lemma~\ref{lem:corners}, each center of $S_1$ has at most one sublist in $P(i,j)$ and each center of $S_2$ has at most one sublist in $P(j,i)$. We add $p_i$ and $p_j$ to both $S_1$ and $S_2$. We sort all centers of $S_1$ as a sequence (called {\em the sorted sequence} of $S_1$) by the points of their sublists appearing in $P[i,j]$ from $p_i$ to $p_j$ (and thus $p_i$ is the first center and $p_j$ is the last one in the sequence). Similarly, we sort all centers of $S_2$ as a sequence (called {\em the sorted sequence} of $S_2$) by the points of their sublists appearing in $P[j,i]$ from $p_i$ to $p_j$ (and thus $p_i$ is the first center and $p_j$ is the last one in the sequence). 
To prove the first two properties of the lemma, it suffices to show the following {\em statement}: There exists an ordering of $S$ such that (1) $p_i$ is the first one in the ordering and $p_j$ is the last one; (2) the sorted sequence of $S_1$ (resp., $S_2$) is a subsequence of the ordering.

We say that two centers $p_{j_1},p_{j_2}\in S$ are {\em conflicting} if $p_{j_1}$ appears in front of $p_{j_2}$ in the sorted sequence of $S_1$ while $p_{j_2}$ appears in front of $p_{j_1}$ in the sorted sequence of $S_2$. It is not difficult to see that if no two centers of $S$ are conflicting then the above statement holds. Assume to the contrary that there exist two centers $p_{j_1},p_{j_2}\in S$ that are conflicting. Then, since the two centers are in both $S_1$ and $S_2$, each of them has two sublists. Since they are conflicting, by the definition of the sorted sequences of $S_1$ and $S_2$ and due to the convexity of $P$, the group of $p_{j_1}$ cannot be separated from the group of $p_{j_2}$ by a line, a contradiction with the line separable property of $\phi$. 
\end{proof}

\subsection{The weighted dominating set problem}
\label{sec:domwgt}
We now present our algorithm for computing a minimum-weight dominating set of $G(P)$. 

For each point $p_i\in P$, let $w_i$ denote its weight. 
We assume that each $w_i>0$, since otherwise $p_i$ could always be included in the solution. 
For any subset $S\subseteq P$, let $w(S)$ denote the total weight of all points of $S$. 
We mainly consider the following {\em bounded size problem}: Given a number $k$, compute a dominating set $S$ of minimum total weight with $|S|\leq k$ in the unit-disk graph $G(P)$. If we have an algorithm for this problem, then applying the algorithm with $k=n$ can compute a minimum weight dominating set for $G(P)$. 
Let $S^*$ denote an optimal dominating set for the above bounded size problem. Define $W^*=w(S^*)$. 
%Furthermore, the weight of an arc $\gamma'$ in $P$ is defined as a weight of its minimum dominating set $S'$, i.e., $w(\gamma') = w(S')$.

In what follows, we first describe the algorithm, then explain why it is correct, and finally discuss how to implement the algorithm efficiently. 

\subsubsection{Algorithm description and correctness}

%Consider two points $p_i,p_j\in P$ such that $|p_ip_j|\leq 1$. We define $a_i(j)$

We begin by introducing the following definition. 

\begin{definition}
\label{def:ab}
For two points $p_i,p_j\in P$ ($p_i=p_j$ is possible), define $a_{i}^j$ as the index of the first point $p$ of $P$ counterclockwise from $p_j$ such that $|p_ip|> 1$, and $b_{i}^j$ the index of the first point $p$ of $P$ clockwise from $p_j$ such that $|p_ip|> 1$ (if $|p_ip_j|>1$, then $a_{i}^j=b_{i}^j=j$). If $|p_ip| \leq 1$ for all points $p\in P$, then let $a_{i}^j=b_{i}^j=0$.   
\end{definition}

%Given $p_i,p_j\in P$, computing the index $a_{i}^j$ is done by a 

% The indices introduced above are monotonic in the following sense.
% \begin{observation}
% \label{obs:monoton}
%     For fixed $j$, for any pair $p_i, p_{i'}$ in counterclockwise order, points $p_i, p_{a_{ij}}, p_{a_{i'j}}$ are also in counterclockwise order. Similarly, for fixed $j$, for any pair $p_i, p_{i'}$ in clockwise order, points $p_i, p_{b_{ij}}, p_{b_{i'j}}$ are also in clockwise order.
% \end{observation}
% \begin{proof}
%     Since $p_i, p_{i'}$ in counterclockwise order, then if $p_{i'}$ is not in $D_j$, then points $p_{a_{ij}}, p_{i'}$ are in counterclockwise order by the definition of $a_{ij}$. Also, by the same definition, $p_{i'}=p_{a_{i'j}}$. Thus $p_i, p_{a_{ij}}, p_{a_{i'j}}$ are in counterclockwise order. If $p_{i'}$ is in $D_j$, then points $p_{a_{ij}}, p_{a_{i'j}}$ are in clockwise order. Again, by the definition~\ref{def:ab}, $p_i, p_{a_{ij}}, p_{a_{i'j}}$ are in counterclockwise order. 
%     Analogously, $p_i, p_{b_{ij}}, p_{b_{i'j}}$ are in clockwise order for any pair $p_i, p_{i'}$ in clockwise order.
% \end{proof}

For a subset $P'\subseteq P$, let $\calD(P')$ denote the union of the unit disks centered at the points of $P'$. Note that a subset $S\subseteq P$ is a dominating set if and only if $P\subseteq \calD(S)$. 

Our algorithm has $k$ iterations. In each $t$-th iteration with $1\leq t\leq k$, we compute a set $\calL_t$ of $O(n^2)$ sublists  of $P$, and each sublist $L\in \calL_t$ is associated with a weight $w'(L)$ and a set $S_L\subseteq P$ of at most $t$ points. Our algorithm maintains the following invariant: For each sublist $L\in \calL_t$, $w(S_L)\leq w'(L)$ and points of $L$ are all covered by $\calD(S_L)$. 
Suppose that there exists a set $S\subseteq P$ of $k$ points such that $P\subseteq \calD(S)$. Then we will show that $\calL_k$ contains a sublist $L$ that is $P$ and $w'(L)\leq W^*$. As such, after $k$ iterations, we only need to find all sublists of $\calL_k$ that are $P$ and then return the one with the minimum weight. The details are described below. 

In the first iteration, for each point $p_i\in P$, we compute the two indices $a_i^i$ and $b_i^i$; we will show later in Lemma~\ref{lem:firstout} that this can be done in $O(\log n)$ time 
%by performing {\em counterclockwise/clockwise within disk queries} 
after $O(n\log n)$ time preprocessing. Then, let $\calL_1=\{P(b_i^i,a_i^i)\ |\ p_i\in P\}$. For each sublist $L=P(b_i^i,a_i^i)$ of $\calL_1$, we set $S_L=\{p_i\}$ and $w'(L)=w_i$. Clearly, the algorithm invariant holds on all sublists of $\calL_1$. This finishes the first iteration. It should be noted that although $|\calL_1|=O(n)$, as will be seen next, $|\calL_t|=O(n^2)$ for all $t\geq 2$. 

In general, suppose that we have a set $\calL_{t-1}$ of $O(n^2)$ sublists and each sublist $L\in \calL_{t-1}$ is associated with a weight $w'(L)$ and a set $S_L\subseteq P$ of at most $t-1$ points such that the algorithm invariant holds, i.e., $w(S_L)\leq w'(L)$ and points of $L$ are all covered by $\calD(S_L)$. We now describe the $t$-th iteration of the algorithm. 

For each point $p_i\in P$, we perform a {\em counterclockwise processing procedure} as follows.
For each point $p_j\in P$, we do the following. Compute the minimum weight sublist from $\calL_{t-1}$ that contains $P[a_i^i,j]$; we call this step a {\em minimum-weight enclosing sublist query}. We will show later in Section~\ref{sec:weightimple} that each such query can be answered in $O(\log^2 n)$ time after $O(n^2\log n)$ time preprocessing on the sublists of $\calL_{t-1}$. Let $P[j_{i1},j_{i2}]$ be the sublist computed above. Then, we compute the index $a_{i}^{j_{i2}+1}$. By definition, the union of the following three sublists is a sublist of $P$: $P(b_i^i,a_i^i)$, $P[j_{i1},j_{i2}]$, and $P(j_{i2},a_i^{j_{i2}+1})$; denote by $L$ the sublist. 
We set $S_L=S_{L'}\cup \{p_i\}$ and $w'(L)=w'(L')+w_i$, where $L'=P[j_{i1},j_{i2}]$. We add $L$ to $\calL_t$. We next argue that the algorithm invariant holds for $L$, i.e., points of $L$ are covered by $\calD(S_L)$, $|S_L|\leq t$, and $w(S_L)\leq w'(L)$. 
Indeed, by definition, all the points of $P(b_i^i,a_i^i)\cup P(j_{i2},a_i^{j_{i2}+1})$ are covered by the disk $D_{p_i}$. Since the sublist $L'$ is from $\calL_{t-1}$, by our algorithm invariant, $L'$ is covered by $\calD(S_{L'})$, $|S_{L'}|\leq t-1$, and $w(S_{L'})\leq w'(L')$. Therefore, $L$ is covered by $\calD(S_{L'}\cup \{p_i\})$ and $|S_L|\leq t$. In addition, we have $w(S_L)\leq w(S_{L'})+w_i\leq w'(L')+w_i=w'(L)$. As such, the algorithm invariant holds on $L$.

The above counterclockwise processing procedure for $p_i$ will add $O(n)$ sublists to $\calL_t$. Symmetrically, we perform a {\em clockwise processing procedure} for $p_i$, which will also add $O(n)$ sublists to $\calL_t$. We briefly discuss it. Given $p_i\in P$, for each point $p_j\in P$, we compute the minimum weight sublist from $\calL_{t-1}$ that contains $P[j,b_i^i]$. Let $P[j_{i3},j_{i4}]$ be the sublist computed above. Then, we compute the index $b_{i}^{j_{i3}-1}$. Let $L$ be the sublist that is the union of the following three sublists: $P(b_i^i,a_i^i)$, $P[j_{i3},j_{i4}]$, and $P(b_i^{j_{i3}-1},j_{i3})$. We let $S_L=S_{L'}\cup \{p_i\}$ and $w'(L)=w'(L')+w_i$, where $L'=P[j_{i3},j_{i4}]$. As above, the algorithm invariant holds on $L$. We add $L$ to $\calL_t$. 

In this way, the $t$-th iteration computes $O(n^2)$ sublists in $\calL_t$. 

After the $k$-th iteration, we find from all sublists of $\calL_k$ that are $P$ the one $L^*$ whose weight $w'(L^*)$ is the minimum. 

\paragraph{Algorithm correctness.}
The next lemma shows that $S_{L^*}$ is an optimal dominating set. 
\begin{lemma}\label{lem:correctweight}
$S_{L^*}$ is an optimal dominating set and $W^*=w'(L^*)$.
\end{lemma}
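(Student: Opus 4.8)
The plan is to establish the two inequalities $W^*\le w'(L^*)$ and $w'(L^*)\le W^*$, which together give $w'(L^*)=W^*$ and, via the algorithm invariant, the optimality of $S_{L^*}$. The first inequality is immediate: since $L^*=P$ and the algorithm invariant holds on every sublist of $\calL_k$, the set $S_{L^*}$ satisfies $P=L^*\subseteq\calD(S_{L^*})$, so $S_{L^*}$ is a dominating set with $|S_{L^*}|\le k$ and $w(S_{L^*})\le w'(L^*)$; hence $W^*\le w(S_{L^*})\le w'(L^*)$. All the real work is in the reverse inequality $w'(L^*)\le W^*$, i.e.\ showing the algorithm actually produces a copy of $P$ of weight at most $W^*$.

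For this I would invoke the ordering property (Lemma~\ref{lem:ordering}) applied to the optimal set $S^*$: order its centers as $p_{i_1},\dots,p_{i_m}$ with $m=|S^*|$, and write $U_t=\bigcup_{l=1}^{t}G_{i_l}$. The property guarantees that each $U_t$ is a sublist, that $U_{t-1}\subseteq U_t$, that $U_m=P$, and that each sublist of $p_{i_t}$ lies at one end of $U_t$. The claim to prove by induction on $t$ is: $\calL_t$ contains a sublist $L_t$ with $U_t\subseteq L_t$ and $w'(L_t)\le\sum_{l=1}^{t}w_{i_l}$. The base case $t=1$ holds because $p_{i_1}$ is assigned a single (main) sublist contained in $D_{p_{i_1}}$ and containing $p_{i_1}$, so $U_1\subseteq P(b_{i_1}^{i_1},a_{i_1}^{i_1})\in\calL_1$, whose weight is $w_{i_1}$.

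The inductive step is where I expect the main difficulty. Assuming $L_{t-1}\in\calL_{t-1}$ with $U_{t-1}\subseteq L_{t-1}$ and $w'(L_{t-1})\le\sum_{l<t}w_{i_l}$, I would run, for the center $p_{i_t}$, whichever of the clockwise/counterclockwise processing procedures matches the side on which the main sublist of $p_{i_t}$ attaches to $U_t$, and take the index $j$ to be the endpoint of $U_{t-1}$ on the side of $p_{i_t}$'s secondary sublist. The key geometric observation is that $a_{i_t}^{i_t}$, the first point outside $D_{p_{i_t}}$ past the main sublist, lies inside $U_{t-1}$: the secondary sublist is contained in $D_{p_{i_t}}$ and sits beyond $U_{t-1}$, so the gap between the two coverage arcs of $D_{p_{i_t}}$ is forced into $U_{t-1}$. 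This gives $P[a_{i_t}^{i_t},j]\subseteq U_{t-1}\subseteq L_{t-1}$, so $L_{t-1}$ is a valid candidate for the minimum-weight enclosing sublist query, and the returned $L'$ obeys $w'(L')\le w'(L_{t-1})$; hence $w'(L_t)=w'(L')+w_{i_t}\le\sum_{l\le t}w_{i_l}$. It then remains to verify $U_t\subseteq L_t$ for the three-piece sublist $L_t=P(b_{i_t}^{i_t},a_{i_t}^{i_t})\cup L'\cup P(j_{i2},a_{i_t}^{j_{i2}+1})$: the maximal main arc covers the clockwise end of $U_t$ up to $a_{i_t}^{i_t}$ together with $p_{i_t}$'s main sublist, $L'\supseteq P[a_{i_t}^{i_t},j]$ covers the middle portion of $U_{t-1}$, and the third arc, extending from $j_{i2}$ as far counterclockwise as $D_{p_{i_t}}$ reaches, absorbs the secondary sublist $\beta_{i_t}$ (which is entirely inside $D_{p_{i_t}}$). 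The hard part is precisely this matching: showing that the algorithm's rigid three-piece construction (maximal main arc, an $\calL_{t-1}$-sublist, maximal secondary arc) can be steered to reproduce the optimal region $U_t$ prescribed by the ordering property, which requires carefully lining up ends and directions and leaning on line separability together with the positivity of the weights.

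Once the induction is complete, setting $t=m$ gives $U_m=P$, so the algorithm produces $L_m=P$ with $w'(L_m)\le\sum_{l}w_{i_l}=W^*$, at iteration $m\le k$. Since $L^*$ is chosen as the minimum-weight copy of $P$ among the sublists the algorithm produces, we obtain $w'(L^*)\le W^*$. Combining with the easy direction yields $W^*\le w(S_{L^*})\le w'(L^*)\le W^*$, so all inequalities are equalities: $W^*=w(S_{L^*})=w'(L^*)$, and in particular $S_{L^*}$ is a dominating set of weight exactly $W^*$, hence an optimal solution to the bounded size problem.
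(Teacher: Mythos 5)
Your overall plan coincides with the paper's proof: the easy inequality $W^*\le w(S_{L^*})\le w'(L^*)$ from the algorithm invariant, and the reverse inequality by induction over the ordering of Lemma~\ref{lem:ordering}, with the same choice of the index $j$ (the counterclockwise endpoint of $U_{t-1}$) and the same three-piece sublist. The genuine gap is in what you call the ``key geometric observation,'' namely that $a_{i_t}^{i_t}$ lies inside $U_{t-1}$. This is not a geometric fact, and your justification for it fails: nothing in the definitions forces a ``gap between the two coverage arcs'' of $D_{p_{i_t}}$ to exist at all. The partition $\calA$ may assign points that are covered by $D_{p_{i_t}}$ to other centers' groups, so it is perfectly possible, as far as geometry is concerned, that $D_{p_{i_t}}$ covers every point of $U_{t-1}$ (and even all of $\beta_{i_t}$); in that case $a_{i_t}^{i_t}$ lies beyond $U_{t-1}$ and the containment $P[a_{i_t}^{i_t},j]\subseteq U_{t-1}$, on which your whole inductive step rests, breaks down. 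What rules this scenario out is not geometry but \emph{optimality} of $S$: if $p_{a_{i_t}^{i_t}}\notin U_{t-1}$, then, since $U_{t-1}$ is contiguous and consecutive to $\alpha_{i_t}\subseteq P(b_{i_t}^{i_t},a_{i_t}^{i_t})$, we get $U_{t-1}\subseteq P(b_{i_t}^{i_t},a_{i_t}^{i_t})\subseteq D_{p_{i_t}}$; by line separability and disjointness of the groups, the centers of $S$ lying in $U_{t-1}$ are exactly $p_{i_1},\dots,p_{i_{t-1}}$ and their groups make up $U_{t-1}$, so deleting them leaves a dominating set (their group points are covered by $D_{p_{i_t}}$, everything else by the surviving centers), contradicting optimality because the weights are positive. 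This exchange argument is the missing idea, and it is exactly how the paper establishes the claim.

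A second, related defect: your justification leans on the existence of a secondary sublist (``sits beyond $U_{t-1}$''), but the claim $p_{a_{i_t}^{i_t}}\in U_{t-1}$ is needed equally when $p_{i_t}$ has only one sublist --- otherwise the minimum-weight enclosing sublist query has no guarantee of returning a sublist of weight at most $\sum_{l<t}w_{i_l}$ --- and in that case your argument says nothing. The optimality argument above handles both cases uniformly. The remaining parts of your sketch (steering the query with the index $j$, the weight accounting, and the third arc $P(j_{i2},a_{i_t}^{j_{i2}+1})$ absorbing $\beta_{i_t}$ because $\beta_{i_t}\subseteq D_{p_{i_t}}$ and begins at $p_{j+1}$) do match the paper's proof, modulo the small case analysis on whether $p_{j_{i2}}$ falls inside $\beta_{i_t}$, and they go through once the claim is repaired.
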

\begin{proof}
According to our algorithm invariant, the points of $L^*$, which is $P$, are covered by $\calD(S_{L^*})$, $|S_{L^*}|\leq k$, and $w(S_{L^*})\leq w'(L^*)$. As such, we obtain that $W^*\leq w(S_{L^*})\leq w'(L^*)$. To prove the lemma, it now suffices to show that $w'(L^*)\leq W^*$. For this, we resort to the ordering property in Lemma~\ref{lem:ordering}.

Let $S$ be an optimal dominating set and $\phi:\calA\rightarrow S$ be the assignment given by Lemma~\ref{lem:balassign}.
Let $p_{i_1},p_{i_2},\ldots,p_{i_k}$ be the ordering of $S$ from Lemma~\ref{lem:ordering}. As such, $(p_{i_1},p_{i_k})$ is a decoupling pair. By Lemma~\ref{lem:ordering}, for any $1\leq t\leq k$, the union of the sublists of the first $t$ centers in the ordering is a sublist of $P$, denoted by $L^t$, and $L^t$ contains the only sublist of $p_{i_1}$, which is the main sublist of $p_{i_1}$. By Lemma~\ref{lem:ordering}, $L^{t-1}\subseteq L^t$ for any $2\leq t\leq k$. Define $W_t$ as the total weight of the first $t$ centers in the ordering. By definition, $W_k=W^*$. 

To prove $w'(L^*)\leq W^*$, we show by induction that for any $1\leq t\leq k$, $\calL_t$ must contain a sublist $L$ with $L^t\subseteq L$ and $w'(L)\leq W_t$. 

As the base case, for $t=1$, $L^1$ is the only sublist of $p_{i_1}$ and $W_1=w_i$. 
Since the main sublist of $p_{i_1}$ contains $p_{i_1}$, by definition, the main sublist of $p_{i_1}$ must be contained in $P(b_{i_1}^{i_1},a_{i_1}^{i_1})$, which is a sublist in $\calL_1$. According to our algorithm, $w'(P(b_{i_1}^{i_1},a_{i_1}^{i_1}))=w_{i_1}$. Thus, we have $L^1\subseteq P(b_{i_1}^{i_1},a_{i_1}^{i_1})$ and $w'(P(b_{i_1}^{i_1},a_{i_1}^{i_1}))\leq w_{i_1}= W_1$. This proves the induction statement for the base case. 

Now assume that the induction statement holds for $t-1$, i.e., $\calL_{t-1}$ must contain a sublist $L'$ with $L^{t-1}\subseteq L'$ and $w'(L')\leq W_{t-1}$. We next prove that $\calL_{t}$ must contain a sublist $L$ with $L^{t}\subseteq L$ and $w'(L)\leq W_t$. By Lemma~\ref{lem:ordering}, one of the end sublists of $L^t$ must be the main sublist of $p_{i_t}$, and if $p_{i_t}$ has two sublists, then they are the two end sublists of $L^t$. Without loss of generality, we assume that the clockwise end sublist of $L^t$ is the main sublist of $p_{i_t}$. Define $p_j$ as the counterclockwise endpoint of $L^{t-1}$. Let $L$ be the sublist of $\calL_t$ computed when our algorithm considers $p_j$ during the counterclockwise processing procedure of $p_{i_t}$. In the following, we argue that $L^{t}\subseteq L$ and $w'(L)\leq W_{t}$. To make the notation consistent with our algorithm description, let $i=i_t$. 

First of all, since the main sublist $\alpha_i$ of $p_i$ contains $p_i$, by definition, $\alpha_i\subseteq P(b_i^i,a_i^i)$.
Because $p_j$ is the counterclockwise endpoint of $L^{t-1}$, and $\alpha_i$ and $L^{t-1}$ are consecutive, we have $\alpha_i\cup L^{t-1}\subseteq P(b_i^i,a_i^i)\cup P[a_i^i,j]$. We claim that $p_{a_i^i}$ must be in $L^{t-1}$. Indeed, assume to the contrary, this is not true. Then, since  $\alpha_i$ and $L^{t-1}$ are consecutive  and $\alpha_i\subseteq P(b_i^i,a_i^i)$, we have $L^{t-1}\subseteq P(b_i^i,a_i^i)$. By the line separable property and the convexity of $P$, all centers of $S$ in $L^{t-1}$ can be removed from $S$ and the remaining centers still form a dominating set, a contradiction with the optimality of $S$. Therefore, $p_{a_i^i}\in L^{t-1}$ holds. Consequently, we have $P[a_i^i,j]\subseteq L^{t-1}$. Since $L^{t-1}\subseteq L'$, we have $P[a_i^i,j]\subseteq L'$. According to our algorithm, a minimum-weight sublist $P[j_{i1},j_{i2}]$ from $\calL_{t-1}$ that contains $P[a_i^i,j]$ is computed. As $P[a_i^i,j]\subseteq L'\in \calL_{t-1}$, we obtain that $w'(P[j_{i1},j_{i2}])\leq w'(L')$. 
Since $w'(L')\leq W_{t-1}$ by assumption, we obtain $w'(P[j_{i1},j_{i2}])\leq W_{t-1}$.

According to our algorithm, $L$ is the union of the following three sublists: $P(b_i^i,a_i^i)$, $P[j_{i1},j_{i2}]$, and $P(j_{i2},a_i^{j_{i2}+1})$. Depending on whether $p_i$ has one or two sublists in $\phi$, there are two cases. 

\begin{itemize}
\item 
If $p_i$ has only one sublist, then $L^t=\alpha_i\cup L^{t-1}$. Since  $\alpha_i\cup L^{t-1}\subseteq P(b_i^i,a_i^i)\cup P[a_i^i,j]$, we have $L^t\subseteq P(b_i^i,a_i^i)\cup P[a_i^i,j]$. As $P[a_i^j,j]\subseteq P[j_{i1},j_{i2}]$, it follows that $L^t\subseteq P(b_i^i,a_i^i)\cup P[j_{i1},j_{i2}]\subseteq L$. Note that $W_t=W_{t-1}+w_i$. According to our algorithm, $w'(L)=w'(P[j_{i1},j_{i2}])+w_i$. Since $w'(P[j_{i1},j_{i2}])\leq W_{t-1}$, we obtain that $w'(L)\leq W_t$.

\item 
If $p_i$ has two sublists, then $L^t=\alpha_i\cup L^{t-1}\cup \beta_i$, where $\beta_i$ is the secondary sublist of $p_i$. Since $\alpha_i$ is the clockwise end sublist of $L^t$, $\beta_i$ is the counterclockwise end sublist of $L^t$. 
As above, it holds that $\alpha_i\cup L^{t-1}\subseteq P(b_i^i,a_i^i)\cup P[a_i^i,j]\subseteq P(b_i^i,a_i^i)\cup P[j_{i1},j_{i2}]$.

If $L^t\subseteq P(b_i^i,a_i^i)\cup P[j_{i1},j_{i2}]$, then we still have $L^t\subseteq L$. Otherwise, it must be the case that $p_{j_{i2}}\in \beta_i$. Consequently, since all points of $\beta_i$ are within distance $1$ from $p_i$ and $p_{j+1}$ is the clockwise endpoint of $\beta_i$, it follows that $\beta_i\subseteq P[j+1,a_i^{j_{i2}+1})$. As $\alpha_i\cup L^{t-1}\subseteq P(b_i^i,a_i^i)\cup P[a_i^i,j]$, we obtain $L^t=\alpha_i\cup L^{t-1}\cup \beta_i\subseteq P(b_i^i,a_i^i)\cup P[a_i^i,j]\cup \beta_i\subseteq P(b_i^i,a_i^i)\cup P[a_i^i,j]\cup P[j+1,a_i^{j_{i2}+1})$. Since $P[a_i^i,j]\subseteq P[j_{i1},j_{i2}]$ by definition, we have $P[a_i^i,j]\cup P[j+1,a_i^{j_{i2}+1})\subseteq P[j_{i1},j_{i2}]\cup P(j_{i2},a_i^{j_{i2}+1})$. Hence, $P(b_i^i,a_i^i)\cup P[a_i^i,j]\cup P[j+1,a_i^{j_{i2}+1})\subseteq P(b_i^i,a_i^i)\cup P[j_{i1},j_{i2}]\cup P(j_{i2},a_i^{j_{i2}+1})=L$. Therefore, $L^t\subseteq L$ holds.

Following the same analysis as the above case, we can obtain  $w'(L)\leq W_t$. 
\end{itemize}

In summary, for both cases, we have $L^t\subseteq L$ and  $w'(L)\leq W_t$. This proves the induction statement for $t$.

Applying the induction statement to $k$ obtains that $\calL_k$ must contain a sublist $L$ with $L^k\subseteq L$ and $w'(L)\leq W_k$. As $L^k=P$ and $W_k=W^*$, we obtain that $L=P$ and $w'(L)\leq W^*$. According to our algorithm, $L^*$ is the sublist of $\calL_k$ such that $L^*=P$ and $w'(L^*)$ is the minimum. Therefore, we have $w'(L^*)\leq w'(L)$, and thus $w'(L^*)\leq W^*$.

The lemma thus follows. 
\end{proof}

\paragraph{Time analysis.}
For the time analysis, in each iteration, we perform $O(n^2)$ operations for computing indices $a_i^j$ and $b_i^j$, and perform $O(n^2)$ minimum-weight enclosing sublist queries. We will show later in Section~\ref{sec:weightimple} that each of these operations takes $O(\log^2 n)$ time after $O(n^2\log n)$ time preprocessing. 
%Hence, the total time of the algorithm is $O(kn^2\cdot T)$, where $T$ is the time for performing each of these operations. In the following, we discuss how to implement these operations. In particular, we show that after $O(n\log n)$ time preprocessing, computing each $a_i^j$ or $b_i^j$ can be done in $O(\log^2 n)$ time. In each iteration, with $O(n^2\log n)$ time preprocessing, each  minimum-weight sublist query can be answered in $O(\log^2 n)$ time. 
As such, each iteration of the algorithm takes $O(n^2\log^2 n)$ time and the total time of the algorithm is thus $O(kn^2\log^2 n)$. 

\subsubsection{Algorithm implementation}
\label{sec:weightimple}

The following lemma provides a data structure for computing the indices $a_i^j$ and $b_i^j$. 

\begin{lemma}
\label{lem:firstout}
We can construct a data structure for $P$ in $O(n\log n)$ time such that the indices $a_i^j$ and $b_i^j$ can be computed in $O(\log n)$ time for any two points $p_i, p_j\in P$.
\end{lemma}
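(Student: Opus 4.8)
The plan is to reduce both indices to a single geometric primitive: given a query point $p_i$ and a starting point $p_j$, report the first point met while scanning the cyclic list in a fixed rotational direction whose distance to $p_i$ exceeds $1$, i.e.\ the first point lying outside the unit disk $D_{p_i}$. The indices $a_i^j$ and $b_i^j$ are the counterclockwise and clockwise instances of this primitive and are handled symmetrically, so I focus on $a_i^j$. A warning I would build the design around: I cannot assume that $P\cap D_{p_i}$ is a single sublist, since for points in convex position the points inside a unit disk centered at a vertex may form several disjoint sublists; thus the predicate ``$|p_ip|>1$'' is not monotone along the cyclic order, and a naive binary search for the first exit is invalid. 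What I actually need is a \emph{leftmost-true} (range-successor) search for this predicate along the cyclic sequence starting at $p_j$.

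First I would break the cycle at an arbitrary position (conceptually duplicating the list, so that every query range becomes a contiguous interval and wrap-around is resolved with at most two subqueries) and build a balanced binary tree $T$ over $p_1,\dots,p_n$. Each node $v$ is responsible for a contiguous subsequence $P_v$ of $P$, which, being a sublist of a convex-position set, is itself in convex position. The key observation is that $P_v$ contains a point outside $D_{p_i}$ if and only if the point of $P_v$ \emph{farthest} from $p_i$ is at distance greater than $1$. Hence I equip every node $v$ with a structure answering farthest-point queries over $P_v$: because the points of $P_v$ are in convex position, I can build its farthest-point Voronoi diagram in $O(|P_v|)$ time using the linear-time algorithm of Aggarwal, Guibas, Saxe, and Shor~\cite{ref:AggarwalA89}, and a planar point-location structure on it then answers a farthest-from-$p_i$ query, and therefore the outside-test, in $O(\log n)$ time. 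Summing $O(|P_v|)$ over all nodes gives $O(n\log n)$ total preprocessing, matching the stated bound.

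To answer $a_i^j$ I would decompose the range from $p_j$ counterclockwise into $O(\log n)$ canonical nodes of $T$, ordered along the scan direction, and examine them in order, applying the outside-test to each until the first node $v^\star$ whose $P_{v^\star}$ contains a point outside $D_{p_i}$ is found; every canonical node before $v^\star$ is entirely inside $D_{p_i}$ and can be skipped. I then descend from $v^\star$ to the leftmost leaf satisfying the predicate by testing, at each internal node, its left child first (via the same farthest-point test) and branching accordingly. The leaf reached is exactly $p_{a_i^j}$, and the degenerate cases of Definition~\ref{def:ab} ($|p_ip_j|>1$, and all points within distance $1$) are detected by the same tests and return $j$ or $0$ as prescribed. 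Note that correctness of this search does not require contiguity of $P\cap D_{p_i}$, which is what makes this route robust.

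The main obstacle is the query time. As described, the procedure performs $O(\log n)$ outside-tests, each costing $O(\log n)$ for point location, giving $O(\log^2 n)$ rather than the claimed $O(\log n)$. To remove the extra logarithmic factor I would pipeline the point-location queries: all farthest-point tests in one query share the same query point $p_i$, and the canonical nodes together with the descent path form chains in $T$, so the successive point-location searches can be linked by \emph{fractional cascading} across the farthest-point Voronoi diagrams, so that after an initial $O(\log n)$ search each subsequent test costs only $O(1)$. Making this cascading precise — arranging the Voronoi diagrams of the canonical nodes and of the descent path into catalogues that admit $O(1)$ relocation along the search — is the delicate step, and I expect it to be the most technical part of the argument; everything else follows directly from convex position together with the linear-time farthest-point Voronoi diagram of~\cite{ref:AggarwalA89}.
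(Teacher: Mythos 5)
Your first half reproduces the paper's own first method almost exactly: the same balanced tree over the cyclic order (with the same wrap-around fix), the same farthest-point Voronoi diagram $\fvd(v)$ per node built in linear time via Aggarwal--Guibas--Saxe--Shor, the same reduction of the ``does $P_v$ contain a point outside $D_{p_i}$'' test to a farthest-point query, and the same finger-search-style bottom-up/top-down traversal that correctly handles the non-monotonicity of the predicate. That gives $O(n\log n)$ preprocessing and $O(\log^2 n)$ query time, which the paper also presents (and reuses later for parametric search).

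The gap is in the step you yourself flag as delicate: you cannot obtain $O(\log n)$ query time by ``fractional cascading across the farthest-point Voronoi diagrams.'' Fractional cascading applies to iterative search of a common key in one-dimensional sorted catalogues linked along a bounded-degree graph; planar point location in a sequence of different Voronoi diagrams is a genuinely two-dimensional search, and there is no known way (indeed, known lower bounds for two-dimensional fractional cascading indicate there is no general way) to relocate a query point from one planar subdivision to the next in $O(1)$ time. The paper's fix is not to cascade the $\fvd$ structures but to \emph{change the primitive}: at each node $v$ it stores $\calI(v)$, the common intersection of the unit disks centered at the points of $P_v$, and uses the equivalence $P_v\subseteq D_{p_i}\iff p_i\in\calI(v)$. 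Each $\calI(v)$ is a convex region whose boundary vertices form a one-dimensional sorted list, the intersections can be built bottom-up in $O(n\log n)$ total time (merging the two children's intersections in linear time), the total number of boundary vertices is $O(n\log n)$, and membership queries $p_i\in\calI(v)$ along the search path are exactly the kind of repeated one-dimensional searches that fractional cascading accelerates to $O(1)$ per node after one initial $O(\log n)$ search. Without this substitution of the farthest-point test by a one-dimensionally searchable membership test, your proposal proves only the $O(\log^2 n)$ query bound, not the claimed $O(\log n)$.
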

\begin{proof}
We will first present a method based on farthest Voronoi diagrams that can compute $a_i^j$ and $b_i^j$ in $O(\log^2 n)$ time. We will then improve the algorithm to $O(\log n)$ time by using fractional cascading~\cite{ref:ChazelleFi86}. The preprocessing time of both methods is $O(n\log n)$. One reason we present the first method is that it will be used later in our parametric searching algorithm for the discrete $k$-center problem in Section~\ref{sec:discretecenter}. 

\paragraph{The farthest Voronoi diagram method.}
We build a complete binary tree $T$ whose leaves from left to right store $p_1,p_2,\ldots,p_n$, respectively. For each node $v\in T$, let $P_v$ denote the subset of points of $P$ in the leaves of the subtree rooted at $v$. We construct the farthest Voronoi diagram on $P_v$, denoted by $\fvd(v)$. We also build a point location data structure on $\fvd(v)$ so that given a query point the cell of $\fvd(v)$ containing the point can be found in $O(\log |P_v|)$ time~\cite{ref:EdelsbrunnerOp86,ref:KirkpatrickOp83}. 
Since points of $P_v$ are in convex position, 
constructing $\fvd(v)$ can be done in $O(|P_v|)$ time~\cite{ref:AggarwalA89}. Constructing the point location data structure takes linear time in the size of $\fvd(v)$ (which is $O(|P_v|)$)~\cite{ref:EdelsbrunnerOp86,ref:KirkpatrickOp83}. Doing this for all nodes $v\in T$ takes $O(n\log n)$ time in total.

Given two points $p_i,p_j\in P$, we can use $T$ to compute the index $a_i^j$ in $O(\log^2 n)$ time as follows. 
The main idea is similar to operations in finger search trees~\cite{ref:BrodalFi04,ref:GuibasA77,ref:Tsakalidis85}. We first check whether $a_i^j=0$, i.e., whether the distance from $p_i$ to every point of $P$ is at most $1$. For this, we find the farthest point $p'$ of $p_i$ in $P$, which can be done by finding the cell of $\fvd(v)$ containing $p_i$ with $v$ as the root of $T$; the latter task can be done by a point location query on $\fvd(v)$, which takes $O(\log n)$ time. Notice that $a_i^j=0$ if and only if $|p'p_i|\leq 1$. Next, we check if $a_i^j=j$ by testing whether $|p_ip_j|>1$ holds. In the following, we assume that $a_i^j\neq 0$ and $a_i^j\neq j$. Note that either $a_i^j>j$ or $a_i^j<j$ is possible. We first consider the case $a_i^j>j$ and the other case can be handled similarly with a slight modification.

Let $v_j$ be the leaf of $T$ storing $p_j$. Starting from $v_j$, we go up on $T$ until the first node $v$ whose right child $u$ has the following property: The distance between $p_i$ and its farthest point in $P_u$ is greater than $1$. Let $v^*$ denote such a node $v$. 
Specifically, starting from $v=v_j$, if $v$ does not have a right child, then set $v$ to its parent. Otherwise, let $u$ be the right child of $v$. Using a point location query on $\fvd(P_u)$, we find the farthest point $p'$ of $p_i$ in $P_u$. If $|p'p_i|\leq 1$, then we set $v$ to its parent; otherwise, we have found $v^*$ that is current vertex $v$. Since $a_i^j>j$, $v^*$ is guaranteed to be found. Next, starting from the right child of $v^*$, we perform a top-down search process. For each node $v$, let $u$ be its left child. We find the farthest point $p'$ of $p_i$ in $P_u$. If $|p'p_i|> 1$, then we set $v$ to $u$; otherwise, we set $v$ to its right child. The process will eventually reach a leaf $v$; we return the index of the point stored at $v$ as $a_i^j$. The total time is $O(\log^2 n)$ as the search process calls point location queries $O(\log n)$ times and each point location query takes $O(\log n)$ time. 

We now consider the case $a_i^j<j$. In this case, the distance between $p_i$ and the point in any leaf to the right of $v_j$ is at most $1$ and thus the bottom-up procedure in the above algorithm will eventually reach the root. Note that $a_i^j$ is the first point in $P[1,j-1]$ whose distance from $p_i$ is larger than $1$. We proceed with the following. If $|p_1p_i|> 1$, then $a_i^j=1$ and we can stop the algorithm. Otherwise, starting from $v=v_1$, the leftmost leaf of $T$, we apply the same algorithm as in the above case, i.e., first run a bottom-up procedure and then a top-down one. The total time of the algorithm is $O(\log^2 n)$. 

Computing $b_i^j$ can be done similarly in $O(\log^2 n)$ time.

\paragraph{The fractional cascading method.}
For any subset $P'\subseteq P$, let $\calI(P')$ denote the common intersection of the unit disks centered at the points of $P'$. 

We still construct a complete binary tree $T$ as above. For each node $v$, instead of constructing $\fvd(v)$, we compute $\calI(v)$. Our approach is based on the following observation: For any point $p_i$, $P_v\subseteq D_{p_i}$ if and only if $p_i\in \calI(v)$. With the observation, to determine whether $P_v\subseteq D_{p_i}$, instead of using $\fvd(v)$ to find the farthest point of $P_v$ from $p_i$, we determine whether $p_i\in \calI(v)$. Computing $\calI(v)$ for all nodes $v\in T$ can be done in $O(n\log n)$ time in a bottom-up manner (because $\calI(v)$ can be obtained from $\calI(u)$ and $\calI(w)$ in linear time for the two children $u$ and $w$ of $v$~\cite{ref:HershbergerFi91,ref:WangOn22}). Finally, we construct a fractional cascading data structure on the vertices of the boundary of $\calI(v)$ for all nodes $v\in T$. This takes $O(n\log n)$ time as the total number of such boundary vertices is $O(n\log n)$~\cite{ref:ChazelleFi86}. This finishes the preprocessing, whose total time is $O(n\log n)$.

Given two points $p_i,p_j\in P$, we compute the index $a_i^j$ in $O(\log n)$ time as follows. We again first consider the case where $a_i^j>j$. We follow the algorithmic scheme as in the above first method. Recall that we need to find the vertex $v^*$. To this end, we preform an additional step first. Let $\pi_j$ be the path in $T$ from the root to the leaf $v_j$ that stores $p_j$. Let $V$ be the set of nodes $v$ of $T$ whose parents are in $\pi_j$ such that $v$ is the right child of its parent. We wish to determine whether $p_i$ is in $\calI(v)$ for all $O(\log n)$ nodes $v\in V$. This can be done in $O(\log n)$ time using the fractional cascading data structure~\cite{ref:ChazelleFi86}. We now search $v^*$ in a bottom-up manner as before. Starting from $v=v_j$, if $v$ does not have a right child, then set $v$ to its parent. Otherwise, let $u$ be the right child of $v$. We already know whether $p_i\in \calI(u)$. If so, then all points of $P_u$ are inside the unit disk $D_{p_i}$ and we set $v$ to its parent. Otherwise, we have found $v^*$ that is current vertex $v$. Next, starting from the right child of $v^*$, we perform a top-down search process. For each node $v$, let $u$ be its left child. We determine whether $p_i\in \calI(u)$, which takes $O(1)$ time due to the fractional cascading data structure (we spend $O(\log n)$ time doing binary search at the root and then $O(1)$ time per node subsequently)~\cite{ref:ChazelleFi86}. If $p_i\not\in \calI(u)$, then we set $v$ to $u$; otherwise, we set $v$ to its right child. The process will eventually reach a leaf $v$; we return the index of the point stored at $v$ as $a_i^j$. The total time is $O(\log n)$.

For the case $a_i^j<j$, the algorithm follows the scheme in the first method but instead uses the fractional cascading data structure. The total time is also $O(\log n)$. 

Computing $b_i^j$ can be done similarly in $O(\log n)$ time. 
\end{proof}

\paragraph{Minimum-weight enclosing sublist queries.}
\label{sec:minwgtenclarc}

We now present our data structure for the minimum-weight enclosing sublist queries. Given a set $\calL$ of $m$ sublists of $P$, each sublist has a weight. We wish to build a data structure to answer the following minimum-weight enclosing sublist queries: Given a sublist $L$, compute the minimum weight sublist of $\calL$ that contains $L$. 

\begin{lemma}\label{lem:enclosequery}
We can construct a data structure for $\calL$ in $O(m\log m)$ time, with $m=|\calL|$, so that each minimum-weight enclosing sublist query can be answered in $O(\log^2 m)$ time.     
\end{lemma}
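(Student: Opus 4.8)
The plan is to reduce each minimum-weight enclosing sublist query to a two-dimensional dominance minimization query and then solve the latter with a standard range tree equipped with secondary structures.

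First I would encode each sublist by the pair of indices of its two endpoints: a stored sublist $P[c,d]\in\calL$ is represented by the ordered pair $(c,d)$, and the query sublist $L$ by $P[a,b]$. The difficulty is that $P$ is cyclic, so containment is not literally interval containment. To linearize, I would double the index range to $\{1,\dots,2n\}$ and map each stored sublist $P[c,d]$ to the interval $[c,d']$, where $d'=d$ if $c\le d$ and $d'=d+n$ otherwise; likewise the query $P[a,b]$ maps to $[a,b']$ with $a\in\{1,\dots,n\}$. Every such interval then has length at most $n$ and left endpoint in $\{1,\dots,n\}$. I would then verify the key equivalence: for proper arcs (the full list $P$, if it belongs to $\calL$, is handled separately in $O(1)$ time since it contains everything), $P[c,d]\supseteq P[a,b]$ holds iff $[c,d']\supseteq[a,b']$ or $[c,d']\supseteq[a+n,\,b'+n]$. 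Since $c\le n\le a+n$ always holds, these two conditions become the two dominance predicates $(c\le a)\wedge(d'\ge b')$ and $(d'\ge b'+n)$. A short case analysis over whether each of $L$ and $L'$ wraps confirms that exactly these cases arise and that no spurious match occurs: a non-wrapping $L'$ has $d'\le n<b'+n$, so it can only be caught by the first predicate, while a wrapping $L'$ whose tail covers a low-index query point is caught by the second.

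Having reduced a query to two dominance minimizations of the form ``among stored points $(c,d')$ with $c\le X$ and $d'\ge Y$, report the minimum weight,'' I would build a range tree on the first coordinate $c$: a balanced binary search tree whose leaves hold the points sorted by $c$. At each internal node $v$ I would store the points of its subtree sorted by the second coordinate $d'$, together with the suffix minima of their weights (the minimum weight over every suffix of the $d'$-sorted list). To answer $c\le X$, I locate the $O(\log m)$ canonical nodes whose subtrees partition the points with $c\le X$; at each such node a single binary search for the first entry with $d'\ge Y$ followed by one suffix-minimum lookup yields the best weight in $O(\log m)$ time, for $O(\log^2 m)$ per dominance query and hence per enclosing-sublist query. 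Building the tree, the secondary sorted lists (via one global sort and a bottom-up merge), and the suffix minima all take $O(m\log m)$ time and space, matching the claimed preprocessing bound.

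The range-tree machinery is standard, so the main obstacle I anticipate is establishing the cyclic-to-linear reduction rigorously: proving that arc containment on the circle corresponds exactly to the disjunction of the two interval-dominance predicates, with no missed or spurious containment, including the degenerate boundary cases (single-point arcs, arcs that wrap around, and the full-circle sublist). Once that equivalence is pinned down, the data-structure construction and the $O(\log^2 m)$ query bound follow directly.
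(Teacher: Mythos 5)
Your proposal is correct and takes essentially the same approach as the paper: linearize the cyclic containment relation by doubling the index range, reduce each query to 2D dominance (northwest-type) range-minimum searching, and answer it with a range tree (with secondary structures) in $O(\log^2 m)$ time after $O(m\log m)$ preprocessing. The only cosmetic difference is in the wrap-around handling --- the paper splits each wrapping \emph{stored} sublist into two linear intervals and issues a single dominance query (shifting a wrapping query by $+n$), whereas you keep one interval per stored sublist and issue two dominance predicates per query; these are dual variants of the same reduction and both are valid.
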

\begin{proof}  
We first consider a special case in which $i\leq j$ holds for each sublist $P[i,j]\in \calL$. We will show later that the general case can be reduced to this case. 

In the special case, the sublists of $\calL$ become 1D intervals, and it is well known that the problem can be reduced to 2D orthogonal range searching~\cite{ref:deBergCo08}. Indeed, create a point with coordinate $(i,j)$ in the plane for each sublist $P[i,j]\in \calL$; define the weight of the point the same as that of $P[i,j]$. Let $Q$ denote the set of all points. Then, for a query sublist $P[i',j']$, the sublists of $\calL$ containing $P[i',j']$ correspond to exactly the points of $Q$ to the northwest of the point $(i',j')$. As such, by constructing a 2D range tree for $Q$ in $O(m\log m)$ time, each query can be answered in $O(\log^2 m)$ time~\cite{ref:deBergCo08}. 

In the general case, there may exist sublists $P[i,j]\in \calL$ with $i>j$. The problem now becomes handling cyclic intervals. We can reduce the problem to the 1D intervals as follows. For each sublists $P[i,j]\in \calL$ with $i>j$, we create two sublists $P[i,n+j]$ and $P[1,j]$. Let $\calL'$ denote the set of all these new sublists as well as those sublists $P[i,j]\in \calL$ with $i\leq j$. Clearly, $\calL'$ has at most $2m$ sublists $P[i,j]$ with $1\leq i\leq j\leq 2n-1$. 

We can use $\calL'$ to answer queries as follows. Consider a query sublist $P[i',j']$. 
\begin{itemize}
    \item 
If $i'\leq j'$, then for any sublist $P[i,j]\in \calL$ containing $P[i',j']$, it must correspond to a sublist in $\calL$ that contains $P[i',j']$ (specifically, if $i\leq j$, then $P[i,j]$ is also in $\calL'$; otherwise, one of the two sublists in $\calL'$ created from $P[i,j]$ contains $P[i',j']$). On the other hand, any sublist of $\calL'$ containing $P[i',j']$ must correspond to a sublist of $\calL$ containing $P[i',j']$. As such, a minimum-weight sublist of $\calL'$ containing $P[i',j']$ corresponds to a minimum-weight sublist of $\calL$ containing $P[i',j']$. 
\item
If $i'>j'$, then a minimum-weight sublist of $\calL'$ containing $P[i',j'+n]$ can give our answer. Indeed, for each sublist $P[i,j]\in \calL'$ containing $P[i',j'+n]$, we have $i\leq n<j$ and thus $P[i,j]$ is defined by an old interval $P[i,j-n]$ in $\calL$ and $P[i,j-n]$ contains $P[i',j']$. On the other hand, if a sublist $P[i,j]\in \calL$ contains $P[i',j']$, then $P[i,j]$ defines a new sublist $P[i,j+n]\in \calL'$ that also contains $P[i',j'+n]$. As such, a minimum-weight sublist of $\calL'$ containing $P[i',j'+n]$ corresponds to a minimum-weight sublist of $\calL$ containing $P[i,j]$. 
\end{itemize}

Therefore, we can apply the 2D range tree method on $\calL'$ to build a data structure in $O(m\log m)$ time that can answer each query in $O(\log^2 m)$ time. 
\end{proof}

%\bigskip

\subsubsection{Putting it all together}

The following theorem summarizes our result.

\begin{theorem}\label{thm:domsetwgt}
Given a number $k$ and a set $P$ of $n$ weighted points in convex position in the plane, we can find in $O(kn^2\log^2 n)$ time a minimum-weight dominating set of size at most $k$ in the unit-disk graph $G(P)$, or report no such dominating set exists. 
\end{theorem}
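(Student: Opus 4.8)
The plan is to assemble the algorithm of Section~\ref{sec:domwgt} together with its correctness guarantee and the two data-structure lemmas into a single runtime bound. First I would invoke the $k$-iteration procedure described above verbatim: the algorithm builds the families $\calL_1,\calL_2,\ldots,\calL_k$, where each sublist $L\in\calL_t$ carries a witness set $S_L\subseteq P$ of at most $t$ points and a weight $w'(L)$ satisfying the invariant $w(S_L)\le w'(L)$ and $L\subseteq\calD(S_L)$. Correctness is exactly Lemma~\ref{lem:correctweight}: after the $k$-th iteration, the sublist $L^*$ of minimum weight among those equal to $P$ satisfies $w'(L^*)=W^*$ and $S_{L^*}$ is an optimal dominating set of size at most $k$. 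The only additional observation needed for the ``report no such set exists'' clause is that a dominating set of size at most $k$ exists if and only if some $L\in\calL_k$ equals $P$: the induction in Lemma~\ref{lem:correctweight}, which rests on the ordering property of Lemma~\ref{lem:ordering}, guarantees that whenever such a set exists $\calL_k$ contains a sublist equal to $P$, while the invariant gives the converse, so the algorithm correctly reports failure precisely when no sublist of $\calL_k$ is all of $P$.

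Next I would carry out the time analysis iteration by iteration. The key quantitative fact is that $|\calL_t|=O(n^2)$ for every $t\ge 2$: in the $t$-th iteration we run, for each of the $n$ points $p_i$, a counterclockwise and a clockwise processing procedure, each of which considers each $p_j\in P$ once and appends a single sublist, producing $O(n^2)$ sublists in total. Each such step performs a constant number of index computations $a_i^{\,\cdot},b_i^{\,\cdot}$ and one minimum-weight enclosing sublist query. By Lemma~\ref{lem:firstout}, after $O(n\log n)$ preprocessing each index is obtained in $O(\log n)$ time; by Lemma~\ref{lem:enclosequery}, after $O(m\log m)$ preprocessing on $\calL_{t-1}$ with $m=|\calL_{t-1}|=O(n^2)$, each enclosing query is answered in $O(\log^2 m)=O(\log^2 n)$ time. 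Hence each iteration spends $O(n^2\log n)$ building the query structure for $\calL_{t-1}$ plus $O(n^2\cdot\log^2 n)$ on the $O(n^2)$ queries, for a per-iteration cost of $O(n^2\log^2 n)$. Summing over the $k$ iterations yields the claimed $O(kn^2\log^2 n)$ bound, and the one-time $O(n\log n)$ preprocessing of Lemma~\ref{lem:firstout} is absorbed.

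The main subtlety I would watch is the bookkeeping that keeps $|\calL_t|$ at $O(n^2)$ rather than letting it blow up multiplicatively across iterations: the enclosing-sublist data structure must be rebuilt from scratch on $\calL_{t-1}$ at the start of each iteration, and the crucial point is that the number of sublists generated in iteration $t$ depends only on $n$ (it is $O(n)$ per point via the two sweep procedures), not on $|\calL_{t-1}|$, so the families do not grow geometrically. I expect no genuine obstacle beyond this accounting, since the hard structural work has already been done in Lemmas~\ref{lem:linesep}--\ref{lem:ordering} and the correctness is delegated to Lemma~\ref{lem:correctweight}; the theorem is essentially a packaging step, which also records the $k=n$ consequence, namely that running the procedure with $k=n$ gives an $O(n^3\log^2 n)$ time algorithm for the unrestricted minimum-weight dominating set problem.
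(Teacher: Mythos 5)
Your proposal is correct and follows essentially the same route as the paper: the theorem is indeed just the packaging of the $k$-iteration algorithm with its invariant, correctness via Lemma~\ref{lem:correctweight} (which rests on the ordering property of Lemma~\ref{lem:ordering}), and the per-iteration cost of $O(n^2)$ index computations (Lemma~\ref{lem:firstout}) plus $O(n^2)$ minimum-weight enclosing sublist queries at $O(\log^2 n)$ each after $O(n^2\log n)$ preprocessing on $\calL_{t-1}$ (Lemma~\ref{lem:enclosequery}). Your added remarks --- that non-existence is reported exactly when no sublist of $\calL_k$ equals $P$, and that $|\calL_t|=O(n^2)$ is bounded independently of $|\calL_{t-1}|$ because each iteration generates only $O(n)$ sublists per point --- are both consistent with, and implicit in, the paper's treatment.
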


Applying Theorem~\ref{thm:domsetwgt} with $k=n$ leads to the following result. 
\begin{corollary}
Given a set $P$ of $n$ weighted points in convex position in the plane, we can compute a minimum-weight dominating set in the unit-disk graph $G(P)$ in $O(n^3\log^2 n)$ time. 
\end{corollary}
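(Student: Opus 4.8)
The plan is to obtain the corollary as an immediate specialization of Theorem~\ref{thm:domsetwgt} with $k=n$. First I would record the trivial but essential observation that any dominating set of $G(P)$ is a subset of $P$ and therefore has at most $n$ vertices; in particular, a globally minimum-weight dominating set has size at most $n$. Hence a minimum-weight dominating set of size at most $n$ is the same object as a minimum-weight dominating set with no cardinality restriction. I would also note that $P$ itself is always a dominating set (every vertex dominates itself, and edges only help), so a minimum-weight dominating set always exists and the ``no such dominating set exists'' branch of Theorem~\ref{thm:domsetwgt} is never reached when $k=n$.

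With these observations in hand, the remaining step is purely substitution: running the algorithm of Theorem~\ref{thm:domsetwgt} with $k=n$ returns a minimum-weight dominating set of size at most $n$, which by the above is exactly a minimum-weight dominating set of $G(P)$, with no constraint effectively imposed. The stated running time of $O(kn^2\log^2 n)$ becomes $O(n^3\log^2 n)$ after setting $k=n$, matching the claimed bound.

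I do not expect any genuine obstacle here, since the corollary is a direct consequence of the theorem. The only point meriting a line of justification is that truncating to dominating sets of size at most $n$ discards nothing, which follows at once from $|P|=n$; everything else is inherited verbatim from the correctness and complexity guarantees already established for Theorem~\ref{thm:domsetwgt}.
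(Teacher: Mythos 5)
Your proposal is correct and coincides with the paper's own (one-line) argument: the corollary is obtained by applying Theorem~\ref{thm:domsetwgt} with $k=n$, the cardinality bound being vacuous since any dominating set has at most $n$ vertices and $P$ itself always dominates. The two brief justifications you add (no solution is discarded by the size bound; the ``no such set'' branch never fires) are exactly the right ones and match the paper's intent.
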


%The following corollary will be used in Section~\ref{sec:discretecenter} to solve the weighted discrete center problem. 

%\begin{corollary}\label{crl:genraddomsetwgt}
%Given a set $P$ of $n$ points in convex position in the plane and a number $r>0$, one can find in $O(kn^2\log^2 n)$ time a minimum-weight subset $S\subseteq P$ of size at most $k$ such that for any point $p\in P$, the distance from $p$ to its closest point in $S$ is at most $r$, or report no such subset $S$ exists.     
%\end{corollary}
%\begin{proof}
%We redefine the unit-disk graph of $P$ using the parameter $r$: Two points of $P$ have an edge in the graph if their distance is at most $r$. Then, we apply the algorithm of Theorem~\ref{thm:domsetwgt}. 
%\end{proof}

\subsection{The unweighted case}
\label{sec:domunwgt}

In this section, we consider the unweighted dominating set problem. The goal is to compute the smallest dominating set in the unit-disk graph $G(P)$. Note that all properties for the weighted case are also applicable here. 
In particular, by setting the weights of all points to $1$ and applying Theorem~\ref{thm:domsetwgt}, one can solve the unweighted problem in $O(n^3\log^2 n)$ time. Here, we provide an improved algorithm of $O(kn\log n)$ time, where $k$ is the smallest dominating set size.

\subsubsection{Algorithm description and correctness}
%\paragraph{Algorithm description.}
We follow the iterative algorithmic scheme of the weighted case, but incorporate a greedy strategy by taking advantage of the property that all points of $P$ have the same weight.

In each $t$-th iteration of the algorithm, $t\geq 1$, we compute a set $\calL_t$ of $O(n)$ sublists and each list $L\in \calL_t$ is associated with a set $S_L\subseteq P$ of at most $t$ points. Our algorithm maintains the following invariant: For each sublist $L\in \calL_t$, all points of $L$ are covered by $\calD(S_L)$, i.e., the union of the unit disks centered at the points of $S_L$. If $k$ is the smallest dominating set size, we will show that after $k$ iterations, $\calL_k$ is guaranteed to contain a sublist that is $P$. As such, if a sublist that is $P$ is computed for the first time, then we can stop the algorithm.

Initially, we compute the indices $a_i^i$ and $b_i^i$ for all points $p_i\in P$. By Lemma~\ref{lem:firstout}, this takes $O(\log n)$ time after $O(n\log n)$ time preprocessing. In the first iteration, we have $\calL_1=\{P(b_i^i,a_i^i)\ |\ p_i\in P\}$. For each sublist $L=P(b_i^i,a_i^i)\in \calL_1$, we set $S_L=\{p_i\}$.
Clearly, the algorithm invariant holds. 

In general, suppose that we have a set $\calL_{t-1}$ of $O(n)$ sublists such that the algorithm invariant holds. We assume that no sublist of $\calL_{t-1}$ is $P$. Then, the $t$-th iteration of the algorithm works as follows. For each point $p_i\in P$, we perform the following {\em counterclockwise processing procedure}. We first compute the sublist of $\calL_{t-1}$ that contains $p_{a_i^i}$ and has the most counterclockwise endpoint. This is done by a {\em counterclockwise farthest enclosing sublist query}. We will show later in Section~\ref{sec:unwtimple} that each such query takes $O(\log n)$ time after $O(n\log n)$ time preprocessing for $\calL_{t-1}$. 
Let $P[j_{i1},j_{i2}]$ be the sublist computed above. Then, we compute the index $a_{i}^{j_{i2}+1}$ in $O(\log n)$ time by Lemma~\ref{lem:firstout}. Note that the union of the following three sublists is a sublist $L$ of $P$: $P(b_i^i,a_i^i)$, $P[j_{i1},j_{i2}]$, and $P(j_{i2},a_i^{j_{i2}+1})$.
%Clearly, the union of $L$ and $P(b_i^i,a_i^i)$ is sublist $L'$ of $P$ and 
We add $L$ to $\calL_t$ and set $S_L=S_{L'}\cup \{p_i\}$ with $L'=P[j_{i1},j_{i2}]$.
By our algorithm invariant, points of $L'$ are covered by $\calD(S_{L'})$. By definition, points of $P(b_i^i,a_i^i)\cup P(j_{i2},a_i^{j_{i2}+1})$ are covered by $D_{p_i}$. Therefore, all points of $L$ are covered by $\calD(S_{L})$. Hence, the algorithm invariant holds for $L$. In addition, if $L$ is $P$, then we stop the algorithm and return $S_L$ as an optimal dominating set. 

Symmetrically, we perform a {\em clockwise processing procedure} for $p_i$. We compute the sublist from $\calL_{t-1}$ that contains $b_i^i$ and has the most clockwise endpoint; this is done by a {\em clockwise farthest enclosing sublist query}. Let $P[j_{i3},j_{i4}]$ be the sublist computed above. Then, we compute the index $b_{i}^{j_{i3}-1}$. Let $L$ be the sublist that is the union of the following three sublists: $P(b_i^i,a_i^i)$, $P[j_{i3},j_{i4}]$, and $P(j_{i3},b_i^{j_{i3}-1})$. Let $S_L=S_{L'}\cup \{p_i\}$ with $L'=P[j_{i3},j_{i4}]$. As above, the algorithm invariant holds on $L$. We add $L$ to $\calL_t$. If $L$ is $P$, then we stop the algorithm and return $S_L$ as an optimal dominating set. 

\paragraph{Algorithm correctness.}
The following lemma proves the correctness of the algorithm. 
\begin{lemma}
If the algorithm first time computes a sublist $L$ that is $P$, then $S_L$ is the smallest dominating set of $G(P)$. 
\end{lemma}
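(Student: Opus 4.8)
The plan is to establish two facts about the first sublist $L=P$ produced by the algorithm, say in iteration $k$: (i) $S_L$ is a dominating set with $|S_L|\le k$, and (ii) no dominating set of $G(P)$ has size smaller than $k$. Fact (i) is immediate from the algorithm invariant: since $L=P\subseteq\calD(S_L)$, the set $S_L$ dominates $G(P)$, and since $L$ is produced in the $k$-th iteration we have $|S_L|\le k$. The crux is fact (ii), for which I would prove a lower-bound lemma that mirrors Lemma~\ref{lem:correctweight} but uses the greedy farthest-reach selection in place of the minimum-weight selection.

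Concretely, let $S$ be a smallest dominating set, let $\phi:\calA\to S$ be the assignment of Lemma~\ref{lem:balassign}, and let $p_{i_1},\dots,p_{i_m}$ be the ordering of Lemma~\ref{lem:ordering}, where $m=|S|$. For $1\le t\le m$ let $L^t=\bigcup_{l=1}^{t}G_{i_l}$ be the union of the first $t$ groups, which by Lemma~\ref{lem:ordering} is a sublist with $L^{t-1}\subseteq L^t$ and $L^m=P$. I would show by induction that $\calL_t$ contains a sublist $L$ with $L^t\subseteq L$. The base case $t=1$ is identical to Lemma~\ref{lem:correctweight}: the main sublist of $p_{i_1}$ lies in $P(b_{i_1}^{i_1},a_{i_1}^{i_1})\in\calL_1$.

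For the inductive step, set $i=i_t$ and let $p_j$ be the counterclockwise endpoint of $L^{t-1}$; by induction $\calL_{t-1}$ contains a sublist $L'\supseteq L^{t-1}$. Exactly as in Lemma~\ref{lem:correctweight}, optimality of $S$ together with the line-separable property and the convexity of $P$ forces $p_{a_i^i}\in L^{t-1}\subseteq L'$, so $P[a_i^i,j]\subseteq L'$. Here the argument diverges from the weighted case: instead of retrieving a minimum-weight enclosing sublist, the algorithm retrieves, via a counterclockwise farthest enclosing sublist query, the sublist $P[j_{i1},j_{i2}]\in\calL_{t-1}$ that contains $p_{a_i^i}$ and reaches farthest counterclockwise. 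Since $L'$ also contains $p_{a_i^i}$ and reaches counterclockwise at least to $p_j$, the greedy choice reaches at least as far, whence $P[a_i^i,j]\subseteq P[j_{i1},j_{i2}]$. From this point the two-case analysis (according to whether $p_i$ has one or two sublists) is verbatim that of Lemma~\ref{lem:correctweight} and yields $L^t\subseteq L$ for the sublist $L=P(b_i^i,a_i^i)\cup P[j_{i1},j_{i2}]\cup P(j_{i2},a_i^{j_{i2}+1})$ added to $\calL_t$; the symmetric clockwise procedure covers the case where the main sublist of $p_{i_t}$ is the counterclockwise end of $L^t$.

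Applying the induction at $t=m$ shows $\calL_m$ contains a sublist $L\supseteq L^m=P$, i.e.\ $P\in\calL_m$. Hence if the algorithm first outputs $P$ in iteration $k$, no dominating set can have size $m<k$, giving $k\le m$; combined with fact (i), $S_L$ is a dominating set with $m\le|S_L|\le k\le m$, so $|S_L|=k=m$ and $S_L$ is a smallest dominating set. I expect the main obstacle to be the inductive step: unlike the weighted algorithm, correctness here hinges on the greedy observation that the farthest-reaching enclosing sublist is at least as useful as the inductively guaranteed $L'$, so I would take care to verify that the greedy reach never falls short of the optimal prefix $L^{t-1}$, which is precisely what allows the weight-free selection to substitute for the minimum-weight query.
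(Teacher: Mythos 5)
Your proposal is correct and follows essentially the same route as the paper's proof: both reduce the claim to showing, by induction via the ordering property of Lemma~\ref{lem:ordering}, that $\calL_t$ contains a sublist enclosing the prefix $L^t=\bigcup_{l=1}^{t}G_{i_l}$ of an optimal solution, with the same base case, the same optimality argument forcing $p_{a_i^i}\in L^{t-1}$, and the same key greedy observation that the farthest-reaching enclosing sublist containing $p_{a_i^i}$ extends at least as far counterclockwise as the inductively guaranteed $L'$, so the algorithm terminates within $|S|$ iterations. The only difference is presentational (you package the conclusion as two facts combined at the end, while the paper states it as "the algorithm stops within $k$ iterations"), which does not change the substance.
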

\begin{proof}
%Let $g$ be the size of a smallest dominating set. 
Let $k$ be the smallest dominating set size. By our algorithm invariant, it suffices to show that the algorithm will stop within $k$ iterations. To this end, we resort to the ordering property in Lemma~\ref{lem:ordering}. Many arguments are similar to the weighted case proof in Section~\ref{lem:correctweight}.

Let $S$ be an optimal dominating set and $\phi:\calA\rightarrow S$ be the assignment given by Lemma~\ref{lem:balassign}. 
Let $p_{i_1},p_{i_2},\ldots,p_{i_k}$ be the ordering of $S$ from Lemma~\ref{lem:ordering}. As such, $(p_{i_1},p_{i_k})$ is a decoupling pair. By Lemma~\ref{lem:ordering}, for any $1\leq t\leq k$, the union of the sublists of the first $t$ centers in the ordering is a sublist of $P$, denoted by $L^t$, and $L^t$ contains the only sublist of $p_{i_1}$. Also, $L^{t-1}\subseteq L^t$ for any $2\leq t\leq k$. 
%Define $W_t$ as the total weight of the first $t$ centers in the ordering. Note that $W_k=W^*$. 
By definition, $L^k=P$.

In the following, we show by induction that if our algorithm runs for $k$ iterations, then for any $1\leq t\leq k$, $\calL_t$ must contain a sublist $L$ with $L^t\subseteq L$.  

As the base case, for $t=1$, $L^1$ is the only sublist of $p_{i_1}$, which is the main sublist by Lemma~\ref{lem:ordering}. Since the main sublist of $p_{i_1}$ contains $p_{i_1}$, by definition, the main sublist of $p_{i_1}$ must be contained in $P(b_{i_1}^{i_1},a_{i_1}^{i_1})$, which is a sublist in $\calL_1$. 

Now assume that $\calL_{t-1}$ must contain a sublist $L'$ with $L^{t-1}\subseteq L'$. We argue that $\calL_{t}$ must contain a sublist $L$ with $L^{t}\subseteq L$. By Lemma~\ref{lem:ordering}, one of the end sublists of $L^t$ must be the main sublist of $p_{i_t}$, and if $p_{i_t}$ has two sublists, then they are the two end sublists of $L^t$. Without loss of generality, we assume that the clockwise end sublist of $L^t$ is the main sublist of $p_{i_t}$. Define $p_j$ as the  counterclockwise endpoint of $L^{t-1}$. Let $L$ be the sublist of $\calL_t$ computed by our algorithm during the counterclockwise processing procedure for $p_{i_t}$. In the following, we show that $L^{t}\subseteq L$. To make the notation consistent with our algorithm description, let $i=i_t$. 

First of all, since the main sublist $\alpha_i$ of $p_i$ contains $p_i$, by definition, $\alpha_i\subseteq P(b_i^i,a_i^i)$. Since $p_j$ is the counterclockwise endpoint of $L^{t-1}$, and $\alpha_i$ and $L^{t-1}$ are consecutive, we have $\alpha_i\cup L^{t-1}\subseteq P(b_i^i,a_i^i)\cup P[a_i^i,j]$. We claim that the point $p_{a_i^i}$ must be in $L^{t-1}$. Indeed, assume to the contrary this is not true. Then, since $\alpha_i$ and $L^{t-1}$ are consecutive and $\alpha_i\subseteq P(b_i^i,a_i^i)$, we have $L^{t-1}\subseteq P(b_i^i,a_i^i)$. By the line separable property, all centers of $S$ in $L^{t-1}$ can be removed from $S$ and the remaining centers still form a dominating set, a contradiction with the optimality of $S$. Therefore, $p_{a_i^i}\in L^{t-1}$. Consequently, we have $P[a_i^i,j]\subseteq L^{t-1}$. Since $L^{t-1}\subseteq L'$, it follows that $P[a_i^i,j]\subseteq L'$. Recall that the algorithm computes a sublist $P[j_{i1},j_{i2}]$ from $\calL_{t-1}$ that contains $p_{a_i^i}$ and has the farthest counterclockwise endpoint. Since $p_{a_i^i}\in L^{t-1}\subseteq L'\in \calL_{t-1}$ and $\alpha_i\subseteq P(b_i^i,a_i^i)$, we obtain that $P[a_i^i,j]\subseteq P[j_{i1},j_{i2}]$ and $\alpha_i\cup L^{t-1}\subseteq P(b_i^i,a_i^i)\cup P[j_{i1},j_{i2}]$. 
%In addition, as $a_i^i\in L^{t-1}$, $P[a_i^i,j]\subseteq L^{t-1}$. Since $L^{t-1}\subseteq L'\in \calL_{t-1}$, by the definition of $P[j_i^1,j_i^2]$, we have $P[a_i^i,j]\subseteq P[j_i^1,j_i^2]$. 
The rest of the proof follows the similar argument to the weighted case proof of Lemma~\ref{lem:correctweight}. 

According to our algorithm, $L$ is the union of $P(b_i^i,a_i^i)$, $P[j_{i1},j_{i2}]$, and $P(j_{i2},a_i^{j_{i2}+1})$. Depending on whether $p_i$ has one or two sublists in $\phi$, there are two cases. 

\begin{itemize}
    \item 
If $p_i$ has only one sublist, then $L^t=\alpha_i\cup L^{t-1}$. As discussed above, we have $L^t\subseteq P(b_i^i,a_i^i)\cup P[j_{i1},j_{i2}]\subseteq L$. 

\item 
If $p_i$ has two sublists, then $L^t=\alpha_i\cup L^{t-1}\cup \beta_i$, where $\beta_i$ is the secondary sublist of $p_i$. If $L^t\subseteq P(b_i^i,a_i^i)\cup P[j_{i1},j_{i2}]$, then we still have $L^t\subseteq L$. Otherwise, it must be the case that $p_{j_{i2}}\in \beta_i$. Since all points of $\beta_i$ are within distance $1$ from $p_i$ and $p_{j+1}$ is the clockwise endpoint of $\beta_i$, it follows that $\beta_i\subseteq P[j+1,a_i^{j_{i2}+1})$. 
As $\alpha_i\cup L^{t-1}\subseteq P(b_i^i,a_i^i)\cup P[a_i^i,j]$, we obtain $L^t=\alpha_i\cup L^{t-1}\cup \beta_i\subseteq P(b_i^i,a_i^i)\cup P[a_i^i,j]\cup \beta_i\subseteq P(b_i^i,a_i^i)\cup P[a_i^i,j]\cup P[j+1,a_i^{j_{i2}+1})$.
%\subseteq P(b_i^i,a_i^i)\cup P[j_i^1,j_i^2]\cup P(j_i^2,a_i^{j_i^2+1})=L$. 
Since $P[a_i^i,j]\subseteq P[j_{i1},j_{i2}]$ as already shown above, we have $P[a_i^i,j]\cup P[j+1,a_i^{j_{i2}+1})\subseteq P[j_{i1},j_{i2}]\cup P(j_{i2},a_i^{j_{i2}+1})$. Hence, $P(b_i^i,a_i^i)\cup P[a_i^i,j]\cup P[j+1,a_i^{j_{i2}+1})\subseteq P(b_i^i,a_i^i)\cup P[j_{i1},j_{i2}]\cup P(j_{i2},a_i^{j_{i2}+1})=L$. Therefore, $L^t\subseteq L$ holds.
\end{itemize}

In summary, for both cases, we have $L^t\subseteq L$. This proves the induction statement for $t$. 

By the induction statement, after the $k$-th iteration, $\calL_k$ has a sublist $L$ with $L^k\subseteq L$. Since $L^k=P$, the algorithm will stop within $k$ iterations. The lemma thus follows. 
\end{proof}

\paragraph{Time analysis.}
For the time analysis, in each iteration, we perform $O(n)$ operations for computing indices $a_i^j$ and $b_i^j$ and $O(n)$ counterclockwise/clockwise farthest enclosing sublist queries. Computing indices $a_i^j$ and $b_i^j$ takes $O(\log n)$ time by Lemma~\ref{lem:firstout}. 
We show in Section~\ref{sec:unwtimple} that each counterclockwise/clockwise farthest enclosing sublist query can be answered in $O(\log n)$ time after $O(n\log n)$ time preprocessing. As such, each iteration runs in $O(n\log n)$ time and the total time of the algorithm is $O(kn\log n)$, where $k$ is the smallest dominating set size. 

\subsubsection{Algorithm implementation}
\label{sec:unwtimple}

%\paragraph{Counterclockwise/clockwise farthest enclosing sublist queries.}
%\label{sec:minwgtenclarc}
It remains to describe the data structure for answering counterclockwise/clockwise farthest enclosing sublist queries. We only discuss the counterclockwise case as the clockwise case can be handled analogously. Given a set $\calL$ consisting of $n$ sublists of $P$, the goal is to build a data structure to answer the following counterclockwise farthest enclosing sublist queries: Given a point $p\in P$, find a sublist in $\calL$ that contains $p$ with the farthest counterclockwise endpoint from $p$.

%We show that such data structure can be constructed in $O(m\log^2 m)$ time, allowing each query to be answered in $O(\log m)$ time.

\begin{lemma}
We can construct a data structure for $\calL$ in $O(n\log n)$ time such that each counterclockwise farthest enclosing sublist query can be answered in $O(\log n)$ time.
\end{lemma}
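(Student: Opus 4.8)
The plan is to reduce the query to a one-dimensional ``stabbing with maximum right endpoint'' problem after unrolling the cyclic list of $P$ onto a doubled line of positions $\{1,\ldots,2n\}$. First I would map each sublist of $\calL$ to a linear interval: a non-wrapping sublist $P[i,j]$ with $i\le j$ becomes the interval $[i,j]$, and a wrapping sublist $P[i,j]$ with $i>j$ becomes the interval $[i,j+n]$; in both cases I attach to the interval a value equal to its right endpoint. There are $|\calL|=n$ intervals in total. The point of this mapping is that, for a query point $p_m$, the sublists containing $p_m$ correspond exactly to the linear intervals stabbed by position $m$ or by position $m+n$: a non-wrapping sublist is stabbed only at $m$; a wrapping sublist $P[i,j]$ is stabbed at $m$ precisely when $m\in[i,n]$ and at $m+n$ precisely when $m\in[1,j]$, and by neither when $p_m\notin P[i,j]$. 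Moreover, in every case the counterclockwise distance from $p_m$ to the counterclockwise endpoint of the sublist equals (the right endpoint of the linear interval) $-$ (the stabbing position). Hence maximizing the counterclockwise endpoint distance is the same as maximizing the right endpoint over the stabbed intervals, done separately for positions $m$ and $m+n$, taking the better of the two.

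Next I would reduce ``among the intervals stabbed by a position $x$, find the one of maximum right endpoint'' to ``among the intervals whose left endpoint is at most $x$, find the maximum right endpoint $R(x)$.'' These agree: the interval realizing $R(x)$ has left endpoint $\le x$, so if $R(x)\ge x$ it contains $x$ and dominates every stabbed interval, while if $R(x)<x$ then no interval is stabbed at $x$ at all. The quantity $R(x)$ is a prefix maximum once the intervals are sorted by left endpoint. So in the preprocessing I would sort the $n$ intervals by left endpoint in $O(n\log n)$ time and store, alongside the sorted left endpoints, the running (prefix) maximum of right endpoints together with the interval achieving each prefix maximum. A query then binary-searches for $x$ among the sorted left endpoints to locate the prefix of intervals with left endpoint $\le x$ and reads off the stored prefix maximum in $O(\log n)$ time; performing this for $x=m$ and $x=m+n$ and comparing the two resulting distances returns the desired sublist in $O(\log n)$ time, which gives the claimed bounds.

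The main thing to get right, and the only real subtlety I anticipate, is the cyclic bookkeeping: verifying that the doubling faithfully encodes arc-containment so that no sublist is spuriously stabbed and none is missed, that a wrapping sublist containing $p_m$ is stabbed at exactly one of $m$ and $m+n$, and that the ``right endpoint minus stabbing position'' formula really computes the counterclockwise arc distance in each of the non-wrapping case and the two wrapping sub-cases. Once this correspondence is pinned down, the remaining data-structuring is the standard sorted-array-with-prefix-maxima argument. The clockwise farthest enclosing sublist query is handled symmetrically by exchanging the roles of the two endpoints (mapping each sublist to the interval anchored at its counterclockwise endpoint and taking prefix minima of left endpoints).
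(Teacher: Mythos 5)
Your proposal is correct and takes essentially the same approach as the paper: both unroll the cyclic list onto a doubled index range and reduce the query to the same prefix-maximum observation (the interval of maximum right endpoint among those with left endpoint $\le x$ either contains $x$, in which case it dominates all stabbed intervals, or certifies that no interval contains $x$), answered via a sorted structure in $O(\log n)$ time after $O(n\log n)$ preprocessing. The only cosmetic difference is the handling of wrap-around: the paper splits each wrapping sublist into two linear intervals and issues one query per point, whereas you keep one interval per sublist and issue two queries (at positions $m$ and $m+n$), comparing the resulting distances --- an equivalent bookkeeping choice.
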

\begin{proof}
We first consider the case where $i\leq j$ holds for every sublist $P[i,j]\in \calL$. 
For each sublist $P[i,j] \in \calL$, we create a point with coordinate $(i,j)$ in the plane. Let $Q$ denote the set of all these points. For a query point $p_{t}$, the sublists of $\calL$ containing $p_{t}$ correspond exactly to the points in $Q$ that are to the northwest of the point $(t,t)$. The sublist $L$ with the farthest counterclockwise endpoint from $p_t$ corresponds to the highest point in the above range. As such, we can find $L$ as follows. We first find the highest point $q\in Q$ among all points to the left of the vertical line through $(t,t)$. If the $y$-coordinate of $q$ is smaller than $t$, then no sublist of $\calL$ contains $p_t$; otherwise, the sublist defining $q$ is the answer to our query. 
As such, our problem reduces to the query of computing $q$. To answer such queries, we can use an augmented binary search tree that store all points of $Q$ from left to right in the leaves of the tree. Each node of the tree stores the highest point among the points in the leaves of the subtree rooted at the node. The preprocessing time is thus $O(n\log n)$ and the query time is $O(\log n)$. 

If $\calL$ has sublists $P[i,j]$ with $i>j$, then we can reduce the problem to the above special case in the same way as in the proof of Lemma~\ref{lem:enclosequery}. Specifically, we create a new set $\calL'$ of at most $2n-1$ sublists on the indices $1,2,\ldots,2n-1$, as follows. 
For each sublist $P[i,j] \in \calL$, if $i\leq j$, then we add $P[i,j]$ to $\calL'$; otherwise, we create two sublists $P[i,j+n]$ and $P[1,j]$ for $\calL'$. As such, $\calL'$ contains at most $2n$ sublists $P[i,j]$ with $1 \leq i \leq j \leq 2n-1$.

Following the same argument as in the proof of Lemma~\ref{lem:enclosequery}, we can show that the sublist $\calL'$ with the farthest counterclockwise endpoint and containing a query point $p_t$ corresponds to the sublist of $\calL$ with the farthest counterclockwise endpoint and containing $p_t$. As such, after $O(n\log n)$ time preprocessing, each query can be answered in $O(\log n)$ time. 
%The farthest counterclockwise enclosing sublist queries can be handled similarly. The lemma thus follows. 
\end{proof}

\subsubsection{Putting it all together}

We conclude with following theorem.

\begin{theorem}\label{thm:domsetunwgt}
Given a set $P$ of $n$ points in convex position in the plane, the smallest dominating set of the unit-disk graph $G(P)$ can be computed in $O(kn\log n)$ time, where $k$ is the size of the smallest dominating set. 
\end{theorem}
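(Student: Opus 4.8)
The plan is to assemble the theorem from the correctness lemma and the per-iteration time accounting already established, so the argument is essentially bookkeeping. First I would invoke the correctness lemma: if the algorithm halts the first time it produces a sublist equal to $P$, then the associated center set $S_L$ is a smallest dominating set of $G(P)$; moreover, its proof shows that if $k$ is the optimal size, then $\calL_k$ is guaranteed to contain a sublist equal to $P$, so the algorithm terminates within $k$ iterations. This settles both correctness and the bound on the number of iterations simultaneously.

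Next I would bound the cost of a single iteration. At the start of iteration $t$ I preprocess the set $\calL_{t-1}$ (which has $O(n)$ sublists) for counterclockwise and clockwise farthest enclosing sublist queries; by the farthest-enclosing-sublist query lemma this preprocessing takes $O(n\log n)$ time. Then, for each of the $n$ points $p_i\in P$, the counterclockwise processing procedure performs one farthest enclosing sublist query and one computation of an index $a_i^{j_{i2}+1}$, taking $O(\log n)$ time by the query lemma and by Lemma~\ref{lem:firstout}, respectively; the clockwise procedure is symmetric. Hence the per-point work is $O(\log n)$, the work over all $n$ points is $O(n\log n)$, and together with the $O(n\log n)$ preprocessing each iteration runs in $O(n\log n)$ time.

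Summing over the at most $k$ iterations, together with the one-time $O(n\log n)$ preprocessing of Lemma~\ref{lem:firstout}, yields a total running time of $O(kn\log n)$, as claimed. Correctness follows from the correctness lemma, and the output $S_L$ has size at most $k$ by the algorithm invariant, so it is indeed a smallest dominating set.

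The only point requiring care, rather than a genuine obstacle, is that the query data structure must be rebuilt from scratch on $\calL_{t-1}$ at the beginning of every iteration, so I must confirm that $\lvert\calL_{t-1}\rvert = O(n)$ holds throughout. This follows because each iteration contributes exactly two sublists per point of $P$ (one from the counterclockwise and one from the clockwise processing procedure), so $\lvert\calL_t\rvert = O(n)$ is maintained uniformly and the $O(n\log n)$ rebuild cost per iteration never blows up. Everything else follows directly from the two supporting lemmas.
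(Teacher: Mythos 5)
Your proposal is correct and follows essentially the same route as the paper: the theorem is exactly the assembly of the unweighted correctness lemma (which bounds the number of iterations by $k$ via the ordering property) with the per-iteration accounting of $O(n)$ farthest enclosing sublist queries and index computations at $O(\log n)$ each, after $O(n\log n)$ per-iteration preprocessing of $\calL_{t-1}$. Your added care point---that $|\calL_t|=O(n)$ because each iteration adds at most two sublists per point, so the per-iteration rebuild of the query structure stays at $O(n\log n)$---is precisely what the paper's implicit bookkeeping relies on.
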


The following corollary will be used in Section~\ref{sec:discretecenter} to solve the discrete $k$-center problem. 

\begin{corollary}\label{crl:genraddomsetunwgt}
Given $k$, $r$, and a set $P$ of $n$ points in convex position in the plane, one can do the following in $O(kn\log n)$ time: determine whether there exists a subset $S\subseteq P$ of at most $k$ points such that the distance from any point of $P$ to its closest point in $S$ is at most $r$, and if so, find such a subset $S$.   
\end{corollary}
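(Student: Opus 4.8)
The plan is to reduce the stated decision problem to the bounded-size unweighted dominating set problem and then reuse the iterative algorithm of Section~\ref{sec:domunwgt} essentially verbatim. First I would observe that a subset $S\subseteq P$ has the property that every point of $P$ lies within distance $r$ of some point of $S$ if and only if $S$ is a dominating set of the graph $G_r(P)$, defined on vertex set $P$ with an edge between two points exactly when their distance is at most $r$: a point $p$ is dominated precisely when some $q\in S$ satisfies $|pq|\le r$ (the case $q=p$ covering $p\in S$ since $0\le r$). To bring this into the radius-$1$ setting of the rest of the section, I would scale $P$ by the factor $1/r$; scaling is a similarity, so it preserves convex position and maps $G_r(P)$ to the unit-disk graph of the scaled point set. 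Equivalently, every structural lemma (Lemmas~\ref{lem:linesep}--\ref{lem:ordering}) and every implementation lemma (Lemma~\ref{lem:firstout} and the enclosing-sublist data structures) goes through unchanged if one reinterprets ``unit disk'' as a disk of radius $r$, since these lemmas use only the convex position of $P$ and the covering structure of congruent disks, not the numerical value of the radius.

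With this reduction in hand, the second step is to run the iterative algorithm from Section~\ref{sec:domunwgt} on $G_r(P)$, but to \emph{halt after at most $k$ iterations}. I stress that I cannot simply invoke Theorem~\ref{thm:domsetunwgt} as a black box, because its running time is stated in terms of the \emph{unknown} optimal size and it always computes a true optimum; here I instead reuse the underlying procedure with an explicit iteration cap. Correctness then follows from the two directions already established. By the algorithm invariant, whenever some iteration $t\le k$ produces a sublist $L\in\calL_t$ with $L=P$, the associated set $S_L$ satisfies $|S_L|\le t\le k$ and $P\subseteq\calD(S_L)$, so $S_L$ is a feasible solution of size at most $k$ and can be returned. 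Conversely, the correctness lemma of Section~\ref{sec:domunwgt} (proved by induction on the ordering property of Lemma~\ref{lem:ordering}) shows that if a dominating set of size $k^*$ exists then $\calL_{k^*}$ already contains a sublist equal to $P$. Hence a feasible $S$ with $|S|\le k$ exists if and only if some iteration $t\le k$ first produces a sublist equal to $P$; if no such sublist appears within $k$ iterations, the algorithm correctly reports that no such $S$ exists.

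For the running time, each iteration performs $O(n)$ index computations via Lemma~\ref{lem:firstout} and $O(n)$ counterclockwise/clockwise farthest enclosing-sublist queries, each taking $O(\log n)$ time after $O(n\log n)$ preprocessing, so a single iteration costs $O(n\log n)$. Capping the number of iterations at $k$ and adding the one-time $O(n\log n)$ preprocessing yields the claimed $O(kn\log n)$ bound. I do not expect any genuine obstacle here: the content is entirely inherited from Section~\ref{sec:domunwgt}, and the only point requiring care is the iteration cap, which simultaneously guarantees the time bound and converts the optimization algorithm into a decision procedure for the ``size at most $k$'' question.
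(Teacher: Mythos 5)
Your proposal is correct and follows essentially the same route as the paper: redefine the unit-disk graph with edge threshold $r$, run the iterative unweighted dominating set algorithm of Section~\ref{sec:domunwgt} capped at $k$ iterations, and return $S_L$ if some sublist $L=P$ appears, reporting infeasibility otherwise. Your extra remarks (the scaling justification and the observation that Theorem~\ref{thm:domsetunwgt} cannot be used as a pure black box because of its output-sensitive runtime) are sound and only make explicit what the paper leaves implicit.
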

\begin{proof}
We redefine the unit-disk graph of $P$ with a parameter $r$, where two points in $P$ are connected by an edge if their distance is at most $r$. We then apply the algorithm of Theorem~\ref{thm:domsetunwgt}. If the algorithm finds a sublist $L$ that is $P$ within $k$ iterations, then we return $S=S_L$; otherwise such a subset $S$ as in the lemma statement does not exist. Since we run the algorithm for at most $k$ iterations, the total time of the algorithm is $O(kn\log n)$.
\end{proof}

\section{The discrete $k$-center problem}
\label{sec:discretecenter}
In this section, we present our algorithm for the discrete $k$-center problem. Let $P$ be a set of $n$ points in convex position in the plane. Given a number $k$, the goal is to compute a subset $S \subseteq P$ of at most $k$ points (called {\em centers}) so that the maximum distance between any point in $P$ and its nearest center is minimized. Let $r^*$ denote the optimal objective value.

Given a value $r$, the \textit{decision problem} is to determine whether $r \geq r^*$, or equivalently, whether there exist a set of $k$ centers in $P$ such that the distance from any point of $P$ to its closest center is at most $r$. By Corollary~\ref{crl:genraddomsetunwgt}, the problem can be solved in $O(kn\log n)$ time. Clearly, $r^*$ is equal to the distance of two points of $P$, that is $r^*\in R$, where $R$ is defined as the set of all pairwise distances between points in $P$.
If we explicitly compute $R$ and then perform a binary search on $R$ using the algorithm of Corollary~\ref{crl:genraddomsetunwgt} as a \textit{decision algorithm}, then $r^*$ can be computed in $O(n^2+kn\log^2 n)$ time. We can improve the algorithm by using the distance selection algorithms, which can find the $k$-th smallest value in $R$ in $O(n^{4/3}\log n)$ time for any given $k$~\cite{ref:KatzAn97,ref:WangIm23}. In fact, by applying the algorithmic framework of Wang and Zhao~\cite{ref:WangIm23} with our decision algorithm, $r^*$ can be computed in $O(n^{4/3}\log n + nk\log^2 n)$ time. 

% Thus, the following result is established.
% \begin{theorem}
% Given a set $P$ of $n$ points in convex position in the plane, one can find $k$ congruent disks centered at points of $P$ whose union covers $P$ and whose radius is minimized in $O(n^{4/3}\log n + nk\log^3 n)$ time.
% \end{theorem}

In the following, we present another algorithm of time complexity $O(k^2n\log^2 n)$ using the parametric search technique~\cite{ref:ColeSl87,ref:MegiddoAp83}. This algorithm is faster than the above one when $k = o(n^{1/6}/\sqrt{\log n})$. 

%\subsubsection{Optimization with parametric search}

%We use the standard parametric search framework~\cite{ref:MegiddoAp83} to 
We simulate the decision algorithm over the unknown optimal value $r^*$. The algorithm maintains an interval $(r_1, r_2]$ that contains $r^*$. Initially, $r_1=-\infty$ and $r_2=\infty$. During the course of the algorithm, the decision algorithm is invoked on certain \textit{critical values} $r$ to determine whether $r \geq r^*$; based on the outcome, the interval $(r_1, r_2]$ is shrunk accordingly so that the new interval still contains $r^*$. Upon completion, we will show that $r^* = r_2$ must hold.

%Initially, we set $r_1 = 0$ and $r_2 = \infty$, ensuring that $r^* \in (r_1, r_2]$. The inclusion of $\infty$ simply means that for sufficiently large values of $r$, such as when $r$ equals the diameter of $P$, the coverage condition is satisfied.

\paragraph{Algorithm overview.}
For any $r$, certain variables in our decision algorithm are now defined with respect to $r$ as the radius of unit disks and therefore may be considered as functions of $r$. For example, we use $a_{i}^j(r)$ to represent $a_i^j$ when the unit disk radius is $r$. The algorithm has $k$ iterations. We wish to compute the sublist set $\calL_t(r^*)$ in each $t$-th iteration, $1\leq t\leq k$. Specifically, the set $\calL_1(r^*)$ relies on $a^i_i(r^*)$ and $b^i_i(r^*)$ for all points $p_i\in P$. As such, in the first iteration, we will compute $a^i_i(r^*)$ and $b^i_i(r^*)$ for all $p_i\in P$. The computation process will generate certain critical values $r$, call the decision algorithm on these values, and shrink the interval $(r_1,r_2]$ accordingly. After that, $\calL_1(r^*)$ can be computed. 

In a general $t$-th iteration, our goal is to compute the set $\calL_t(r^*)$. We assume that the set $\calL_{t-1}(r^*)$ is already available with an interval $(r_1,r_2]$ containing $r^*$. Then, for each $p_i \in P$, we perform a counterclockwise processing procedure. We first compute the sublist of $\calL_{t-1}(r^*)$ with the farthest counterclockwise endpoint and containing $a_i^i(r^*)$. This procedure depends solely on $a_i^i(r^*)$ and $\calL_{t-1}(r^*)$, which are already available, and thus no critical values are generated. Suppose that $P[j_{i1}(r^*),j_{i2}(r^*)]$ is the sublist computed above. The next step is to compute $a_i^{j_{i2}(r^*)+1}(r^*)$. This step will again generate critical values and shrink the interval $(r_1,r_2]$. After that, we add to $\calL_{t}(r^*)$ the sublist that is the union of the following three sublists: $P(b_i^i(r^*),a_i^i(r^*))$, $P[j_{i1}(r^*),j_{i2}(r^*)]$, and $P(j_{i2}(r^*),a_i^{j_{i2}(r^*)+1})$. Similarly, we perform a clockwise processing procedure for each point $p_i\in P$. After that, the set $\calL_t(r^*)$ is computed. The details are given below. 

\paragraph{The first iteration: Computing $a_i^i(r^*)$ and $b_i^i(r^*)$.} We now discuss how to compute $a_i^i(r^*)$ and $b_i^i(r^*)$ for all points $p_i\in P$. We will parameterize the farthest Voronoi diagram method of Lemma~\ref{lem:firstout}, referred to as the {\em FVD algorithm}. Initially, we set $r_1=-\infty$ and $r_2=\infty$. As such, we have $r^*\in (r_1,r_2]$. 

We first construct a binary search tree $T$ on $P$, including building the farthest Voronoi diagram $\fvd(v)$ and the point location data structure on $\fvd(v)$ for all nodes $v \in T$. This process is independent of $r^*$ and takes $O(n \log n)$ time in total.

Next, for each point $p_i \in P$, we compute $a_i^i(r^*)$ by traversing a path $\pi_i^i(r^*)$ in $T$ (i.e., the nodes in the bottom-up procedure and then the top-down procedure in the FVD algorithm), which consists of $O(\log n)$ nodes. At each node $v \in \pi_i^i(r^*)$, we locate the farthest point $p'$ from $p_i$ within $P_v$ using a point location query in $\fvd(P_v)$. The farthest point $p'$ and the point location algorithm for finding $p'$ are independent of $r^*$. Next, we need to compare $r = |p'p_i|$ with $r^*$ to determine whether $r \geq r^*$; here $r$ is a critical value. If $r \leq r_1$, then since $r^*\in (r_1,r_2]$, we have $r < r^*$, and the comparison is resolved. If $r \geq r_2$, then $r \geq r^*$ and the comparison is also resolved. In both cases, we can resolve the comparison without invoking the decision algorithm. If $r \in (r_1, r_2)$, then we apply the decision algorithm on $r$ to check whether $r \geq r^*$. 
If so, we update $r_2 = r$; otherwise, we update $r_1 = r$. After this, we still have $r^* \in (r_1, r_2]$. Once the comparison is resolved, the algorithm proceeds accordingly, following the FVD algorithm. 
% Lemma~\ref{lem:firstout}.

Our algorithm maintains the following {\em invariant}: For all $r\in (r_1,r_2)$, the FVD algorithm running with $r$ so far behaves the same because none of the critical values tested so far is in $(r_1,r_2)$ (and the behavior of the FVD algorithm so far depends on the critical values that have been tested). This means that if $r^*\neq r_2$, then $r^*\in (r_1,r_2)$ and thus for any $r\in (r_1,r_2)$ the FVD algorithm running with $r$ so far behaves the same as that running with $r^*$. In this way, after all nodes of the path $\pi_i(r^*)$ are visited, we will reach a leaf $v\in T$ and obtain an interval $(r_1,r_2]$ containing $r^*$ and the FVD algorithm running with any $r\in (r_1,r_2)$ behaves the same; therefore, if $r^*\neq r_2$, the algorithm always reaches the same leaf $v$ when running with any $r\in (r_1,r_2)$ and $a_i^i(r^*)$ is the index of the point stored at $v$. Since $\pi_i^i(r^*)$ has $O(\log n)$ nodes, the algorithm calls the decision algorithm $O(\log n)$ times. 

If we run the above algorithm for each $p_i\in P$ individually, then we would need to call the decision algorithm $O(n\log n)$ times. To improve the algorithm, we adopt the parametric search framework by running the above algorithm in parallel for all points $p_i\in P$. At each parallel step, for each $p_i\in P$, we have at most one critical value $r_i$ to compare with $r^*$. Instead of resolving the comparison individually as above, we first pick the median $r$ of all such $r_i$'s, $1\leq i\leq n$, and then resolve the comparison between $r$ and $r^*$. After that, half of the comparisons between $r^*$ and all $r_i$'s can be resolved. As such, after calling the decision algorithm $O(\log n)$ times, we can resolve all $n$ comparisons for all $r_i$'s, $1\leq i\leq n$. We then continue with the second parallel step. After $O(\log n)$ parallel steps, we obtain an interval $(r_1,r_2]$ and for each $p_i\in P$, we reach a leaf $v^i$, 
such that $r^*\in (r_1,r_2]$, and if $r^*\neq r_2$, the algorithm for any $r\in (r_1,r_2)$ behaves the same as for $r^*$, i.e., for any $r\in (r_1,r_2)$, the algorithm always reaches the same leaf $v^i$ for all $p_i\in P$ and the index of the point stored at $v^i$ is $a_i^i(r^*)$. 

The above algorithm calls the decision algorithm $O(\log^2 n)$ times because the algorithm has $O(\log n)$ parallel steps and calls the decision algorithm $O(\log n)$ times in each step. We can further improve the algorithm so that it suffices to call the decision algorithm $O(\log n)$ times in total. This can be achieved by applying Cole's technique~\cite{ref:ColeSl87}. Indeed, the technique is applicable here because for each $p_i\in P$, we are searching along the nodes in the path $\pi_i^i(r^*)$, which satisfies the bounded fan-in/fan-out condition of the technique~\cite{ref:ColeSl87}. Using the technique, it suffices to call the decision algorithm $O(\log n)$ times, and in a total of $O(kn\log^2 n)$ time we can compute an interval $(r_1,r_2]$ and an index $a_i'$ for each $p_i\in P$, such that $r^*\in (r_1,r_2]$, and if $r^*\neq r_2$, the algorithm for any $r\in (r_1,r_2)$ behaves the same as for $r^*$, i.e., for any $r\in (r_1,r_2)$, $a_i^i(r)=a_i^i(r^*)=a_i'$ for all $p_i\in P$. 

With the interval $(r_1,r_2]$, we next apply a similar algorithm as above to compute $b_i^i(r^*)$. Similarly, by calling the decision algorithm $O(\log n)$ times in a total of $O(kn\log^2 n)$ time, we can compute a shrunken interval $(r_1,r_2]$ and a point index $b_i'$ for each $p_i\in P$, such that $r^*\in (r_1,r_2]$, and if $r^*\neq r_2$, the algorithm for any $r\in (r_1,r_2)$ behaves the same as for $r^*$, i.e., for any $r\in (r_1,r_2)$, $a_i^i(r)=a_i^i(r^*)=a_i'$ and $b_i^i(r)=b_i^i(r^*)=b_i'$ for all $p_i\in P$.
%The next lemma argues that $r^*=r_2$. 

For each $p_i\in P$, we add the sublist $P(b_i^i(r^*),a_i^i(r^*))$ to $\calL_1(r^*)$. According to the above discussion, if $r^*\neq r_2$, then $\calL_1(r)=\calL_1(r^*)$ for all $r\in (r_1,r_2)$. This finishes the first iteration. 

\paragraph{The subsequent iterations.}
For each $t$-th iteration, $t\geq 2$, we assume that we have the set $\calL_{t-1}(r^*)$ and an interval $(r_1,r_2]$ containing $r^*$ such that if $r^*\neq r_2$, then $\calL_{t-1}(r)=\calL_{t-1}(r^*)$ holds for any $r\in (r_1,r_2)$.
%for all $1\leq t'\leq t-1$.
The goal of the $t$-th iteration is to compute  a sublist set $\calL_{t}(r^*)$ and shrink the interval $(r_1,r_2]$ such that it still contains $r^*$ and if $r^*\neq r_2$, then $\calL_{t}(r)=\calL_{t}(r^*)$ holds for any $r\in (r_1,r_2)$.
%note that because the interval is shrunk, this also implies that for any $r\in (r_1,r_2)$, $\calL_{t'}(r)=\calL_{t'}(r^*)$ holds for all $1\leq t'\leq t$.
The details are given below. 

For each point $p_i\in P$, we perform a counterclockwise processing procedure as follows. First, we compute the sublist in $\calL_{t-1}(r^*)$ containing $a_i^i(r^*)$ with the farthest counterclockwise endpoint. Note that since the set $\calL_{t-1}(r^*)$ is fixed, this step does not generate any critical values and thus does not need to call the decision algorithm. Let $P[j_{i1}(r^*),j_{i2}(r^*)]$ denote the sublist computed above. The next step is then to compute the index $a_{i}^{j_{i2}(r^*)+1}(r^*)$. As with the computation of $a_i^i(r^*)$, we trace a path of $O(\log n)$ nodes in the tree $T$. Processing each node in the path generates at most one critical value. 
The algorithm is similar to that for computing $a_i^i(r^*)$ discussed above.
Overall, it suffices to call the decision algorithm $O(\log n)$ times, and in a total of $O(kn\log^2 n)$ time we can compute an interval $(r_1,r_2]$ and an index $a_i'$ for each $p_{i}\in P$, such that $r^*\in (r_1,r_2]$, and if $r^*\neq r_2$, 
%the algorithm for any $r\in (r_1,r_2)$ behaves the same as for $r^*$, i.e., 
then for any $r\in (r_1,r_2)$, $a_i^{j_{i2}(r)+1}(r)=a_i^{j_{i2}(r^*)+1}(r^*)=a_{i}'$ for all $p_i\in P$. 
For each $p_i\in P$, we add to $\calL_t(r^*)$ the sublist that is the union of the following three sublists: $P(b_i^i(r^*),a_i^i(r^*))$, $P[j_{i1}(r^*),j_{i2}(r^*)]$, and $P(j_{i2}(r^*),a_i^{j_{i2}(r^*)+1}(r^*))$. 

Similarly, we perform a clockwise processing procedure for all points $p_i\in P$. After that, we obtain $\calL_{t}(r^*)$ and $(r_1,r_2]$ such that $r^*\in (r_1,r_2]$ and if $r^*\neq r_2$, then $\calL_{t}(r)=\calL_{t}(r^*)$ for any $r\in (r_1,r_2)$. The total time of the $t$-th iteration is thus $O(kn\log^2 n)$. 

After the $k$-th iteration, we stop the algorithm with an interval $(r_1,r_2]$. 
The next lemma argues that $r^*=r_2$. 

\begin{lemma}\label{lem:optvalue}
Suppose that $(r_1,r_2]$ is the resulting interval after the $k$-th iteration of the algorithm. Then, it holds that $r^*=r_2$.
\end{lemma}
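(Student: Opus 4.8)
The plan is to read off the conclusion from the two invariants that the simulation carries through all $k$ iterations. Throughout the algorithm we always have $r^*\in(r_1,r_2]$, and moreover, as asserted when each $\calL_t(r^*)$ is constructed, whenever $r^*\neq r_2$ the computation run with any $r\in(r_1,r_2)$ is identical to the computation run with $r^*$; in particular $\calL_k(r)=\calL_k(r^*)$ for every $r\in(r_1,r_2)$. The containment $r^*\in(r_1,r_2]$ already yields $r^*\le r_2$, so the entire task is to exclude the possibility $r^*<r_2$.

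First I would reduce the decision answer on a radius $r$ to a single combinatorial test. By Corollary~\ref{crl:genraddomsetunwgt}, a set of at most $k$ centers with covering radius $r$ exists if and only if one of the sublists produced within the $k$ iterations equals the full list $P$; equivalently, the decision answer on $r$ is ``yes'' precisely when $\calL_k(r)$ contains a sublist equal to $P$. Since $r^*$ is the optimal objective value and $r^*\in R$, the answer for $r=r^*$ is ``yes'', so $\calL_k(r^*)$ contains a sublist equal to $P$. On the other hand, for every $r<r^*$ no choice of at most $k$ centers attains covering radius $r$, so the answer is ``no'' and $\calL_k(r)$ contains no sublist equal to $P$.

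Then I would assume for contradiction that $r^*\neq r_2$. Combined with $r^*\in(r_1,r_2]$ this forces $r_1<r^*<r_2$, so $(r_1,r^*)$ is a nonempty subinterval of $(r_1,r_2)$. Applying the second invariant, any $r'\in(r_1,r^*)$ satisfies $\calL_k(r')=\calL_k(r^*)$, and since $\calL_k(r^*)$ contains a sublist equal to $P$, so does $\calL_k(r')$; hence the decision answer on $r'$ is ``yes''. But $r'<r^*$ forces the answer on $r'$ to be ``no'', a contradiction. Therefore $r^*=r_2$.

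The step I expect to carry the most weight is the second invariant rather than this final contradiction: one must be sure that genericity over the open interval really propagates through all $k$ iterations. This is guaranteed by how the critical values are handled — each generated value $r$ is compared with $r^*$ either for free (when $r\le r_1$ or $r\ge r_2$) or by explicitly calling the decision algorithm and then shrinking $(r_1,r_2]$ so that no tested critical value lies strictly inside the updated interval. Consequently no comparison outcome can ever differ between $r^*$ and an $r\in(r_1,r_2)$, which is exactly what makes each $\calL_t(r)$ independent of $r$ on $(r_1,r_2)$; I would simply invoke this standing invariant at $t=k$, noting that it is conditioned on $r^*\neq r_2$ — precisely the branch treated in the contradiction.
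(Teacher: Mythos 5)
Your proof is correct and follows essentially the same route as the paper's: assume $r^*\neq r_2$, invoke the invariant $\calL_k(r)=\calL_k(r^*)$ for $r\in(r_1,r_2)$, note that $\calL_k(r^*)$ contains a sublist equal to $P$, and derive a contradiction by picking $r'\in(r_1,r^*)$, for which the decision algorithm would then wrongly report feasibility below the optimum. The extra detail you give on how critical values are resolved is exactly the invariant the paper maintains, so nothing is missing.
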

\begin{proof}
Assume to the contrary that $r^*\neq r_2$. As $r^*\in (r_1,r_2]$, we have $r^*\in (r_1,r_2)$. Therefore, by our algorithm invariant, $\calL_k(r)=\calL_k(r^*)$ for any $r\in (r_1,r_2)$. According to our decision algorithm, $\calL_k(r^*)$ must contain a sublist that is $P$. Hence, $\calL_k(r)$ contains a sublist that is $P$ for any $r\in (r_1,r_2)$. Let $r'$ be any value in $(r_1,r^*)$. Therefore, $\calL_k(r')$ contains a sublist that is $P$. 
This means that if we apply the decision algorithm with $r=r'$, then the decision algorithm will find  a set of $k$ centers in $P$ such that the distance from any point of $P$ to its closest center is at most $r'$. But this contradicts with the definition of $r^*$ as $r'<r^*$. Therefore, $r^*=r_2$ must hold. 
\end{proof}

By Lemma~\ref{lem:optvalue}, the above algorithm correctly computes $r^*$. 
Since each iteration of the algorithm takes $O(kn\log^2 n)$ time, the total time of the algorithm is $O(k^2n\log^2 n)$. 
%To find an actual optimal solution, i.e., $k$ points of $P$ whose disks of radius $r^*$ cover $P$, we can simply apply the decision algorithm on $r=r^*$, which will find such $k$ points. 
Combining with the $O(n^{4/3}\log n + kn\log^2 n)$ time algorithm discussed earlier in this section, we obtain the following result. 

\begin{theorem}
Given a set $P$ of $n$ points in convex position in the plane and a number $k$, we can compute in $O(\min\{n^{4/3}\log n+kn\log^2 n,k^2 n\log^2n\})$ time a subset $S \subseteq P$ of size at most $k$, such that the maximum distance from any point of $P$ to its nearest point in $S$ is minimized. 
\end{theorem}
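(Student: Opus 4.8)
The plan is to reduce the optimization problem to computing the optimal value $r^*$, since once $r^*$ is known we can invoke the decision procedure of Corollary~\ref{crl:genraddomsetunwgt} one final time with $r=r^*$ to recover an actual set $S$ of at most $k$ centers realizing it. The key starting observation is that $r^*$ must equal the distance between some pair of points of $P$, so $r^*$ lies in the set $R$ of all $\binom{n}{2}$ pairwise distances. This discretizes the search and lets the $O(kn\log n)$-time decision algorithm of Corollary~\ref{crl:genraddomsetunwgt} serve as the comparison oracle for locating $r^*$ within $R$.

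For the first bound I would avoid materializing $R$ explicitly (which alone costs $\Theta(n^2)$) and instead search it implicitly. Using the distance-selection framework of Wang and Zhao~\cite{ref:WangIm23}, which identifies the $m$-th smallest element of $R$ in $O(n^{4/3}\log n)$ time, I would run a binary search over $R$ driven by the decision algorithm: each step selects a candidate distance and tests it with the $O(kn\log n)$ oracle. Folding the selection cost together with the $O(\log n)$ oracle calls yields the term $O(n^{4/3}\log n + kn\log^2 n)$, after which one more decision call with $r=r^*$ outputs $S$.

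For the second bound I would use parametric search~\cite{ref:ColeSl87,ref:MegiddoAp83}, simulating the decision algorithm on the unknown $r^*$. I would maintain an interval $(r_1,r_2]$ guaranteed to contain $r^*$, initialized to $(-\infty,\infty)$, and run the decision computation generically: whenever it must compare a data-dependent critical value $r$ against $r^*$, I resolve the comparison for free if $r\notin(r_1,r_2)$ and otherwise call the oracle on $r$ and shrink $(r_1,r_2]$ accordingly. The crucial enabling step is to parameterize the farthest-Voronoi-diagram routine of Lemma~\ref{lem:firstout}: computing each index $a_i^i(r^*)$, $b_i^i(r^*)$, and later each $a_i^{j+1}(r^*)$ amounts to tracing an $O(\log n)$-length root-to-leaf path in the tree $T$, so all critical values arise along such paths. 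Running all $n$ searches of an iteration in parallel and applying Cole's technique~\cite{ref:ColeSl87} (valid because each search follows a single path satisfying the bounded fan-in/fan-out condition) reduces the oracle calls to $O(\log n)$ per iteration; with $k$ iterations this gives $O(k^2 n\log^2 n)$.

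The main obstacle, and the crux of correctness, is establishing and maintaining the parametric-search invariant: for every $r\in(r_1,r_2)$ the simulated decision algorithm has so far behaved identically, precisely because no tested critical value lies strictly inside $(r_1,r_2)$. Carrying this invariant across all $k$ iterations is what guarantees $\calL_t(r)=\calL_t(r^*)$ for all such $r$ whenever $r^*\neq r_2$. The payoff is Lemma~\ref{lem:optvalue}: if $r^*\neq r_2$, then choosing any $r'\in(r_1,r^*)$ would make $\calL_k(r')$ contain a sublist equal to $P$, i.e.\ the oracle would accept some $r'<r^*$, contradicting the optimality of $r^*$; hence $r^*=r_2$. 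Finally, running whichever of the two algorithms is cheaper for the given $k$ yields the stated $O(\min\{\,n^{4/3}\log n+kn\log^2 n,\ k^2 n\log^2 n\,\})$ bound, the second being preferable when $k=o(n^{1/6}/\sqrt{\log n})$.
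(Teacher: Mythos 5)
Your proposal is correct and follows essentially the same route as the paper: the same $O(n^{4/3}\log n + kn\log^2 n)$ algorithm via the Wang--Zhao distance-selection framework with the Corollary~\ref{crl:genraddomsetunwgt} decision oracle, and the same $O(k^2 n\log^2 n)$ algorithm via parametric search on the parameterized FVD routine of Lemma~\ref{lem:firstout} with Cole's technique, concluding with the identical $r^*=r_2$ argument of Lemma~\ref{lem:optvalue}.
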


%\section{Extension to the bichromatic case}
%\textcolor{red}{AT: Our solution also works when there are two input sets $P$ and $S$ with the constraint that the centers are from the second input set. If $S\cup P$ is still in convex position, the results of this paper applies too.}

\section{The independent set problem}
\label{sec:isconvex}

In this section, we present our algorithms for the independent set problem, assuming that the points of $P$ are in convex position.
%the dispersion problem for an arbitrary $k$, assuming that the points of $P$ are in convex position. We start with the independent set problem, whose algorithm will be used as a subroutine to solve the dispersion problem. 
In Section~\ref{sec:isweight}, we present the algorithm for computing a maximum-weight independent set. Section~\ref{sec:isconvex3} gives the algorithm for computing an (unweighted) independent set of size $3$. 

\subsection{The maximum-weight independent set problem}
\label{sec:isweight}
Instead of first solving the maximum independent set problem and then extending it to the weighted case, we give an algorithm for finding a maximum-weight independent set directly. 

Recall that $P=\langle p_1, p_2, \ldots, p_n\rangle$ is a cyclic list ordered along $\calH(P)$ in counterclockwise order.
For each point $p_i\in P$, let $w_i$ denote its weight. We assume that each $w_i>0$ since otherwise $p_i$ can be simply ignored, which would not affect the optimal solution. For any subset $P'\subseteq P$, let $w(P')$ denote the total weight of all points of $P'$.

For any three points $p_1,p_2,p_3$, let $D(p_1,p_2,p_3)$ denote the disk whose boundary contains them. Thus, $\partial D(p_1,p_2,p_3)$ is the unique circle through these points.

In what follows, we first describe the algorithm and explain why it is correct, and then discuss how to implement the algorithm efficiently. 

\subsubsection{Algorithm description and correctness}
To motivate our algorithm and demonstrate its correctness, we first examine the optimal solution structure and then develop a recursive relation on which our dynamic programming algorithm is based. 

Let $S$ be a maximum-weight independent set of $G(P)$, or equivalently, $S$ is a maximum-weight subset of $P$ such that the minimum pairwise distance of the points of $S$ is larger than $1$. 
Let $\dt(S)$ denote the Delaunay triangulation of $S$. If $(p,q)$ is the closest pair of points of $S$, then $\overline{pq}$ must be an edge of $\dt(S)$ and in fact the shortest edge of $\dt(S)$~\cite{ref:ShamosCl75}. As such, finding a maximum-weight independent set of $G(P)$ is equivalent to finding a maximum-weight subset $S\subseteq P$ such that the shortest edge of $\dt(S)$ has length larger than $1$. The algorithm in the previous work~\cite{ref:SingireddyAl23} is based on this observation, which also inspires our algorithm. 

\begin{figure}[t]
\begin{minipage}[t]{\textwidth}
\begin{center}
\includegraphics[height=1.7in]{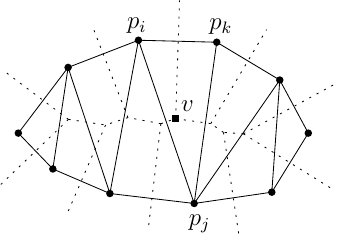}
\caption{\footnotesize Illustrating $\dt(S)$, the solid segments, and $\vd(S)$, the dotted segments. }
\label{fig:dt}
\end{center}
\end{minipage}
\vspace{-0.1in}
\end{figure}

Consider a triangle $\triangle p_ip_jp_k$ of $\dt(S)$ such that the three points $p_i,p_j,p_k$ are in the counterclockwise order of $P$ (i.e., they are counterclockwise on the convex hull $\calH(P)$). Due to the property of Delaunay triangulations, the disk $D(p_i,p_j,p_k)$ does not contain any point of $S\setminus \{p_i,p_j,p_k\}$~\cite{ref:ShamosCl75}. Since the points of $S$ are in convex position, we have the following observation. 

\begin{observation}\label{obser:dt}
$\dt(S)$ does not contain an edge connecting two points from any two different subsets of $\{P(i,j),P(j,k),P(k,i)\}$; see Figure~\ref{fig:dt}.
\end{observation}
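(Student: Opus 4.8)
The plan is to argue by contradiction using the planarity of the triangulation together with the convex position of $S$; the empty-circle property plays almost no role beyond guaranteeing that $\triangle p_ip_jp_k$ is a genuine face of $\dt(S)$. The only facts I would need are that, since $\triangle p_ip_jp_k$ is a triangle of $\dt(S)$, its three sides $\overline{p_ip_j}$, $\overline{p_jp_k}$, and $\overline{p_kp_i}$ are all edges of $\dt(S)$, and that $\dt(S)$ is a planar straight-line triangulation, so any two of its edges meet only at a shared endpoint. Since the points of $S$ are in convex position and listed counterclockwise, the three vertices $p_i,p_j,p_k$ split the boundary of $\calH(S)$ into exactly the three arcs corresponding to $P(i,j)$, $P(j,k)$, and $P(k,i)$, and every point of $S$ other than $p_i,p_j,p_k$ lies in exactly one of them.

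First I would suppose, for contradiction, that $\dt(S)$ contains an edge $\overline{pq}$ whose endpoints lie in two distinct arcs. By the symmetry among the three arcs it suffices to treat the case $p\in P(i,j)$ and $q\in P(j,k)$. Reading the vertices of $S$ counterclockwise along $\calH(S)$, they occur in the cyclic order $p_i,\ldots,p,\ldots,p_j,\ldots,q,\ldots,p_k$, so the endpoints of the chord $\overline{pq}$ strictly interleave with the endpoints of the chord $\overline{p_ip_j}$ around $\calH(S)$.

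The key step is then the standard fact that two chords of a convex polygon whose endpoints interleave along the boundary must cross at an interior point. Applying it, $\overline{pq}$ and $\overline{p_ip_j}$ cross, yet both are edges of $\dt(S)$ and share no endpoint, contradicting the planarity of the triangulation. The remaining subcases, namely $p\in P(i,j)$ with $q\in P(k,i)$ and $p\in P(j,k)$ with $q\in P(k,i)$, are identical: the chord $\overline{pq}$ now interleaves with, and hence properly crosses, the edges $\overline{p_kp_i}$ and $\overline{p_jp_k}$ respectively, again contradicting planarity. Since every way of choosing $p$ and $q$ from two different arcs leads to a crossing with one of the three triangle edges, no such edge can exist in $\dt(S)$, which proves the observation.

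I expect the only delicate point to be the interleaving-implies-crossing step, and this rests entirely on the convex position of $S$: once the counterclockwise cyclic order of $\{p_i,p,p_j,q,p_k\}$ along $\calH(S)$ is pinned down, the crossing is immediate. Everything else is bookkeeping over the three symmetric cases, so I would keep the write-up short and fold the three cases into a single argument by symmetry.
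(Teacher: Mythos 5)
Your proof is correct, but it takes a genuinely different route from the paper. You argue via planarity: the three sides of the face $\triangle p_ip_jp_k$ are edges of $\dt(S)$, and any edge $\overline{pq}$ with endpoints in two different arcs would have its endpoints interleaving, around $\calH(S)$, with those of one of the triangle's sides; since $S$ is in convex position, interleaving chords must cross in the interior, contradicting the fact that a Delaunay triangulation is a plane straight-line graph. The paper instead exploits the result of Aggarwal, Guibas, Saxe, and Shor that the Voronoi diagram $\vd(S)$ of points in convex position is a tree: the circumcenter of $\triangle p_ip_jp_k$ is a degree-three vertex of that tree, its removal splits the tree into three subtrees corresponding to $S[i,j]$, $S[j,k]$, $S[k,i]$, and since every Delaunay edge is dual to a Voronoi edge lying in one subtree, no Delaunay edge can join two different subsets. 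Your argument is more elementary and self-contained --- it needs only standard planarity and the interleaving-chords-cross fact, not the cited tree-structure theorem --- whereas the paper's proof ties the observation directly to the Voronoi-tree viewpoint that motivates its dynamic program (each subtree/arc can be searched independently). Both are valid; yours would serve equally well as a proof of the observation, though it makes the connection to the recursive structure of the algorithm less visible.
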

\begin{proof}
Let $S[i,j]=P[i,j]\cap S$, $S[j,k]=P[j,k]\cap S$, and $S[k,i]=P[k,i]\cap S$. Similarly, let $S(i,j)=P(i,j)\cap S$, $S(j,k)=P(j,k)\cap S$, and $S(k,i)=P(k,i)\cap S$.

Since the points of $S$ are in convex position, the edges of the Voronoi diagram $\vd(S)$ of $S$ form a tree~\cite{ref:AggarwalA89}. Note that $\dt(S)$ is a dual graph of $\vd(S)$~\cite{ref:ShamosCl75}. Since $\triangle p_ip_jp_k$ is a triangle of $\dt(S)$, the center $v$ of the disk $D(p_i,p_j,p_k)$ is a vertex of $\vd(S)$; see Figure~\ref{fig:dt}. 
Due to our general position assumption, $v$ has three incident edges in $\vd(S)$, which are dual to the three edges of $\triangle p_ip_jp_k$. We consider $v$ the root of the tree $\vd(S)$,
%Due to our general position assumption, $c$ has has most three subtrees, i.e., the edges of $\vd(s)$ 
which has three subtrees, defined by $S[i,j]$, $S[j,k]$, and $S[k,i]$ in the following sense~\cite{ref:AggarwalA89}: For each subtree, there is exactly one subset $S'$ of $S[i,j]$, $S[j,k]$, and $S[k,i]$ such that every edge of the subtree is neighboring to the Voronoi cells in $\vd(S)$ of two points of $S'$. As such, $\vd(S)$ does not have an edge neighboring to the Voronoi cells of two points from two different subsets of $S(i,j)$, $S(j,k)$, and $S(k,i)$. 
%a point of $P_1\cap S$ and a point of $P_2\cap S$. 
For each edge $\overline{pq}$ of $\dt(S)$, $\vd(S)$ must have an edge neighboring to the Voronoi cells of $p$ and $q$. Therefore, $\dt(S)$ cannot contain an edge connecting two points from two different subsets of $S(i,j)$, $S(j,k)$, and $S(k,i)$. The observation thus follows. 
\end{proof}

Observation~\ref{obser:dt} implies the following: To find an optimal solution $S$, if we know that $\triangle p_ip_jp_k$ is a triangle in $\dt(S)$, 
%we can independently search $P(i,j)$, $P(j,k)$, and $P(k,i)$, respectively. Further, 
since no point of $S\setminus\{p_i,p_j,p_k\}$ lies in the disk $D(p_i,p_j,p_k)$, we can independently search $P(i,j)\cap \overline{D(p_i,p_j,p_k)}$, $P(j,k)\cap \overline{D(p_i,p_j,p_k)}$, and $P(k,i)\cap \overline{D(p_i,p_j,p_k)}$, respectively (recall that $\overline{D(p_i,p_j,p_k)}$ denote the region outside $D(p_i,p_j,p_k)$). This idea forms the basis of our dynamic program. 

Let $W^*$ denote the total weight of a maximum-weight independent set of $G(P)$. 

For any pair of indices $(i,j)$ with $|p_i p_j| > 1$, we call $(i,j)$ a {\em canonical pair} and define $f(i,j)$ as the total weight of a maximum-weight subset $P'$ of $P(i,j)$ such that $P'\cup \{p_i,p_j\}$ forms an independent set of $G(P)$; if no such subset $P'$ exists, then $f(i,j)=0$. Computing $f(i,j)$ is a subproblem in our dynamic program. For simplicity, we let $f(i,j)=-(w_i+w_j)$ if $(i,j)$ is not canonical, i.e., $|p_ip_j|\leq 1$. The following lemma explains why we are interested in $f(i,j)$. 

\begin{lemma}\label{lem:opt}
$W^*=\max_{1\leq i,j\leq n}(f(i,j)+w_i+w_j)$.    
\end{lemma}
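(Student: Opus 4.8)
The plan is to prove the two inequalities $W^*\ge \max_{i,j}(f(i,j)+w_i+w_j)$ and $W^*\le \max_{i,j}(f(i,j)+w_i+w_j)$ separately. The first inequality is the routine one. Fix any pair $(i,j)$. If $(i,j)$ is not canonical, then by convention $f(i,j)+w_i+w_j=0$, and this is $\le W^*$ since the empty set (or any single vertex) is an independent set, so $W^*\ge 0$. If $(i,j)$ is canonical, then by the definition of $f$ there is a subset $P'\subseteq P(i,j)$ with $w(P')=f(i,j)$ such that $P'\cup\{p_i,p_j\}$ is an independent set of $G(P)$; note such a $P'$ always exists (the empty set works, as $|p_ip_j|>1$), so $f(i,j)\ge 0$. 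This independent set has weight exactly $f(i,j)+w_i+w_j$, hence $W^*\ge f(i,j)+w_i+w_j$. Taking the maximum over all $(i,j)$ gives the first inequality.

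For the reverse inequality, which is the substantive direction, I would start from a maximum-weight independent set $S$ with $w(S)=W^*$ and exhibit a \emph{single} canonical pair whose summand already reaches $W^*$. Assuming $|S|\ge 2$, list the points of $S$ in the counterclockwise cyclic order they inherit from $P$, say $q_1,q_2,\ldots,q_s$, and take two cyclically consecutive ones, for concreteness $q_1$ and its counterclockwise successor $q_2$ in $S$. Set $i$ to be the index of $q_2$ in $P$ and $j$ the index of $q_1$. The key point of this choice is that the open sublist $P(i,j)$, which runs counterclockwise from $q_2$ all the way around to $q_1$, contains precisely the remaining points $q_3,\ldots,q_s$ of $S$ and neither $q_1$ nor $q_2$. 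Thus $P':=S\setminus\{q_1,q_2\}\subseteq P(i,j)$ and $P'\cup\{p_i,p_j\}=S$ is independent, while $(i,j)$ is canonical because $|q_1q_2|>1$. By maximality in the definition of $f$, we get $f(i,j)\ge w(P')=W^*-w_i-w_j$, i.e.\ $f(i,j)+w_i+w_j\ge W^*$. Combining the two inequalities yields the claimed equality.

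The step I expect to require the most care is making the arc argument rigorous: I must verify that choosing $q_2$ as the counterclockwise successor of $q_1$ genuinely places every other point of $S$ on the \emph{single} open sublist $P(i,j)$. This is exactly where convex position enters, since the cyclic order of $S$ is the restriction of the cyclic order of $P$, so no point of $S$ lies on the short arc from $q_1$ to $q_2$ and all of $q_3,\ldots,q_s$ are swept out in order along $P(i,j)$. The second subtlety is the boundary case $|S|\le 1$: the forward direction only ever exhibits independent sets containing a canonical pair (hence of size $\ge 2$), so a size-$1$ optimum — which arises only when the heaviest vertex is adjacent to every other vertex — is not represented by any $f(i,j)+w_i+w_j$. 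I would handle this case by additionally comparing against $\max_i w_i$ (equivalently, one may assume the optimum has at least two points, the single-vertex optimum being found trivially), so that the maximum on the right-hand side indeed equals $W^*$ in all cases.
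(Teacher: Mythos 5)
Your proof is correct and follows essentially the same route as the paper: the easy direction is the same maximality argument, and your choice of two cyclically consecutive points of $S$ (in the cyclic order inherited from $P$) is exactly the paper's choice of an edge of the convex hull $\calH(S)$, since $S$ is in convex position as a subset of $P$. Your explicit treatment of the $|S|\leq 1$ boundary case is in fact more careful than the paper, whose proof silently assumes $\calH(S)$ has an edge; in the degenerate situation where all pairwise distances are at most $1$, every pair is non-canonical and the right-hand side equals $0<\max_i w_i=W^*$, so the extra comparison with $\max_i w_i$ that you propose is genuinely needed there.
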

\begin{proof}
Consider a pair of indices $(i,j)$. We show that $f(i,j)+w_i+w_j\leq W^*$. If $|p_ip_j|\leq 1$, then $f(i,j)=-(w_i+w_j)$ and thus $f(i,j)+w_i+w_j\leq W^*$ is obviously true. Now suppose that $|p_ip_j|> 1$. 
Let $P'$ be a maximum-weight subset of $P(i,j)$ such that $P'\cup \{p_i,p_j\}$ is an independent set. 
By definition, $f(i,j)=w(P')$. Since $P'\cup \{p_i,p_j\}$ is an independent set, we have $w(P'\cup \{p_i,p_j\})\leq W^*$.  Therefore, $f(i,j)+w_i+w_j=w(P'\cup \{p_i,p_j\})\leq W^*$. This proves $\max_{1\leq i,j\leq n}(f(i,j)+w_i+w_j)\leq W^*$.   

We next prove $W^*\leq \max_{1\leq i,j\leq n}(f(i,j)+w_i+w_j)$. Let $S$ be a maximum-weight independent set of $G(P)$. Let $\overline{p_ip_j}$ be an edge of the convex hull $\calH(S)$ of $S$. Without loss of generality, we assume that $\calH(S)$ is in the halfplane right of $\overrightarrow{p_ip_j}$. It is not difficult to see that $S\setminus\{p_i,p_j\}\subseteq P(i,j)$. Since $S$ is an independent set containing both $p_i$ and $p_j$, we have $w(S\setminus\{p_i,p_j\})\leq f(i,j)$, or equivalently, $W^*=w(S)\leq f(i,j)+w_i+w_j$. This proves $W^*\leq \max_{1\leq i,j\leq n}(f(i,j)+w_i+w_j)$.
\end{proof}

In light of Lemma~\ref{lem:opt}, to compute $W^*$, it suffices to compute $f(i,j)$ for all pairs of indices $1\leq i,j\leq n$ and the one with the largest $f(i,j)+w_i+w_j$ leads to the optimal solution. In order to compute $f(i,j)$, we define another type of subproblems that will be used in our algorithm. 

For any three points $p_i,p_j,p_k$ such that they are ordered counterclockwise in $P$ and their minimum pairwise distance is larger than $1$, we call $(i,j,k)$ a {\em canonical triple}.

For a canonical triple $(i,j,k)$, by slightly abusing the notation, we define $f(i,j,k)$ as the total weight of a maximum-weight subset $P'$ of $P(i,j)\cap \overline{D(p_i,p_j,p_k)}$ such that $P'\cup \{p_i,p_j\}$ is an independent set; if no such subset $P'$ exists, then $f(i,j,k)=0$.  For any canonical pair $(i,j)$, if we consider $p_0$ a dummy point to the left of $\overrightarrow{p_ip_j}$ and infinitely far from the supporting line of $\overline{p_ip_j}$ so that $D(p_i,p_j,p_0)$ becomes the left halfplane of $\overrightarrow{p_ip_j}$, then $f(i,j,0)$ following the above definition is exactly $f(i,j)$; for convenience, we also consider $(i,j,0)$ a canonical triple. In the following, to make the discussion more concise, we often use $f(i,j,0)$ instead of $f(i,j)$ because the way we compute $f(i,j,0)$ is consistent with the way we compute  $f(i,j,k)$ for $k\neq 0$.

For any canonical triple $(i,j,k)$, define $P_k(i,j)=\{p\ |\ p\in P(i,j), p\not\in D(p_i,p_j,p_k), |pp_i|> 1, |pp_j|> 1\}$. For any canonical pair $(i,j)$, define $P_0(i,j)=\{p\ |\ p\in P(i,j), |pp_i|> 1, |pp_j|> 1\}$. Note that $P_0(i,j)$ is consistent with $P_k(i,j)$ if we consider $p_0$ a dummy point as defined above. 
Observe also that $P_k(i,j)=P_0(i,j)\cap \overline{D(p_i,p_j,p_k)}$ for any canonical triple $(i,j,k)$.
%to the left of $\overrightarrow{p_ip_j}$ and infinitely far from the supporting line of $\overline{p_ip_j}$ so that $D(p_i,p_j,p_0)$ becomes the left halfplane of $\overrightarrow{p_ip_j}$. Also, $f(i,j,0)$ following the above definition with $p_0$ is essentially $f(i,j)$. 
%For convenience, for any canonical pair $(i,j)$, we consider $(i,j,0)$ a canonical triple and use $f(i,j,0)$ to refer to $f(i,j)$. 
By definition, $f(i,j,k)$ (including the case $k=0$) is the total weight of a maximum-weight independent set $P'\subseteq P_k(i,j)$; this is the reason we introduce the notation $P_k(i,j)$.

The following lemma gives the recursive relation of our dynamic programming algorithm. 

%For any pair of indices $(i,j)$ with $p_i\neq j$, we define $f(p_i,p_j)$ as the size of a largest subset $P'$ of $P(i,j)$ such that $P'\cup \{p_i,p_j\}$ is an independent set. Computing $f(i,j,k)$ or $f(i,j)$ is a subproblem in our dynamic program. For any canonical triple $(i,j,k)$ such that $\triangle p_ip_jp_k$ is a triangle in $\dt(S)$, Lemma~\ref{lem:opt} proves that $k^*=f(i,j,k)+f(j,k,i)+f(k,i,j)+3$. For any canonical pair $(i,j)$ such that $\overline{p_ip_j}$ is an edge of $\calH(s)$ and $\calH(s)\subseteq \calH(i,j)$, where $H(i,j)$ is the closed halfplane to the right of $\overline{p_ip_j}$, Lemma~\ref{lem:opt} proves that $k^*=f(i,j)+2$. To Lemma~\ref{lem:opt}, we need the following lemma. 

\begin{lemma}\label{lem:deprel}
For any canonical triple $(i,j,k)$, including the case $k=0$, the following holds (see Figure~\ref{fig:recurrelation}):
\begin{equation}
\label{eq:DepReltriple}
f(i, j, k) = \begin{cases}
\max_{p_l\in P_k(p_i,p_j)} (f(i,l,j) + f(l,j,i) + w_l),& \text{if $P_k(i,j)\neq \emptyset$}\\
0,& \text{otherwise.}
\end{cases}
\end{equation}
% For any canonical pair $(i,j)$, the following holds:
% \begin{equation}
% \label{eq:DepRelpair}
% f(p_i, p_j) = \begin{cases}
% \max_{p_l\in P'(p_i,p_j)} f(i,l,j) + f(l,j,i) + w_l,& \text{if $|P''(i,j)| > 0$}\\
% 0,& \text{otherwise.}
% \end{cases}
% \end{equation}
\end{lemma}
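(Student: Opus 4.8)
The plan is to prove the displayed identity by establishing the two inequalities $f(i,j,k)\le\max_{p_l}(\cdots)$ and $f(i,j,k)\ge\max_{p_l}(\cdots)$ separately, after first disposing of the degenerate case. When $P_k(i,j)=\emptyset$, the empty set is the only independent subset of $P_k(i,j)$, so $f(i,j,k)=0$; and since all weights are positive, whenever $P_k(i,j)\neq\emptyset$ any single point of it already forms a feasible nonempty solution, so the optimal $P'$ is nonempty and the maximum on the right is taken over a nonempty index set. Throughout I write $D_l$ for $D(p_i,p_j,p_l)$ (which equals $D(p_i,p_l,p_j)$ and $D(p_l,p_j,p_i)$), so that the disk constraints defining $P_j(i,l)$ and $P_i(l,j)$ both refer to $\overline{D_l}$.

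For the upper bound I would take an optimal $P'$ realizing $f(i,j,k)$ and set $S'=P'\cup\{p_i,p_j\}$. Since $P'\subseteq P(i,j)$, all points of $P'$ lie on the counterclockwise arc strictly between $p_i$ and $p_j$, so $\overline{p_ip_j}$ is an edge of $\calH(S')$ and hence a Delaunay edge of $\dt(S')$. Being a hull edge, it lies in exactly one triangle $\triangle p_ip_lp_j$ of $\dt(S')$, whose apex $p_l$ belongs to $P'\subseteq P_k(i,j)$, and one checks directly that $(i,l,j)$ and $(l,j,i)$ are canonical triples. By the Delaunay property the circumdisk $D_l$ contains no point of $S'$, and Observation~\ref{obser:dt} applied to $\triangle p_ip_lp_j$ shows that $P'\setminus\{p_l\}$ partitions into $A'=P'\cap P(i,l)$ and $B'=P'\cap P(l,j)$. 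Now $A'\cup\{p_i,p_l\}$ is independent (a subset of $S'$), lies in $P(i,l)$, lies in $\overline{D_l}$ by emptiness of $D_l$, and has all points at distance $>1$ from $p_i$ and $p_l$; thus $A'$ is feasible for $f(i,l,j)$ and $w(A')\le f(i,l,j)$, and symmetrically $w(B')\le f(l,j,i)$. Summing with $w_l$ gives $f(i,j,k)=w(P')\le f(i,l,j)+f(l,j,i)+w_l\le\max_{p_l}(\cdots)$.

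For the lower bound I would fix an arbitrary $p_l\in P_k(i,j)$, take optimal sets $A$ for $f(i,l,j)$ and $B$ for $f(l,j,i)$, and show that $P''=A\cup B\cup\{p_l\}$ is feasible for $f(i,j,k)$, i.e.\ $P''\subseteq P_k(i,j)$ and $P''\cup\{p_i,p_j\}$ is independent; this yields $f(i,l,j)+f(l,j,i)+w_l=w(P'')\le f(i,j,k)$ for every $p_l$, hence the ``$\ge$'' direction. Membership in the arc $P(i,j)$ is immediate. For the disk constraint I would use an inscribed-angle argument: all of $A$ and $B$ lie on the $P(i,j)$ side of the chord $\overline{p_ip_j}$, and since $p_l\notin D_k$ also lies on that side, the inscribed angle $\angle p_ip_lp_j$ is at most the angle subtended by $\overline{p_ip_j}$ from the corresponding arc of $\partial D_k$; this monotonicity gives $\overline{D_l}\subseteq\overline{D_k}$ on that side, so $A,B\subseteq\overline{D_l}$ forces $A,B\subseteq\overline{D_k}$ (the case $k=0$ being trivial since $\overline{D_0}$ is the entire $P(i,j)$ side).

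The hard part, and the real geometric content, is verifying the three families of cross distances that the subproblem constraints do not directly supply: $|ap_j|>1$ for $a\in A$, $|bp_i|>1$ for $b\in B$, and $|ab|>1$ for $a\in A,\ b\in B$. Each reduces to a claim of the following shape: a point $q$ lying outside the circumcircle of a triangle of $S'$ and on a prescribed arc-side, in convex position, is at distance larger than $1$ from the ``far'' vertex whenever it is at distance larger than $1$ from the two ``near'' vertices. I expect this to be the main obstacle, and I would prove it with the empty-circle / inscribed-angle machinery (equivalently, the symmetric in-circle test, which converts ``$a\notin D_l$'' into ``$p_j$ lies outside the circumcircle of $p_i,a,p_l$'') together with a power-of-a-point estimate; the guiding intuition is that being outside $D_l$ is precisely the condition preventing the diagonally opposite vertex from intruding within distance $1$, as a thin-quadrilateral check shows that $|ap_j|\le 1$ with $|ap_i|,|ap_l|>1$ would force $a\in D_l$. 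Once these inequalities are established, $P''\cup\{p_i,p_j\}$ is independent, the lower bound follows, and combining the two directions proves the recurrence.
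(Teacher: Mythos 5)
Your overall architecture is the same as the paper's: dispose of the empty case, prove the upper bound by extracting a point $p_l\in P'$ whose disk $D(p_i,p_l,p_j)$ is empty of the rest of $P'$, and prove the lower bound by fixing an arbitrary $p_l\in P_k(i,j)$, gluing optimal solutions of $f(i,l,j)$ and $f(l,j,i)$ together with $p_l$, and verifying feasibility. Your upper bound is complete and correct; extracting $p_l$ as the apex of the unique Delaunay triangle of $S'=P'\cup\{p_i,p_j\}$ incident to the hull edge $\overline{p_ip_j}$ is, if anything, cleaner than the paper's direct antisymmetry argument for the existence of an empty-disk point (your appeal to Observation~\ref{obser:dt} is unnecessary, though: the partition of $P'\setminus\{p_l\}$ into $P'\cap P(i,l)$ and $P'\cap P(l,j)$ is immediate from the cyclic order). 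Your containment argument $\overline{D_l}\cap H\subseteq\overline{D_k}\cap H$ via monotonicity of disks through the chord $\overline{p_ip_j}$ is also exactly the paper's argument.

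The genuine gap is precisely where you say it is: the three cross-distance families $|ap_j|>1$, $|bp_i|>1$, and $|ab|>1$ are asserted with a strategy sketch but never proved, and this is the real content of the lemma. In the paper, the first two are handled by a short angle argument (since $a\notin D(p_i,p_l,p_j)$ and $p_i,a,p_l,p_j$ are in convex position, one of $\angle ap_ip_j$, $\angle ap_lp_j$ exceeds $60^\circ$, whence $|ap_j|\geq\min\{|ap_i|,|p_ip_j|\}>1$ or $|ap_j|\geq\min\{|ap_l|,|p_lp_j|\}>1$), while the third requires a separate, page-long lemma (Lemma~\ref{lem:rem}) proved via unimodality of the distance function along a circle, inscribed-angle identities, and a cyclic-quadrilateral angle sum. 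Your claim that all three ``reduce to a claim of the same one-point shape'' is not substantiated and is not literally true as stated: in the $|ab|$ case both $a$ and $b$ lie outside $D(p_i,p_l,p_j)$ and neither is a vertex of the triangle, so a reduction to the one-point claim needs an extra step — e.g., another chord-monotonicity argument across $\overline{p_lp_j}$ showing $b\notin D(a,p_l,p_j)$ — and that reduction can only be invoked \emph{after} $|ap_j|>1$ has been established (so that the triangle $\triangle ap_lp_j$ has all sides longer than $1$). None of this ordering or reduction appears in the proposal, and the one-point claim itself is supported only by intuition (``thin-quadrilateral check,'' ``power-of-a-point estimate''). Until these inequalities are actually proved, the lower-bound direction, and hence the recurrence, is not established.
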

\begin{proof}
%We first prove Equation~\eqref{eq:DepReltriple}; Equation~\eqref{eq:DepRelpair} can be essentially proved by the same proof. 
If $k=0$, recall that $D(p_i,p_j,p_k)$ is essentially the halfplane left of $\overrightarrow{p_ip_j}$ and thus $\overline{D(p_i,p_j,p_k)}$ refers to the halfplane right of $\overrightarrow{p_ip_j}$. With this convention, our discussions below are applicable to both $k=0$ and $k\neq 0$. 

Recall that $f(i,j,k)$ is the total weight of a maximum-weight independent set $P'\subseteq P_k(i,j)$. If $P_k(i,j)=\emptyset$, then it is vacuously true that $f(i,j,k)=0$. In what follows, we assume that $P_k(i,j)\neq \emptyset$. 

\begin{figure}[t]
\begin{minipage}[t]{\textwidth}
\begin{center}
\includegraphics[height=1.7in]{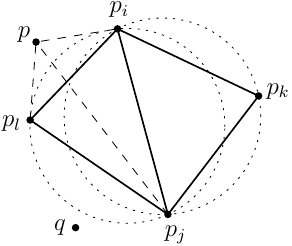}
\caption{\footnotesize Illustrating the proof of Lemma~\ref{lem:deprel}. }
\label{fig:recurrelation}
\end{center}
\end{minipage}
\vspace{-0.1in}
\end{figure}

Let $P'\subseteq P_k(i,j)$ be a maximum-weight independent set. By definition, $f(i,j,k)=w(P')$. We claim that $P'$ must have a point $p_l$ such that the disk $D(p_i,p_l,p_j)$ does not contain any point of $P'\setminus\{p_l\}$. To see this,  for any two points $p,q\in P_k(i,j)$, due to our general position assumption, if $D(p_i,p,p_j)$ contains $q$, then $D(p_i,q,p_j)$ cannot contain $p$. Therefore, such a point $p_l$ must exist (see Figure~\ref{fig:recurrelation}). We partition $P'\setminus\{p_l\}$ into two subsets: $P_1'=P'\cap P(i,l)$ and $P'_2=P'\cap P(l,j)$. Since no point of $P_1'$ lies in $D(p_i,p_l,p_j)$, $P'_1$ is a subset of $P(i,l)\cap \overline{D(p_i,p_l,p_j)}$ and $P'_1\cup \{p_i,p_l\}$ is an independent set (since $P_1'\cup \{p_l\}\subseteq P'$ and $P'\cup \{p_i,p_j\}$ is an independent set). By definition, $f(i,l,j)$ is the total weight of a maximum-weight subset $P''$ of $P(i,l)\cap \overline{D(p_i,p_l,p_j)}$ such that $P''\cup \{p_i,p_l\}$ is an independent set. As such, we have $w(P_1')\leq f(i,l,j)$. Analogously, $w(P_2')\leq f(l,j,i)$. Therefore, $w(P_1'\cup P_2')\leq f(i,l,j) + f(l,j,i)$ holds. Since $f(i,j,k)=w(P')=w(P_1'\cup P_2')+w_l$, we obtain that $f(i,j,k)\leq f(i,l,j) + f(l,j,i)+w_l$, implying that $f(i, j, k) \leq \max_{p_l\in P_k(p_i,p_j)} (f(i,l,j) + f(l,j,i) + w_l)$.

Next, we argue that $\max_{p_l\in P_k(p_i,p_j)} (f(i,l,j) + f(l,j,i) + w_l)\leq f(i, j, k)$. It suffices to show that $f(i,l,j) + f(l,j,i) + w_l \leq f(i, j, k)$ for an arbitrary point $p_l\in P_k(i,j)$. Let $P'(i,l)$ be a maximum-weight subset of $P(i,l)\cap \overline{D(p_i,p_l,p_j)}$ such that $P'(i,l)\cup \{p_i,p_l\}$ is an independent set. Let $P'(l,j)$ be a maximum-weight subset of $P(l,j)\cap \overline{D(p_i,p_l,p_j)}$ such that $P'(l,j)\cup \{p_l,p_j\}$ is an independent set. By definition, $f(i,l,j) = w(P'(i,l))$ and $f(l,j,i) = w(P'(l,j))$. 

We claim the following: (1) $P'(i,l)\cup P'(l,j)\cup \{p_l\}\subseteq P_k(i,j)$; (2) $P'(i,l)\cup P'(l,j)\cup \{p_l\}$ is an independent set. We prove the claim in the following.
%To this end, we need to argue: (1) $P'(i,l)\cup P'(l,j)\cup \{p_l\}\subseteq P_k(i,j)$; (2) $|pq|>1$ for any two points $p,q\in P'(i,l)\cup P'(l,j)\cup \{p_l\}$. 
\begin{enumerate}
\item 
To prove (1), consider any point $p\in P'(i,l)\cup P'(l,j)\cup \{p_l\}$. Our goal is to prove $p\in P_k(i,j)$.
If $p=p_l$, we already know $p_l\in P_k(i,j)$. Now assume $p\in P'(i,l)$; the case $p\in P'(l,j)$ can be argued analogously. See Figure~\ref{fig:recurrelation}.

Since $p\in P'(i,l)$ and $P'(i,l)\subseteq P(i,j)$, we have $p\in P(i,j)$. To prove $p\in P_k(i,j)$, we need to argue the following: $p\not\in D(p_i,p_j,p_k), |pp_i|> 1, |pp_j|> 1$. Since $P'(i,l)\cup \{p_i,p_l\}$ is an independent set, we have $|pp_i|> 1$. To show $|pp_j|> 1$, notice that $p,p_l,p_j,p_i$ are on their convex hull in counterclockwise order (see Figure~\ref{fig:recurrelation}). 
Furthermore, since $p$ is outside $D(p_i,p_l,p_j)$, one of the two angles $\angle pp_ip_j$ and $\angle pp_lp_j$ must be greater than $60\degree$. Without loss of generality, we assume that $\angle pp_ip_j>60\degree$. Then $|pp_j|\geq \min \{|pp_i|,|p_ip_j|\}$. Since both $|pp_i|$ and $|p_ip_j|$ are larger than $1$, we can derive $|pp_j|> 1$. 

%We will prove in Lemma~\ref{lem:rem} that $|pp_j|\geq |p_ip_j|$. As $|p_ip_j|\geq 1$, we also have $|pp_j|\geq 1$. 

It remains to show that $p\not\in D(i,j,k)$.
By the definition of $P'(i,l)$, all the points of $P'(i,l)$ are outside $D(p_i,p_l,p_j)$. Therefore, $p\not\in D(p_i,p_l,p_j)$. 
%We argue that $p\not\in D(i,j,k)$. To see this, 
Let $H$ denote the open halfplane on the right side of $\overrightarrow{p_ip_j}$. By definition, $p_l\in H$ while $p_k\not\in H$. As $p_l\not\in D(p_i,p_j,p_k)$, we have $H\cap D(p_i,p_j,p_k)\subseteq D(p_i,p_l,p_j)$. Since $p\in H$ and $p\not\in D(p_i,p_l,p_j)$, we obtain $p\not\in D(p_i,p_j,p_k)$. This proves (1).

\item 
We now prove (2), i.e., prove $|pq|> 1$ for any two points $p,q\in P'(i,l)\cup P'(l,j)\cup \{p_l\}$. If $p$ and $q$ are both from $P'(i,l)\cup \{p_l\}$, then since $P'(i,j)\cup \{p_i,p_l\}$ is an independent set, $|pq|> 1$ holds. Similarly, if $p$ and $q$ are both from $P'(l,j)\cup \{p_l\}$, $|pq|> 1$ also holds. The remaining case is when one of the two points, say $p$, is from $P'(i,l)$ while the other point $q$ is from $P'(l,j)$ (see Figure~\ref{fig:recurrelation}). In this case, by Lemma~\ref{lem:rem}, $|pq|\geq \min\{|pp_i|,|pp_l|,|p_ip_j|,|p_lp_j|, |p_lp_i|, |qp_l|, |qp_j|\}$. Since all seven distances on the right-hand side of the above inequality are larger than $1$, we obtain $|pq|> 1$. This proves (2). 
\end{enumerate}

This proves that $P'(i,l)\cup P'(l,j)\cup \{p_l\}\subseteq P_k(i,j)$ and $P'(i,l)\cup P'(l,j)\cup \{p_l\}$ is an independent set. By the definition of $f(i,j,k)$, we have $w(P'(i,l)) + w(P'(l,j)) + w_l \leq f(i,j,k)$. Recall that $f(i,l,j) = w(P'(i,l))$ and $f(l,j,i) = w(P'(l,j))$. Therefore, we obtain $f(i,l,j) + f(l,j,i) + w_l \leq f(i,j,k)$.
%This prove Equation~\eqref{eq:DepReltriple}. To prove Equation~\eqref{eq:DepRelpair}, we can essentially follow the same proof by replacing $D(p_i,p_j,p_k)$ with the closed halfplane $H$ bounded by the supporting line of $\overline{p_ip_j}$ and on the left side of $\overrightarrow{p_ip_j}$. Alternatively, one may consider $f(i,j)$ being equal to $f(i,j,k)$ for a point $p_k$ in $H$ and infinitely far from both $p_i$ and $p_j$ (and thus $D(p_i,p_j,p_k)$ becomes $H$). 

The lemma thus follows. 
\end{proof}

\begin{figure}[t]
\begin{minipage}[t]{\textwidth}
\begin{center}
\includegraphics[height=1.7in]{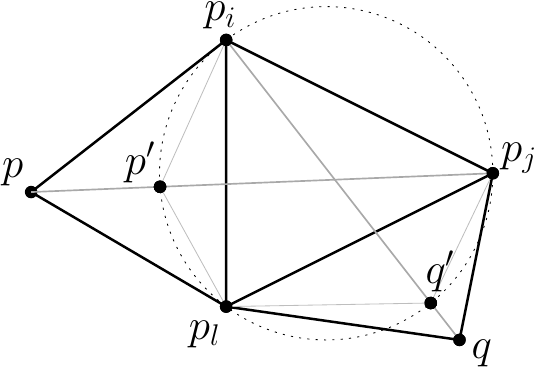}
\caption{\footnotesize Illustrating the proof of Lemma~\ref{lem:rem}. }
\label{fig:rem}
\end{center}
\end{minipage}
\vspace{-0.1in}
\end{figure}

\begin{lemma}%[Remoteness]
\label{lem:rem}
For any triple of points $p_i,p_l,p_j\in P$ that are ordered counterclockwise on $\calH(P)$, for any two points $p$ and $q$ with $p\in P(i,l)$, $q\in P(l,j)$, and $p,q\not\in D(p_i,p_l,p_j)$, it follows that 
$$|pq|\geq \min \{|pp_i|,|pp_l|,|p_ip_j|,|p_lp_j|, |p_lp_i|, |qp_l|, |qp_j|\}.$$
%Given a set of points $P$ in convex position and distinct points $p, q, r, s, t\in P$ such that $t\in (p, r)$, $s\in (q, p)$ and $t, s\notin \mathcal{D}(p, q, r)$ it follows that $$|st|\geq \min \left\{\min_{a, b\in\langle p,q,r\rangle}|ab|, \min_{c, d \in\langle p,s,q\rangle} |cd|, \min_{e, f \in\langle p,r,t\rangle}|ef|\right\}.$$
\end{lemma}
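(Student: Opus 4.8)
The plan is to work with the five points $p_i,p,p_l,q,p_j$, which by hypothesis appear in this counterclockwise order on $\calH(P)$ (since $p\in P(i,l)$ and $q\in P(l,j)$) and are therefore in convex position. The key object is the triangle $\triangle pp_lq$: two of its three sides, $|pp_l|$ and $|p_lq|$, already occur in the minimum on the right-hand side, so it suffices to bound the third side $|pq|$ from below. Writing $a=|pp_l|$, $b=|p_lq|$, and $\theta=\angle pp_lq$, the law of cosines gives $|pq|^2=a^2+b^2-2ab\cos\theta$. First I would dispose of the easy case $\theta\ge 60\degree$: then $\cos\theta\le\tfrac12$, so $|pq|^2\ge a^2+b^2-ab\ge\min\{a,b\}^2$ (assuming $a\le b$, the last step is $a^2+b(b-a)\ge a^2$). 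Hence $|pq|\ge\min\{|pp_l|,|p_lq|\}$, which is at least the claimed minimum. Note that the empty-disk hypothesis is not needed here.

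The remaining case $\theta<60\degree$ is where the assumption $p,q\notin D(p_i,p_l,p_j)$ must be used. Viewed from the hull vertex $p_l$, the rays to $p,p_i,p_j,q$ occur in this angular order (a consequence of convex position), so $\angle p_ip_lp_j$ is a sub-angle of $\angle pp_lq$ and therefore $\angle p_ip_lp_j<60\degree$ as well; thus $\triangle p_ip_lp_j$ is ``flat'' at $p_l$ and its circumcircle $C=\partial D(p_i,p_l,p_j)$ is comparatively large. In this regime I expect the binding term to be $|p_ip_j|$, and the goal becomes to prove $|pq|\ge|p_ip_j|$. The intended mechanism exploits that $p,q$ lie strictly outside $C$ while $p_i,p_l,p_j$ lie on $C$. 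Letting $O,R$ be the center and radius of $C$, I would pass to polar coordinates about $O$ and compare $|pq|^2=\rho_p^2+\rho_q^2-2\rho_p\rho_q\cos\Delta$ with $|p_ip_j|^2=2R^2(1-\cos\Delta_{ij})$, where $\rho_p,\rho_q\ge R$ and $\Delta,\Delta_{ij}$ are the angular separations of $p,q$ and of $p_i,p_j$. Convex position confines $p$ to the angular sector of the arc from $p_i$ to $p_l$ and $q$ to that of the arc from $p_l$ to $p_j$, which is what controls $\Delta$ against $\Delta_{ij}$; fed together with $\rho_p,\rho_q\ge R$ into the displayed comparison, this should yield $|pq|\ge|p_ip_j|$.

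The hard part is exactly this last step, and it is genuinely delicate because the inequality $|pq|\ge|p_ip_j|$ is tight: equality is approached as $p,q$ descend onto $C$ near $p_i,p_j$, so the argument can afford no slack and must use the \emph{strict} exteriority together with the exact arc-containment. A purely qualitative ``$p,q$ are outside $C$, hence far apart'' argument fails — two points just outside $C$ near $p_l$ can be arbitrarily close, but such configurations land in the first case, where $\angle pp_lq$ is close to $180\degree$. The cleanest rigorous route I see is to combine the polar comparison with the inscribed-angle fact that, lying on the $p_l$-side of chord $\overline{p_ip_j}$ and outside $C$, each of $p,q$ subtends on $\overline{p_ip_j}$ an angle smaller than the inscribed angle $\angle p_ip_lp_j$. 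The comparison then reduces to an elementary but case-sensitive trigonometric inequality in $\rho_p,\rho_q,\Delta$ and $R,\angle p_ip_lp_j$, which I would verify by splitting on the sign of $\cos\Delta$ (the case $\cos\Delta\le 0$ being essentially immediate from $\rho_p^2+\rho_q^2\ge 2R^2$, and the case $\cos\Delta>0$ requiring the arc-containment to pin down $\Delta$). Accordingly I would structure the write-up as (i) the $\theta\ge 60\degree$ case via law of cosines, (ii) the reduction to $|pq|\ge|p_ip_j|$ when $\theta<60\degree$, and (iii) the circumcircle computation, which is the technical heart.
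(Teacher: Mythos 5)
Your Case 1 is correct and complete (law of cosines, $\cos\theta\le\tfrac12$), and it matches the observation the paper's own proof starts from. The genuine gap is Case 2. You reduce the lemma there to the claim that $\angle pp_lq<60\degree$ forces $|pq|\ge|p_ip_j|$, but you never prove this claim: you explicitly defer ``the technical heart'' to an unspecified ``case-sensitive trigonometric inequality,'' so what is submitted is a plan, not a proof. Moreover, the route you outline has a concrete defect. The assertion that convex position confines $p$ to the angular sector, about the circumcenter $O$, of the arc from $p_i$ to $p_l$ (and $q$ to that of the arc from $p_l$ to $p_j$) is unjustified and is false as stated: the admissible region for $p$ --- outside $D(p_i,p_l,p_j)$, on the $p_l$-side of line $p_ip_j$, beyond the edge $\overline{p_ip_l}$ --- is an unbounded truncated wedge with apex $p_j$, and points far out in that wedge have angular positions about $O$ outside the sector spanned by the arc. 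The containment could only be rescued by invoking the hypothesis $\angle pp_lq<60\degree$ jointly with the location of $q$, i.e., exactly the kind of coupled argument you have not supplied; and since, as you yourself note, the target inequality is tight, the constraints $\rho_p,\rho_q\ge R$ plus loose angular information cannot suffice --- the half-plane (chord-side) constraint must enter exactly.

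For what it is worth, your Case-2 claim is true, and there is a clean way to finish it. Since the five points are in convex position, the diagonals from $p_l$ to $p_i$ and to $p_j$ lie inside the angle $\angle pp_lq$, so the segment $\overline{pq}$ crosses the rays $p_lp_i$ and $p_lp_j$; since convexity also puts $p$ and $q$ (hence all of $\overline{pq}$) on the $p_l$-side of line $p_ip_j$, these crossings occur on the chords $\overline{p_lp_i}$ and $\overline{p_lp_j}$, hence inside the disk. As $p,q$ lie outside the disk, the line $pq$ therefore cuts a chord $\overline{y_1y_2}$ of the circle with $\overline{y_1y_2}\subseteq\overline{pq}$, and the directions of $y_1,y_2$ from $p_l$ sandwich those of $p_i,p_j$ while staying inside the angle $\angle pp_lq$. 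Thus $\angle p_ip_lp_j\le \angle y_1p_ly_2\le \angle pp_lq<60\degree$, and the inscribed-angle relation gives $|pq|\ge|y_1y_2|=2R\sin\angle y_1p_ly_2\ge 2R\sin\angle p_ip_lp_j=|p_ip_j|$. Had you supplied this (or any complete argument), your proof would be valid and would in fact establish a stronger statement than the lemma, namely $|pq|\ge\min\{|pp_l|,|qp_l|,|p_ip_j|\}$; the paper instead argues by contradiction, assuming $|pq|$ is below all seven terms, deducing that $\angle pp_lq$, $\angle pp_iq$, and $\angle pp_jq$ are all less than $60\degree$, and contradicting the angle sum in the cyclic quadrilateral $p_lq'p_jp_i$. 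As written, however, Case 2 stands unproven, so the proposal has a genuine gap.
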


\begin{proof}
We first make the following {\em observation}: Let $a, b, c$ be vertices of a triangle. If $|ac|<\min\{|ab|, |bc|\}$, then the angle $\angle abc < 60\degree $. 

Let $p'$ denote the intersection point between the circle $C=\partial D(p_i, p_j, p_l)$ and the segment $\overline{pp_j}$, and $q'$ the intersection point between $C$ and $\overline{p_iq}$ (see Figure \ref{fig:rem}). Note that since $p,q\not\in D(p_i,p_l,p_j)$, both $p'$ and $q'$ must exist. We assume to the contrary that 
\begin{equation}\label{equ:contrary}
|pq|< \min \{|pp_i|,|pp_l|,|p_ip_j|,|p_lp_j|, |p_lp_i|, |qp_l|, |qp_j|\}.    
\end{equation}

Since $|pq|<\min\{|pp_l|,|qp_l|\}$ due to \eqref{equ:contrary}, we have $\angle pp_lq < 60\degree$ by the above observation. 
Consequently, since $p, p_l,q, p_j, p_i$ are counterclockwise in $P$, we have $\angle p_ip_lq'<\angle p_ip_lq < \angle pp_lq<  60\degree$ and $\angle p'p_lp_i < \angle pp_lp_i <\angle pp_lq < 60\degree$. 

We claim that $|pq|<|p_iq|$. To see this, we first notice that a function $g_{p_i}(x)$ representing the distance from $p_i \in C$ to another point $x\in C$ is \textit{unimodal} with the maximum value achieved when $\overline{p_ix}$ is a diameter of $C$. Therefore, $|p_iq'|>\min\{|p_ip_l|,|p_ip_j|\}$. Since  $\min\{|p_ip_l|,|p_ip_j|\}>|pq|$ due to \eqref{equ:contrary}, we obtain 
$|p_iq'|>|pq|$. As $|p_iq|\geq|p_iq'|$, $|p_iq|>|pq|$ follows.

Since $|pq|<|pp_i|$ by \eqref{equ:contrary}, we have $|pq|<\min\{|pp_i|,|p_iq|\}$, which leads to $\angle pp_iq < 60\degree$ due to the above observation. Analogously, it can be shown that $\angle pp_jq < 60\degree$.
    
From the quadrilateral $p_lq'p_jp_i$, we have  $\angle p_ip_lq' + \angle pp_jq' + \angle p_ip_jp = 180\degree$ by the sum opposite angles of circumscribed quadrilateral. Since $\angle pp_jq'\leq \angle pp_jq$, we can derive
\begin{equation}\label{equ:contrary10}
\angle p_ip_lq' + \angle pp_jq + \angle p_ip_jp \geq 180\degree.    
\end{equation}

Note that $\angle p_ip_jp= \angle p'p_lp_i$ due to the equality of angles subtended by the same arc. We also have proved above that $\angle p'p_lp_i < 60\degree$. Therefore, $\angle p_ip_jp < 60\degree$ holds. Since $\angle pp_jq < 60\degree$, we can derive $\angle p_ip_lq'>60\degree$ from \eqref{equ:contrary10}. However, we have already showed above that $\angle p_ip_lq' < 60\degree$, a contradiction.
\end{proof}

With Lemma~\ref{lem:deprel}, it remains to find an order to solve the subproblems so that when computing $f(i,j,k)$, the values $f(i,l,j)$ and $f(l,j,i)$ for all $p_l\in P_k(p_i,p_j)$ are available. 
%Further In addition, we need to figure out the final value we need to return, i.e., which subproblem $f(i,j,k)$ would give us $k^*$. 

For any two points $p_i,p_j\in P$, we call $\overline{p_ip_j}$ a {\em diagonal}. 

We process the diagonals $\overline{p_ip_j}$ for all $1\leq i, j\leq n$ in the following way. For each $j=2,\ldots, n$ in this order, we enumerate $i=j-1,j-2,\ldots, 1$ to process $\overline{p_ip_j}$ as follows. If $|p_ip_j|\leq 1$, then we set $f(i,j)=-(w_i+w_j)$. Otherwise, $(i,j)$ is a canonical pair, and we compute $f(i,j)$, i.e., $f(i,j,0)$, by Equation~\eqref{eq:DepReltriple}; one can check that the values $f(i,l,j)$ and $f(l,j,i)$ for all $p_l\in P_0(p_i,p_j)$ have already been computed. Next, for each point $p_k\in P(j,i)$ with $|p_ip_k|> 1$ and $|p_jp_k|> 1$, $(i,j,k)$ is a canonical triple and we compute $f(i,j,k)$ by Equation~\eqref{eq:DepReltriple}; again, the values $f(i,l,j)$ and $f(l,j,i)$ for all $p_l\in P_k(p_i,p_j)$ have already been computed. Finally, by Lemma~\ref{lem:opt}, we can return the largest $f(i,j)+w_i+w_j$ among all canonical pairs $(i,j)$ as $W^*$. Note that the algorithm only computes the value $W^*$, but by the standard back-tracking technique an optimal solution (i.e., an actual maximum-weight independent set) can also be obtained. 

\subsubsection{Algorithm implementation}
\label{sec:cutting}

We can easily implement the algorithm in $O(n^4)$ time. Indeed, there are $O(n^3)$ subproblems $f(i,j,k)$. Each subproblem can be computed in $O(n)$ time by checking every point $p_l\in P_k(i,j)$. As such, the total time is bounded by $O(n^4)$. In what follows, we provide a more efficient $O(n^{7/2})$ time implementation by computing every subproblem faster using cuttings~\cite{ref:ChazelleCu93}. 
%First of all, since there are $O(n^2)$ subproblems $f(i,j)$, we can still afford $O(n)$ time computing each of them, resulting a total $O(n^3)$ time. In the following, we focus on computing the subproblems $f(i,j,k)$. 

Specifically, we show that for each canonical pair $(i,j)$ we can compute the subproblems $f(i,j,k)$ for all $p_k\in P(j,i)$ in a total of $O(n^{3/2})$ time. 
To this end, we reduce the problem to an {\em offline outside-disk range max-cost query} problem. For each point $p_l\in P_k(i,j)$, we define the {\em cost} of $p_l$ as  $cost(p_l)=f(i,l,j)+f(l,j,i)+w_l$. Recall that $P_0(i,j)=\{p\ |\ p\in P(i,j), |pp_i|> 1, |pp_j|> 1\}$ and $P_k(i,j)=P_0(i,j)\cap \overline{D(p_i,p_j,p_k)}$. As such, computing $f(i,j,k)$ is equivalent to finding the maximum-cost point of $P_0(i,j)$ outside the query disk $D(p_i,p_j,p_k)$. 
Our goal is to answer all such disk queries for all $p_k\in P(j,i)$.\footnote{It is sufficient to consider only those $k$ with $|p_ip_k|\geq 1$ and $|p_jp_k|\geq 1$. Our algorithm simply handles all $p_k\in P(j,i)$.}
%with $|p_ip_k|\geq 1$ and $|p_jp_k|\geq 1$. 
We note that this problem can be solved in $O(n^{15/11})$ expected time by applying the recent randomized algorithm of Agarwal, Ezra, and Sharir~\cite{ref:AgarwalSe24}. 
In what follows, we present a deterministic algorithm of $O(n^{3/2})$ time using cuttings~\cite{ref:ChazelleCu93}. 

\paragraph{Cuttings.}
Define $\mathcal{C}$ as the set of circles $\partial D(p_i,p_j,p_k)$ for all $p_k\in P(j,i)$.
Let $m=|\calC|\leq n$. For any compact region $\sigma$ in the plane, let $\mathcal{C}_\sigma$ denote the set of circles that cross the interior of $\sigma$. For a parameter $r$ with $1 \leq r \leq m$, a {\em $(1/r)$-cutting} for $\mathcal{C}$ is a collection $\Xi$ of constant-complexity cells with disjoint interiors whose union covers the entire plane such that the interior of every cell $\sigma\in\Xi$ is intersected by at most $m/r$ circles of $\mathcal{C}$, i.e.,  $|\mathcal{C}_\sigma|\leq m/r$ ($\calC_{\sigma}$ is often called the {\em conflict list} in the literature). The size of $\Xi$ is the number of cells of $\Xi$. 

We say that a cutting $\Xi'$ {\em $c$-refines} a cutting $\Xi$ if every cell of $\Xi'$ is completely contained in a single cell of $\Xi$ and every cell of $\Xi$ contains at most $c$ cells of $\Xi'$. A {\em hierarchical $(1/r)$-cutting} for $\mathcal{C}$ is a sequence of cuttings $\Xi_0,\Xi_1, ..., \Xi_t$, where $\Xi_i$ $c$-refines $\Xi_{i-1}$ for each $1\leq i\leq t$, for some constant $c$, and every $\Xi_i$, $1\leq i\leq t$, is a $(1/\rho^i)$-cutting of size $O(\rho^{2i})$, for some constant $\rho$, and $\Xi_0$ consists of a single cell that is the entire plane. Setting $t=\lceil \log_\rho{r}\rceil$ making the last cutting $\Xi_t$ a $(1/r)$-cutting of size $O(r^2)$. If a cell $\sigma\in\Xi_i$ is contained in a cell $\sigma'\in\Xi_{i-1}$, we consider $\sigma'$ the parent of $\sigma$ and $\sigma$ a child of $\sigma'$. As such, the cells in the hierarchical cutting form a tree structure, with the only cell of $\Xi_0$ as the root. 

\begin{figure}[t]
\begin{minipage}[t]{\textwidth}
\begin{center}
\includegraphics[height=1.0in]{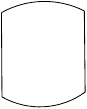}
\caption{\footnotesize Illustrating a pseudo-trapezoid.}
\label{fig:pseudotrap}
\end{center}
\end{minipage}
\vspace{-0.1in}
\end{figure}

A hierarchical $(1/r)$-cutting for $\mathcal{C}$ can be computed in $O(mr)$ time, for any $1\leq r\leq m$, e.g., by the algorithm \cite{ref:WangUn23}, which adapts Chazelle's algorithm for hyperplanes \cite{ref:ChazelleCu93}. The algorithm also produces the conflict lists $\mathcal{C}_\sigma$ for all cells $\sigma\in\Xi_i$ for all $0\leq i \leq t$, which implies that the total size of all conflict lists is $O(mr)$, i.e., $\sum_{i=0}^{t}\sum_{\sigma\in\Xi_i} |\mathcal{C}_\sigma|= O(mr)$. In particular, each cell of the cutting produced by the algorithm of \cite{ref:WangUn23} is a (possibly unbounded) {\em pseudo-trapezoid} that typically has two vertical line segments as left and right sides, an arc of a circle of $\calC$ as a top side (resp., bottom side) (see Figure~\ref{fig:pseudotrap}).

The following lemma gives the algorithm. 

\begin{lemma}\label{lem:SubProb}
Suppose that $cost(p_l)$ for all $p_l\in P_k(i,j)$ are available. Computing the maximum-cost point of $P_0(i,j)$ outside the disk $D(p_i,p_j,p_k)$ for all $p_k\in P(j,i)$ can be done in $O(n^{3/2})$ time.
\end{lemma}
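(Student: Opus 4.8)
The plan is to cast the task as an offline problem over the circle family $\calC=\{\partial D(p_i,p_j,p_k)\mid p_k\in P(j,i)\}$ (with $m=|\calC|\le n$) and the weighted point set $P_0(i,j)$, and to solve it with a hierarchical cutting. First I would build a hierarchical $(1/r)$-cutting $\Xi_0,\Xi_1,\ldots,\Xi_t$ for $\calC$ with $r=\lceil\sqrt n\rceil$, using the algorithm of~\cite{ref:WangUn23} (which adapts~\cite{ref:ChazelleCu93}); this costs $O(mr)$ time, produces all conflict lists $\calC_\sigma$ of total size $O(mr)$, and yields a tree of $O(r^2)=O(n)$ pseudo-trapezoidal cells. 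I would then locate every point of $P_0(i,j)$ in the hierarchy by descending level by level ($O(1)$ work per level since each cell has $O(1)$ children, hence $O(\log r)$ per point), and compute for every cell $\sigma$ a value $M(\sigma)$ equal to the maximum $cost(\cdot)$ over all points of $P_0(i,j)$ lying in the region $\sigma$. The $M(\sigma)$ are obtained bottom-up over the tree in $O(n\log r+r^2)$ time.

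Next I fix a circle $c_k=\partial D(p_i,p_j,p_k)$ and classify each point relative to $D_k=D(p_i,p_j,p_k)$ according to where the cutting ``decides'' it. Call a cell $\sigma$ \emph{canonical} for $c_k$ if $c_k$ does not cross $\sigma$ but crosses its parent. The canonical cells together with the finest-level cells that $c_k$ crosses partition the plane, and every canonical cell lies entirely inside or entirely outside $D_k$. Since a single circle crosses $O(\rho^{i})$ cells of $\Xi_i$ and each cell has $O(1)$ children, $c_k$ has only $O(\sum_i\rho^i)=O(r)$ canonical cells, which I can extract from the crossing information recorded during the cutting construction within the $O(mr)$ budget. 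The contribution of points in canonical cells to the answer for $c_k$ is then simply $\max\{M(\sigma):\sigma\text{ canonical for }c_k,\ \sigma\subseteq\overline{D_k}\}$, where $\sigma\subseteq\overline{D_k}$ is tested in $O(1)$ on any representative point of $\sigma$. This part costs $O(r)$ per circle, hence $O(mr)$ in total.

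The remaining (``undecided'') points lie in the finest-level cells that $c_k$ crosses, i.e.\ cells $\sigma\in\Xi_t$ with $c_k\in\calC_\sigma$; for these I would test each point $p\in\sigma$ against $D_k$ in $O(1)$ and update a running maximum for $c_k$ whenever $p\notin D_k$. The total individual work, summed over all circles and all finest-level cells, is $\sum_{\sigma\in\Xi_t}|\calC_\sigma|\cdot n_\sigma$, where $n_\sigma$ is the number of points of $P_0(i,j)$ in $\sigma$. Since $|\calC_\sigma|\le m/r$ for every finest-level cell and $\sum_\sigma n_\sigma\le n$, this is at most $(m/r)\,n$, \emph{independently of how the points distribute among the cells}. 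For each $c_k$, the final answer to its outside-disk max-cost query is the larger of its canonical-cell value and its undecided value, and doing this for all $p_k\in P(j,i)$ solves the lemma.

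Putting the pieces together and using $m\le n$ and $r=\lceil\sqrt n\rceil$, the running time is $O(mr+n\log r+mn/r)=O(n^{3/2})$. The crux, and the reason the cutting is essential, is that the naive strategy of resolving every point against every circle as it descends would cost $\Theta(mn)=\Theta(n^2)$; the decisive point is that the per-cell individual-testing cost telescopes to the distribution-free bound $\sum_\sigma|\calC_\sigma|n_\sigma\le (m/r)n$, while all cleanly decided points are aggregated once per cell through the precomputed $M(\sigma)$ and read off in only $O(r)$ canonical cells per circle. Beyond the cutting construction, the only facts I rely on are the standard crossing bound (a single circle of $\calC$ meets $O(\rho^i)$ cells of $\Xi_i$) and the general position assumption, which ensures no point of $P_0(i,j)$ lies exactly on a circle of $\calC$.
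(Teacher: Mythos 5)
Your proof is correct and follows essentially the same approach as the paper's: build a hierarchical cutting on the circle family, propagate per-cell cost maxima via point location, account for the fully decided cells (children of crossed cells lying entirely outside the query disk — your ``canonical'' cells) within the $O(mr)$ conflict-list budget, and resolve the remaining points by brute force inside the crossed finest-level cells in $O(nm/r)$ time. The only cosmetic differences are your choice $r=\lceil\sqrt n\rceil$ instead of the paper's $r=\sqrt m$ (harmless since $m\le n$, though one should cap $r$ at $m$), and your appeal to a per-circle $O(\rho^i)$ crossing bound, which is not actually needed: the aggregate bound $\sum_{\sigma}|\calC_\sigma|=O(mr)$, together with each cell having $O(1)$ children, already bounds the total canonical-cell work, which is exactly how the paper organizes the same computation cell-by-cell rather than circle-by-circle.
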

\begin{proof}
We first construct a hierarchical $(1/r)$-cutting $\Xi_0, ..., \Xi_t$ for $\mathcal{C}$, which takes $O(mr)$ time~\cite{ref:ChazelleCu93,ref:WangUn23}. Let $\Xi$ denote the set of all cells $\sigma\in \Xi_i$, for all $0\leq i\leq t$. Note that $\Xi$ has $O(r^2)$ cells. The algorithm also produces conflict lists $\calC_{\sigma}$ for all cells $\sigma\in \Xi$. 

For notational convenience, let $Q=P_0(i,j)$. For each cell $\sigma\in \Xi$, let $Q(\sigma)$ denote the subset of points of $Q$ inside $\sigma$; denote by $cost(\sigma)$ the maximum cost of all points of $Q(\sigma)$ (if $Q(\sigma)=\emptyset$, we let $cost(\sigma)=0$). We wish to compute $cost(\sigma)$ for all cells $\sigma\in \Xi$. This can be done in $O(n\log r+r^2)$ time as follows. Initially, we set $cost(\sigma)=0$ for all cells $\sigma\in \Xi$, which takes $O(r^2)$ time as $\Xi$ has $O(r^2)$ cells. Then, for each point $p\in Q$, we process it by a point location procedure as follows. Starting from the single cell of $\Xi_0$, suppose that $\sigma_i$ is the cell of $\Xi_i$ containing $p$. We update $cost(\sigma_i)=cost(p)$ if $cost(\sigma_i) < cost(p)$. We then find the child $\sigma_{i+1}\in \Xi_{i+1}$ of $\sigma_i$ that contains $p$ and continue the process; note that this takes $O(1)$ time as each cell of $\Xi$ has $O(1)$ children. After processing all points of $Q$ as above, $cost(\sigma)$ for all cells $\sigma\in \Xi$ are correctly computed. As $t=O(\log r)$, processing each point of $Q$ takes $O(\log r)$ time and thus processing all points of $Q$ takes $O(n\log r)$ time. As such, computing $cost(\sigma)$ for all cells $\sigma\in \Xi$ can be done in $O(n\log r+r^2)$ time in total. Note that the above algorithm can also compute the set $Q(\sigma)$ for all cells $\sigma$ in the last cutting $\Xi_t$. 

Let $\calD$ be the set of all disks bounded by the circles of $\calC$. For each disk $D\in \calD$, let $cost(D)$ denote the largest cost of all points of $Q$ outside the disk $D$. Our goal is to compute $cost(D)$ for all disks $D\in \calD$. We proceed as follows. First, we initialize $cost(D)=0$ for all $D\in \calD$. Then, for each $0\leq i\leq t-1$, for each cell $\sigma\in \Xi_i$, for each circle in $\calC_{\sigma}$, for each child $\sigma'$ of $\sigma$, if the bounded disk $D$ of the circle does not intersect $\sigma'$, meaning that $\sigma'$ is completely outside $D$, then we update $cost(D)=cost(\sigma')$ if $cost(D)<cost(\sigma')$. In addition, for each cell $\sigma$ in the last cutting $\Xi_t$, for each circle in $\calC_{\sigma}$, for each point $p\in Q(\sigma)$, if $p$ is outside the bounded disk $D$ of the circle, then we update $cost(D)=cost(p)$ if $cost(D)<cost(p)$. One can verify that $cost(D)$ is computed correctly for all $D\in \calD$. For the runtime of the algorithm, processing cells of $\Xi_i$ for all $0\leq i\leq t-1$ takes $O(mr)$ time since each cell has $O(1)$ children and the total size of the conflict lists of all cells of $\Xi$ is $O(mr)$. Since $|\calD_{\sigma}|\leq m/r$ for each cell $\sigma\in \Xi_t$ and $\sum_{\sigma\in \Xi_t}|Q(\sigma)|=|Q|\leq n$ (this is because the sets $Q(\sigma)$ for all cells $\sigma\in \Xi_t$ are pairwise disjoint), processing all cells $\sigma\in \Xi_t$ takes $O(nm/r)$ time. 

As such, the total time of the algorithm is $O(nm/r + n\log r + r^2+mr)$. Setting $r=\sqrt{m}$ leads to the lemma as $m\leq n$. 
\end{proof}

Plugging Lemma~\ref{lem:SubProb} into our dynamic programming algorithm leads to an algorithm of $O(n^{7/2})$ time. As discussed above, using the randomized result of \cite{ref:AgarwalSe24}, the problem can be solved in $O(n^{37/11})$ expected time. The following theorem summarizes the result. 

\begin{theorem}\label{theo:convexindset}
Given a set $P$ of $n$ weighted points in convex position in the plane, a maximum-weight independent set in the unit-disk graph of $P$ can be computed in $O(n^{7/2})$ deterministic time, or in $O(n^{37/11})$ randomized expected time. 
\end{theorem}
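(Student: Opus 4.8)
The plan is to assemble the dynamic program already set up in Lemmas~\ref{lem:opt}, \ref{lem:deprel}, and \ref{lem:SubProb} into a single algorithm and then bound its running time. The table entries are the values $f(i,j,k)$ over all canonical triples, including the degenerate third index $k=0$, which encodes the top-level quantity $f(i,j)$. Correctness will rest on two facts already proved: Lemma~\ref{lem:opt} reduces the optimum $W^*$ to $\max_{1\le i,j\le n}(f(i,j)+w_i+w_j)$, so it suffices to fill in all $f(i,j)$; and Lemma~\ref{lem:deprel} expresses each $f(i,j,k)$ as $\max_{p_l\in P_k(i,j)}\bigl(f(i,l,j)+f(l,j,i)+w_l\bigr)$ (or $0$ when $P_k(i,j)=\emptyset$), reducing its evaluation to an outside-disk max-cost query once the relevant smaller entries are known. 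I would therefore first restate the algorithm: iterate $j$ from $2$ to $n$ and, for each $j$, iterate $i$ from $j-1$ down to $1$, processing the diagonal $\overline{p_ip_j}$ by computing $f(i,j,0)$ and then all $f(i,j,k)$ for the canonical triples $(i,j,k)$ with $p_k\in P(j,i)$; finally output $\max_{i,j}(f(i,j)+w_i+w_j)$ and recover the witnessing set by standard backtracking.

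The first thing to verify carefully is that this nesting is a valid topological order of the dependency DAG, i.e., that whenever we evaluate $f(i,j,k)$ the entries $f(i,l,j)$ and $f(l,j,i)$ for every $p_l\in P_k(i,j)$ are already available. Since $P_k(i,j)\subseteq P(i,j)$, every such $p_l$ satisfies $i<l<j$ in the linear order, so the pair $(i,l)$ lies in an earlier outer iteration (its larger index is $l<j$), making $f(i,l,j)$ available; and the pair $(l,j)$ shares the outer iteration $j$ but has larger lower index $l>i$, so it is processed earlier in the inner loop. There, because $p_i\in P(j,l)$ and the defining inequalities $|p_lp_i|>1$, $|p_jp_i|>1$ hold (the latter inherited from $(i,j,k)$ being canonical), the particular triple $f(l,j,i)$ is indeed computed. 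I would spell out this index bookkeeping, including the cyclic wraparound when $i>j$, to confirm that every entry is produced before it is consumed.

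For the deterministic running time, the key observation is that the only work beyond $O(1)$ per diagonal is the batch of outside-disk max-cost queries, one per $p_k\in P(j,i)$, all sharing the common point set $P_0(i,j)$ and the costs $cost(p_l)=f(i,l,j)+f(l,j,i)+w_l$. By Lemma~\ref{lem:SubProb} this entire batch is answered in $O(n^{3/2})$ time via a hierarchical $(1/\sqrt{m})$-cutting of the $m\le n$ circles $\partial D(p_i,p_j,p_k)$. Summing over the $O(n^2)$ diagonals gives $O(n^2\cdot n^{3/2})=O(n^{7/2})$ in total; the $O(n\log n)$ preprocessing for the geometric primitives and the backtracking are lower-order. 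For the randomized bound I would instead answer each diagonal's batch using the offline outside-disk range max result of Agarwal, Ezra, and Sharir~\cite{ref:AgarwalSe24}, which handles it in $O(n^{15/11})$ expected time; summing over the $O(n^2)$ diagonals yields $O(n^2\cdot n^{15/11})=O(n^{37/11})$ expected time.

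The genuine technical content—the Voronoi-tree structure (Observation~\ref{obser:dt}), the recursion (Lemma~\ref{lem:deprel}) together with its geometric ingredient (Lemma~\ref{lem:rem}), and the cutting-based query (Lemma~\ref{lem:SubProb})—is already in hand, so the main obstacle for the theorem itself is purely the orchestration: pinning down the processing order as a correct topological sort and making sure the per-diagonal subroutine receives exactly the inputs it needs, namely the point set $P_0(i,j)$ annotated with up-to-date costs. Once that is confirmed, the time bounds follow by directly multiplying the per-diagonal cost by the $O(n^2)$ diagonals.
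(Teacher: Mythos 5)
Your proposal is correct and follows essentially the same route as the paper: the same dynamic program over canonical pairs and triples (Lemmas~\ref{lem:opt} and \ref{lem:deprel}), the same diagonal-processing order (outer loop on $j$, inner loop on $i$ descending), and the same per-diagonal batched outside-disk max-cost queries via Lemma~\ref{lem:SubProb} (deterministic) or the result of \cite{ref:AgarwalSe24} (randomized), multiplied over $O(n^2)$ diagonals. Your explicit verification that the evaluation order is a valid topological sort of the dependency DAG is exactly the check the paper leaves to the reader ("one can check that the values $f(i,l,j)$ and $f(l,j,i)$ \ldots\ have already been computed"), so there is no substantive difference.
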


The following corollary will be used in Section~\ref{sec:generaldispersion} to solve the dispersion problem. 

\begin{corollary}\label{coro:10}
Given a set $P$ of $n$ points in convex position in the plane and a number $r>0$, one can find in $O(n^{7/2})$ deterministic time or in $O(n^{37/11})$ randomized expected time a maximum subset of $P$ such that the distance of every two points of the subset is larger than $r$.     
\end{corollary}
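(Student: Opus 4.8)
The plan is to reduce this unweighted, radius-$r$ problem directly to the maximum-weight independent set problem already solved in Theorem~\ref{theo:convexindset}. The corollary asks for a maximum-cardinality subset of $P$ whose pairwise distances all exceed $r$; I would obtain this by (i) rescaling so that the threshold $r$ becomes the threshold $1$ implicit in $G(P)$, and (ii) assigning every point weight $1$ so that a maximum-weight independent set coincides with a maximum-cardinality one.

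Concretely, I would first define $P' = \{\, p/r : p \in P \,\}$, the image of $P$ under the uniform scaling $x \mapsto x/r$. Since this map is a similarity transformation, it preserves convex position, so $P'$ is in convex position whenever $P$ is. Moreover, for any two points we have $|p'_i p'_j| = |p_i p_j|/r$, hence $|p_i p_j| > r$ if and only if $|p'_i p'_j| > 1$. Consequently a subset $S \subseteq P$ has all pairwise distances larger than $r$ exactly when the corresponding subset $S' \subseteq P'$ is an independent set of the unit-disk graph $G(P')$. I would then put weight $w_i = 1$ on every point of $P'$; the total weight of any subset then equals its cardinality, so a maximum-weight independent set of $G(P')$ is precisely a maximum-cardinality independent set, and pulling it back through the scaling yields a maximum subset of $P$ with pairwise distances exceeding $r$.

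Finally I would invoke Theorem~\ref{theo:convexindset} on the weighted instance consisting of $P'$ with unit weights, which computes a maximum-weight independent set of $G(P')$ in $O(n^{7/2})$ deterministic time or $O(n^{37/11})$ randomized expected time; mapping the computed solution back to the original points gives exactly the claimed bounds. There is essentially no hard step here: the only thing that needs checking is that the machinery behind Theorem~\ref{theo:convexindset} is insensitive to the particular choice of distance threshold. This is immediate, since every geometric ingredient used in its proof — the Delaunay/Voronoi tree structure of Observation~\ref{obser:dt}, the angle comparisons and $60\degree$ bounds in Lemmas~\ref{lem:deprel} and~\ref{lem:rem}, and the outside-disk range queries implemented via cuttings — is invariant under scaling. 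Equivalently, one could skip the rescaling altogether and simply observe that the entire development of Section~\ref{sec:isweight} goes through verbatim with the threshold $1$ replaced by any $r > 0$; I would note this alternative phrasing for completeness.
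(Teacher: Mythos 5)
Your proposal is correct and matches the paper's proof in substance: the paper likewise assigns weight $1$ to every point and applies Theorem~\ref{theo:convexindset} after replacing the distance threshold $1$ by $r$ (it phrases this as redefining the unit-disk graph with parameter $r$, which is exactly the ``skip the rescaling'' alternative you mention at the end). Your explicit rescaling $p \mapsto p/r$ is a harmless cosmetic variant that makes the scale-invariance point formal, but it is not a different argument.
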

\begin{proof}
We redefine the unit-disk graph of $P$ using the parameter $r$: Two points of $P$ have an edge in the graph if their distance is at most $r$. Then, we assign a weight $1$ to every point of $P$ and apply the algorithm of Theorem~\ref{theo:convexindset}. 
\end{proof}

%\section{The size-$\boldsymbol{3}$ convex-position case}
%\label{sec:convex3}

%In this section, we discuss the size-3 convex-position case for both the independent set problem and the dispersion problem. We again start with the independent set problem, whose algorithm will be used as a decision procedure in the parametric search algorithm for the dispersion problem. 

%We following the notation defined in Section~\ref{sec:intro}, e.g., $P$, $k$, $n$, $G(P)$. In this section, we assume that points of $P$ are in convex position and $k=3$. We again first discuss the independent set problem and then use the algorithm to solve the dispersion problem. 

\subsection{Computing an independent set of size $\boldsymbol{3}$}
\label{sec:isconvex3}
To facilitate the discussion in Section~\ref{sec:3dispersion} for the dispersion problem, we consider the following problem: Given a set $P$ of $n$ points in convex position and a number $r>0$, find three points from $P$ whose minimum pairwise distance is larger than or equal to $r$. 

We follow the notation in Section~\ref{sec:pre}. In particular, $P=\langle p_1, p_2, \ldots, p_n\rangle$ is a cyclic list ordered along $\calH(P)$ in counterclockwise order. In the following definition, for each point $p_i\in P$, we define $a_i$ in a similar way to $a_i^i$ in Definition~\ref{def:ab} with respect to $r$ (i.e., change ``$>1$'' to ``$\geq r$''). 

\begin{definition}
For each point $p_i\in P$, define $a_i$ as the index of the first point $p$ of $P$ counterclockwise from $p_i$ such that $|p_ip|\geq r$; similarly, define $b_i\in P$ as the index of the first point $p$ clockwise from $P$ such that $|p_ip|\geq r$. If $|p_ip|< r$ for all points $p\in P$, then let $a_i=b_i=0$.   
\end{definition}

% \begin{figure}[h]
% \centering
% \includegraphics[width=0.3\textwidth]{figs/pic3.pdf}
% \caption{An illustration for lemma \ref{lem:Kab}}
% \end{figure}

We will make use of the following lemma, which has been proved previously in \cite{ref:KobayashiAn22}.

\begin{lemma}{\em(\cite{ref:KobayashiAn22})}
\label{lem:3points}
$P$ has three points whose minimum pairwise distance is at least $r$ if and only if there exists a point $p_i\in P$ such that $P[a_i,b_i]$ has two points whose distance is at least $r$.
\end{lemma}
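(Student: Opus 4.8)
The plan is to prove the two directions of the equivalence separately, isolating the easy one first. Throughout, for a point $p_i$ let $F_i=\{p\in P: |p_ip|\ge r\}$ denote its set of \emph{far} neighbors. The first fact I would record, directly from the definition of $a_i$ and $b_i$, is the containment $F_i\subseteq P[a_i,b_i]$: by construction every point strictly between $p_{b_i}$ and $p_{a_i}$ on the arc through $p_i$ (that is, every point of $P(b_i,a_i)$) lies within distance less than $r$ of $p_i$, so no far neighbor can avoid the complementary arc $P[a_i,b_i]$. I would emphasize that the reverse containment \emph{fails} in general: $P[a_i,b_i]$ may also contain points at distance less than $r$ from $p_i$, because for points in convex position the distance from $p_i$ to the remaining points need not be unimodal along $\calH(P)$. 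This is exactly why the two directions have very different difficulty.

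For the reverse direction (a far triple forces the stated condition) I would argue as follows. Suppose $p_x,p_y,p_z\in P$ have pairwise distance at least $r$. Taking $p_i=p_x$, both $p_y$ and $p_z$ lie in $F_x$, hence in $P[a_x,b_x]$ by the containment above, and $|p_yp_z|\ge r$; thus $P[a_x,b_x]$ contains two points at distance at least $r$, as required. This direction uses only the definitions of $a_i,b_i$ and nothing about convexity beyond the cyclic structure.

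The forward direction is the substantive one. Here I am given a point $p_i$ and two points $p_j,p_l\in P[a_i,b_i]$ with $|p_jp_l|\ge r$, and I must exhibit \emph{some} far triple. The clean case is when both $p_j,p_l\in F_i$, since then $\{p_i,p_j,p_l\}$ is pairwise far. The difficulty is that one of them, say $p_l$, may be a ``dent'': a hull vertex lying in $P[a_i,b_i]$ but within distance less than $r$ of $p_i$. My strategy would be to prove the contrapositive: assuming $P$ has no far triple, I show no $p_i$ can have a far pair inside $P[a_i,b_i]$. Triangle-freeness of the far graph means precisely that each $F_i$ is pairwise close (every two far neighbors of $p_i$ are within distance less than $r$, else they would close a triangle with $p_i$). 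The key geometric step I would then establish is that, since $p_i$ is a hull vertex, the other points are seen from $p_i$ in angular order equal to their cyclic order, so $P[a_i,b_i]$ lies in the cone at $p_i$ between the rays to $p_{a_i}$ and $p_{b_i}$, of angle strictly less than $180\degree$. Using this, together with the fact that $p_{a_i},p_{b_i}\in F_i$ are the angularly extreme far neighbors, I would argue that any far pair in $P[a_i,b_i]$ can be ``pushed out'' to a far pair inside $F_i$ itself, contradicting that $F_i$ is pairwise close.

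The main obstacle is exactly this pushing-out step: a far pair in the arc need not consist of far neighbors of $p_i$, and the diameter of the convex chain $P[a_i,b_i]$ is \emph{not} in general attained at its endpoints $p_{a_i},p_{b_i}$, so one cannot simply replace $p_j,p_l$ by the endpoints. What I expect to rescue the argument is the interaction between two facts that both rely on convex position: far neighbors at large angular separation from $p_i$ are automatically far from one another (by the law of cosines, as each is at distance at least $r$), while a dent point, being angularly interior and close to $p_i$, is close to every point of the necessarily narrow far-region surrounding it. Quantifying ``narrow'' via the angular spread forced by triangle-freeness, and showing it precludes an arc far pair, is where the real work lies; I would expect the law-of-cosines estimate bounding $|p_jp_l|$ in terms of the angle $\angle p_jp_ip_l$ and the two radii to be the technical heart of the proof.
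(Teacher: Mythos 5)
Your first direction is correct and matches the paper: a pairwise-far triple $\{p_x,p_y,p_z\}$ forces $p_y,p_z\in P[a_x,b_x]$ by the definition of $a_x,b_x$. The gap is in the substantive direction, and your proposal stops exactly where the lemma's content begins. The paper's proof (following \cite{ref:KobayashiAn22}) is a case analysis on which of the two far-apart points $p_j,p_k\in P[a_i,b_i]$ lie within distance $r$ of $p_i$, and in each case it names an explicit witness triple: $\{p_i,p_j,p_k\}$ (both far), $\{p_i,p_{a_i},p_{b_i}\}$ (both close), $\{p_i,p_{a_i},p_k\}$ or $\{p_i,p_j,p_{b_i}\}$ (mixed). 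Your ``pushing-out'' plan is the same idea, but you never specify the replacement in the mixed cases and you explicitly defer its verification (``where the real work lies''). That verification \emph{is} the lemma; identifying it as the open step does not discharge it.

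Worse, the ingredients you expect to suffice---angular order seen from $p_i$ plus the law-of-cosines fact that two points at distance $\geq r$ from $p_i$ subtending an angle $\geq 60\degree$ there are far from each other---do handle the both-close case (then $|p_jp_k|$ is the strictly longest side of $\triangle p_ip_jp_k$, so $\angle p_jp_ip_k>60\degree$, hence $\angle p_{a_i}p_ip_{b_i}>60\degree$ and $|p_{a_i}p_{b_i}|\geq r$), but they are provably insufficient for the mixed case. Take $r=1$, $p_i=(0,0)$, $p_{a_i}=(1,0)$, $p_j=0.2(\cos\epsilon,\sin\epsilon)$ for tiny $\epsilon>0$, and $p_k=1.2(\cos 30\degree,\sin 30\degree)$: every constraint you use holds ($|p_ip_{a_i}|=1$, $|p_ip_k|=1.2$, $|p_ip_j|=0.2<1$, $|p_jp_k|\approx 1.03>1$, and the angular order around $p_i$ is $p_{a_i},p_j,p_k$), yet $|p_{a_i}p_k|\approx 0.6<1$, so the natural witness $\{p_i,p_{a_i},p_k\}$ fails. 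The lemma survives only because this configuration is not in convex position (the turn at $p_j$ is reflex); any correct proof of the mixed case must therefore use the convexity of the dent point itself, not merely its angular position and closeness to $p_i$. One clean way: the four points $p_i,p_{a_i},p_j,p_k$ form a convex quadrilateral in this cyclic order, so its diagonals $\overline{p_ip_j}$ and $\overline{p_{a_i}p_k}$ cross, giving $|p_ip_j|+|p_{a_i}p_k|\geq |p_ip_{a_i}|+|p_jp_k|\geq 2r$, and $|p_ip_j|<r$ then forces $|p_{a_i}p_k|>r$. This use of convexity (or an equivalent one) is the missing idea in your proposal.
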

\begin{proof}
To make the paper more self-contained, we sketch the proof here; see \cite[Lemma 3]{ref:KobayashiAn22} for details. If $P$ has a subset of three points $\{p_i,p_j,p_k\}$ that is a feasible solution, i.e, the minimum pairwise distance of the three points is larger than $r$, then by the definitions of $a_i$ and $b_i$, both $p_j$ and $p_k$ must be in $P[a_i,b_i]$. 

On the other hand, suppose that for a point $p_i\in P$, $P[a_i,b_i]$ has two points $p_j$ and $p_k$ with $|p_jp_k|\geq r$. We assume that $p_{a_i},p_j,p_k,p_{b_i}$ are ordered counterclockwise in $P$ if $\{p_{a_i},p_{b_i}\}\neq \{p_j,p_k\}$. Then, there are four cases: (1) $|p_ip_j|\geq r$ and $|p_ip_k|\geq r$; (2) $|p_ip_j|< r$ and $|p_ip_k|< r$; (3) $|p_ip_j|< r$ and $|p_ip_k|\geq r$; (4) $|p_ip_j|\geq r$ and $|p_ip_k|< r$. In (1), $\{p_i,p_j,p_k\}$ is a feasible solution. In (2), it can be proved that $\{p_i,p_{a_i},p_{b_i}\}$ is a feasible solution. In (3), it can be proved that $\{p_i,p_{a_i},p_k\}$ is a feasible solution. In (4), it can be proved that $\{p_i,p_{j},p_{b_i}\}$ is a feasible solution. 
\end{proof}

If $p_i$ is a point of $P$ such that $P[a_i,b_i]$ has two points whose distance is at least $r$, we say that $p_i$ is a {\em feasible point}. By Lemma~\ref{lem:3points}, it suffices to find a feasible point (if it exists). 
Our algorithm comprises two procedures. In the first procedure, we compute $a_i$ and $b_i$ for all points $p_i\in P$. This can be done in $O(n\log n)$ time by slightly changing the algorithm of Lemma~\ref{lem:firstout} (e.g., change ``$\leq 1$'' to ``$<r$'' and change ``$> 1$'' to ``$\geq r$''). The second procedure finds a feasible point. In the following, we present an $O(n\log n)$ time algorithm. 
%\subsubsection{The second procedure: Finding a feasible point}
%In the second procedure, we wish to find a feasible point, i.e., a point $p_i$ such that $P[a_i,b_i]$ has two points whose distance is larger than $r$. 
We start with the following easy but crucial observation. 

\begin{observation}\label{obser:20}
A point $p_i\in P$ is a feasible point if and only if there is a point $p_k\in P[a_i,b_i]$ such that $p_{a_k}$ is also in $P[a_i,b_i]$ and $(p_i,p_k,p_{a_k})$ is in counterclockwise order in $P$. 
\end{observation}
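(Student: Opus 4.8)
The plan is to prove the two directions of the equivalence separately, working directly from the definitions of $a_i$, $b_i$, and $a_k$ together with the fact that $P[a_i,b_i]$ is a contiguous sublist of the cyclic list $P$. No deep geometry is needed beyond the convex-position ordering already fixed in Section~\ref{sec:pre}; the entire argument is about which points lie at distance at least $r$ from a given point and how those points are arranged counterclockwise along $\calH(P)$.

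For the reverse direction ($\Leftarrow$), I would argue that it is essentially immediate. Suppose $p_k\in P[a_i,b_i]$ satisfies $p_{a_k}\in P[a_i,b_i]$ with $(p_i,p_k,p_{a_k})$ counterclockwise. By the definition of $a_k$ we have $|p_kp_{a_k}|\geq r$, and the counterclockwise condition guarantees $p_k\neq p_{a_k}$. Hence $P[a_i,b_i]$ contains two distinct points at distance at least $r$, which is exactly the condition for $p_i$ to be a feasible point.

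For the forward direction ($\Rightarrow$), I would start from two witnesses: since $p_i$ is feasible, there are points $p_s,p_t\in P[a_i,b_i]$ with $|p_sp_t|\geq r$, and I may assume without loss of generality that $p_s$ precedes $p_t$ in the counterclockwise traversal of $P[a_i,b_i]$ from $p_{a_i}$ to $p_{b_i}$, so that $(p_i,p_s,p_t)$ is counterclockwise. I would then take $p_k=p_s$ and examine $a_s$: by definition $p_{a_s}$ is the first point counterclockwise from $p_s$ whose distance to $p_s$ is at least $r$. Because $p_t$ itself is such a point and lies counterclockwise from $p_s$, the index $a_s$ is well-defined (not $0$) and $p_{a_s}$ occurs at or before $p_t$ in the counterclockwise order from $p_s$. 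Consequently $p_{a_s}$ lies on the counterclockwise sublist running from $p_s$ to $p_t$; since this sublist is contained in the contiguous sublist $P[a_i,b_i]$, I conclude $p_{a_s}\in P[a_i,b_i]$ and $(p_i,p_s,p_{a_s})$ is counterclockwise, which furnishes the required $p_k$.

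The step I expect to require the most care is the ordering bookkeeping in the forward direction --- specifically verifying that $p_{a_s}$ cannot ``overshoot'' $p_t$ and thereby leave $P[a_i,b_i]$. The key point to nail down is that $p_s\neq p_{b_i}$, which holds because $p_t$ lies strictly counterclockwise beyond $p_s$ while still inside $P[a_i,b_i]$; this ensures the counterclockwise arc from $p_s$ to $p_t$ is a genuine subarc of $P(a_i,b_i]$, and combined with $a_s$ being the \emph{first} far neighbor of $p_s$ it pins $p_{a_s}$ between $p_s$ and $p_t$. I would also be explicit about the cyclic-order convention (that $p_i$ precedes every point of $P[a_i,b_i]$ in counterclockwise order) so that the conclusion ``$(p_i,p_s,p_{a_s})$ is counterclockwise'' is unambiguous.
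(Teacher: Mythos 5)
Your proposal is correct and follows essentially the same route as the paper: the reverse direction is immediate from the definition of $a_k$ (both $p_k$ and $p_{a_k}$ lie in $P[a_i,b_i]$ at distance $\geq r$, so $p_i$ is feasible), and the forward direction takes the counterclockwise-first witness $p_s$ and uses the minimality in the definition of $a_s$ to pin $p_{a_s}$ on the arc between $p_s$ and the second witness $p_t$, hence inside $P[a_i,b_i]$. The paper's proof is a terser version of exactly this argument, so no further comparison is needed.
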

\begin{proof}
If there is a point $p_k\in P[a_i,b_i]$ such that $p_{a_k}$ is also in $P[a_i,b_i]$, then by the definition of $a_k$, $|p_kp_{a_k}|\geq r$ holds. Since both $p_k$ and $p_{a_k}$ are in $P[a_i,b_i]$, by Lemma~\ref{lem:3points}, $p_i$ is a feasible point.

On the other hand, suppose that $p_i$ is a feasible point. Then, $P[a_i,b_i]$ has two points whose distance is at least $r$. Let these two points be $p_k$ and $p_{k'}$ so that $(p_i,p_k,p_{k'})$ is in counterclockwise order in $P$. This implies that $a_k$ must be an index of a point in $P[k+1,k']\subseteq P[a_i,b_i]$. Therefore, $p_{a_k}$ must be a point in $P[a_i,b_i]$ and $(p_i,p_k,p_{a_k})$ is in counterclockwise order in $P$.
\end{proof}

For each point $p_i\in P$, note that $a_i\neq i$ must hold; we define
$$a_i'=\begin{cases}
    a_i, & \text{if $i<a_i$}\\
    a_i+n, & \text{otherwise.}
\end{cases}$$
%define $a_i'=a_i$ if $i<a_i$ and $a'_i=a_i+n$ otherwise (note that $a_i\neq i$ must hold). 
By definition, $i< a_i'$ always holds and $a_i'=a_i$ if $a_i'\leq n$. Note that if $a_i=b_i$, then $P[a_i,b_i]$ has only one point, and therefore $p_i$ cannot be a feasible point. As such, we only need to focus on the points $p_i$ with $a_i\neq b_i$. Our algorithm is based on the following lemma, which in turn relies on Observation~\ref{obser:20}. 

\begin{lemma}\label{lem:feasiblepoint}
For each $p_i\in P$, we have the following. 
\begin{enumerate}
    \item If $a_i< b_i$, then $p_i$ is a feasible point if and only if $\min_{k\in [a_i,b_i]}a_k'\leq b_i$. 
    \item If $a_i>b_i$, then $p_i$ is a feasible point if and only if $\min_{k\in [a_i,n]}a_k'\leq b_i+n$ or $\min_{k\in [1,b_i]}a_k'\leq b_i$. 
\end{enumerate}
\end{lemma}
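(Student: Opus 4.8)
The plan is to derive both statements directly from Observation~\ref{obser:20}, which characterizes feasibility of $p_i$ by the existence of a point $p_k\in P[a_i,b_i]$ with $p_{a_k}\in P[a_i,b_i]$ and $(p_i,p_k,p_{a_k})$ in counterclockwise order. The entire argument is a translation of this condition into the index inequalities on $a_k'$, so the first thing I would establish is a geometric normal form that makes the translation clean. Since $a_i$ (resp.\ $b_i$) is the first point at distance $\geq r$ counterclockwise (resp.\ clockwise) from $p_i$, the point $p_i$ itself lies in the open arc $P(b_i,a_i)$ and hence $p_i\notin P[a_i,b_i]$. Consequently, traversing $P$ counterclockwise starting at $p_i$, the points of $P[a_i,b_i]$ are encountered exactly in their sublist order $p_{a_i},p_{a_i+1},\ldots,p_{b_i}$. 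Therefore, for $p_k,p_{a_k}\in P[a_i,b_i]$, the condition that $(p_i,p_k,p_{a_k})$ is counterclockwise is equivalent to the purely combinatorial statement that $p_{a_k}$ appears strictly after $p_k$ in the sublist $P[a_i,b_i]$. This reduces the lemma to checking, for each $k$ indexing a point of $P[a_i,b_i]$, whether $p_{a_k}$ lies in $P[a_i,b_i]$ and after $p_k$, and then expressing that test through $a_k'$.

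Next I would handle the easy case $a_i<b_i$. Here $P[a_i,b_i]$ is the contiguous index interval $\{a_i,\ldots,b_i\}$ with no wraparound, so for $k\in[a_i,b_i]$, the statement ``$p_{a_k}\in P[a_i,b_i]$ and after $p_k$'' means precisely $k<a_k\leq b_i$. Recalling that $a_k'=a_k$ when $k<a_k$ and $a_k'=a_k+n>n\geq b_i$ otherwise, this is exactly $a_k'\leq b_i$: the no-wrap case forces $a_k'=a_k$, and the ordering forces $a_k\leq b_i$. Hence some $k\in[a_i,b_i]$ witnesses feasibility iff $\min_{k\in[a_i,b_i]}a_k'\leq b_i$, which is statement~1.

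The work is in the wrapping case $a_i>b_i$, where $P[a_i,b_i]=\langle p_{a_i},\ldots,p_n,p_1,\ldots,p_{b_i}\rangle$ has index set $[a_i,n]\cup[1,b_i]$, and I would split the witness $k$ into these two ranges. I would introduce an unwrapped coordinate that places $P[a_i,b_i]$ on the integer range $[a_i,\,n+b_i]$ (indices above $n$ standing for $p_{\cdot-n}$), so that ``after $p_k$ in the sublist'' becomes a plain inequality of unwrapped positions. For $k\in[a_i,n]$, the point $p_k$ sits at unwrapped position $k$; checking the possibilities for $a_k$ (landing in $[k+1,n]$, wrapping into $[1,b_i]$, or falling into the close arc $P(b_i,a_i)$ or below $k$) shows in each that ``$p_{a_k}\in P[a_i,b_i]$ and after $p_k$'' holds iff $a_k'\leq b_i+n$, using $a_k'=a_k$ or $a_k'=a_k+n$ according to whether $a_k>k$. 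Symmetrically, for $k\in[1,b_i]$ the point $p_k$ lies beyond the entire upper block, so $p_{a_k}$ can only follow $p_k$ inside $[1,b_i]$, and the same case check yields the test $a_k'\leq b_i$. Taking minima over the two ranges and combining with Observation~\ref{obser:20} gives statement~2.

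The main obstacle, and the place I would be most careful, is the bookkeeping of the degenerate and wrap subcases in the $a_i>b_i$ analysis: one must confirm that whenever $a_k$ falls outside the admissible arc---either into the close arc $P(b_i,a_i)$, or by wrapping around to an index below $k$---the value $a_k'$ automatically exceeds the stated threshold, so that these spurious candidates are correctly rejected by the minimum. I would also note at the outset that the cases $a_i=b_i$ (so $P[a_i,b_i]$ is a single point) and $a_i=0$ (all points within distance $r$ of $p_i$) make $p_i$ non-feasible and are thus excluded, so that the hypotheses $a_i<b_i$ and $a_i>b_i$ together cover all relevant $p_i$.
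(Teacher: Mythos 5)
Your proof is correct and follows essentially the same route as the paper's: both arguments reduce the lemma to Observation~\ref{obser:20} and translate the condition ``$p_{a_k}\in P[a_i,b_i]$ and $(p_i,p_k,p_{a_k})$ counterclockwise'' into inequalities on $a_k'$, using the same case split ($k\in[a_i,b_i]$ for the non-wrapping case; $k\in[a_i,n]$ versus $k\in[1,b_i]$ for the wrapping case) and the same thresholds $b_i$ and $b_i+n$. Your explicit normal-form step (counterclockwise order is equivalent to ``$p_{a_k}$ appears after $p_k$ in the sublist,'' since $p_i\notin P[a_i,b_i]$) and your organization as a per-$k$ equivalence rather than two separate implications are only stylistic refinements of the paper's argument.
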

\begin{proof}
We start with the first case $a_i<b_i$. Suppose that $p_i$ is a feasible point. Then, by Observation~\ref{obser:20}, there is a point $p_k\in P[a_i,b_i]$ such that $p_{a_k}$ is also in $P[a_i,b_i]$ and $(p_i,p_k,p_{a_k})$ is in counterclockwise order. Since $a_i<b_i$, we have $1\leq a_i\leq k<a_k\leq b_i\leq n$. As such, $a_k'=a_k$ and $a_k'\leq b_i$. Therefore, $\min_{k\in [a_i,b_i]}a_k'\leq b_i$ must hold. 

On the other hand, suppose that $\min_{k\in [a_i,b_i]}a_k'\leq b_i$. Then, there exists $k\in [a_i,b_i]$ such that $a_k'\leq b_i$. To show that $p_i$ is a feasible point, it suffices to prove $p_{a_k}\in P[a_i,b_i]$, which is equivalent to proving $a_i\leq a_k\leq b_i$ since $a_i<b_i$. Since $a_k'\leq b_i\leq n$, we obtain $a_k=a_k'\leq b_i$. Furthermore, since $k< a_k'$, we have $k\leq a_k$. As $a_i\leq k$, we obtain $a_i\leq a_k$. Therefore, $a_i\leq a_k\leq b_i$ holds and thus $p_i$ is a feasible point. 

\paragraph{The second case $a_i>b_i$.}
%We next consider the second case $a_i>b_i$. 
In this case, $P[a_i,b_i]=P[a_i,n]\cup P[1,b_i]$. 

Suppose that $p_i$ is a feasible point. Then, by Observation~\ref{obser:20}, there is a point $p_k\in P[a_i,b_i]$ such that $p_{a_k}$ is also in $P[a_i,b_i]$ and $(p_i,p_k,p_{a_k})$ is in counterclockwise order in $P$. There are two cases depending on whether $p_k\in P[a_i,n]$ or $p_k\in P[1,b_i]$. 

\begin{enumerate}
    \item If $p_k\in P[a_i,n]$, then $a_i\leq k\leq n$ and there are two subcases depending on whether $p_{a_k}\in P[a_i,n]$ or $p_{a_k}\in P[1,b_i]$. If $p_{a_k}\in P[a_i,n]$, then $a_k'=a_k\leq n<b_i+n$, implying that $\min_{k\in [a_i,n]}a_k'\leq b_i+n$ as  $a_i\leq k\leq n$. If $p_{a_k}\in P[1,b_i]$, then we have $1\leq a_k\leq b_i$. Therefore, $a_k'\leq n+a_k\leq b_i+n$, again implying $\min_{k\in [a_i,n]}a_k'\leq b_i+n$.

    \item If $p_k\in P[1,b_i]$, then $1\leq k\leq b_i$. Since $p_{a_k}$ is also in $P[a_i,b_i]$ and $(p_i,p_k,p_{a_k})$ is in counterclockwise order in $P$, we have $a_k\in P[k+1,b_i]$, Therefore, $k\leq a_k\leq b_i$, implying that $a_k'=a_k\leq b_i$. As such, we obtain $\min_{k\in [1,b_i]}a_k'\leq b_i$ as $1\leq k\leq b_i$.
\end{enumerate}

On the other hand, suppose that $\min_{k\in [a_i,n]}a_k'\leq b_i+n$ or $\min_{k\in [1,b_i]}a_k'\leq b_i$ holds. We argue  that $p_i$ must be a feasible point. 
\begin{enumerate}
    \item If $\min_{k\in [a_i,n]}a_k'\leq b_i+n$, there exists $k\in [a_i,n]$ such that $a_k'\leq b_i+n$.  To show that $p_i$ is a feasible point, it suffices to prove that $p_{a_k}\in P[a_i,b_i]=P[a_i,n]\cup P[1,b_i]$. By definition, $a_k'$ is either $a_k$ or $a_k+n$. Recall that $k<a_k'$ always holds. If $a_k'=a_k$, then we have $a_i\leq k<a_k'=a_k\leq n$. Hence, we obtain $p_{a_k}\in P[a_i,n]\subseteq P[a_i,b_i]$. If $a_k'=a_k+n$, then since $a_k'\leq b_i+n$, we have $a_k\in [1,b_i]$ and thus $p_{a_k}\in P[1,b_i]\subseteq P[a_i,b_i]$. 

    \item If $\min_{k\in [1,b_i]}a_k'\leq b_i$, there exists $k\in [1,b_i]$ such that $a_k'\leq b_i$. As above, it suffices to prove that $p_{a_k}\in P[a_i,b_i]=P[a_i,n]\cup P[1,b_i]$. Since $a_k'$ is either $a_k$ or $a_k+n$, and $a_k'\leq b_i\leq n$, $a_k'$ must be $a_k$. Also, recall that $k<a_k'$ always holds. We thus obtain $1\leq k<a_k'=a_k\leq b_i$, implying that $p_{a_k}\in P[1,b_i]$. 
    %since $1\leq k\leq b_i$. 
    Therefore, $p_{a_k}\in P[a_i,b_i]$ holds. 
\end{enumerate}

The lemma thus follows. 
\end{proof}

Define an array $A[1\cdots n]$ such that $A[k]=a_k'$ for each $1\leq k\leq n$. 
In light of Lemma~\ref{lem:feasiblepoint}, for each point $p_i\in P$, we can determine whether $p_i$ is a feasible point using at most two {\em range-minima} queries of the following type: Given a range $[i,j]$ with $i\leq j$, find the minimum number in the subarray $A[i\cdots j]$. It is possible to answer each range-minima query in $O(1)$ time after $O(n)$ time preprocessing on $A$~\cite{ref:BenderTh00,ref:HarelFa84}. For our problem, since it suffices to have $O(\log n)$ query time and $O(n\log n)$ preprocessing time, we can use a simple solution by constructing an augmented binary search tree. As such, in $O(n\log n)$ time we can find a feasible point or report that no such point exists. 

In summary, in $O(n\log n)$ time we can determine whether $P$ has three points whose minimum pairwise distance is at least $r$. If the answer is yes, then these three points can also be found within the same time complexity according to the proofs of Lemmas~\ref{lem:3points} and \ref{lem:feasiblepoint}. 
We conclude with the following theorem. 

\begin{theorem}\label{theo:convexindset3}
Given a set $P$ of $n$ points in convex position in the plane and a number $r$, in $O(n\log n)$ time one can find three points of $P$ whose minimum pairwise distance is at least $r$ or report that no such three points exist. 
\end{theorem}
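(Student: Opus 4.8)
The plan is to combine the structural characterization already established with an efficient data-structure implementation, since the preceding lemmas do essentially all the geometric work. By Lemma~\ref{lem:3points}, the set $P$ contains three points with pairwise distance at least $r$ if and only if there is a \emph{feasible point}, so the whole task reduces to deciding, for each $p_i\in P$, whether $p_i$ is feasible, and then recovering an actual witness triple. First I would compute the indices $a_i$ and $b_i$ for every $p_i\in P$: as remarked in the text, the farthest-Voronoi / fractional-cascading construction of Lemma~\ref{lem:firstout} applies verbatim after replacing the threshold $1$ by $r$, so all $2n$ indices are obtained in $O(n\log n)$ time.

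Next I would test feasibility through range-minima queries. Storing the shifted values $a_k'$ in an array $A[1\cdots n]$ with $A[k]=a_k'$, Lemma~\ref{lem:feasiblepoint} rephrases feasibility of $p_i$ entirely in terms of the minimum of $A$ over one interval (when $a_i<b_i$) or two intervals (when $a_i>b_i$, because $P[a_i,b_i]$ then wraps around and splits into $P[a_i,n]\cup P[1,b_i]$, compared against $b_i+n$ and $b_i$ respectively). I would therefore preprocess $A$ for range-minima queries. Although an $O(n)$-time / $O(1)$-query structure exists, it suffices to build a balanced binary search tree augmented with subtree minima in $O(n\log n)$ time supporting $O(\log n)$ queries. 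Scanning all $p_i$ with $a_i\neq b_i$ (points with $a_i=b_i$ cannot be feasible, since then $P[a_i,b_i]$ is a single point) and issuing at most two queries each decides whether a feasible point exists, in $O(n\log n)$ total time.

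Finally, having located a feasible $p_i$, I would recover three witness points constructively. The range-minima query certifying feasibility returns an index $k$ with $a_k'$ inside the appropriate range; unwinding the definition of $a_k'$ yields a point $p_k$ together with its partner $p_{a_k}$, both lying in $P[a_i,b_i]$ with $|p_kp_{a_k}|\geq r$ by the definition of $a_k$. Feeding $p_i$ and these two points into the four-case argument in the proof of Lemma~\ref{lem:3points} then produces the required triple, again within $O(n\log n)$ time. The only genuinely delicate point is the cyclic bookkeeping: the shifted coordinates $a_k'\in[1,2n]$ must be compared against $b_i$ or $b_i+n$ consistently, and in the $a_i>b_i$ case a hit in $[a_i,n]$ versus a hit in $[1,b_i]$ must be translated back to the correct actual index in $P[a_i,b_i]$. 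Everything else is a direct assembly of the preceding lemmas, so I expect the main obstacle to be verifying this index arithmetic rather than any new geometric insight.
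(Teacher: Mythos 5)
Your proposal is correct and follows essentially the same route as the paper's own proof: reduce to finding a feasible point via Lemma~\ref{lem:3points}, compute the indices $a_i,b_i$ by adapting Lemma~\ref{lem:firstout} to threshold $r$, test feasibility with the shifted array $A[k]=a_k'$ and at most two range-minima queries per point (Lemma~\ref{lem:feasiblepoint}) using an augmented binary search tree, and recover the witness triple from the proofs of Lemmas~\ref{lem:3points} and~\ref{lem:feasiblepoint}. The index bookkeeping you flag as the delicate part is exactly what Lemma~\ref{lem:feasiblepoint} already settles, so nothing further is needed.
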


\section{The dispersion problem}
\label{sec:dispersion}

Given a set $P$ of $n$ points in convex position in the plane and a number $k$, the dispersion problem is to find a subset of $k$ points from $P$ so that the minimum pairwise distance of the points of the subset is maximized. We first discuss the general-$k$ case in Section~\ref{sec:generaldispersion} and then the size-3 case in Section~\ref{sec:3dispersion}. 

\subsection{The general-$k$ case}
\label{sec:generaldispersion}
Let $r^*$ be the optimal solution value, that is, $r^*$ is the minimum pairwise distance of the points in an optimal solution subset. It is not difficult to see that $r^*$ is equal to the distance of two points of $P$. Define $R$ as the set of pairwise distances of the points of $P$. We have $r^*\in R$ and $|R|=O(n^2)$. 

Given a value $r$, the {\em decision problem} is to determine whether $r< r^*$, or equivalently, whether $P$ has a subset of $k$ points whose minimum pairwise distance is larger than $r$. By Corollary~\ref{coro:10}, the decision problem can be solved in $O(n^{7/2})$ time or in $O(n^{37/11})$ randomized expected time. Using the decision algorithm and doing binary search on the sorted list of $R$, $r^*$ can be computed in $O(n^{7/2}\log n)$ time or in $O(n^{37/11}\log n)$ randomized expected time. The following theorem summarizes the result.

\begin{theorem}
Given a set of $n$ points in convex position in the plane and a number $k$, one can find a subset of $k$ points whose minimum pairwise distance is maximized in $O(n^{7/2}\log n)$ deterministic time, or in $O(n^{37/11}\log n)$ randomized expected time. 
\end{theorem}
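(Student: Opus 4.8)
The plan is to reduce the dispersion problem to a binary search over candidate values of $r^*$, using the maximum independent set algorithm of Corollary~\ref{coro:10} as the decision oracle. First I would observe that since the optimal value $r^*$ equals the minimum pairwise distance within some optimal $k$-subset, $r^*$ is realized by a pair of points of $P$ and hence $r^*\in R$, where $R$ is the set of all $\binom{n}{2}$ pairwise distances. This confines the search to the $O(n^2)$ values of $R$, which I would compute explicitly and sort in $O(n^2\log n)$ time; since $n^2\log n = O(n^{7/2}\log n)$, this preprocessing is not the bottleneck.

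Next I would set up the decision oracle. For a given $r$, Corollary~\ref{coro:10} returns a maximum subset $S(r)\subseteq P$ whose every pairwise distance exceeds $r$, in $O(n^{7/2})$ deterministic (or $O(n^{37/11})$ expected) time. I define $\mathrm{decision}(r)$ to be ``yes'' exactly when $|S(r)|\geq k$, i.e.\ when $P$ admits a $k$-subset with minimum pairwise distance strictly larger than $r$. The key structural fact is monotonicity: as $r$ increases the constraint only tightens, so $|S(r)|$ is non-increasing and $\mathrm{decision}$ flips from ``yes'' to ``no'' exactly once. I would then argue that $r^*$ is precisely the smallest element of $R$ at which $\mathrm{decision}$ returns ``no'': for every $r\in R$ with $r<r^*$ the optimal subset itself witnesses a $k$-subset with minimum pairwise distance $r^*>r$, whereas at $r=r^*$ no $k$-subset can do strictly better than $r^*$, so the oracle returns ``no''.

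With monotonicity established, a binary search over the sorted array of $R$ locates this transition using $O(\log|R|)=O(\log n)$ oracle calls, for a total running time of $O(n^2\log n + n^{7/2}\log n)=O(n^{7/2}\log n)$ deterministically and $O(n^{37/11}\log n)$ in expectation in the randomized variant. To output an actual solution rather than just the value $r^*$, I would invoke Corollary~\ref{coro:10} one final time at $r'$, the element of $R$ immediately below $r^*$ (the largest value at which $\mathrm{decision}$ is ``yes''); the returned set $S(r')$ has size at least $k$ and all pairwise distances exceeding $r'$, hence at least $r^*$, so any $k$ of its points form an optimal dispersion set.

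The main obstacle here is not algorithmic depth but bookkeeping precision: I must reconcile the strict inequality ``larger than $r$'' supplied by the decision oracle with the non-strict target ``minimum pairwise distance equal to $r^*$'', which is exactly why the search is phrased over the discrete set $R$ and why the witnessing call is made at $r'$ rather than at $r^*$. I also need to confirm that the optimality of $r^*$ forces every $k$-subset of $S(r')$ to have minimum pairwise distance exactly $r^*$, so that the extracted solution is genuinely optimal.
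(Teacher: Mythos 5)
Your proposal is correct and follows essentially the same route as the paper: both observe that $r^*\in R$, binary-search the sorted set of pairwise distances using Corollary~\ref{coro:10} as the monotone decision oracle, and obtain the stated $O(n^{7/2}\log n)$ deterministic / $O(n^{37/11}\log n)$ expected bounds. The only cosmetic difference is the final extraction step---the paper reruns the oracle at a symbolic value $r^*-\delta$, whereas you rerun it at the predecessor $r'$ of $r^*$ in $R$ (taking $r'=0$ if $r^*$ is the minimum of $R$); these are equivalent, since no pairwise distance lies in the open interval $(r',r^*)$ and hence the underlying graph is the same for both values.
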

\begin{proof}
We first compute $r^*$ as discussed above. Once $r^*$ is computed, we can apply the algorithm of Corollary~\ref{coro:10} on $r=r^*-\delta$, for an infinitely small symbolic value $\delta$, to find an optimal subset of $k$ points whose minimum pairwise distance is $r^*$. More specifically, whenever the algorithm of Corollary~\ref{coro:10} attempts to compare a value $r'$ with $r$, we assert $r'<r$ if $r'<r^*$, and $r'>r$ if $r'\geq r^*$. 
\end{proof}

\subsection{The size-$3$ case}
\label{sec:3dispersion}
We now consider the case where $k=3$. Given a set $P$ of $n$ points in convex position in the plane, the problem is to find a subset of three points so that their minimum pairwise distance is maximized. 

Let $r^*$ be the optimal solution value, that is, $r^*$ is the minimum pairwise distance of the three points in an optimal solution. It is not difficult to see that $r^*$ is equal to the distance of two points of $P$. Define $R$ as the set of pairwise distances of the points of $P$. We have $r^*\in R$ and $|R|=\Theta(n^2)$. 

Given a number $r$, the {\em decision problem} is to determine whether $r\leq r^*$, or equivalently, whether $P$ has three points whose minimum pairwise distance is at least $r$. By Theorem~\ref{theo:convexindset3}, the decision problem is solvable in $O(n\log n)$ time. Specifically, if we apply the algorithm of Theorem~\ref{theo:convexindset3}, then the algorithm will return with an affirmative answer if and only if $r\leq r^*$. 
%This also implies the following observation, which will be used later. 
%\begin{observation}\label{obser:optr}
%$r^*$ is the smallest value $r$ so that when the algorithm of Theorem~\ref{theo:convexindset3} with $r$ will return no. 
%\end{observation}
In what follows, for convenience, we refer to the algorithm of Theorem~\ref{theo:convexindset3} as the {\em decision algorithm}. 

If we compute $R$ explicitly and then do binary search on $R$ using the decision algorithm, then the total time would be $\Omega(n^2)$ as $|R|=\Theta(n^2)$. Another more efficient solution is to use distance selection algorithms that can find the $k$-th smallest value in $R$ in $O(n^{4/3}\log n)$ time for any given $k$~\cite{ref:KatzAn97,ref:WangIm23}. In fact, if we apply the algorithmic framework proposed by Wang and Zhao in \cite{ref:WangIm23} using our decision algorithm, then $r^*$ can be computed in $O(n^{4/3}\log n)$ time. In the following, we propose an algorithm of $O(n\log^2 n)$ time using parametric search~\cite{ref:ColeSl87,ref:MegiddoAp83}. 

%The framework is designed so that any geometric optimization problem involving distance selection computed in $n_D$ time with a decision algorithm with $T_D$ running time can be solved in $O(n_D+T_D\log n)$ instead of standard $O((n_D+T_D)\log n)$. We note the former is faster as long as $n_D$ dominates $T_D\log n$.

%We point out that finding the $j-$th smallest value of distances $\mathcal{R}(P)$ with the conventional selection algorithm takes $O(n^2)$, as $|\mathcal{R}(P)| = O(n^2)$. However, the distance selection can be done faster, for example in $O(n^{4/3}\log n)$ time \cite{ref:WangIm23}.

We follow the standard parametric search framework~\cite{ref:MegiddoAp83} and simulate the decision algorithm on the unknown optimal value $r^*$ with an interval $[r_1,r_2)$ that contains $r^*$. At each step, the decision algorithm may be invoked on certain {\em critical values} $r$ to resolve a comparison between $r$ and $r^*$, and specifically, to determine whether $r\leq r^*$; based on the results of the comparison, the algorithm then proceeds accordingly and may shrink the interval $[r_1,r_2)$ by updating one of $r_1$ and $r_2$ to $r$ so that the new interval still contains $r^*$. Once the algorithm finishes, we can show that $r^*=r_1$ must hold. 
%(note that “shrinking” includes the case that the interval does not change). 
Initially, we set $r_1=-\infty$  and $r_2 = \infty$. Clearly, $r^*\in [r_1,r_2)$ holds.

% Our algorithm will need the following result, obtained by slightly modifying the algorithm of Theorem~\ref{theo:convexindset3}. 
% \begin{corollary}
% Given any $r$, we can determine whether $r\leq r^*$ in $O(n\log^2 n)$ time. 
% \end{corollary}
% \begin{proof}
% Given any $r$, by slightly modifying the algorithm of Theorem~\ref{theo:convexindset3} we can determine whether $P$ has three points whose minimum pairwise distance is greater than or equal to $r$. Indeed, we can slightly change the algorithm as follows. 
% \end{proof}

For a parameter $r$, we use $a_i(r)$ and $b_i(r)$ to refer to $a_i$ and $b_i$ defined with respect to $r$. According to our decision algorithm, there are two main procedures. The first one is to compute $a_i(r^*)$ and $b_i(r^*)$ for all points $p_i\in P$ and the second one is to find a feasible point. Observe that the second procedure relies only on $a_i(r^*)$ and $b_i(r^*)$. Therefore, once $a_i(r^*)$ and $b_i(r^*)$ for all $p_i\in P$ are computed, the result of the second procedure is determined; in other words, when we run the second procedure, it does not produce any critical values $r$, meaning that the interval $[r_1,r_2)$ obtained after the first procedure is the final interval for the entire algorithm. Therefore, if $[r_1,r_2)$ is the interval obtained after the first procedure, then we can return $r_1$ as $r^*$. Computing $a_i(r^*)$ and $b_i(r^*)$ for all $p_i\in P$ can be done by following the same algorithm as in Section~\ref{sec:discretecenter} (i.e., the first iteration) by using Theorem~\ref{theo:convexindset3} as the decision algorithm (note the the algorithm in Section~\ref{sec:discretecenter} maintains an interval $(r_1,r_2]$ instead of $[r_1,r_2)$ because the decision algorithm there determines whether $r\geq r^*$ while the decision algorithm here determines whether $r\leq r^*$; correspondingly, the algorithm in Section~\ref{sec:discretecenter} returns $r^*=r_2$ while the algorithm here returns $r^*=r_1$). As such, $r^*$ can be computed in $O(n\log^2 n)$ time. 
To find an actual optimal solution, i.e., three points of $P$ whose minimum pairwise distance is equal to $r^*$, we can simply apply the decision algorithm on $r=r^*$, which will find such three points. 

Additionally, we can solve the problem in $O(n\log n)$ expected time using Chan's randomized technique~\cite{ref:ChanGe99}. We begin by partitioning $P$ into four subsets, $P_1, P_2, P_3,$ and $P_4$, each of size $n/4$, and define $g(P)$ as the maximum minimum pairwise distance of three points in $P$. Then, $g(P) = \max(g(P_1 \cup P_2 \cup P_3), g(P_1 \cup P_2 \cup P_4), g(P_1 \cup P_3 \cup P_4), g(P_2 \cup P_3 \cup P_4))$, since three points determining $g(P)$ must lie within one of these sets. This reduces the problem to a constant number of subproblems, each of size $3n/4$. Consequently, applying Chan's randomized technique~\cite{ref:ChanGe99} with our $O(n\log n)$ time decision algorithm can solve the problem in $O(n\log n)$ expected time.

We summarize our result in the following theorem. 

\begin{theorem}
Given a set of $n$ points in convex position in the plane, one can find three points whose minimum pairwise distance is maximized in $O(n\log^2 n)$ deterministic time or in $O(n\log n)$ randomized expected time. 
\end{theorem}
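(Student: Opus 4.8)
The plan is to reduce the optimization to the decision problem of Theorem~\ref{theo:convexindset3} and then extract $r^*$ without enumerating the $\Theta(n^2)$ candidate distances in $R$. First I would fix the two ingredients: the decision algorithm runs in $O(n\log n)$ time, and $r^*$ is guaranteed to be one of the pairwise distances, so it is a well-defined search target. Everything then hinges on searching the distance values implicitly.

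For the deterministic $O(n\log^2 n)$ bound I would use parametric search, simulating the decision algorithm on the unknown $r^*$ while maintaining an interval $[r_1,r_2)$ with $r^*\in[r_1,r_2)$. The structural observation that makes this work is that the decision algorithm splits into two phases: (i) computing the indices $a_i(r)$ and $b_i(r)$ for all $p_i\in P$, which is the only phase whose control flow branches on comparisons against $r$; and (ii) the feasible-point search of Lemma~\ref{lem:feasiblepoint}, which, once the integer arrays $a_i(r^*),b_i(r^*)$ are fixed, reduces to range-minima queries on the fixed array $A$ and therefore generates no further critical values. Hence only phase (i) must be parameterized, and this is structurally identical to the first iteration of the discrete $k$-center algorithm of Section~\ref{sec:discretecenter}: each $a_i(r^*)$ is obtained by tracing an $O(\log n)$-node root-to-leaf path in the farthest-Voronoi-diagram tree $T$, and each critical value $r=|p'p_i|$ is resolved for free when $r\notin(r_1,r_2)$ and by one decision call otherwise. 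Running the $n$ index computations in parallel and applying Cole's technique~\cite{ref:ColeSl87} — valid because each search follows a bounded fan-in/fan-out path — caps the number of decision calls at $O(\log n)$, each of cost $O(n\log n)$, for $O(n\log^2 n)$ total. That the final left endpoint equals $r^*$ follows by the argument of Lemma~\ref{lem:optvalue}, adapted to the convention that here the decision tests $r\le r^*$ and we return $r_1$ rather than $r_2$. The three witness points are then recovered by a single decision call at $r=r^*$.

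For the randomized $O(n\log n)$ bound I would invoke Chan's framework~\cite{ref:ChanGe99}. Writing $g(P)$ for the optimum, the identity $g(P)=\max\{g(P_1\cup P_2\cup P_3),\,g(P_1\cup P_2\cup P_4),\,g(P_1\cup P_3\cup P_4),\,g(P_2\cup P_3\cup P_4)\}$ holds by pigeonhole: three points realizing $g(P)$ meet at most three of the four equal-size parts, hence lie in the union of some three of them. Each subproblem has size $3n/4$, and since any subset of points in convex position is itself in convex position, Theorem~\ref{theo:convexindset3} applies verbatim as the decision procedure on each subproblem. Feeding this constant-size decomposition together with the $O(n\log n)$ decision algorithm into Chan's randomized optimization technique yields $O(n\log n)$ expected time.

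The main obstacle will be justifying that Cole's technique is legitimately applicable to phase (i): one must verify that the simulated computation of all $a_i(r^*),b_i(r^*)$ decomposes into independent searches along bounded-degree paths, so that the weighting scheme underlying Cole's speedup applies, and that phase (ii) genuinely contributes no critical values. Both facts are inherited almost directly from Section~\ref{sec:discretecenter}; the only essentially new check is confirming that the feasible-point test of Lemma~\ref{lem:feasiblepoint}, phrased through the array $A$ and range-minima queries, depends on $r$ solely through the already-fixed integer indices and is otherwise purely combinatorial.
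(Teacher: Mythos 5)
Your proposal is correct and follows essentially the same route as the paper: parametric search over the decision algorithm of Theorem~\ref{theo:convexindset3}, with the key observation that only the computation of the indices $a_i(r^*),b_i(r^*)$ generates critical values (handled exactly as in the first iteration of Section~\ref{sec:discretecenter} with Cole's technique), while the feasible-point phase is purely combinatorial; and Chan's randomized technique with the same four-part decomposition for the $O(n\log n)$ expected-time bound. The paper's own proof makes the identical points, including the interval-convention adaptation of Lemma~\ref{lem:optvalue} and recovering the witness triple by one final decision call at $r=r^*$.
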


\section{The size-$\boldsymbol{3}$ weighted independent set for points in arbitrary position}
\label{sec:generalis3weight}

Given a set $P$ of $n$ points in the plane (not necessarily in convex position) such that each point has a weight, the problem is to find a maximum-weight independent set of size $3$ in $G(P)$, or equivalently, find $3$ points of maximum total weight whose minimum pairwise distance is larger than $1$. 
%We assume that the weight of each point is positive since otherwise we could simply ignore the point. 
Note that since the size of our target independent set is fixed, we allow points to have negative weights. 
We present an $O(n^{5/3+\delta})$ time algorithm for the problem. 

%For each point $p\in P$, let $w(p)$ denote its weight. We assume that each $w(p)>0$ since otherwise $p$ can be simply ignored. For any subset $P'\subseteq P$, let $w(P')$ denote the total weight of all points of $P'$. 

In the following, we first introduce in Section~\ref{sec:biclique} a new concept, {\em tree-structured biclique partition}, which is critical to the success of our approach; we then describe the algorithm in Section~\ref{sec:algo3weight}. At the very end, we show that our technique can also be used to compute a maximum-weight clique of size $3$ in $G(P)$ within the same time complexity. In addition, we show that computing a maximum-weight independent set or clique of size 2 can be done in $n^{4/3}2^{O(\log^* n)}$ time by using biclique partitions. 

\subsection{Tree-structured biclique partition}
\label{sec:biclique}

Define $\overline{G(P)}$ as the complement graph of $G(P)$. The problem is equivalent to finding a maximum-weight clique of size $3$ in $\overline{G(P)}$. We want to partition $\overline{G(P)}$ into bicliques, i.e., complete bipartite graphs. We give the formal definition below. 

\begin{definition}    
    \label{prob:biclique}
    {\em \bf (Biclique partition)}
    Define a {\em biclique partition} of $\overline{G(P)}$ as a collection of edge-disjoint bicliques (i.e., complete bipartite graphs) $\Gamma(P) = \{A_t \times B_t\ |\ A_t, B_t \subseteq P\}$ such that the following two conditions are satisfied:
    \begin{enumerate}
        \item For each pair $(a, b) \in A_t \times B_t\in \Gamma$, $|ab|> 1$.
        \item For any points $a,b \in P$ with $|ab|> 1$, $\Gamma$ has a unique biclique $A_t \times B_t$ that contains $(a, b)$.
    \end{enumerate}
%In other words, the graphs of $\Gamma$ record all pairs $(p,q)$ of points $p,q\in P$ whose distances are larger than $r$. 
\end{definition}

Biclique partition has been studied before, e.g., \cite{ref:KatzAn97,ref:WangIm23}. In our problem, we need a stronger version of the partition, called a {\em tree-structured biclique partition} and defined as follows. 

\begin{definition}
{\em \bf (Tree-structured biclique partition)}
A biclique partition $\Gamma(P) = \{A_t \times B_t\ |\ A_t, B_t \subseteq P\}$ is {\em tree-structured} if all the subsets $A_t$'s form a tree $\calT_A$ such that for each internal node $A_t$, all its children subsets form a partition of $A_t$. 
\end{definition}

For convenience, for each node $A_t$ of $\calT_A$, we consider $A_t$ an ancestor of itself. Although biclique partitions have been studied and used extensively in the literature, e.g., \cite{ref:AgarwalCo06,ref:ChanFi23,ref:KatzAn97,ref:WangIm23}, to the best of our knowledge, we are not aware of any previous work on the tree-structured biclique partitions. 
The following lemma explains why we introduce the concept. 
%a tree-structure biclique partition helps in solving our problem. 

\begin{lemma}\label{lem:treecondition}
Suppose that $\Gamma(P) = \{A_t \times B_t\ |\ A_t, B_t \subseteq P\}$ is a tree-structured biclique partition of $\overline{G(P)}$ and $\calT_A$ is the tree formed by all the subsets $A_t$'s. 
Then, three points $a,b,c\in P$ form an independent set in $G(P)$ if and only if $\Gamma(P)$ has a biclique $(A_t,B_t)$ that contains a pair of these points, say $(a,b)$, and $A_t$ has an ancestor subset $A_{t'}$ in $\calT_A$ such that $c\in B_{t'}$ and $|bc|> 1$.    
\end{lemma}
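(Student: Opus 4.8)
The plan is to prove the two directions separately, using the first condition of Definition~\ref{prob:biclique} (every pair inside a biclique is at distance $>1$) for the easy direction and a structural \emph{chain property} of $\calT_A$ for the harder direction. Throughout I read ``three points form an independent set in $G(P)$'' as the statement that all three pairwise distances exceed $1$, i.e.\ that $a,b,c$ induce a triangle in $\overline{G(P)}$. I also record at the outset the one elementary fact about the tree that both directions rest on: if $A_{t'}$ is an ancestor of $A_t$ in $\calT_A$, then $A_t\subseteq A_{t'}$. This is immediate from tree-structuredness, since each child is a part of the partition of its parent, hence a subset of it, and subset-hood is transitive (with $A_t$ counted as an ancestor of itself).

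The backward direction then follows directly. Given the inner biclique $A_t\times B_t$ with $a\in A_t$ and $b\in B_t$, the first condition gives $|ab|>1$; the hypothesis supplies $|bc|>1$; and since $a\in A_t\subseteq A_{t'}$ while $c\in B_{t'}$, we have $(a,c)\in A_{t'}\times B_{t'}$, so the first condition again yields $|ac|>1$. All three pairwise distances exceed $1$, so $\{a,b,c\}$ is independent in $G(P)$.

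For the forward direction the key step is to establish that, for any fixed point $p\in P$, the set of nodes $A_t$ of $\calT_A$ containing $p$ forms a chain, i.e.\ a root-to-node path. I would prove this by a lowest-common-ancestor argument: if two such nodes were incomparable they would lie in the subtrees of two distinct children of their lowest common ancestor, and those two children are disjoint parts of a partition, contradicting that both contain $p$. With this in hand, suppose $a,b,c$ are pairwise at distance $>1$ and pick any one of them as a pivot, say $a$. Applying the second condition of Definition~\ref{prob:biclique} to the ordered pairs $(a,b)$ and $(a,c)$ yields bicliques $A_t\times B_t\ni(a,b)$ and $A_{t'}\times B_{t'}\ni(a,c)$, with $a$ on the $A$-side of both; hence $A_t$ and $A_{t'}$ both contain $a$ and are comparable by the chain property. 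After possibly swapping the labels $b$ and $c$ (harmless, since their roles are symmetric and $|bc|>1$ either way), I may assume $A_{t'}$ is an ancestor of $A_t$. Then $A_t\times B_t$ contains $(a,b)$, $A_{t'}$ is an ancestor of $A_t$ with $c\in B_{t'}$, and $|bc|>1$ — exactly the claimed configuration.

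The main obstacle is the chain property together with the bookkeeping that lets me always orient the two bicliques so the deeper one plays the role of the inner biclique $(A_t,B_t)$ and the shallower one the ancestor $(A_{t'},B_{t'})$. Handling this cleanly relies on reading the second condition as covering \emph{ordered} pairs, so that the pivot $a$ is guaranteed to sit on the $A$-side of both bicliques; once this is granted, comparability of the two $A$-nodes is automatic and the remaining verifications are routine. A minor point worth noting is that the pivot may be chosen arbitrarily among the three points, since comparability of the two $A$-nodes holds no matter which point is designated as pivot.
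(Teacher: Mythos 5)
Your proof is correct and follows essentially the same route as the paper: the backward direction uses $A_t\subseteq A_{t'}$ to place $(a,c)$ in the ancestor biclique, and the forward direction fixes $a$ as a pivot, extracts the two bicliques covering $(a,b)$ and $(a,c)$ with $a$ on the $A$-side, and concludes comparability of the two $A$-nodes from the fact that the nodes of $\calT_A$ containing a fixed point form a chain. The only cosmetic difference is that you justify the chain property by a lowest-common-ancestor contradiction while the paper runs a top-down induction from the root; both rest on the same partition-into-children fact, and your explicit remark that the second condition of the biclique-partition definition must be read over ordered pairs is a point the paper uses implicitly.
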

\begin{proof}
If three points $a,b,c\in P$ form an independent set in $G(P)$, then $|ab|$, $|bc|$, and $|ac|$ are all greater than $1$. Since $|ab|> 1$, by definition, $\Gamma(P)$ must have a biclique $(A_t,B_t)$ that contains $(a,b)$. Similarly, as $|ac|> 1$, $\Gamma(P)$ must have a biclique $(A_{t'},B_{t'})$ that contains $(a,c)$. We argue that one of $A_t$ and $A_{t'}$ must be an ancestor of the other. Indeed, if $t=t'$, then this is obviously true. If $t\neq t'$, then $a$ is in both $A_t$ and $A_{t'}$. 
Let $A_u$ be the highest node of $\calT_A$ that contains $a$. Since for each internal node $A_v$ of $\calT_A$, all its children subsets form a partition of $A_{v}$, $A_u$ must be the root of $\calT_A$. Also, exactly one child of $A_u$ contains $a$. If we follow this argument inductively, since both $A_t$ and $A_{t'}$ contain $a$,
one of them must be an ancestor of the other. If $A_{t'}$ is an ancestor of $A_t$, then the lemma statement is proved; otherwise, we simply switch the notation between $b$ and $c$. 

On the other hand, if $\Gamma(P)$ has a biclique $A_t\times B_t$ that contains $(a,b)$ and $A_t$ has an ancestor subset $A_{t'}$ in $\calT_A$ such that $c\in B_{t'}$ and $|bc|> 1$, we need to show that $\{a,b,c\}$ is an independent set. It suffices to prove $|ab|> 1$ and $|ac|> 1$. Indeed, since $(a,b)\in A_t\times B_t$, by definition $|ab|> 1$ holds. It remains to prove $|ac|> 1$. Since for each internal node $A_v$ of $\calT_A$, all its children subsets form a partition of $A_{v}$, and $A_{t'}$ is an ancestor of $A_t$, we obtain $A_t\subseteq A_{t'}$. As $a\in A_t$, we have $a\in A_{t'}$. Since $c\in B_{t'}$, we have $(a,c)\in A_{t'}\times B_{t'}$. Therefore, $|ac|> 1$ holds. 
\end{proof}

Lemma~\ref{lem:treecondition} suggests the following algorithm. First, we construct a tree-structured biclique partition $\Gamma(P) = \{A_t \times B_t\ |\ A_t, B_t \subseteq P\}$; let $\calT_A$ be the tree formed by all subsets $A_t$'s. For this, we will propose an algorithm in Lemma~\ref{lem:biclique}. 
Second, for each subset $B_t$, for each point $b\in B_t$, for each ancestor subset $A_{t'}$ of $A_t$, according to Lemma~\ref{lem:treecondition}, $\{a,b,c\}$ is an independent subset of $G(P)$ for all points $a\in A_t$ and all points $c\in B_{t'}$ with $|bc|> 1$. Instead of enumerating all these triples and then returning the one with the largest total weight, we find a triple $\{a^*,b,c^*\}$ (and keep it as a {\em candidate solution}) with $a^*$ as the point of $A_t$ with the largest weight and $c^*$ as the largest-weight point among all points $c\in B_{t'}$ with $|bc|> 1$. To compute $a^*$, for each subset $A_t\in \Gamma(P)$, we maintain its largest-weight point. To compute $c^*$, in Lemma~\ref{lem:maxweightquery} we build an {\em outside-unit-disk range max-weight query} data structure for each $B_{t'}$ so that given a query point $b$, such a point $c^*$ can be computed efficiently. Finally, among all candidate solutions, we return the one with the largest total weight. The efficiency of the algorithm hinges on the following: the time to construct $\Gamma(P)$, $\sum_t|B_t|$, i.e., the total size of all the subsets $B_t$'s, the number of ancestors of each $A_t$ in $\calT_A$ (i.e., the height of $\calT_A$), the time for constructing the outside-unit-disk range max-weight query data structure and its query time.

\subsection{Algorithm}
\label{sec:algo3weight}

Although efficient algorithms exist for computing biclique partitions of $\overline{G(P)}$~\cite{ref:KatzAn97,ref:WangIm23}, 
to the best of our knowledge, we are not aware of any previous algorithm to compute a tree-structured biclique partition. Also, we want the height of the tree as small as possible. 
We provide an algorithm in the following lemma using cuttings~\cite{ref:ChazelleCu93,ref:WangUn23}.
%that the biclique partition computed by the algorithm in \cite{ref:WangIm23} is actually tree structured. 

\begin{lemma}\label{lem:biclique}
A tree-structured biclique partition $\Gamma(P)=\{A_t \times B_t\ |\ A_t, B_t \subseteq P\}$ for $\overline{G(P)}$ can be computed in $O(n^{3/2})$ time with the following complexities: $|\Gamma(P)|=O(n)$, $\sum_t |A_t|=O(n\log n)$, and $\sum_t |B_t| = O(n^{3/2})$. In addition, the height of the tree formed by the subsets of $A_t$'s is $O(\log n)$.
\end{lemma}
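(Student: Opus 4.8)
The plan is to build the tree-structured biclique partition via a hierarchical cutting on a suitable set of circles, using the same cutting machinery already introduced in Section~\ref{sec:cutting}. The goal is to partition the pairs $(a,b)$ with $|ab|>1$ so that the $A_t$-subsets nest into a tree of logarithmic height. I would first fix the viewpoint that ``$|ab|>1$'' means $b$ lies \emph{outside} the unit disk $D_a$, equivalently $a$ lies outside $D_b$. For a fixed point $a$, the set of points $b$ with $|ab|>1$ is the complement of a unit disk. To turn this into a cutting problem, consider the set $\calC$ of unit circles $\partial D_p$ for all $p\in P$ (here $m=|\calC|=n$), and build a hierarchical $(1/r)$-cutting $\Xi_0,\Xi_1,\ldots,\Xi_t$ for $\calC$ with $r=\sqrt{n}$, which by the cited results~\cite{ref:ChazelleCu93,ref:WangUn23} takes $O(mr)=O(n^{3/2})$ time, has $t=O(\log n)$ levels, size $O(r^2)=O(n)$ in the last level, and total conflict-list size $\sum_{i,\sigma}|\calC_\sigma|=O(mr)=O(n^{3/2})$.

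**Next I would define the bicliques from the cutting tree.** The tree $\calT_A$ will be (essentially) the cutting tree $\Xi_0,\ldots,\Xi_t$; the node corresponding to a cell $\sigma$ at level $i$ is assigned $A_\sigma$, the set of points of $P$ contained in $\sigma$. Since the children cells of $\sigma$ partition $\sigma$ (the cutting cells have disjoint interiors and cover the plane), the children $A$-subsets partition $A_\sigma$, giving exactly the tree-structure property; the height is $t=O(\log n)$, and because each point of $P$ descends one root-to-leaf path of length $O(\log n)$, $\sum_t|A_t|=O(n\log n)$. For the $B$-side: a pair $(a,b)$ with $|ab|>1$ should be captured at the cell $\sigma$ where the circle $\partial D_a$ (equivalently the ``relevant'' circle) \emph{first} separates $b$ out. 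Concretely, walking $b$ down its root-to-leaf path in $\Xi$, at each cell $\sigma$ containing $b$ I look at the circles in the conflict list $\calC_\sigma$; for each such circle $\partial D_a$ whose bounded disk has a \emph{child cell of $\sigma$ lying entirely outside $D_a$}, I can charge the pair to a biclique $A_{\sigma'}\times\{b\}$ for that outside child $\sigma'$, while pairs with $a$ whose disk does not yet resolve at $\sigma$ get pushed to the next level. This is the standard ``resolve against conflict-list circles, recurse into children'' accounting already used in Lemma~\ref{lem:SubProb}: a circle $\partial D_a\in\calC_\sigma$ either does not cross a given child $\sigma'$ (so the whole child is inside or outside $D_a$, and outside children yield a biclique $A_{\sigma'}\times B$ where $B$ collects the $b$'s), or it crosses $\sigma'$ and is deferred. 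Each of the $O(n)$ cells contributes $B$-subsets whose sizes are bounded by the points routed through it; summing the work and the $B$-sizes against $\sum|\calC_\sigma|=O(n^{3/2})$ plus the point-location cost $O(n\log r)=O(n\log n)$ gives $\sum_t|B_t|=O(n^{3/2})$ and total construction time $O(n^{3/2})$.

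**The edge-disjointness and uniqueness conditions of Definition~\ref{prob:biclique} are what I would verify most carefully.** I must check that every pair $(a,b)$ with $|ab|>1$ is assigned to exactly one biclique. Uniqueness follows by routing $b$ down its unique path in $\Xi$ and, at each level, resolving it against exactly those circles $\partial D_a$ that separate $b$ from $a$ for the first time: because a circle crosses $\sigma$ but not a particular child $\sigma'$, and child cells partition $\sigma$, each separating circle resolves $b$ at precisely one child, so $(a,b)$ is emitted once. I would also note the duality $|ab|>1\iff b\notin D_a\iff a\notin D_b$ so that the same cutting of circles $\partial D_p$ serves both the ``$a$-side'' (cell membership) and the ``$b$-side'' (conflict-list separation) roles; a small point to confirm is that every qualifying pair really is resolved before the leaf level, which holds because at the leaf cells each conflict list has size at most $m/r=\sqrt n$ and every remaining circle either contains or excludes each point explicitly, so the residual pairs are emitted directly (this is the $\Xi_t$ base case).

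**The main obstacle** I anticipate is not the counting but nailing the biclique-generation rule so that it simultaneously (i)~produces \emph{edge-disjoint} bicliques, (ii)~keeps each $A_t$ equal to a cutting cell's point set (so the tree-structure and $O(\log n)$ height come for free), and (iii)~bounds $\sum_t|B_t|$ by $O(n^{3/2})$ rather than something larger. The tension is that a single point $b$ may be separated from many points $a$ at the \emph{same} cell $\sigma$ (one per relevant child), so I must make sure $b$ is placed into a $B$-subset attached to each outside child cell at most once and that the total multiplicity of $b$ across all levels is $O(\sqrt n)$ on average; this is exactly what the conflict-list bound $\sum|\calC_\sigma|=O(n^{3/2})$ and the ``one resolution per child'' rule deliver, but making the charging argument airtight is the delicate part. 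Once that rule is fixed, the stated bounds $|\Gamma(P)|=O(n)$, $\sum_t|A_t|=O(n\log n)$, $\sum_t|B_t|=O(n^{3/2})$, height $O(\log n)$, and construction time $O(n^{3/2})$ all fall out of the standard hierarchical-cutting complexities.
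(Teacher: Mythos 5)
Your proposal is correct and follows essentially the same construction as the paper: a hierarchical $(1/r)$-cutting on the $n$ unit circles with $r=\sqrt{n}$, cell point-sets as the $A_t$'s (inheriting the cutting tree, hence $O(\log n)$ height and $\sum_t|A_t|=O(n\log n)$), $B$-sets collected from circles that drop out of conflict lists at children lying entirely outside the corresponding disks, and a brute-force leaf-level base case against conflict lists of size $O(\sqrt{n})$. The only blemish is a notational role-swap in your charging sentence (the point routed down the tree belongs on the $A$-side of the emitted biclique and the circle's center on the $B$-side, not ``$A_{\sigma'}\times\{b\}$''), which is harmless given the symmetry $|ab|>1\iff a\notin D_b\iff b\notin D_a$ that you note explicitly.
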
    
\begin{proof}
%We resort to cuttings~\cite{ref:ChazelleCu93,ref:WangUn23}. 
Define $\calC$ as the set of unit circles centered at the points of $P$. For each point $p\in P$, let $D_p$ denote the closed unit disk centered at $p$ and $C_p$ the boundary of $D_p$. 

We follow the notation about cuttings introduced in Section~\ref{sec:cutting}. 
We start by constructing a hierarchical $(1/r)$-cutting $\{\Xi_0, \Xi_1, ..., \Xi_k\}$ for $\calC$, which takes $O(nr)$ time~\cite{ref:ChazelleCu93,ref:WangUn23}, for a parameter $1\leq r\leq n$ to be fixed later. We use $\Xi$ to refer to the set of all cells $\sigma$ in all cuttings $\Xi_i$, $0\leq i\leq k$. 
%Recall that the algorithm also produces the conflict lists $\calC_\sigma$ for all cells $\sigma\in \Xi$. 
For each cell $\sigma\in \Xi$, to be consistent with the notation in the lemma statement, denote by $A(\sigma)$ the subset of points of $P$ in $\sigma$. We compute $A(\sigma)$ for all cells $\sigma\in \Xi$. This can be done in $O(n\log r)$ time by processing each point of $P$ using a point location procedure as in the proof of Lemma~\ref{lem:SubProb}. 
Note that $\sum_{\sigma\in \Xi}|A(\sigma)|=O(n\log r)$. 
%processing each point of $P$ individually. Specifically, for each point $p \in P$, suppose we know that $p$ is in $\sigma'$ for a cell $\sigma'$ in $\Xi_{i-1}$ (which is true initially when $i=1$ as $\Xi_0$ has a single cell that is the entire plane). By examining each child of $\sigma'$ we can find in $O(1)$ time the cell $\sigma$ of $\Xi_i$ that contains $p$ and then we add $p$ to $B(\sigma)$. Since $k=\Theta(\log r)$, each point of $B$ is stored in $O(\log r)$ canonical subsets and the total size of all canonical subsets $B(\sigma)$ for all cells $\sigma\in \Xi$ is $O(n \log r)$. 
    
Next, for each cell $\sigma$ of $\Xi$, we compute another subset $B_{\sigma}\subseteq P$. Specifically, a point $p\in P$ is in $B_{\sigma}$ if $\sigma$ is completely outside the unit disk $D_p$ and the unit circle $C_p$ is in the conflict list $\calC_{\sigma'}$ of the parent $\sigma'$ of $\sigma$. The subsets $B_{\sigma}$ for all cells $\sigma$ of $\Xi$ can be computed in $O(nr)$ time as follows. Recall that the cutting algorithm~\cite{ref:ChazelleCu93,ref:WangUn23} already computes the conflict lists $\calC_{\sigma}$ for all cells $\sigma\in \Xi$. For each cutting $\Xi_{i-1}$, $1\leq i\leq k$, for each cell $\sigma'$ of $\Xi_{i-1}$, for each circle $C\in \calC_{\sigma'}$, for each child $\sigma$ of $\sigma'$, if $\sigma$ is completely outside $D_p$, we add the point $p$ to $B_{\sigma}$, where $p$ is the center of $C$. In this way, $B_{\sigma}$ for all cells $\sigma$ of $\Xi$ can be computed in $O(nr)$ time since $\sum_{\sigma'\in \Xi} |\calC_{\sigma'}|=O(nr)$ and each cell $\sigma'$ has $O(1)$ children. As such, $\sum_{\sigma\in \Xi}|B_{\sigma}|=O(nr)$. 

\paragraph{Constructing $\boldsymbol{\Gamma(P)=\{A_t \times B_t\ |\ A_t, B_t \subseteq P\}}$.} We now construct $\Gamma(P)$.
By definition, for each cell $\sigma\in \Xi$, for any point $b\in B_{\sigma}$, $\sigma$ is completely outside the unit disk $D_b$, and therefore $|ab|> 1$ for any point $a\in A(\sigma)$ since $a$ is contained in $\sigma$. We add the biclique $A(\sigma)\times B_{\sigma}$ to $\Gamma(P)$. It is not difficult to see that the bicliques of $\Gamma(P)=\{A(\sigma)\times B_{\sigma}\ |\ \sigma\in \Xi\}$ are edge-disjoint. 
The size of $\Gamma(P)$ is at most the number of cells of $\Xi$, which is $O(r^2)$. Also, we have shown above that $\sum_{\sigma\in \Xi}|A(\sigma)|=O(n\log r)$ and $\sum_{\sigma\in \Xi}|B_{\sigma}|=O(nr)$. 
%This proves the complexities of $\Gamma(A,B,\alpha,\beta)$ in the lemma statement. 
We add some additional bicliques to $\Gamma(P)$ in the following. 

For each cell $\sigma$ of the last cutting $\Xi_k$, %we have $|\calC_{\sigma}| \leq n / r$. 
for each point $a\in A(\sigma)$, define $A_a=\{a\}$ and $B_a$ as the set of all points $b\in P$ such that $|ab|> 1$ and the unit circle $C_b$ is in the conflict list $\calC_{\sigma}$. We add $A_a\times B_a$ to $\Gamma(P)$ for all points $a\in A(\sigma)$ and for all cells $\sigma\in \Xi_k$. Note that the bicliques of $\Gamma(P)$ are still edge-disjoint.
Computing $B_a$ can be done by simply checking all circles of $\calC_{\sigma}$, which takes $O(n/r)$ time since $|\calC_{\sigma}|\leq n/r$. 
Hence, computing $B_a$ for all points $a\in P$ takes $O(n^2/r)$ time since the sets $A(\sigma)$'s for all cells $\sigma\in \Xi_k$ are pairwise-disjoint. This also implies $\sum_{a\in P}|B_a|=O(n^2/r)$. 

This completes the construction of $\Gamma(P)$. The total time is $O(nr+n^2/r)$. 
Note that $\sum_t|A_t|=\sum_{\sigma\in \Xi}|A(\sigma)|+\sum_{a\in P}|A_a|=O(n\log r)$ and 
$\sum_t|B_t|=\sum_{\sigma\in \Xi}|B_{\sigma}|+\sum_{a\in P}|B_a|=O(nr+n^2/r)$. The size of $|\Gamma(P)|$ is at most the number of cells of $\Xi$, which is $O(r^2)$, plus the number of points of $P$, which is $n$. As such, $|\Gamma(P)|=O(r^2+n)$. Setting $r=\sqrt{n}$ leads to the complexities in the lemma. 

\paragraph{Proving that $\boldsymbol{\Gamma(P)}$ is a biclique partition.} We now prove that $\Gamma(P)$ is indeed a biclique partition of $\overline{G(P)}$. For each pair $(a,b)\in A_t\times B_t\in \Gamma$, by our construction, $|ab|> 1$ always holds. 

Consider two points $a,b\in P$ with $|ab|> 1$. We need to show that $\Gamma(P)$ has a unique biclique $A_t\times B_t$ that contains $(a,b)$. As the bicliques of $\Gamma(P)$ are edge-disjoint, it suffices to show that $\Gamma(P)$ has a biclique $A_t\times B_t$ that contains $(a,b)$. To see this, for each $0\leq i\leq k$, let $\sigma_i$ denote the cell of $\Xi_i$ that contains the point $a$. Let $j$ be the largest index, $0\leq j\leq k$, such that the circle $C_b$ is in the conflict list $\calC_{\sigma_j}$ of $\sigma_j$. Note that such an index $j$ must exist since $C$ must be in $\calC_{\sigma_0}$ (this is because $\sigma_0$ is the only cell of $\Xi_0$, which is the entire plane). If $j=k$, since $|ab|> 1$, the point $b$ must be in $B_a$  by definition and thus $(a,b)\in A_a\times B_a$. If $j\neq k$, then by the definition of $j$, $C_b$ is not in $\calC_{\sigma_{j+1}}$. As such, either $\sigma_{j+1}$ is completely inside the disk $D_b$ or completely outside it. As $|ab|> 1$, $a$ is outside $D_b$. 
Since $a$ is inside $\sigma_{j+1}$, we obtain that $\sigma_{j+1}$ must be completely outside $D_b$. By definition $(a,b)$ must be in $A(\sigma_{j+1})\times B_{\sigma_{j+1}}$. 

This proves that $\Gamma(P)$ is a biclique partition of $\overline{G(P)}$. 

\paragraph{Proving that $\boldsymbol{\Gamma(P)}$ is tree-structured.} All the subsets $A_t$'s of $\Gamma(P)$ can be formed a tree by simply following the tree structure of the hierarchical cutting $\Xi$. Specifically, for two cells $\sigma,\sigma'\in \Xi$ such that $\sigma$ is a child of $\sigma'$, we make $A(\sigma)$ a child of $A(\sigma')$. In addition, for each cell $\sigma$ of the last cutting $\Xi_k$, for each point $a\in A(\sigma)$, we make $A_a$ a child of $A(\sigma)$. It is not difficult to see that all subsets $A_t$'s of $\Gamma(P)$ now form a tree, and furthermore, for each subset $A_t$, all its children form a partition of $A_t$. Clearly, the height of the tree is at most $k+1$, which is $O(\log n)$ since $k=O(\log r)$ and $r=\sqrt{n}$.

The lemma thus follows. 
\end{proof}

The next lemma builds an outside-unit-disk range max-weight query data structure. 

\begin{lemma}\label{lem:maxweightquery}
Let $Q$ be a set of $m$ weighted points in the plane. For any parameter $r$ with $r\leq m/\log^2m$, one can build a data structure in $O(mr(m/r)^{\delta})$ time such that given a query unit disk $D$, the point of $Q$ outside $D$ with the largest weight can be computed in $O(\sqrt{m/r})$ time. 
\end{lemma}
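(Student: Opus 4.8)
The plan is to first reformulate the query so that its special structure becomes visible. By symmetry, a point $q\in Q$ lies outside the query unit disk $D$ centered at $c$ if and only if $c$ lies outside the unit disk $D_q$ centered at $q$; hence the task is equivalent to preprocessing the $m$ congruent weighted disks $\{D_q\}_{q\in Q}$ so that, given a query point $c$, we return the largest weight among the disks that do \emph{not} contain $c$. Lifting each $q=(q_x,q_y)$ to $\hat q=(q_x,q_y,q_x^2+q_y^2)\in\bbR^3$ turns the test ``$c\notin D_q$'' into ``$\hat q$ lies above the plane $w=2c_xu+2c_yv+(1-c_x^2-c_y^2)$.'' The key point is that, since all disks share the same radius, these query planes do \emph{not} span the full three-parameter family of non-vertical planes: their coefficients obey $\gamma=1-(\alpha^2+\beta^2)/4$, so in the dual they lie on a fixed paraboloidal two-surface. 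Thus the problem is a halfspace range-max problem in $\bbR^3$ whose ranges come from a two-parameter, constant-complexity family, and one should expect the complexity of a \emph{planar} range-searching problem rather than that of arbitrary three-dimensional halfspaces. I would also note that the hypothesis $r\le m/\log^2 m$ is equivalent to $\sqrt{m/r}\ge\log m$, which will let lower-order logarithmic terms be absorbed into the stated query bound.

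Second, I would build a Matou\v{s}ek-style hierarchical partition of $Q$ tuned to this two-parameter family, together with a canonical-subset decomposition for the max operation. The points of $Q$ are split recursively into groups, each group enclosed in a planar cell, so that every query unit circle $C_c$ crosses only a $\sqrt{\cdot}$-fraction of the cells at each level (a low-crossing-number partition). At each node I store the maximum weight over the points of its group; these are the precomputed canonical max-values. A query locates $c$ by descending the hierarchy: for a cell lying entirely outside $D_c$, all its points qualify, so its stored maximum is taken; a cell lying entirely inside $D_c$ contributes nothing; and a cell crossed by $C_c$ is recursed into, with any remaining points scanned directly at the leaves. Because ``max'' is an idempotent semigroup operation, the canonical groups need only cover, not partition, the qualifying points, so correctness is immediate. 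Choosing the fan-out and leaf size to balance the two costs, a query visits $O(\sqrt{m/r})$ cells and therefore runs in $O(\sqrt{m/r})$ time, while the restricted two-parameter crossing bound keeps the total number of cells, the total group size, and hence the preprocessing at $O(mr)$ up to the usual $(m/r)^{\delta}$ slack of such partition trees, i.e.\ $O\!\left(mr\,(m/r)^{\delta}\right)$; the $O(\log(m/r))$ path length is dominated by $\sqrt{m/r}$ precisely because $r\le m/\log^2 m$.

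The step I expect to be the main obstacle is achieving the \emph{planar} preprocessing bound $O(mr)$ rather than the $O\!\left((mr)^{3/2}\right)$ bound one gets by treating the lifted points as subject to arbitrary three-dimensional halfspace queries. In particular, a structure built purely from hierarchical cuttings of the unit circles (as in Section~\ref{sec:cutting} and \cite{ref:ChazelleCu93}) duplicates each circle into the conflict lists of all cells it crosses, and this accumulates to space and preprocessing about $m^2/\sqrt{m/r}=m^{3/2}\sqrt{r}$, which exceeds $mr$ for small $r$. Beating this requires distributing the points into cells \emph{without} such duplication while keeping the per-cell crossing number at the planar $\sqrt{\cdot}$ level, which is exactly where the fixed-radius (two-parameter) structure of the unit-circle family must be exploited, via low-crossing-number simplicial partitions or semialgebraic partitioning rather than plain cuttings. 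Once that partition is in hand, the remaining verifications are routine: checking that the disks avoiding $c$ are covered by the stored canonical groups together with the bottom-level scans, and bounding the preprocessing by summing the per-level partition-construction costs in a geometric series.
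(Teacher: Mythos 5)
Your proposal is correct and takes essentially the same route as the paper: the paper's proof simply invokes the outside-unit-disk range searching structure of \cite{ref:WangUn23} -- which internally is exactly the Matou\v{s}ek-style partition-tree machinery for the congruent-disk (two-parameter) family that you sketch -- and then, because that structure decomposes $Q\cap\overline{D}$ into canonical subsets under a semigroup operation, replaces the counting semigroup by max-weight without changing the preprocessing or query bounds. The ``main obstacle'' you flag, namely obtaining planar-like bounds for unit-circle ranges rather than the bounds for arbitrary 3D halfspaces or plain cuttings, is precisely what that cited data structure already provides, so the paper closes it by citation rather than by new construction.
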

\begin{proof}
We use the following {\em outside-unit-disk range searching} data structure developed in \cite{ref:WangUn23}: For any $r\leq m/\log^2m$, one can build in $O(mr(m/r)^{\delta})$ time a data structure for $Q$, so that for any query unit disk $D$, the number of the points of $Q$ outside $D$ can be computed in $O(\sqrt{m/r})$ time. Note that the paper describes the algorithm for {\em inside-unit-disk} queries, but as discussed in the paper the technique works for the outside-unit-disk queries too. 
The algorithm extends the techniques of the halfplane range searching~\cite{ref:MatousekEf92,ref:MatousekRa93}. The techniques work for other semi-group operations. More specifically, the algorithm maintains the cardinalities of some {\em canonical subsets} of $Q$. For a query disk $D$, cardinalities of certain pairwise-disjoint canonical subsets whose union is $Q\cap \overline{D}$ (recall that $\overline{D}$ is the region of the plane outside $D$) are added together to obtain the answer to the query.  

To solve our problem, we can slightly change the above data structure as follows. For each canonical subset computed by the data structure, instead of maintaining its cardinality, we maintain its largest-weight point. Then, during each query, instead of using the {\em addition} operation on cardinalities of canonical subsets, we take the {\em max} operation on the weights of the largest-weight points of these canonical subsets. These changes do not asymptotically affect the preprocessing time or the query time of the data structure. The lemma thus follows. 
%For each canonical subset of the data structure, we maintain the point with the largest weight. This change can be easily done by simply checking every point in each canonical subset, which takes additional $\tO(m\log^2r+m^2/r)$ time in total since the total size of all the canonical subsets is $\tO(m\log^2r+m^2/r)$. As such, this change does not increase the preprocessing time asymptotically. 
%Given a query disk $D$, the query algorithm will compute in $\tO(\sqrt{r}+\log m)$ time a collection $\calS_D$ of  $\tO(\sqrt{r}+\log m)$ canonical subsets whose union is the set of all the points of $Q$ outside $D$. For our problem, for each canonical subset $S\in \calS_D$, we take its largest-weight point as a {\em candidate point}, which has been computed in the preprocessing. As such, we obtain $\tO(\sqrt{r}+\log m)$ candidate points and the one with the largest weight is the solution to the query. Since $\calS_D$ has $\tO(\sqrt{r}+\log m)$ canonical subsets, the total query time is $\tO(\sqrt{r})$. 
\end{proof}

Following our algorithm discussed before and combining Lemmas~\ref{lem:biclique} and \ref{lem:maxweightquery}, we obtain the following result. 

\begin{theorem}\label{theo:3weightindset}
Given a set $P$ of $n$ weighted points in the plane, one can find a maximum-weight independent set of size $3$ in the unit-disk graph $G(P)$ in $O(n^{5/3+\delta})$ time, for any arbitrarily small constant $\delta>0$. 
\end{theorem}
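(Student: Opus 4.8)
The plan is to convert the structural characterization of Lemma~\ref{lem:treecondition} into an enumeration over the tree-structured biclique partition, and to bound its cost by exploiting the level structure of the hierarchical cutting underlying Lemma~\ref{lem:biclique}. First I would invoke Lemma~\ref{lem:biclique} to build $\Gamma(P)=\{A_t\times B_t\}$ together with the tree $\calT_A$ in $O(n^{3/2})$ time, recording for each node $A_t$ its maximum-weight point. By Lemma~\ref{lem:treecondition}, every size-$3$ independent set $\{a,b,c\}$ is witnessed by a biclique $A_t\times B_t$ containing a pair $(a,b)$ together with an ancestor $A_{t'}$ of $A_t$ with $c\in B_{t'}$ and $|bc|>1$; hence it suffices to examine, for every biclique $A_t\times B_t$, every $b\in B_t$, and every ancestor $A_{t'}$ of $A_t$, the candidate triple $\{a^*,b,c^*\}$, where $a^*$ is the precomputed maximum-weight point of $A_t$ and $c^*$ is the maximum-weight point of $B_{t'}$ lying outside $D_b$. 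The point $c^*$ is exactly the answer to an outside-unit-disk range max-weight query on $B_{t'}$ with query disk $D_b$, so I would build the data structure of Lemma~\ref{lem:maxweightquery} on each $B_{t'}$ and return the heaviest candidate triple over all choices; correctness is then immediate from Lemma~\ref{lem:treecondition}.

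For the time analysis I would first bound the number of queries. Since $\calT_A$ has height $O(\log n)$, each $A_t$ has $O(\log n)$ ancestors, so the number of (biclique, point, ancestor) triples, and thus of queries, is $\sum_t|B_t|\cdot O(\log n)=O(n^{3/2}\log n)$, using $\sum_t|B_t|=O(n^{3/2})$. I would then balance the preprocessing and query costs of Lemma~\ref{lem:maxweightquery} by setting, for each subset $B_{t'}$ of size $m_{t'}$, the parameter $r_{t'}=\max\{1,\,m_{t'}/n^{1/3}\}$ (legal since $\log^2 m_{t'}=o(n^{1/3})$). This makes every query cost $O(\sqrt{m_{t'}/r_{t'}})=O(n^{1/6})$, so the total query time is $O(n^{1/6}\cdot n^{3/2}\log n)=O(n^{5/3}\log n)$, while the build cost of a single structure becomes $O\bigl(m_{t'}r_{t'}(m_{t'}/r_{t'})^{\delta}\bigr)=O\bigl((m_{t'}^2/n^{1/3}+m_{t'})\,n^{\delta/3}\bigr)$.

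The main obstacle, and the key step, is bounding the total build cost through $\sum_{t'}m_{t'}^2$, since the naive estimate $\max_{t'}m_{t'}\cdot\sum_{t'}m_{t'}\le n\cdot n^{3/2}=n^{5/2}$ is too weak and would yield $n^{13/6}$. Here I would open up the construction in Lemma~\ref{lem:biclique}: a subset $B_\sigma$ attached to a cell $\sigma$ at level $i$ of the hierarchical cutting is drawn from the conflict list of $\sigma$'s parent, so $|B_\sigma|=O(n/\rho^{\,i-1})$, while level $i$ has only $O(\rho^{2i})$ cells. Consequently $\sum_{\sigma\in\Xi_i}|B_\sigma|^2=O\bigl(\rho^{2i}\cdot(n/\rho^{\,i-1})^2\bigr)=O(n^2)$ per level, and summing over the $O(\log n)$ levels, together with the leaf subsets $B_a$ (each of size $O(\sqrt n)$ with $\sum_a|B_a|=O(n^{3/2})$, contributing $O(n^2)$), gives $\sum_{t'}m_{t'}^2=O(n^2\log n)$. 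Plugging this in, the total build cost is $O\bigl((n^{-1/3}\cdot n^2\log n+n^{3/2})\,n^{\delta/3}\bigr)=O(n^{5/3+\delta})$, matching the query bound. Adding the $O(n^{3/2})$ construction time of $\Gamma(P)$ yields the claimed $O(n^{5/3+\delta})$ bound, and the identical scheme handles the minimum-weight and size-$3$ clique variants by replacing the distance predicate $|bc|>1$ accordingly.
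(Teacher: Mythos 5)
Your proposal is correct and follows essentially the same route as the paper: build the tree-structured biclique partition of Lemma~\ref{lem:biclique}, reduce to outside-unit-disk max-weight queries via Lemma~\ref{lem:treecondition} and Lemma~\ref{lem:maxweightquery}, tune the query time to $O(n^{1/6})$, and bound the total preprocessing by exploiting the per-level sizes $|B_\sigma|=O(n/\rho^{i-1})$ and $|B_a|=O(\sqrt{n})$ from the hierarchical cutting. The only differences are presentational: you use the unified parameter $r_{t'}=\max\{1,m_{t'}/n^{1/3}\}$ and bound $\sum_{t'}m_{t'}^2=O(n^2\log n)$ globally, whereas the paper splits into small/large subsets and bounds the cost level by level, which amounts to the same computation.
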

\begin{proof}
We first compute a tree-structured biclique partition $\Gamma(P)=\{A_t \times B_t\ |\ A_t, B_t \subseteq P\}$ for $\overline{G(P)}$ in $O(n^{3/2})$ time by Lemma~\ref{lem:biclique}. Let $\calT_A$ denote the tree formed by the subsets $A_t$'s of $\Gamma(P)$. 

Second, for each subset $B_t$ of $\Gamma(P)$, we construct an outside-unit-disk range max-weight query data structure by Lemma~\ref{lem:maxweightquery}, with $r=m_t^{1/3}$ and $m_t=|B_t|$. This takes $O(m_t^{4/3+\delta})$ preprocessing time and each query can be answered in $O(m_t^{1/3})$ time (which is $O(n^{1/3})$ as $m_t\leq n$). By Lemma~\ref{lem:biclique}, $\sum_tm_t=O(n^{3/2})$. Therefore, the total time for constructing the data structure for all subsets $B_t$'s of $\Gamma(P)$ is on the order of $\sum_t m_t^{4/3+\delta}=\sum_t m_t\cdot m_t^{1/3+\delta}\leq n^{1/3+\delta}\cdot \sum_t m_t=O(n^{11/6+\delta})$.

For each subset $A_t$ of $\Gamma(P)$, we compute its largest-weight point by simply checking every point of $A_t$. As $\sum_t|A_t|=O(n\log n)$ by Lemma~\ref{lem:biclique}, doing this for all subsets $A_t$'s takes $O(n\log n)$ time. 

Next, for each subset $B_t$ of $\Gamma(P)$, for each point $b\in B_t$, for each ancestor $A_{t'}$ of $A_t$ in $\calT_A$, we compute the largest-weight point $c^*$ among all points $c$ of $B_{t'}$ with $|bc|> 1$ by applying an outside-unit-disk range max-weight query on $B_{t'}$ with the unit disk $D_b$ centered at $b$, and then keep $\{a^*,b,c^*\}$ as a {\em candidate solution}, where $a^*$ is the largest-weight point of $A_t$. We add the point $b$ to a set $S(B_{t'})$ (which is initially $\emptyset$).
As discussed above, each query takes $O(n^{1/3})$ time. Since $A_t$ has $O(\log n)$ ancestors in $\calT_A$ by Lemma~\ref{lem:biclique}, 
%all candidate solutions $\{a^*,b,c^*\}$ for 
processing each point $b\in B_t$ as above takes $O(n^{1/3}\log n)$ time in total and computes $O(\log n)$ candidate solutions. Since $\sum_t|B_t|=O(n^{3/2})$, the total time for processing all points $b\in B_t$ for all subsets $B_t$'s of $\Gamma(P)$ is bounded by $O(n^{11/6}\log n)$. Also, a total of $O(n^{3/2}\log n)$ candidate solutions are found. Finally, we return the candidate solution with the largest total weight as the optimal solution. 

As such, within $O(n^{11/6+\delta})$ time we can find an optimal solution. In the following, we present an improved algorithm of $O(n^{5/3+\delta})$ time. 

According to the above discussion, our goal is to answer outside-unit-disk range max-weight queries on $B_{t'}$ for all points of $S(B_{t'})$, for all $t'$. Note that $\sum_{t'}|S(B_{t'})|=O(n^{3/2}\log n)$. To see this, for each $B_t$ of $\Gamma(P)$, for each point $b\in B_t$, we query the disk $D_b$ on $B_{t'}$ for all ancestors $A_{t'}$ of $A_t$. As $A_t$ has $O(\log n)$ ancestors, the disk $D_b$ is used for $O(\log n)$ queries. Since $\sum_t|B_t|=O(n^{3/2})$, we obtain that $\sum_{t'}|S(B_{t'})|=O(n^{3/2}\log n)$.

For notational convenience, we state the problem as follows: Answer outside-unit-disk range max-weight queries on $B_{t}$ for all points of $S(B_{t})$, for all $t$. The main idea of the improved algorithm is that for each $B_t$, when building the outside-unit-disk range max-weight query data structure, instead of using the parameter $r=m_t^{1/3}$, we use a different parameter so that each max-weight query on $B_t$ takes $O(n^{1/6})$ time. As such, the total query time is $O(n^{5/3}\log n)$ as $\sum_t|S(B_t)|=O(n^{3/2}\log n)$. However, we also need to show that the total preprocessing time for building all data structures for all $B_t$'s can be bounded by $O(n^{5/3+\delta})$. For this, we explore some special properties of the subsets $B_t$'s generated by our algorithm in Lemma~\ref{lem:biclique}. Details are discussed below. 

For each subset $B_t$, 
%if $m_t\leq n^{1/6}$, then we do not build the max-weight query data structure. Whenever there is a query on $B_t$, we simply check every point of $B_t$ by brute force and the query time is $O(n^{1/6})$. If $m_t > n^{1/6}$, then 
if $m_t\leq n^{1/3}$, then 
we build a max-weight query data structure by Lemma~\ref{lem:maxweightquery} with the parameter $r=1$. By Lemma~\ref{lem:maxweightquery}, the query time is on the order of $\sqrt{m_t}$, which is $O(n^{1/6})$ since $m_t\leq n^{1/3}$; the preprocessing time is $O(m_t^{1+\delta})$. Since $\sum_t|B_t|=O(n^{3/2})$, the total preprocessing time for all such ``small'' subsets $B_t$ with $|m_t|\leq n^{1/3}$ is $O(n^{3/2+\delta})$, which is $O(n^{5/3+\delta})$. 

It remains to consider the ``large'' subsets $B_t$ with $m_t>n^{1/3}$. For each such $B_t$, we build a max-weight query data structure by Lemma~\ref{lem:maxweightquery} with the parameter $r=m_t/n^{1/3}$. By Lemma~\ref{lem:maxweightquery}, the query time is $O(n^{1/6})$ and 
the preprocessing time is $O(n^{\delta}m_t^2/n^{1/3})$. In the following, we show that the total preprocessing time for all such large $B_t$'s are bounded by  $O(n^{5/3+\delta})$. 

We follow the notation in the proof of Lemma~\ref{lem:biclique}. But to differetiate from the notation $r$ used above, we use $r'$ to refer to the notation $r$ in the proof of Lemma~\ref{lem:biclique}. 
Recall that there are two types of subsets $B_t$'s in $\Gamma(P)$ produced by the algorithm of Lemma~\ref{lem:biclique}: (1) $B_a$ for all points $a\in P$; (1) $B_{\sigma}$ for all cells $\sigma\in \Xi$. 

We first bound the preprocessing time for $B_a$ for all points $a\in P$. Consider a point $a\in P$. Let $\sigma$ be the cell of the last cutting $\Xi_k$ that contains $a$. Recall that a point $b$ is in $B_a$ only if the unit circle $C_b$ is in the conflict list $\calC_{\sigma}$. Therefore, $|B_a|\leq |\calC_{\sigma}|\leq n/r'$, with $r'=\sqrt{n}$. Hence, $|B_a|\leq \sqrt{n}$. Consequently, the total preprocessing time for $B_a$ of all points $a\in P$ is on the order of $\sum_{p\in P}n^{\delta}|B_a|^2/n^{1/3}\leq n^{\delta} \sum_{p\in P}n^{2/3}$, which is $O(n^{5/3+\delta})$. 

We next bound the total preprocessing time for $B_{\sigma}$ for all cells $\sigma\in \Xi$. Recall that each cutting $\Xi_i$ in the hierarchical cutting $\{\Xi_0,\Xi_1,\cdots,\Xi_k\}$ is an $\rho^i$-cutting, for some constant $\rho>0$. 
%We will the total preprocessing time for $B_{\sigma}$ for all cells $\sigma \in \Xi_i$. 
For any $1\leq i\leq k$, for each cell $\sigma\in \Xi_i$,  recall that a point $p$ is in $B_{\sigma}$ only if the unit circle $C_p$ is in the conflict list $\calC_{\sigma'}$ of the parent $\sigma'$ of $\sigma$. Therefore, $|B_{\sigma}|\leq |\calC_{\sigma'}|$ holds. Since $\sigma'\in \Xi_{i-1}$, $|\calC_{\sigma'}|\leq n/\rho^{i-1}$ and thus $|B_{\sigma}|\leq n/\rho^{i-1}$. Since $\Xi_i$ has $O(\rho^{2i})$ cells and $\rho$ is a constant, we obtain that the total preprocessing time for $B_{\sigma}$ for all cells $\sigma \in \Xi_i$ is on the order of $\sum_{\sigma\in \Xi_i}n^{\delta}|B_{\sigma}|^2/n^{1/3}\leq n^{\delta}\sum_{\sigma\in \Xi_i}n^{5/3}/\rho^{2(i-1)}=n^{\delta}\cdot O(\rho^{2i})\cdot n^{5/3}/\rho^{2(i-1)}=O(n^{5/3+\delta})$. As such, the total preprocessing time for $B(\sigma)$ for all cells $\sigma\in \Xi$ is $O(kn^{5/3+\delta})$, which is $O(n^{5/3+\delta})$ for a slightly larger $\delta$ as $k=O(\log n)$. 

The lemma thus follows. 
\end{proof}

%\subsection{Computing a maximum-weight clique of size $3$}
\paragraph{Computing a maximum-weight clique of size $\boldsymbol{3}$.}
Our above algorithm can be easily modified to find a maximum-weight clique of size $3$ in the unit-disk graph $G(P)$. We briefly discuss it. 
First of all, we define the biclique partition for $G(P)$ instead of for $\overline{G(P)}$ in a similar way. Then, we can also define tree-structured biclique partitions for $G(P)$. Suppose $\Gamma(P) = \{A_t \times B_t\ |\ A_t, B_t \subseteq P\}$ is a tree-structured biclique partition of $G(P)$ and $\calT_A$ is the tree formed by all the subsets $A_t$'s. 
Then, by a similar argument to Lemma~\ref{lem:biclique}, we can prove the following: Three points $a,b,c\in P$ form a clique in $G(P)$ if and only if $\Gamma(P)$ has a biclique $(A_t,B_t)$ that contains a pair of these points, say $(a,b)$, and $A_t$ has an ancestor subset $A_{t'}$ in $\calT_A$ such that $c\in B_{t'}$ and $|bc|\leq 1$. Consequently, we can follow a similar algorithm framework as above. 
First, to compute a tree-structured biclique partition $\Gamma(P)$ of $G(P)$, we can slightly modify the algorithm of Lemma~\ref{lem:biclique}. Specifically, for each cell $\sigma\in \Xi$, we define $A(\sigma)$ in the same way as before but define $B_{\sigma}$ as the set of points $p\in P$ such that $\sigma$ is completely inside $D_p$ and $C_p$ is in the conflict list $\calC_{\sigma'}$ of the parent $\sigma'$ of $\sigma$. 
%Computing $B_{\sigma}$ can be done similarly as before. 
We add $A(\sigma)\times B_{\sigma}$ to $\Gamma(P)$. Similarly, for each cell $\sigma$ of $\Xi_k$, for each point $a\in A(\sigma)$, we now define $B_a$ as the set of all points $b\in P$ such that $|ab|\leq 1$ and $C_b$ is in $\calC_{\sigma}$; we add $A_a\times B_a$ to $\Gamma(P)$. In this way, following the algorithm similarly, we can compute a tree-structured biclique partition $\Gamma(P)$ of $G(P)$ with the same complexities as Lemma~\ref{lem:biclique}. Next, we need to construct an {\em inside-unit-disk range max-weight query} data structure. We can follow the same approach as in Lemma~\ref{lem:maxweightquery} by modifying the disk range searching data structure in \cite{ref:WangUn23}. Finally, using the above results, we can follow the same algorithm as Theorem~\ref{theo:3weightindset} to compute a maximum-weight clique of size $3$ in $G(P)$ in $O(n^{5/3+\delta})$ time.

\paragraph{Computing a maximum-weight independent set (resp., clique) of size $2$.}
To compute a maximum-weight independent set of size $2$ in $G(P)$, we can first compute a biclique partition $\Gamma(P) = \{A_t \times B_t\ |\ A_t, B_t \subseteq P\}$ for $\overline{G(P)}$, but a tree-structured one is not necessary. The algorithm of \cite{ref:WangIm23} can compute $\Gamma(P)$ in $n^{4/3}2^{O(\log^* n)}$ time with the following complexities: (1) $|\Gamma(P)|=O(n^{4/3}\log^*n)$; (2) $\sum_t |A_t|, \sum_t |B_t| = n^{4/3}2^{O(\log^* n)}$. Next, for each biclique $A_t\times B_t$ of $\Gamma(P)$, we find $\{a_t,b_t\}$ as a candidate solution, where $a_t$ is the largest-weight point of $A_t$ and $b_t$ is the largest-weight point of $B_t$. Finding all such candidate solutions can be easily done in $n^{4/3}2^{O(\log^* n)}$ time by brute force since $\sum_t |A_t|, \sum_t |B_t| = n^{4/3}2^{O(\log^* n)}$.  Finally, among all candidate solutions, we return the one with the largest total weight as an optimal solution. 
The total time is $n^{4/3}2^{O(\log^* n)}$. 

Computing a maximum-weight clique of size $2$ can be done similarly. The difference is that we use a biclique partition of $G(P)$. The algorithm of \cite{ref:WangIm23} can also compute such a biclique partition with the same complexities as above. In fact, the original algorithm of \cite{ref:WangIm23} is for computing a biclique partition for $G(P)$ but can be readily adapted to computing a biclique partition for $\overline{G(P)}$ with the complexities stated above. As such, a maximum-weight clique of size $2$ can also be computed in $n^{4/3}2^{O(\log^* n)}$ time.

\paragraph{Minimum-weight problems.}
We can also find a minimum-weight independent set or clique of size 3 (resp., 2) within the same time complexity as above, simply by negating the weight of every point and then applying the corresponding maximum-weight version algorithm discussed above.

%\footnotesize
\bibliographystyle{plainurl}
\bibliography{refs}
\end{document}